\newtheorem{theorem}{Theorem}[section]
\newtheorem{corollary}[theorem]{Corollary}
\newtheorem{lemma}[theorem]{Lemma}
\newtheorem{definition}[theorem]{Definition}
\newtheorem{remark}[theorem]{Remark}
\newtheorem{example}[theorem]{Example}
\newtheorem{proposition}[theorem]{Proposition}
\numberwithin{equation}{section}
\DeclareMathOperator{\supp}{supp} \DeclareMathOperator{\im}{Im}
\DeclareMathOperator{\re}{Re} 
 \DeclareMathOperator{\dom}{dom}
\DeclareMathOperator{\ran}{ran} \DeclareMathOperator{\Ext}{Ext}
\DeclareMathOperator{\Span}{span}\DeclareMathOperator{\ess}{ess}
\DeclareMathOperator{\ac}{ac}
\DeclareMathOperator{\diag}{diag}\DeclareMathOperator{\loc}{loc}
\DeclareMathOperator{\comp}{comp}
\newcommand\R{{\mathbb{R}}}
\newcommand\C{{\mathbb{C}}}
\newcommand\N{{\mathbb{N}}}
\newcommand\Z{{\mathbb{Z}}}
\newcommand\gH{{\mathfrak{H}}}
\newcommand{\gG}{{\Gamma}}
\newcommand{\gT}{{\Theta}}
\newcommand{\gd}{{d}}
\newcommand{\gA}{{\alpha}}
\newcommand{\gB}{{\beta}}
\newcommand\cH{{\mathcal{H}}}
\newcommand\cI{{\mathcal{I}}}
\newcommand\cC{{\mathcal{C}}}
\newcommand\cN{{\mathfrak{N}}}
\newcommand\rH{{{H}}}
\newcommand\rD{{\rm{d}}}
\newcommand\I{{\rm{i}}}
\newcommand\gr{{\rm{gr}}}
\def\Ext{{\rm Ext}}
\def\mul{{\rm mul\,}}
\def\wt#1{{{\widetilde #1} }}
\begin{document}
%\begin{frontmatter}
\author{Raffaele Carlone $^1$, Mark Malamud $^2$, Andrea Posilicano $^3$
\\{\ }\\
$^1$ Dipartimento di Scienze Fisiche, Universit\`a di Napoli Federico II, Napoli, Italy\\
e-mail: {\tt carlone@na.infn.it} \\
$^2$ Institute of Applied Mathematics and Mechanics, NAS,
 Donetsk, Ukraine\\
e-mail: {\tt mmm@telenet.dn.ua}\\
$^3$ DipSAT - Sezione di Matematica, Universit\`a dell'Insubria, Como, Italy\\
 e-mail: {\tt posilicano@uninsubria.it}}
\date{}

\title{On the spectral theory of  Gesztesy--{\v S}eba realizations of\\ 1--D Dirac operators  with point  interactions on a discrete set \\[1mm]
{\em \large Dedicated with deep respect to Fritz Gesztesy  on the occasion of his 60th birthday}
}
\maketitle
\begin{abstract}

\noindent
We investigate spectral properties of  Gesztesy--{\v S}eba realizations $D_{X,\alpha}$ and $D_{X,\beta}$ of the 1-D Dirac differential expression $D$  with point interactions on
a discrete set $X=\{x_n\}_{n=1}^\infty\subset \R.$
Here  $\alpha := \{\alpha_{n}\}_{n=1}^\infty$ and $\beta :=\{\beta_{n}\}_{n=1}^\infty
\subset\R$.
%%Let $D$ be  the 1-D Dirac operator on the line, let $X=\{x_n\}_{n=1}^\infty$ be a discrete set of real points and let
%%%%The case $d_{*}(X)=0$ is not excluded from our analysis.
The Gesztesy--{\v S}eba realizations $D_{X,\alpha}$ and $D_{X,\beta}$  %%%with point interactions on $X$
are the relativistic counterparts of the
corresponding Schr\"odinger operators $\rH_{X,\alpha}$ and $\rH_{X,\beta}$ with
$\delta$- and $\delta'$-interactions, respectively.
 We define the  minimal  operator $D_X$ as the direct sum of
the  minimal Dirac operators on the intervals $(x_{n-1}, x_n)$.
Then  using the regularization  procedure for direct sum of boundary triplets we  construct  an appropriate
boundary triplet for the maximal operator $D_X^*$ in the case $d_*(X):=\inf\{|x_i-x_j|
\,, i\not=j\} = 0$. It turns out that the boundary operators $B_{X,\alpha}$ and $B_{X,\beta}$
parameterizing the realizations $D_{X,\alpha}$ and $D_{X,\beta}$
are  Jacobi matrices. %% of certain classes; such
These matrices  substantially differ
from the ones appearing in spectral theory of Schr\"odinger operators with
point interactions.
We show that certain spectral properties of  the  operators
$D_{X,\alpha}$ and $D_{X,\beta}$ correlate with the corresponding spectral
properties of the Jacobi matrices  $B_{X,\alpha}$ and $B_{X,\beta}$, respectively.
Using this connection  we investigate spectral properties (self-adjointness, discreteness, absolutely continuous and singular spectra) of  Gesztesy--{\v S}eba realizations.
Moreover, we investigate  the non-relativistic limit
as the velocity of light $c\to\infty$. Most  of our results are new even in the case $\gd_*(X)> 0.$
   \end{abstract}

%\begin{keyword}
%% keywords here, in the form: keyword \sep keyword
%Dirac Operators  \sep Local Interactions \sep  Non-relativistic limit \sep Self-Adjointness \sep Discreteness \sep Continuous spectrum
%% MSC codes here, in the form: \MSC code \sep code
%% or \MSC[2008] code \sep code (2000 is the default)
%\MSC 34L05\sep 34L40\sep 47E05\sep 47B25\sep 47B36\sep 81Q10

%\end{keyword}

%\end{frontmatter}
%\textbf{Preliminary draft - Not for distribution}
\newpage

\tableofcontents
%\newpage
\section{Introduction}\label{Introductionc}
%%%%%%%%%%%%%%%%%%%%%%%%%%%%%%%%%%%%%%%%%%%%%%
%%%As the huge lists of references which can be found in \cite{Alb_Ges_88, Alb_Kur_00, Exn_04} show,
It is well known that many exactly solvable models describing
complicated physical phenomena are expressed in terms of operators with point interactions
(see \cite{Alb_Ges_88, Alb_Kur_00, Exn_04, KosMal12} and comprehensive  lists of references therein). In the 1-D case, the most known
models are  the Schr\"odinger  operators $\rH_{X,\gA}$ and
$\rH_{X,\gB}$ associated with the formal differential
expressions
   \begin{equation}\label{I_01}
\ell_{X,\gA}:=-\frac{\rD^2}{\rD x^2} + \sum_{x_{n}\in X}\gA_n\delta(x-x_n),\qquad
\ell_{X,\gB}:=-\frac{\rD^2}{\rD x^2} + \sum_{x_{n}\in X}\gB_n\delta'(x-x_n),
  \end{equation}
where $\delta(\cdot)$ is a Dirac delta-function.  These operators describe $\delta$- and $\delta'$-interactions,
respectively, on a discrete set $X=\{x_n\}_{n\in I} \subset \cI=(a,b) \subseteq\R$,
and the coefficients $\gA_n,\ \gB_n\in\R$ are called the strengths of
the interaction at the point $x=x_n$.
%%%$-\infty\le a<b\le +\infty$,  $x_{n-1}<x_{n},\ n\in \N$,
%%%Suppose also that $b=\sup X\equiv\lim_{n\to\infty}x_n \le \infty$, and  $-\infty \le a \le x_0 = \inf X$.

Investigation of these
models was originated by the "Kronig--Penney model" \cite{Kro_Pen}, a simple model for a non-relativistic electron moving in a fixed crystal lattice ($X=\Z$, $\cI=\R$, $\gA_n\equiv \gA$).
For a more mathematically rigorous  approach to this model see for instance, \cite{Gro} and \cite{Ges_Hol_87}
and the monograph \cite{Alb_Ges_88}.
\par
Let
$\alpha := \{\alpha_{n}\}_{n=1}^\infty\subset\R\cup
\{+\infty\}$ and  $\beta :=\{\beta_{n}\}_{n=1}^\infty
\subset\R\cup\{+\infty\}$.
There are several ways to associate well-defined linear operators with
$\ell_{X,\gA}$ and $\ell_{X,\gB}$  (see \cite{Alb_Ges_88},  \cite{BSW95},  \cite{Min_86}).
In $L^2(\cI)$, the minimal symmetric  operators $\rH_{X,\gA}$  and
$\rH_{X,\gB}$ are naturally associated  with \eqref{I_01}.
Namely, assuming that $\cI= (a, +\infty)$ and $I=\N$ one defines
the operators $\rH_{X,\gA}^0$ and $\rH_{X,\gB}^0$
 by the differential expression  $-\frac{\rD^2}{\rD x^2}$
%%\begin{equation}
%%$\tau_q:=-\frac{\rD^2}{\rD x^2}+q(x)$
%%%\end{equation}
on the domains, respectively,
\begin{eqnarray}
\dom(\rH_{X,\gA}^0)=\left\{f\in W^{2,2}_{\comp}(\cI\setminus X): f'(a+)=0,\ \begin{array}{c}  f(x_n+)=f(x_n-)\\
 f'(x_n+)-f'(x_n-)=\gA_n f(x_n)\end{array},\ n\in\N \right\}, \label{I_04}\\
\dom(\rH_{X,\gB}^0)=\left\{f\in W^{2,2}_{\comp}(\cI\setminus X): f'(a+)=0,\ \begin{array}{c}  f'(x_n+)=f'(x_n-)\\
 f(x_n+)-f(x_n-)=\gB_n f'(x_n)\end{array},\ n\in\N \right\}. \label{I_05}
  \end{eqnarray}
Let $\rH_{X,\gA}$ and $\rH_{X,\gB}$ be the closures of $\rH^0_{X,\gA}$ and $\rH^0_{X,\gB}$, respectively.
In general, the operators $\rH_{X,\gA}$ and $\rH_{X,\gB}$ are symmetric but not automatically self-adjoint. Then one is interested in finding self-adjointness criteria and in the spectral analysis of such self-adjoint realizations.
\par

In this paper we investigate  two families of operators with point interactions which are the relativistic counterparts of $\ell_{{X,\alpha}}$ and $\ell_{{X,\beta}}$. Namely, we consider the cases where the differential expression $-\frac{\rD^2}{\rD x^2}$ in \eqref{I_01} is replaced by
the Dirac differential expression
\begin{equation}\label{1.2Intro}
D \equiv D^c := -i\,c\,\frac{d}{dx}\otimes \left(\begin{array}{cc}0 & 1 \\1 &
0\end{array}\right) + \frac{c^{2}}{2}\otimes \left(\begin{array}{cc}1
& 0 \\0 & -1\end{array}\right) \equiv \left(\begin{array}{cc}{c^{2}}/{2} & -i\,c\,\frac{d}{dx} \\-i\,c\,\frac{d}{dx} & -{c^{2}}/{2}\end{array}\right)\,.
\end{equation}
Here  $c>0$ denotes the velocity of light. %%\par
Relativistic operators with point interactions have received a lot of attention  recently  (see e.g. \cite{Alb_Ges_88, Alb_Niz_Tar_04, AvrGro, Ben, BD, Brze_03, DavSte, Dit_Ex_Seb89, Dit_Ex_Seb90, Dit_Ex_Seb, Dit_Ex_Seb8, Dit_Ex_Seb97, GS, Grosc, Gumbs, Hou_01, Hugh, Lapidus, Sha_02, Yoshi} and references therein).
\par
 Assume  that $\cI=(a,b)$ with $-\infty < a < b\le \infty$
and $I=\N$.  Following \cite{GS} (see also \cite[Appendix J]{Alb_Ges_88}
\footnote{There are typos in the definition of $D_{X,\beta}$ given in \cite[Appendix J]{Alb_Ges_88}: in formulae (J.17) and (J.23) there should be a sign $+$ instead of $-$ .}), we define  the operators $D_{X,\alpha}$ and $D_{X,\beta}$ (realizations of $D$)
to be the closures in $L^{2}(\cI)\otimes \C^{2}$ of the operators
   \begin{equation}\label{deltaIntro}
\begin{split}
 D_{X,\alpha}^0 =   D,\quad
\dom(D_{X,\alpha}^0)=&\Big\{f\in W^{1,2}_{\comp}(\cI \backslash X)\otimes
\C^{2}: f_{1}\in AC_{\loc}(\cI),\ f_{2}\in AC_{\loc}(\cI\backslash
X);\\& f_2(a+)=0\,,\quad f_{2}(x_{n}+)-f_{2}(x_{n}-)
=-\frac{i\alpha_{n}}{c}f_{1}(x_{n}),\,\,n\in\N\Big\},
\end{split}
  \end{equation}
and
  \begin{equation}
\begin{split}\label{deltapIntro}
D_{X,\beta}^0 = D, \quad
\dom(D_{X,\beta}^0) = & \Big\{f\in W^{1,2}_{\comp}(\cI \backslash X) \otimes
\C^{2}: f_{1}\in AC_{\loc}(\cI\backslash X),\ f_{2}\in
AC_{\loc}(\cI);\\& f_2(a+)=0\,,\quad
f_{1}(x_{n}+)-f_{1}(x_{n}-) = i\beta_{n}cf_{2}(x_{n}),\,\,n\in\N\Big\},
\end{split}
\end{equation}
respectively, i.e., $D_{X,\alpha} =  \overline{ D_{X,\alpha}^0}$ and  $D_{X,\beta} =  \overline{ D_{X,\beta}^0}$.
%%%%%%%%%%%%%%%%%%%%%%%%%%%%%%%%
It is easily seen that both operators  $D_{X,\alpha}$ and  $D_{X,\beta}$ are  symmetric.
%%%but not necessarily self-adjoint, in general.
%%%However, both $D_{X,\alpha}$ and  $D_{X,\beta}$ are either symmetric or self-adjoint only simultaneously.
The domains  of the adjoint operators $D_{X,\alpha}^*$ and $D_{X,\beta}^*$ are described explicitly: $\dom(D_{X,\alpha}^*)$ and $\dom(D_{X,\beta}^*)$ are given by formulae \eqref{deltaIntro} and \eqref{deltapIntro}, respectively,  with $ W^{1,2}(\cI \backslash X)$ in place of $ W^{1,2}_{\comp}(\cI \backslash X)$
 (see Theorem \ref{1}(i)).
%%% If $D_{X,\alpha}$ and $D_{X,\beta}$ are  self-adjoint, then their  domains are described explicitly  (see Theorem \ref{1})(i)).
The important  feature of realizations $D_{X,\alpha}$ and  $D_{X,\beta}$ is that  \emph{they are always self-adjoint, $D_{X,\alpha}= D_{X,\alpha}^*$ and  $D_{X,\beta}=D_{X,\beta}^*$,   provided that the interval $\cI$ is infinite}  (see Proposition \ref{cor_delta_carleman} and Theorem \ref{1}(ii)).

The realizations  $D_{X,\alpha}$ and  $D_{X,\beta}$   have originally been introduced by Gesztesy and {\v S}eba \cite{GS} (see also \cite[Appendix J]{Alb_Ges_88})
in the case of  $\cI = \R$ and $I=\Z$, i.e. $X=\{x_n\}_{n\in \Z}$.
%%%They investigated the point spectrum  of these realizations by reducing
%%the problem to a system of finite difference equations.
%
In what follows we will call these operators {\it Gesztesy--{\v S}eba realizations}
(in short, GS-realizations). These realizations  turn out to be closely related to their non-relativistic counterparts
$\rH_{X,\gA}$ and $\rH_{X,\gB}$ associated with the
differential expression \eqref{I_01}.

%%%%For  definiteness  we consider only the case of infinitely many point interactions distributed on the half-line $\R_+$,
%%%i.e. assume that  $\cI=\R_+.$

Gesztesy and {\v S}eba \cite{GS}   investigated  the realizations $D_{X,\alpha}$
and $D_{X,\beta}$  in the framework of extension theory of symmetric operators and treating
the operators $D_{X,\alpha}$  and $D_{X,\beta}$  as extensions of the minimal operator
%%% confining ourself to the case of $\cI= \R_+$ and  $\mathfrak H =  L^{2}({\R_+})\otimes \mathbb C^2$ we
%
%
\begin{equation}\label{I_06Intro}
D_X:=  \bigoplus_{n\in \Z} D_n  \qquad
D_n = D,  \qquad
\dom(D_{n}) = W^{1,2}_0[x_{n-1},x_{n}]\otimes\C^{2}. %%%%D_{a-}\bigoplus \bigoplus D_{b+}
\end{equation}
In fact, they assumed  in addition that $d_*(X)>0$ where
\begin{equation}%%\label{3.26AAA}
 d_{*}(X):=\inf_{n}d_n\,,\qquad
d^{*}(X):=\sup_{n}d_n \qquad\text{and}\qquad d_n:=x_{n}-x_{n-1}\,.
\end{equation}
Clearly,   $D_n$ is a  symmetric operator with deficiency
indices $n_{\pm}(D_n)=2$.
These authors also  computed the resolvent differences  $(D_{X,\alpha}-z)^{-1}-  (D_{free} - z)^{-1}$ and $(D_{X,\beta} - z)^{-1} -  (D_{free} - z)^{-1}$,  where  $D_{free}$ is the free Dirac operator $D$. In  the periodic case $(X=\mathbb Z, \ \alpha_k = \alpha_0,\  \beta_k=\beta_0,$\  $k\in\mathbb Z)$ they proved that the spectra $\sigma(D_{X,\alpha})$ and $\sigma(D_{X,\beta})$ have   a band-zone structure.

Moreover, assuming  $d_*(X)>0$ they proved the following non-relativistic limit
  \begin{equation}\label{1.23GSIntro}
s-\lim_{c\to
  +\infty}\left(D_{X,\alpha}^c - (z + {c^{2}}/{2})\right)^{-1}=
(\rH_{X, \alpha} -z)^{-1}\otimes \left(\begin{array}{cc} 1 & 0 \\0 &
0\end{array}\right)\,.
  \end{equation}
%
%
%%%Here we write  $D^c_{X, \alpha}$ in place of $D_{X, \alpha}$ to indicate explicitly dependence on the velocity of light.

In the present paper we study  the spectral properties of  the
GS-realizations  $D_{X,\alpha}$  and $D_{X,\beta}$  for arbitrary  $\gd_*(X) \ge 0$.
Moreover, we  investigate the  GS-realizations $D_{X,\gA}(Q):= D_{X,\gA} +Q$  and $D_{X,\gB}(Q):= D_{X,\gB} + Q$  of a general Dirac operator $D + Q$ with $2\times 2$ matrix potential $Q=Q^*.$

Spectral analysis of the GS-operators
$D_{X,\gA}(Q)$ and $D_{X,\gB}(Q)$ consists (at least partially) of the following problems:
\begin{itemize}
\item [(a)]  Finding  self-adjointness criteria for $D_{X,\gA}(Q)$
and $D_{X,\gB}(Q)$.
\item [(b)]   Discreteness of the spectra  of the operators  $D_{X,\gA}(Q)$ and
$D_{X,\gB}(Q)$.
\item [(c)]  Characterization of continuous,
absolutely continuous, and singular parts of the spectra of the
operators $D_{X,\gA}(Q)$ and $D_{X,\gB}(Q)$.
\item [(d)] Resolvent comparability of the operators $D_{X,\gA^{(1)}}(Q)$ and
$D_{X,\gA^{(2)}}(Q)$ with  $\gA^{(1)}\neq \gA^{(2)}$ i.e. finding conditions for the inclusion
 %%   \begin{eqnarray}\label{rescompar1Intro}
$(D_{X, \alpha^{(1)}}(Q) - i)^{-1} - (D_{X, \alpha^{(2)}}(Q) - i)^{-1}
\in\mathfrak S_p(\gH)$  to be valid.  Here  $\mathfrak S_p(\gH)$ denotes the Neumann-Schatten ideal.
%%%    \end{eqnarray}
%
%
\end{itemize}
%
%%%%%%%%%%%%%%%%%%%%%%%%%%%%%%%%%%%%%%%%

We investigate spectral properties of these operators  by applying the technique of
boundary triplets and the corresponding Weyl functions (see
Section \ref{Sec_II_Prelim} for the precise definitions).
%%We investigate GS-realizations in the framework of the boundary triplets approach.
This new approach to extension theory of symmetric operators has
appeared and was intensively elaborated  during the last three decades (see \cite{Gor84, DM91, DM95, BGP07}, \cite{Sch2012}, \cite[Chapter 9]{DelAnt2011}) and references therein).

The main ingredient of this approach
is the following  abstract version of the
Green formula for the adjoint $A^*$ of a symmetric operator $A$:
  \begin{equation}\label{II.1.2_green_fIntro}
(A^*f,g)_\gH - (f,A^*g)_\gH = (\gG_1f,\gG_0g)_\cH -
(\gG_0f,\gG_1g)_\cH, \qquad f,g\in\dom(A^*).
\end{equation}
Here $\cH$ is an auxiliary Hilbert space and  the mapping $\Gamma :=\binom{\Gamma_0}{\Gamma_1}:  \dom(A^*)
\rightarrow \cH \oplus \cH$ is required to be surjective.
The mapping $\Gamma$ leads to  a natural  parametrization   of  self-adjoint
extensions of $A$ by means of self-adjoint linear relations (multi-valued operators) in $\cH$, see \cite{Gor84, DM91}.
For instance, any    extension  $\widetilde A = \widetilde A^*$  disjoint with $ A_0:=A^*\!\upharpoonright\ker(\Gamma_0)$
admits a representation
   \begin{equation}\label{5.1_Intro}
\widetilde A = A_B:=A^*\!\upharpoonright \ker\bigl(\Gamma_1 - B\Gamma_0\bigr)\qquad
\text{with}\qquad B=B^*  \in \mathcal{C}(\mathcal{H}),
%%%\gr B := \Gamma\dom(\widetilde A),
     \end{equation}
where the graph of the "boundary" operator $B$  in $\cH$
is $\Gamma\dom(\widetilde A) := \{\{\Gamma_0f, \Gamma_1f\}: f \in \dom(\widetilde A)\}.$
As distinguished from the von Neumann  approach,  parametrization \eqref{5.1_Intro}  yields
a natural description  of all proper (in particular, self-adjoint) extensions
in terms of (abstract) boundary conditions.

In particular,  this approach was successfully applied
to boundary value problems  for   smooth elliptic operators  on bounded or unbounded domains
with a smooth compact boundary (see \cite{Gru68, BGW08}, \cite{Mal10} and the monograph  \cite{Gru08}),
to the maximal Sturm-Liouville operator $-d^2/dx^2+T$ in
$\mathfrak H = L^2([0,1]; \cH)$ with an unbounded operator potential $T=T^*\ge a
I$, $T\in \mathcal{C}(\mathcal{H})$ (\cite{Gor84}, see also
\cite{DM91} and \cite{MN2012} for the case of $\mathfrak H = L^2(\R_+; \cH)$),
as well as  to  3-D and 2-D Schr\"{o}dinger operators with infinitely many $\delta$-interactions
(see \cite{MalSch12} and references therein).

The most relevant to our paper is the article  \cite{KM}  where this approach was applied  to 1-D Schr\"{o}dinger operators  in  the case  $d_*(X) = 0$
(for the case $d_*(X) > 0$ see works \cite{Koc_89}, \cite{Mih_94a}).
%%and $q\in L^{\infty}(\R)$ (we postpone the study of the case of unbounded
%%%$q$ as well as the problem $(d)$ to our forthcoming paper).
%The main aim of our paper is the spectral analysis of the operators $\rH_{X,\gA,q}$ and $\rH_{X,\gB,q}$ in the case %$\gd_*=0$.
Namely, confining ourselves  to the case of $\cI \subset \R_+$ %%%and  $\mathfrak H =  L^{2}({\R_+})\otimes \mathbb C^2$
we treat the  GS-operators $D_{X,\alpha}$  and $D_{X,\beta}$
as extensions of the minimal operator $D_X$ given by  \eqref{I_06Intro} with $I=\N$ in place of $I=\Z$.

%%   \begin{equation}\label{3.27}
%%\dom(D_X) = W^{1,2}_0({\R_+}\setminus X)\otimes \mathbb C^2 =
%%\bigoplus_{n=1}^\infty W^{1,2}_0[x_{n-1},x_{n}]\otimes\C^{2}.
%%   \end{equation}
%
%

A boundary triplet for the operator $A^*$ always exists whenever $n_+(A)= n_-(A)$, though it is not unique. Its
role in extension theory is similar to  that of a coordinate system in analytic
geometry. It enables us to describe all self-adjoint extensions
in terms of (abstract) boundary conditions in place of the second von Neumann formula, although this description is simple and
adequate only under a suitable choice of a boundary triplet. Note that in the case $n_\pm(A)=\infty$ a construction of a suitable
boundary triplet is a rather difficult problem.

For the adjoint operator $D_X^*$ of $D_X$ given by \eqref{I_06Intro} it is natural to search for boundary triplets
constructed as a direct sum of triplets $\Pi_n$  for
operators $D_n^*$, that is,  %%%finite intervals $(x_{n-1}, x_n)$, i.e.
$\Pi_D :=\{\cH,\Gamma_0,\Gamma_1\}:=\bigoplus_{n=1}^\infty \Pi_n$, where $\Pi_n$ is a boundary triplet for $D_n^*$, $n\in I$, and
\begin{equation}\label{I_07}
\cH :=\bigoplus_{n\in\N}\cH_n,\qquad \Gamma_0
:=\bigoplus_{n\in\N}\Gamma_0^{(n)},\qquad \Gamma_1
:=\bigoplus_{n\in\N}\Gamma_1^{(n)}.
\end{equation}
If $d_*(X)>0$, then it is easily seen  that the triplet \eqref{I_07}
is a boundary triplet for $D_X^*$ if one chooses $\Pi_n=\{\cH, \Gamma^{(n)}_0,\Gamma^{(n)}_1\}$ in the standard way with  $\Gamma^{(n)}_j,\ j\in \{0,1\},$ given by
\begin{equation}\label{triple1Intro}
\Gamma^{(n)}_0 f = -i\sqrt{\frac{c}{2}}
\binom{f_2(x_{n-1}+)-f_2(x_n -)}{f_1(x_{n-1}+)-f_1(x_n-)},\qquad
\Gamma_1^{(n)}f=
\sqrt{\frac{c}{2}}\binom{f_1(x_{n-1}+) + f_1(x_n-)}{f_2(x_{n-1}+) + f_2(x_n-)}, \qquad n\in \N,
\end{equation}
where $f=\binom{f_1}{f_2}$  (see \cite[formula (66)]{DerMal92}).
However, this direct sum is no longer a  boundary  triplet for $D_X^*$ whenever $d_*(X)=0$  (see Proposition \ref{prop3.5direcsum}). To construct a boundary  triplet we use a  regularization procedure elaborated  in \cite{MN2012} and  \cite{KM}. This procedure was already  applied  in \cite{KM} to 1--D Schr\"odinger operators with point interactions.
However,  in comparison with the Schr\"odinger case, one meets  an additional  difficulty of an algebraic character. Namely, we are searching for a boundary triplet such that the corresponding boundary operator (cf. \eqref{5.1_Intro}) is a Jacobi (tri-diagonal) matrix and for this purpose we need to construct an appropriate boundary  triplet $\Pi_n$ for the Dirac operator $D_n^*$ on the interval $[x_{n-1},x_n]$. Let us emphasize that only a sequence of boundary triplets
$\widetilde{\Pi}^{(n)} = \{\cH, \widetilde{\Gamma}_{0}^{(n)}, \widetilde{\Gamma}_{1}^{(n)}\}$ given by $\cH=\C^2,$
   \begin{equation}\label{triple2Intro}
\widetilde{\Gamma}_{0}^{(n)}f =
%%:=\widetilde{\Gamma}_{0}^{(n)} %%\begin{pmatrix}f_{1}\\f_{2}\end{pmatrix}=
\left(\begin{array}{c}
                 f_{1}(x_{n-1}+)\\
                 i\,c\, f_{2}(x_{n}-)
                       \end{array}\right)\,,\qquad
\widetilde{\Gamma}_{1}^{(n)}f
%%\widetilde{\Gamma}_{1}^{(n)}\begin{pmatrix}f_{1}\\f_{2}\end{pmatrix}
=\left(\begin{array}{c}
                                                                           i\,c\,f_{2}(x_{n-1}+)\\
                                                                            f_{1}(x_{n}-)
                                                                          \end{array}\right)\,, \qquad n\in \N,
\end{equation}
(see \eqref{triple2})  leads after an appropriate regularization  to a new sequence of triplets $\Pi_n$ for $D_n^*$ having desirable properties (see  Theorem \ref{th_bt_2}). Namely, only
\emph{in the  triplet} $\Pi_D = \oplus_1^\infty \Pi_n$ given by \eqref{IV.1.1_12}, \eqref{IV.1.1_12.1},
\emph{the  parametrization of GS-realizations is given by means  of Jacobi matrices}. Let us also mention that a boundary triplet $\widetilde{\Pi}^{(n)}$ for $D_n^*$ of the form  \eqref{triple2Intro} differs from \eqref{triple1Intro} and the other ones  known in the literature (see, e.g., \cite{BraMalNei02, DerMal92}).

Recall that one of the main  results  in  \cite{KM}  states that
certain  spectral properties of $\rH_{X,\gA}$ (self-adjointness,  discreteness, etc.)
correlate with the corresponding spectral properties of the Jacobi matrix
\begin{equation}\label{IV.2.1_01AIntro}
B_{X,\gA}(\rH) := \left(%
\begin{array}{cccccc}
  0& -\gd_1^{-2} & 0 & 0& 0  &  \dots\\
  -\gd_1^{-2} & -\gd_1^{-2}& \gd_1^{-3/2}\gd_2^{-1/2}& 0 & 0&  \dots\\
  0 & \gd_1^{-3/2}\gd_2^{-1/2} & \alpha_1\gd_2^{-1} & -\gd_2^{-2} & 0&   \dots\\
  0 & 0 & -\gd_2^{-2} & -\gd_2^{-2} & \gd_2^{-3/2}\gd_3^{-1/2}&  \dots\\
  0 & 0 & 0 & \gd_2^{-3/2}\gd_3^{-1/2} & \alpha_2\gd_3^{-1}&   \dots\\
 % 0 & 0 & 0 & 0 & -\gd_3^{-2}&  -\gd_3^{-2}& \dots\\
 % 0 & 0 & 0 & 0 & 0&  1& -7&\dots\\
 % 0 & 0 & 0 & 0 & 0&  0& 1&\dots\\
\dots& \dots&\dots&\dots&\dots&\dots\\
 \end{array}%
\right).
\end{equation}
As usual we  identify the Jacobi matrix $B_{X,\gA}(\rH)$ with (the closed) minimal symmetric operator
 associated with it and denote it by the same letter.
%%%%Recall that $B_{X,\gA}(\rH)$ has equal deficiency indices and $\mathrm{n}_{\pm}(B_{X,\gA}(\rH)) \leq 1$.
We emphasize that the Jacobi operator $B_{X,\gA}(\rH)$ is  a boundary operator
for  $\rH_{X,\gA}$ in the triplet $\Pi_H=\{\cH,\Gamma_0,\Gamma_1\}$ in the sense of \eqref{5.1_Intro}, that is
%%%i.e. $\rH_{X,\gA} = \rH_{\min}^* \!\upharpoonright \ker\bigl(\Gamma_1 - B_{X,\gA}(\rH)\Gamma_0\bigr)$ or
%
%
     \begin{equation}\label{SchredingIntro}
\rH_{X,\gA} = %%%\rH_{B_{X,\gA}(\rH)}=
\rH_{\min}^* \upharpoonright \dom(\rH_{B_{X,\gA}(\rH)}),\quad \dom(\rH_{B_{X,\gA}}(\rH))=
\{f\in W^{2,2}(\cI\setminus X):\Gamma_1f = B_{X,\gA}(\rH)\Gamma_0f\}\,.
%%   \rH_{X,\gB}=\rH_{B_{X,\gB}}=\rH_{\min}^* \upharpoonright \dom(\rH_{B_{X,\gB}}),\qquad \dom(\rH_{B_{X,\gB}})=
%%   \{f\in W^{2,2}(\cI\setminus X):\Gamma_1f = B_{X,\gB}\Gamma_0f\}.
    \end{equation}
%%%%%%%%%%%%%%%%%%%%%%%%%%%%%%%%%%%%%%%%

In the present paper we establish  similar results for GS-realizations $D_{X,\alpha}$ and $D_{X,\beta}$. For instance,
we show (see Proposition \ref{th_delta_sa} and Theorem \ref{deltadiscr}) that self-adjointness  and discreteness of the spectrum of  $D_{X,\alpha}$
correlate  with the corresponding properties of the following Jacobi matrix
   \begin{equation}\label{IV.2.1_01Intro}
B_{X,\gA}=\left(
\begin{array}{cccccc}
  0& -\frac{\nu(\gd_{1})}{\gd_1^{2}}& 0 & 0& 0  &  \dots\\
   -\frac{\nu(\gd_{1})}{\gd_1^{2}} &  -\frac{\nu(\gd_{1})}{\gd_1^{2}}&\frac{\nu(\gd_{1})}{\gd_1^{3/2}\gd_2^{1/2}}& 0 & 0&  \dots\\
  0 & \frac{\nu(\gd_{1})}{\gd_1^{3/2}\gd_2^{1/2}} & \frac{\alpha_1}{\gd_2} & -\frac{\nu(\gd_{2})}{\gd_2^{2}}& 0&   \dots\\
  0 & 0 & -\frac{\nu(\gd_{2})}{\gd_2^{2}}&  -\frac{\nu(\gd_{2})}{\gd_2^{2}}&\frac{\nu(\gd_{2})}{\gd_2^{3/2}\gd_3^{1/2}}&  \dots\\
  0 & 0 & 0 & \frac{\nu(\gd_{2})}{\gd_2^{3/2}\gd_3^{1/2}}& \frac{\alpha_2}{\gd_3}&   \dots\\
\dots& \dots&\dots&\dots&\dots&\dots\\
 \end{array}%
\right)\,,
   \end{equation}
%%%%%%%%%%%%%%%%%%%%%%%%%%%%%%%
where  $\nu(x) := \left({1+ (c^2x^2)^{-1}}\right)^{-1/2}\,.$
 %%  \begin{equation}\label{IV.2.1_01NUIntro}
%%%$\nu(x) := \left({1+ (c^2x^2)^{-1}}\right)^{-1/2}\,.$
%%   \end{equation}
 Emphasize that similar to  \eqref{SchredingIntro},  the Jacobi operator $B_{X,\gA}$ in \eqref{IV.2.1_01Intro} is just
a boundary operator for the GS realization $D_{X,\gA}$ in the triplet $\Pi_D =\{\cH,\Gamma_0,\Gamma_1\}$, that is
   \begin{equation}\label{DiracDomIntro}
D_{X,\gA}=D_{B_{X,\gA}}=D_{X}^*\upharpoonright\dom(D_{B_{X,\gA}}), \quad
\dom(D_{B_{X,\gA}})=\{f\in W^{1,2}(\cI\setminus X)\otimes \C^2:
\Gamma_1f=B_{X,\gA}\Gamma_0f\}.
     \end{equation}
%%%%%%%%%%%%%%%%%%%%%%%%%%%%%%%%%%%

Representation  \eqref{DiracDomIntro} plays a crucial role in the paper: it allows us to solve the problems
(a)-(d) regarding the operator  $D_{X,\gA}$ by combining  known results on Jacobi matrices with the  technique elaborated in \cite{DM91, DM95}.

For instance,  applying the Carleman test (see e.g.  \cite{Akh}, and \cite[Chapter VII.1.2]{Ber68}) to the Jacobi matrix  $B_{X,\gA}$
we get that  $B_{X,\gA} = B_{X,\gA}^*$, and hence $D_{X,\gA}$ is always self-adjoint
whenever $\cI = \R_+.$
It is not the case for  GS-realizations $D_{X,\gA}$  on a finite interval $\cI$:
under certain conditions on the sequences $\gA$ and $X$ it might happen that either  %%%the operator
$D_{X,\gA}$ has the non-trivial  deficiency indices
$n_{\pm}(D_{X,\gA}) =1$ (see Theorem \ref{not-s.a.}) or $n_{\pm}(D_{X,\gA}) =0$, i.e. it is self-adjoint.
More precisely, applying the Dennis-Wall test (see e.g. \cite[Problem 2, p. 25]{Akh})  to the matrix $B_{X,\gA}$ we show in
Proposition  \ref{interval} that the GS-operator  $D_{X,\gA}$ on a finite interval $\cI =(a,b)$  is self-adjoint provided that
\begin{equation}\label{4.14Intro}
\sum_{n\in \N} \sqrt{d_{n}d_{n+1}}\,|\alpha_n|=+\infty\,.
 \end{equation}
%
%
%%%under certain  assumptions on $\alpha$ and $X$ the GS-operator  $D_{X,\gA}$ on a finite interval $\cI =(a,b)$  is self-adjoint.

Next, applying known results on discreteness spectra of Jacobi  matrices \cite{Chi62} to the  matrix $B_{X,\gA}$, we obtain (see Proposition  \ref{prop_IV.2.4_01})
 that the GS-operator $D_{X,\gA}^c$ on the half-line $\R_+$ has discrete
 spectrum provided that  $\lim_{n\to\infty}\gd_n=0$  and
%%%%%%%%%%%%%%%%%%%%%%%%%%%%%%%%
%
%
    \begin{equation}\label{prop_chihara_1Intro}
\lim_{n\to\infty}\frac{|\gA_n|}{\gd_n}=\infty\qquad
\text{and}\qquad \lim_{n\to\infty}\frac{c}{\alpha_n} > -
\frac{1}{4}.
  \end{equation}
Note, that condition  $\lim_{n\to\infty}\gd_n=0$  is necessary for
the  minimal Dirac operator $D_X$ on $\R_+$ to have extensions
(realizations) with  discrete spectrum.  It is worth to mention that conditions  \eqref{prop_chihara_1Intro} provide the discreteness property of the GS-operators
$D_{X,\gA}(Q):= D_{X,\gA} +Q$ with  certain  unbounded potentials $Q$ (see Proposition \ref{prop_IV.2.4_01}  and Example \ref{example5.30}).

Using parametrization   \eqref{DiracDomIntro}, \eqref{IV.2.1_01Intro},   we express the
inclusion   $(D_{X, \alpha} - z)^{-1} - (D_N - z)^{-1} \in\mathfrak S_p(\gH)$
%%%%%comparability of two realizations $D_{X,\alpha^{(1)}}$ and $D_{X,\alpha^{(2)}}$
in terms of $\alpha = \{\alpha_n\}_1^\infty$ and $\{d_n\}^{\infty}_1$.
Here  $D_N$ is the Neumann  realization of $D$,
$\dom(D_N) = \{f\in W^{1,2}(\mathbb R_+)\otimes\mathbb C^2:\ f_2(+0)=0\}.$
Based on this result  we prove (see Theorem \ref{th_cont_spectr}) that
  \begin{equation}\label{4.26Intro}
\sigma_{\ess}(D_{X,\alpha}) = \sigma_{\ess}(D_{N}) = {\R}\setminus (-c^2/2, c^2/2), \quad\text{whenever}\quad \lim_{n\to\infty}\alpha_n/d_n = 0,
    \end{equation}
and
     \begin{equation}\label{4.27Intro}
\sigma_{ac}(D_{X,\alpha}) = \sigma_{\ac}(D_{N}) = \mathbb R\setminus(-c^2/2,c^2/2)\qquad  \text{if} \qquad \{\alpha_n/d_n\}^{\infty}_1\in l^1(\mathbb N).
   \end{equation}

We also find conditions guarantying that the spectrum $\sigma(D_{X,\alpha})$ is  purely singular.

Finally,  assuming that the  operators $D^c_{X, \alpha}:= D_{X, \alpha}$ and $\rH_{X, \alpha}$ are self-adjoint  and  using parameterizations \eqref{DiracDomIntro} and  \eqref{SchredingIntro} we prove  (see Theorem \ref{nonrelGS}) the  non-relativistic limit  \eqref{1.23GSIntro} in the case  $d_*(X)\ge 0.$
In particular, \eqref{1.23GSIntro} holds whenever  $\cI =\R_+$ and $\rH_{X, \alpha} = \rH_{X, \alpha}^*$. The latter happens if, for instance, $\rH_{X, \alpha}$ is lower semibounded (see \cite{AKM_10}).\par

Similar results are also valid  for the GS-realizations  $D_{X, \beta}$.
The simplest way to prove that is to extract them from the corresponding properties of
the operators $D_{X, \alpha}$. This can be done  by noticing that
$D_{X, \beta}$ is  unitarily equivalent to $-\widehat D_{X, \alpha}$ where $\alpha = \beta c^{2}$ and that the
resolvent difference $(\widehat D_{X, \alpha}+z)^{-1} - (D_{X, \alpha}+z)^{-1}$ is a rank-one operator (see Proposition \ref{prop6.5}).
%%, results similar to the ones holding for the operators $D_{X, \alpha}$ can be obtained
%%by noticing that  $D_{X, \beta}$ turns out to be unitarily equivalent to $-\tilde D_{X, \alpha}$
%%where $\alpha=-\beta c^{2}$ and the resolvent difference $(\tilde D_{X, \alpha}+z)^{-1}-(D_{X, \alpha}+z)^{-1}$
%%is a rank-one operator (see Proposition \ref{prop6.5}).
\par
The paper is organized as follows.
\par
Section \ref{Sec_II_Prelim} is preparatory. It contains necessary
definitions and statements on the theory of boundary triplets of symmetric operators, Weyl functions,
$\gamma$-fields,  etc.
We also consider a family of symmetric operators $\{S_n\}_{n\in\N}$
and a family of boundary triplets $\Pi_n$ for $S_n^*$, $n\in \N$.
Following \cite{MN2012} and \cite{KM} we discuss conditions  guarantying that the  direct sum $\Pi=\oplus_{n=1}^\infty \Pi_n$ of boundary triplets $\Pi_n$ is either a $B$-generalized or an ordinary boundary triplet.
We also discuss and complete   regularization procedure for
$\Pi_n$ such  that a direct sum of regularized  boundary triplets  forms already a boundary triplet for the operator $A^* = \oplus_{n=1}^\infty S^*_n$  (see Theorem \ref{th_III.2.2_01}).

In section 3 we construct boundary triplets for maximal Dirac operators on finite intervals and half-lines and compute the corresponding Weyl functions. Using the explicit form of the Weyl functions and applying the regularization procedure described in Section 2, we construct a boundary triplet $\Pi_D$ for the maximal operator $D^*_X$.
We also describe trace properties of functions from the space $W^{1,2}(\mathbb R_+\setminus X)$ and show that the direct sum $\oplus_{n=1}^\infty \Pi_n$ is an ordinary boundary triplet if and only if $0<d^*(X)<\infty$ and it is a generalized boundary triplet (in the sense of  \cite{DM95}) whenever $d^*(X)<\infty$.

In Section 4 we apply boundary triplets technique to prove the non-relativistic limit for any $m$-dissipative  ($m$-accumulative) realization of the expression $D_X$. To this end we compute the corresponding limits of the Weyl function and $\gamma$-field.

In Section 5 we investigate spectral properties of GS-realizations $D_{X,\alpha}(Q)$ and $D_{X,\beta}(Q)$ and solve problems $(a)-(d)$. Moreover, we  show  (see Remark 5.10) that the operators $D_{X,\alpha}(Q)$ and $D_{X,\beta}(Q)$ on the line are  selfadjoint for any continuous (not necessarily bounded)  $2\times 2$ potential matrix $Q(\cdot) = Q(\cdot)^*$.
We also find certain sufficient conditions for the operator $D_{X,\alpha}$ on a finite interval either to be self-adjoint or to have deficiency indices $n_{\pm}(D_{X,\alpha}) = 1$ (see Proposition \ref{interval} and Theorem  \ref{not-s.a.}).
Comparison of  these results  shows  that roughly speaking $D_{X,\alpha}$ is  self-adjoint on a finite interval  whenever the sequence $\{\alpha_n\}_1^\infty$ grows  faster than the sequence $\{d_n\}_1^\infty$ decays.

Moreover, using parameterizations \eqref{DiracDomIntro} and  \eqref{SchredingIntro} and the general  result  on non-relativistic limits (see Theorem \ref{2.5}) we prove relation \eqref{1.23GSIntro} as well as similar relation for $D_{X,\beta}^c$.

\vskip15pt\noindent
\textbf{Notations.}
Throughout  the paper  $\mathfrak{H}$, $\cH$ denote  separable
Hilbert spaces. $[\mathfrak{H}, \cH]$ denotes the set of bounded
operators from $\mathfrak{H}$ to $\cH$;
$[\mathfrak{H}]:=[\mathfrak{H},\mathfrak{H}]$.
$\mathcal{C}(\mathfrak{H})$ and
$\widetilde{\mathcal{C}}(\mathfrak{H})$ are the sets of closed
operators and linear relations in $\mathfrak{H}$, respectively. By
${\mathfrak S}_p,\ p\in(0,\infty),$ we denote the
Neumann-Schatten ideals. Let $T$ be a linear operator in a Hilbert
space $\mathfrak{H}$. In what follows, $\dom (T)$, $\ker (T)$,
$\ran (T)$ are the domain, the kernel, the range of $T$,
respectively; $\sigma(T)$, $\sigma_p(T)$, $\sigma_c(T)$, $\sigma_{ac}(T)$, and
$\sigma_s(T)$,  denote the spectrum, point
spectrum, continuous, absolutely continuous and singular spectrum of $T=T^*$, respectively;
$\rho(T)$ and $\widehat{\rho}(T)$ denote  the resolvent set, and the set of
regular type points of $T$, respectively; $R_T \left(\lambda
\right):=\left( T-\lambda I\right)^{-1} $, $\lambda \in \rho(T)$,
is the resolvent of $T$.

\noindent
Let $X$ be a discrete subset of $\cI\subseteq\R$. We define the
following Sobolev spaces
%(see e.g. \cite{AF03})
\begin{gather*}
W^{1,2}(\cI\setminus X):=\{f\in L^2(\cI): f\in AC_{loc}(\cI\setminus X), f'\in
  L^2(\cI)\},\\
W^{2,2}(\cI\setminus X):=\{f\in L^2(\cI): f,
f'\in AC_{loc}(\cI\setminus X), f''\in
  L^2(\cI)\},\\
W^{1,2}_0(\cI\setminus X):=\{f\in
W^{1,2}(\cI): f(x_k)=0,\,
\mbox{for all }  x_k\in X\},\\
W^{2,2}_0(\cI\setminus X):=\{f\in W^{2,2}(\cI): f(x_k)=f'(x_k)=0\
\,
\mbox{for all }  x_k\in X\}\,.\\
W^{k,2}_{\comp}(\cI\setminus X):=\{f\in W^{k,2}(\cI\setminus X):
\supp f\ \text{is compact in}\ \cI\} = W^{k,2}(\cI\setminus X)\cap
L^2_{\comp}(\cI).
\end{gather*}
Let $I$ be a subset of $\Z$, $I\subseteq\Z$. For any non-negative sequence $\{c_n\}_{n\in I}$
we denote by $l^2(I;\{c_n\},\cH) := l^2(I;\{c_n\})\otimes \cH$ the weighted Hilbert space of
$\cH$-valued sequences, i.e. $f= \{f_n\}_{n\in I}\in l^2(I;\{c_n\},\cH)$
if $\|f\|^2 = \sum_{n\in I}c_n\|f_n\|_{\cH}^2<\infty$; $l^2_0(I,\cH)$
is a subset of finite sequences in $l^2(I;\{c_n\},\cH),$ i.e. the sequences
with compact supports;  we also abbreviate  $l^2(\N;\{c_n\}):=l^2(\N;\{c_n\},\C)$,
$l^2_0(\N;\{c_n\}) := l^2_0(\N;\{c_n\},\C)$. As usual $l^{p}(\N)$, $p\in [1,\infty),$ denotes the space of $p$-summable complex-valued sequences $f=\{f_{n}\}_{n\in\N}$;
$l^{\infty}(\N)$ denotes  the space of bounded complex-valued sequences and $c_{0}(\N)$ is a subspace of $l^{\infty}(\N)$ consisting of sequences $f=\{f_{n}\}_{n\in\N}$ satisfying $\lim_{n\to\infty}f_{n} = 0$. $\chi_{\pm}(\cdot)$ denotes the indicator
function of $\R_{\pm}.$
%%%%%%%%%%%%%%%%%%%%%%%%%%%%%%%%%%%%%%%%%%%%%%%%%%%%%%%%%%

\section{Preliminaries}\label{Sec_II_Prelim}

\subsection{Boundary triplets and Weyl functions}\label{Subsec_II.1_btrips}

In this section we briefly recall  the basic facts  of the theory of boundary
triplets and the corresponding  Weyl functions  (we refer to  \cite{DM91, DM95, Gor84} for a detailed exposition of boundary triplets).
Besides, we discuss a regularization procedure for direct sum of boundary triplets following
\cite{KM}  and \cite{MN2012}. Moreover, we slightly complete \cite[Theorem 3.13]{KM}
(see Theorem  \ref{th_III.2.2_01}).

\subsubsection{Linear relations, boundary triplets, and self-adjoint extensions}
\label{sss_II.1.1_lr}

\textbf{1.} \ The set $\widetilde\cC(\cH)$ of closed linear relations in $\cH$ is the set of closed linear subspaces of $\cH\oplus\cH$.
%We usually use a column vector notation for the elements in a linear relation $\Theta$.
Recall that $\dom(\Theta) =\bigl\{
f:\{f,f'\}\in\Theta\bigr\} $, $\ran(\Theta) =\bigl\{
f^\prime:\{f,f'\}\in\Theta\bigr\} $, and $\mul(\Theta) =\bigl\{
f^\prime:\{0,f'\}\in\Theta\bigr\} $ are the domain, the range, and the multivalued part of $\Theta$. A closed linear operator $A$ in $\cH$ is identified
with its graph $\gr(A)$, so that the set  $\cC(\cH)$  of closed linear
operators in $\cH$ is viewed as a subset of $\widetilde\cC(\cH)$.
In particular, a linear relation $\Theta$ is an operator if and
only if %%%%%%the multivalued part
$\mul(\Theta)$ is trivial.
%%%For the definition of the inverse, the resolvent set and the spectrum
%%%of linear relations we refer to \cite{DS87}.
We recall that the adjoint relation
$\Theta^*\in\widetilde\cC(\cH)$ of $\Theta\in \widetilde\cC(\cH)$ is defined by
\begin{equation*}
\Theta^*= \left\{
\begin{pmatrix} h\\h^\prime
\end{pmatrix}: (f^\prime,h)_{\cH}=(f,h^\prime)_{\cH}\,\,\text{for all}\,
\begin{pmatrix} f\\f^\prime\end{pmatrix}
\in\Theta\right\}.
\end{equation*}
A linear relation $\Theta$ is said to be {\it symmetric} if
$\Theta\subset\Theta^*$ and self-adjoint if $\Theta=\Theta^*$.
%%%%In the following we denote by $\wt \cC_{\rm self}(\cH)$ the set of self-adjoint relations in $\cH$.

For a symmetric linear relation $\Theta\subseteq\Theta^*$ in $\cH$
the multivalued part $\mul(\Theta)$ is the orthogonal complement
of $\dom(\Theta)$ in $\cH$. Therefore  setting $\cH_{\rm
op}:=\overline{\dom(\Theta)}$ and $\cH_\infty=\mul(\Theta)$, one
arrives at the orthogonal decomposition  $\Theta= \Theta_{\rm
op}\oplus \Theta_\infty$ where  $\Theta_{\rm op}$ is a  symmetric
operator in $\cH_{\rm op}$, the operator part of $\Theta,$ and
%%%verifies that $\Theta$ can be written as the direct orthogonal sum
%%%of a self-adjoint operator $\Theta_{\rm op}$ in the subspace
%%%%   a ``pure'' relation
$\Theta_\infty=\bigl\{\bigl(\begin{smallmatrix} 0 \\ f'
\end{smallmatrix}\bigr):f'\in\mul(\Theta)\bigr\}$,  a ``pure'' linear relation
 in $\cH_\infty$. % (see \cite{RB_85}).\\

\textbf{2.} \ Let $A$ be a densely defined closed symmetric
operator in a separable Hilbert space $\gH$ with equal deficiency
indices $\mathrm{n}_\pm(A)=\dim \cN_{\pm \I} \leq \infty,$ where
$\cN_z:=\ker(A^*-z)$ is the defect subspace.

\begin{definition}[\cite{Gor84}]\label{def_ordinary_bt}
A triplet $\Pi=\{\cH,\gG_0,\gG_1\}$ is called an {\rm (ordinary)
boundary triplet} for the adjoint operator $A^*$ if $\cH$ is an auxiliary
Hilbert space and $\Gamma_0,\Gamma_1:\  \dom(A^*)\rightarrow \cH$
are linear mappings such that the second abstract Green identity
\begin{equation}\label{II.1.2_green_f}
(A^*f,g)_\gH - (f,A^*g)_\gH = (\gG_1f,\gG_0g)_\cH -
(\gG_0f,\gG_1g)_\cH, \qquad f,g\in\dom(A^*),
\end{equation}
holds
and the mapping $\gG:=\begin{pmatrix}\Gamma_0\\\Gamma_1\end{pmatrix}:  \dom(A^*)
\rightarrow \cH \oplus \cH$ is surjective.
\end{definition}
First,  note that a boundary triplet
for $A^*$ exists whenever  the deficiency indices of $A$ are
equal, $\mathrm{n}_+(A)= \mathrm{n}_-(A)$.  Moreover,
$\mathrm{n}_\pm(A) = \dim \cH$ and
$\ker(\Gamma) = \ker(\Gamma_0) \cap \ker(\Gamma_1)= \dom(A)$. Note
also that $\Gamma$ is a bounded mapping from $\mathfrak H_+ = \dom(A^*)$ equipped
with the graph norm to $\cH\oplus\cH.$

A boundary triplet for $A^*$ is not unique. Moreover, for any
self-adjoint extension $\wt A := \wt A^*$ of $A$ there exists a
boundary triplet $\Pi=\{\cH,\gG_0,\gG_1\}$ such that
$\ker(\Gamma_0) = \dom(\wt A)$.
  \begin{definition}
\item $(i)$  A closed extension $A'$ of $A$ is called a \emph{proper
extension}, if $A\subset A' \subset A^*$.  The set of all proper
extensions of  $A$ completed by the (non-proper) extensions $A$ and $A^*$
is  denoted  by $\Ext_A$.

\item $(ii)$  Two proper extensions $A', A'',$ of $A$ are called disjoint
if $\dom( A')\cap \dom( A'') = \dom( A)$ and  transversal if in
addition $\dom( A') + \dom( A'') = \dom( A^*).$
%%(i) A closed extension $\widetilde{A}$ of $A$ is called
%%\emph{proper} if $A\subseteq\widetilde{A}\subseteq A^*$.
   \end{definition}

Recall that an operator $T\in \cC(\cH)$ is called dissipative if $\im(Tf,f) \ge 0$ for $f\in \dom(T)$.  It is called $m$-dissipative if it has no proper dissipative extensions.
It is known (and easily seen) that dissipative $T$ is  $m$-dissipative if and only if
$\C_-\subset \rho(T)$.

The operator $T$ is called accumulative ($m$-accumulative)  if $-T$ is
dissipative  ($m$-dissipative).

Any dissipative (accumulative) extension $\wt A$ of $A$ is necessarily a proper extension,
$\wt A\in \Ext_A$. Moreover, if $A'$ and $A''$ are disjoint and
selfadjoint, then  $\dom( A') + \dom( A'')$ is dense in $\dom(
A^*).$

Fixing a boundary triplet $\Pi$ one can parameterize the set $\Ext_A$ in the following way.
\begin{proposition}[\cite{DM95}]\label{prop_II.1.2_01}
Let $A$ be as above and let $\Pi=\{\cH,\gG_0,\gG_1\}$ be a boundary
triplet for $A^*$. Then the mapping
     \begin{equation}\label{II.1.2_01A}
\Ext_A\ni \widetilde A \to  \Gamma \dom(\widetilde A)
=\{\{\Gamma_0 f,\Gamma_1f \} : \  f\in \dom(\widetilde A) \} =:
\Theta \in \widetilde\cC(\cH)
     \end{equation}
establishes  a bijective correspondence between the sets $\Ext_A$
and  $\widetilde\cC(\cH)$. We put $A_\Theta :=\widetilde A$ where
$\Theta$ is defined by \eqref{II.1.2_01A}, i.e.  $A_\Theta:=
A^*\upharpoonright \Gamma^{-1}\Theta=A^*\upharpoonright
\bigl\{f\in\dom(A^*): \ \{\Gamma_0f,\Gamma_1f\} \in\Theta\bigr\}.$
Then:

\item $(i)$ $A_\Theta$ is $m$-dissipative ($m$-accumulative) if
and only if so is $\Theta$.

\item $(ii)$  $A_\Theta$ is symmetric
(self--adjoint) if and only if so is $\Theta$.
Moreover,  $\mathrm{n}_\pm(A_\Theta)=\mathrm{n}_\pm(\Theta)$.

\item $(iii)$  The extensions $A_\Theta$ and $A_0$ are disjoint
(transversal) if and only if $\Theta$  is an operator. %%%%\in \cC(\cH)$
In this case $A_\Theta$ admits a representation
  \begin{equation}\label{II.1.2_01AB}
A_\Theta = A^*\!\upharpoonright\ker(\gG_1 - \Theta\gG_0).
  \end{equation}
Moreover, the extensions $A_\Theta$ and $A_0$ are transversal if and only if $\Theta \in [\cH]$.
      \end{proposition}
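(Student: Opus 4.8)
The plan is to regard the boundary map $\Gamma=\binom{\Gamma_0}{\Gamma_1}$ as an abstract coordinate system on $\dom(A^*)/\dom(A)$ and to reduce each assertion to Green's identity \eqref{II.1.2_green_f} together with surjectivity of $\Gamma$; only the equality of deficiency indices will require an analytic input beyond linear algebra.

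\emph{The bijection.} First I would equip $\mathfrak H_+:=\dom(A^*)$ with the graph norm, so that $A^*$ is bounded from $\mathfrak H_+$ to $\gH$ and, by hypothesis, $\Gamma\in[\mathfrak H_+,\cH\oplus\cH]$ is surjective with closed kernel $\ker\Gamma=\dom(A)$. The open mapping theorem then makes $\Gamma$ open, so $\mathcal L\mapsto\Gamma(\mathcal L)$ and $\Theta\mapsto\Gamma^{-1}(\Theta)$ are mutually inverse, inclusion-preserving bijections between the closed subspaces $\mathcal L$ of $\mathfrak H_+$ containing $\dom(A)$ and the elements $\Theta$ of $\widetilde\cC(\cH)$. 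On the other hand, $\widetilde A\mapsto\dom(\widetilde A)$ identifies $\Ext_A$ with exactly that family of subspaces: if $\widetilde A\in\Ext_A$ then $\dom(\widetilde A)$ is graph-norm closed because $\gr(\widetilde A)$ is closed in $\gH\oplus\gH$ and $A^*$ is graph-norm continuous, while conversely $A^*\!\upharpoonright\!\mathcal L$ is a closed extension of $A$ for every such $\mathcal L$. Composing the two identifications gives the stated bijection $\widetilde A=A_\Theta\leftrightarrow\Theta$ of \eqref{II.1.2_01A}, and shows it is an order isomorphism.

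\emph{Parts (i) and (ii).} For $f,g\in\dom(A_\Theta)$ set $\{f_0,f_1\}:=\{\Gamma_0 f,\Gamma_1 f\}$ and $\{g_0,g_1\}:=\{\Gamma_0 g,\Gamma_1 g\}$; these range precisely over $\Theta$. Green's identity \eqref{II.1.2_green_f} reads
\[(A_\Theta f,g)_\gH-(f,A_\Theta g)_\gH=(f_1,g_0)_\cH-(f_0,g_1)_\cH.\]
With $g=f$ this gives $\im(A_\Theta f,f)=\im(f_1,f_0)$, so $A_\Theta$ is dissipative (resp.\ accumulative) iff $\Theta$ is; for general $g$ it shows $A_\Theta$ is symmetric iff $(f_1,g_0)=(f_0,g_1)$ for all pairs in $\Theta$, i.e.\ iff $\Theta\subseteq\Theta^*$. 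For the adjoint I would note $(A_\Theta)^*\subseteq A^*$ (since $A\subseteq A_\Theta$) and then, for $g\in\dom(A^*)$, unravel $g\in\dom((A_\Theta)^*)\iff(A^*f,g)=(f,A^*g)$ for all $f\in\dom(A_\Theta)\iff(f_1,\Gamma_0 g)_\cH=(f_0,\Gamma_1 g)_\cH$ for all $\{f_0,f_1\}\in\Theta\iff\{\Gamma_0 g,\Gamma_1 g\}\in\Theta^*$; hence $(A_\Theta)^*=A_{\Theta^*}$. Combined with the injectivity of $\Theta\mapsto A_\Theta$ this yields $A_\Theta=A_\Theta^*\iff\Theta=\Theta^*$. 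Finally, since every dissipative extension of $A$ already belongs to $\Ext_A$, being $m$-dissipative means being maximal among the dissipative elements; as our bijection is an order isomorphism carrying dissipative extensions to dissipative relations (and closures of dissipative extensions stay dissipative and inside the respective families), $A_\Theta$ is $m$-dissipative iff $\Theta$ is, and symmetrically for $m$-accumulative. This proves (i) and the symmetric/self-adjoint part of (ii).

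\emph{Deficiency indices and part (iii).} By the previous step $\mathrm{n}_\pm(A_\Theta)=\dim\ker\bigl((A_\Theta)^*\mp\I\bigr)=\dim\ker(A_{\Theta^*}\mp\I)$. For $z\in\C\setminus\R$ the restriction $\Gamma_0\!\upharpoonright\!\cN_z:\cN_z\to\cH$ is a bounded bijection (injectivity from $\cN_z\cap\dom(A_0)=\{0\}$, surjectivity from the von Neumann decomposition $\dom(A^*)=\dom(A_0)\,\dotplus\,\cN_z$ together with surjectivity of $\Gamma$), and writing $M(z):=\Gamma_1(\Gamma_0\!\upharpoonright\!\cN_z)^{-1}\in[\cH]$, the map $f\mapsto\Gamma_0 f$ identifies $\ker(A_{\Theta^*}-z)$ with $\{h\in\cH:\{h,M(z)h\}\in\Theta^*\}$. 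The remaining task — and \textbf{this is the main obstacle}, the one point where algebra does not suffice — is to show that $\dim\{h:\{h,M(z)h\}\in\Theta^*\}$ equals $\dim\ker(\Theta^*\mp\I)=\mathrm{n}_\pm(\Theta)$; I would deduce it from the fact that $\dim\{h:\{h,Th\}\in\Theta^*\}$ is the same for every bounded $T$ with $\im T$ of a fixed sign, which in turn rests on the Nevanlinna property $\pm\im M(z)>0$ for $z\in\C_\pm$ of the Weyl function (set up in the next subsection; carried out in \cite{DM95}). For (iii), putting $\Theta_0:=\Gamma(\dom A_0)=\{0\}\times\cH$, one has $\dom(A_\Theta)\cap\dom(A_0)=\Gamma^{-1}(\Theta\cap\Theta_0)=\Gamma^{-1}(\{0\}\times\mul\Theta)$, which equals $\dom(A)=\ker\Gamma$ iff $\mul\Theta=\{0\}$, i.e.\ iff $\Theta$ is an operator; in that case $\{\Gamma_0 f,\Gamma_1 f\}\in\Theta$ is the same as $\Gamma_1 f=\Theta\Gamma_0 f$, which is \eqref{II.1.2_01AB}. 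Likewise, using surjectivity of $\Gamma$, $\dom(A_\Theta)+\dom(A_0)=\Gamma^{-1}(\Theta+\Theta_0)=\Gamma^{-1}\bigl(\dom(\Theta)\times\cH\bigr)$, which is all of $\dom(A^*)$ iff $\dom(\Theta)=\cH$; for the then-operator $\Theta\in\widetilde\cC(\cH)$ this means, by the closed graph theorem, $\Theta\in[\cH]$.
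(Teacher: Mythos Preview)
The paper does not prove this proposition at all: it is quoted from \cite{DM95} as a known preliminary result, so there is no ``paper's own proof'' to compare against. Your outline is essentially the standard argument from the boundary-triplet literature (Green's identity for the transfer of (dis)sipativity and symmetry, the formula $(A_\Theta)^*=A_{\Theta^*}$, and the Weyl-function argument for the equality of deficiency indices), and it is correct as sketched; the one genuinely analytic step you flag --- constancy of $\dim\{h:\{h,Th\}\in\Theta^*\}$ over bounded $T$ with $\pm\im T>0$ --- is exactly where \cite{DM91,DM95} do the work, so your deferral there is appropriate.
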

%%%%%%%%%%%%%%%%%%%%%%%%%%%%%%
The linear relation $\gT$ (the operator $B$) in the correspondence
\eqref{II.1.2_01A} (resp.  \eqref{II.1.2_01AB}) is  called
\emph{the boundary relation (the boundary operator)}.

We emphasize that \emph{in the case of differential operators
opposed to the von Neumann  parametrization  the parametrization
\eqref{II.1.2_01A}--\eqref{II.1.2_01AB} describes the set of
proper extensions directly in terms of boundary conditions}.

It follows immediately from Proposition \ref{prop_II.1.2_01} that
the extensions
$$
A_0:=A^*\!\upharpoonright\ker(\gG_0)\quad \text{and}\quad
A_1:=A^*\!\upharpoonright\ker(\gG_1)
$$
are self-adjoint. Clearly, $A_j=A_{\Theta_j}, \ j\in \{0,1\},$
where the subspaces $\gT_0:= \{0\} \times \cH$ and $\gT_1 := \cH
\times \{0\}$ are self-adjoint relations in $\cH$. Note that
$\gT_0$ is a "pure" linear relation.

\subsubsection{Weyl functions, $\gamma$-fields, and Krein type formula for resolvents}\label{sss_II.1.3_weylsec}

\textbf{1.} \
In \cite{DM91, DM95}  the concept of the classical Weyl--Titchmarsh $m$-function
from the theory of Sturm-Liouville operators  was generalized to
the case of symmetric operators  with equal deficiency indices.
The role of abstract Weyl functions in the extension theory  is
similar to that of the classical Weyl--Titchmarsh $m$-function in the spectral
theory of singular Sturm-Liouville operators.

\begin{definition}[{\cite{DM91}}]\label{def_Weylfunc}

Let $A$ be a densely defined closed symmetric operator in $\gH$
with equal deficiency indices and let $\Pi=\{\cH,\gG_0,\gG_1\}$ be
a boundary triplet for $A^*$.
The operator valued functions $\gamma(\cdot) :\rho(A_0)\rightarrow  [\cH,\gH]$ and
$M(\cdot):\rho(A_0)\rightarrow  [\cH]$ defined by
  \begin{equation}\label{II.1.3_01}
\gamma(z):=\bigl(\Gamma_0\!\upharpoonright\cN_z\bigr)^{-1}
\qquad\text{and}\qquad M(z):=\Gamma_1\gamma(z), \qquad
z\in\rho(A_0),
  \end{equation}
are called the {\em $\gamma$-field} and the {\em Weyl function},
respectively, corresponding to the boundary triplet $\Pi.$
\end{definition}

The $\gamma$-field $\gamma(\cdot)$ and the Weyl function
$M(\cdot)$ in \eqref{II.1.3_01}
are well defined.
Moreover, both $\gamma(\cdot)$ and $M(\cdot)$ are holomorphic on
$\rho(A_0)$ and the following relations hold (see \cite{DM91})
    \begin{gather}\label{II.1.3_02'}
\gamma(z)=\bigl(I+(z-\zeta)(A_0-z)^{-1}\bigr)\gamma(\zeta)\qquad z,\ \zeta\in\rho(A_0),  \\
\label{II.1.3_02}
M(z)-M(\zeta)^*=(z-\overline\zeta)\gamma(\zeta)^*\gamma(z),  \qquad z,\ \zeta\in\rho(A_0).
      \end{gather}
Identities  \eqref{II.1.3_02'} and  \eqref{II.1.3_02}  mean that
$\gamma(\cdot)$ and $M(\cdot)$  are  the $\gamma$-field and the
$Q$-function of the operator $A_0,$ respectively, in the sense of
M. Krein (see \cite{KL71}). It follows from  \eqref{II.1.3_02}
that $M(\cdot)$ is an $R[\cH]$-function (or {\it Nevanlinna
function}), i.e., $M(\cdot)$ is an ($[\cH]$-valued) holomorphic
function on $\C\setminus \R$ satisfying
%%%%%%%%%%%%%%%%%%%%%%%%%%%%%%%%%%%%%%%%
     \begin{equation}\label{II.1.3_03}
 \im z\cdot\im M(z)\geq 0,\qquad  M(z^*)=M(\overline
z),\quad  z\in \C\setminus \R.
\end{equation}
Moreover,   due to \eqref{II.1.3_02}  $M(\cdot)\in R^u[\cH],$ i.e.
it satisfies $0\in \rho(\im M(i)).$

It is well known  that  $M(\cdot)$ admits an integral
representation (see, for instance, \cite{Akh}, \cite{Akh_Glz})
   \begin{equation}\label{WF_intrepr}
M(z)=C_0+\int_{\R}\left(\frac{1}{t-z}-\frac{t}{1+t^2}\right)d\Sigma_M(t),\qquad
z\in\rho(A_0),
  \end{equation}
where $\Sigma_M(\cdot)$ is an operator-valued Borel measure on
$\R$ satisfying $\int_\R \frac{1}{1 + t^2}d\Sigma_M(t) \in [\cH]$
and $C_0 = C_0^*\in [\cH]$. The integral in (\ref{WF_intrepr}) is
understood in the strong sense. Note that the spectral measure $E_{A_0}(\cdot)$ of the extension $A_0 = A_0^*$ and the measure  $\Sigma_M(\cdot)$ from the integral representation \eqref{WF_intrepr} are equivalent (see \cite{BraMalNei02}). Moreover, these operator measures are spectrally equivalent in the sense of \cite{MM03}.
Note also that a  linear term $C_1z$ is missing in \eqref{WF_intrepr}
since $A$ is densely defined (see \cite{DM91}).

\textbf{2.} \  Recall that a symmetric operator $A$ in $\mathfrak H$ is said to be {\it simple} if
there is no non-trivial subspace which reduces it to a
self-adjoint operator. In other words,  $A$ is simple if it does not admit an (orthogonal) decomposition $A=A'\oplus S$ where $A'$ is a symmetric operator and $S$ is a selfadjoint operator acting on a nontrivial Hilbert space.

It is easily seen (and well-known) that $A$
is simple if and only if  $\Span\{\mathfrak N_z(A): z\in\C\setminus\mathbb R \}=\mathfrak H$.

If $A$ is  simple, then  the Weyl function $M(\cdot)$ determines
the boundary triplet  $\Pi$ uniquely up to the unitary equivalence  (see \cite{DM91}).
In particular, $M(\cdot)$ contains the full information about the
spectral properties of $A_0$.
Moreover, the spectrum  of a  proper  (not
necessarily self-adjoint) extension $A_\Theta\in \Ext_A$ can be
described  by means of $M(\cdot)$ and  the   boundary relation $\Theta$.
   \begin{proposition}[\cite{DM91}]\label{prop_II.1.4_spectrum}
%%Let $A$ be a densely defined closed symmetric operator in $\gH$ with equal deficiency indices and
Let $\Pi=\{\cH,\gG_0,\gG_1\}$ be
a boundary triplet for $A^*$ and let  $M(\cdot)$  and $\gamma(\cdot)$ be the corresponding
Weyl function and the  $\gamma$-field. Then for any $\widetilde A = A_\Theta \in \Ext_A$
with $\rho(A_\Theta)\not = \emptyset$   the following Krein type formula holds
   \begin{equation}\label{II.1.4_01}
(A_\Theta - z)^{-1} - (A_0 - z)^{-1} = \gamma(z) (\Theta -
M(z))^{-1} \gamma^*({\overline z}), \quad z\in \rho(A_0)\cap
\rho(A_\Theta).
  \end{equation}
Moreover, if $\widetilde A$  is simple, then for any  $z\in \rho(A_0)$  the following equivalence
holds
\[
z\in\sigma_i(A_\Theta) \quad \Longleftrightarrow\quad 0\in
\sigma_i(\Theta-M(z)),\qquad i\in\{\rm p,\ c,\ r\}.
\]
\end{proposition}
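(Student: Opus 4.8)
The plan is to reduce the whole statement to two elementary facts about a boundary triplet $\Pi=\{\cH,\Gamma_0,\Gamma_1\}$ for $A^*$: first, the direct decomposition $\dom(A^*)=\dom(A_0)\dotplus\cN_z$ for every $z\in\rho(A_0)$ (take $f_0:=(A_0-z)^{-1}(A^*-z)f$ and observe $f-f_0\in\cN_z$), together with $\Gamma_0\!\upharpoonright\!\dom(A_0)=0$; and second, the adjoint identity $\Gamma_1(A_0-z)^{-1}=\gamma^*(\bar z)$ on $\gH$ for $z\in\rho(A_0)$. The identity I would obtain directly from the Green formula \eqref{II.1.2_green_f}: inserting $\varphi:=(A_0-z)^{-1}g$ (so $\Gamma_0\varphi=0$, $A^*\varphi=z\varphi+g$) and $\psi:=\gamma(\bar z)\eta$ (so $A^*\psi=\bar z\psi$, $\Gamma_0\psi=\eta$, $\Gamma_1\psi=M(\bar z)\eta$) into \eqref{II.1.2_green_f} and cancelling the common term $z(\varphi,\psi)_\gH$ leaves $(\Gamma_1\varphi,\eta)_\cH=(g,\gamma(\bar z)\eta)_\gH$ for all $\eta\in\cH$, which is the claimed identity.

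Granting these two facts, Krein's formula \eqref{II.1.4_01} follows quickly. Fix $z\in\rho(A_0)\cap\rho(A_\Theta)$ (this intersection is nonempty since $\rho(A_0)\supseteq\C\setminus\R$ and $\rho(A_\Theta)$ is open and nonempty), let $g\in\gH$, and put $f:=(A_\Theta-z)^{-1}g\in\dom(A_\Theta)$. Writing $f=f_0+f_z$ with $f_0=(A_0-z)^{-1}g\in\dom(A_0)$ and $f_z\in\cN_z$, the decomposition gives $\Gamma_0f=\Gamma_0f_z$, hence $f_z=\gamma(z)\Gamma_0f$ and $\Gamma_1f_z=M(z)\Gamma_0f$ by \eqref{II.1.3_01}, while $\Gamma_1f_0=\gamma^*(\bar z)g$; thus $\Gamma_1f=\gamma^*(\bar z)g+M(z)\Gamma_0f$, and the boundary condition $\{\Gamma_0f,\Gamma_1f\}\in\Theta$ becomes $\{\Gamma_0f,\gamma^*(\bar z)g\}\in\Theta-M(z)$. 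I would then check $\ker(\Theta-M(z))=\{0\}$: if $\{h,0\}\in\Theta-M(z)$ then $\{h,M(z)h\}\in\Theta$, so $u:=\gamma(z)h$ satisfies $\{\Gamma_0u,\Gamma_1u\}\in\Theta$ and $(A^*-z)u=0$, i.e.\ $u\in\ker(A_\Theta-z)=\{0\}$, whence $h=0$ by injectivity of $\gamma(z)$. Consequently $(\Theta-M(z))^{-1}$ is single valued, $\Gamma_0f=(\Theta-M(z))^{-1}\gamma^*(\bar z)g$, and adding $f_0$ and $f_z=\gamma(z)\Gamma_0f$ yields exactly \eqref{II.1.4_01}; since $\Gamma_0(A_\Theta-z)^{-1}$ is bounded and everywhere defined and $\gamma^*(\bar z)$ is onto $\cH$, this moreover shows $0\in\rho(\Theta-M(z))$.

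For the spectral part, fix $z\in\rho(A_0)$. The case $g=0$ of the above computation exhibits $h\mapsto\gamma(z)h$ as a bijection of $\{h:\{h,0\}\in\Theta-M(z)\}$ onto $\ker(A_\Theta-z)$, giving $z\in\sigma_p(A_\Theta)\Leftrightarrow0\in\sigma_p(\Theta-M(z))$ with equal multiplicities. For the ranges, the same substitution shows $g\in\ran(A_\Theta-z)$ if and only if $\gamma^*(\bar z)g\in\ran(\Theta-M(z))$, so $\ran(A_\Theta-z)$ is the preimage of $\ran(\Theta-M(z))$ under the bounded \emph{surjection} $\gamma^*(\bar z)\colon\gH\to\cH$ (surjectivity because $\gamma^*(\bar z)$ restricts to the Hilbert-space adjoint of the bounded bijection $\gamma(\bar z)\colon\cH\to\cN_{\bar z}$). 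Since taking preimages under a bounded surjection preserves being closed, being dense, and being the whole space, combining this with the kernel equivalence yields $z\in\rho(A_\Theta)\Leftrightarrow0\in\rho(\Theta-M(z))$ and, on the complement of the resolvent set, the equivalences for $i=\mathrm r$ and for $i=\mathrm c$, the latter as in \cite{DM91} using that $\widetilde A$ is simple.

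The step I expect to be the main obstacle is the scrupulous handling of $\Theta$ as a possibly multivalued linear relation — the precise meaning of $\ker(\Theta-M(z))$, $\ran(\Theta-M(z))$, $(\Theta-M(z))^{-1}$ and of ``$0\in\sigma_i(\Theta-M(z))$'' when $\mul(\Theta)\neq\{0\}$ — and, tied to it, the two-sided equivalence $z\in\rho(A_\Theta)\Leftrightarrow0\in\rho(\Theta-M(z))$ that carries the continuous-spectrum assertion; once these are in place, the rest is only the repeated use of the decomposition $f=f_0+f_z$ together with \eqref{II.1.3_01} and the identity $\Gamma_1(A_0-z)^{-1}=\gamma^*(\bar z)$.
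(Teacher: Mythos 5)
This proposition is quoted in the paper from \cite{DM91} without proof, so there is no in-paper argument to compare against; your reconstruction is the standard argument from that reference and is correct. The decomposition $\dom(A^*)=\dom(A_0)\dotplus\cN_z$, the identity $\Gamma_1(A_0-z)^{-1}=\gamma(\bar z)^*$ obtained from the Green formula, and the resulting translation of the boundary condition into $\{\Gamma_0 f,\gamma(\bar z)^*g\}\in\Theta-M(z)$ are exactly the classical ingredients, and your kernel/range correspondences (using injectivity of $\gamma(z)$ and surjectivity of $\gamma(\bar z)^*$) yield the Krein formula and the spectral equivalences correctly.
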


Formula \eqref{II.1.4_01} is a generalization of the  classical  Krein
formula for canonical resolvents (cf. \cite{Akh_Glz}, \cite{KL71}). It
establishes  a one-to-one correspondence between the set of proper
extensions $\wt A = A_\gT$ with non-empty resolvent set and the set
of the corresponding linear relations $\Theta$ in $\cH$.
Note also that all objects in \eqref{II.1.4_01} are expressed
in terms of the boundary triplet $\Pi$ (see formulae
\eqref{II.1.2_01AB} and \eqref{II.1.3_01}) (cf. \cite{DM91,DM95}).

We emphasize that \emph{precisely two parameterizations
\eqref{II.1.2_01A}--\eqref{II.1.2_01AB} and \eqref{II.1.4_01} of the set
$\Ext_A$ make it possible application of Krein's type formula for resolvents to boundary
value problems}.

The following result is deduced from \eqref{II.1.4_01}
    \begin{proposition}[\cite{DM91}]\label{prop_II.1.4_02}
Let $\Pi=\{\cH,\gG_0,\gG_1\}$  be a boundary triplet for $A^*,$
$\Theta_1,\Theta_2\in \wt\cC(\cH)$ and  let  ${\mathfrak S}_p(\gH),\
p\in(0,\infty],$ be the Neumann-Schatten ideal in $[\gH]$. Then
\item  $(i)$ For any $z
\in\rho(A_{\Theta_1})\cap\rho(A_{\Theta_2})$ and
$\zeta\in\rho(\Theta_1)\cap\rho(\Theta_2)$ the following
equivalence holds
\begin{equation}\label{II.1.4_02}
(A_{\Theta_1}-z)^{-1} - (A_{\Theta_2}-z)^{-1}\in{\mathfrak
S}_p(\gH)
\quad \Longleftrightarrow \quad (\Theta_1 - \zeta)^{-1}-
(\Theta_2 - \zeta )^{-1}\in{\mathfrak S}_p(\cH).
\end{equation}
\noindent In particular, $(A_{{\Theta}_1} - z)^{-1} - (A_0 -
z)^{-1} \in {\mathfrak S}_p(\gH) \Longleftrightarrow
\bigl(\Theta_1 - \zeta\bigr)^{-1} \in {\mathfrak S}_p(\cH).$
%
%%\begin{equation}\label{1.10a}
%%\end{equation}
%
\item $(ii)$\  If, in addition, $\Theta_1, \Theta_2\in\cC(\cH)$
and $\dom(\Theta_1) = \dom(\Theta_2)$, then the  following implication
holds
      \begin{equation}\label{II.1.4_03}
\overline{\Theta_1 - \Theta_2} \in{\mathfrak S}_p(\cH)
\Longrightarrow (A_{\Theta_1}-z)^{-1} -
(A_{\Theta_2}-z)^{-1}\in{\mathfrak S}_p(\gH).
      \end{equation}
\item $(iii)$\  Moreover, if  $\Theta_1, \Theta_2\in[\cH]$, then
implication \eqref{II.1.4_03} becomes the equivalence.
\end{proposition}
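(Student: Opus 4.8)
The plan is to derive the proposition from the Krein-type formula \eqref{II.1.4_01} of Proposition~\ref{prop_II.1.4_spectrum}, together with two elementary facts. The first is that the $\gamma$-field acts as a pair of one-sidedly invertible multipliers: by \eqref{II.1.3_01} the operator $\gamma(z)=(\Gamma_0\!\upharpoonright\cN_z)^{-1}\in[\cH,\gH]$ has the bounded left inverse $\Gamma_0\!\upharpoonright\cN_z$, while $\gamma^*(\overline z)\in[\gH,\cH]$ is surjective (its adjoint $\gamma(\overline z)$ being bounded and injective with closed range $\cN_{\overline z}$) and hence has a bounded right inverse; since ${\mathfrak S}_p$ is a two-sided ideal, this gives, for $T\in[\cH]$,
\[
\gamma(z)\,T\,\gamma^*(\overline z)\in{\mathfrak S}_p(\gH)\quad\Longleftrightarrow\quad T\in{\mathfrak S}_p(\cH),
\]
and likewise with $[\gH]$, $[\cH]$ in place of the Schatten ideals. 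The second fact is that ${\mathfrak S}_p$-membership of a resolvent difference does not depend on the base point: with $R_i(z):=(A_{\Theta_i}-z)^{-1}$, the resolvent identity gives $R_1(z)-R_2(z)=(I+(z-w)R_1(z))\,(R_1(w)-R_2(w))\,(I-(z-w)R_2(w))^{-1}$, whose flanking factors are bounded and boundedly invertible, and the same manipulation applied to the relations $\Theta_i$ at any two spectral points (scalars, or bounded operators) gives the analogous statement in $\cH$.

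For part (i), one may assume $\rho(A_{\Theta_1})\cap\rho(A_{\Theta_2})\neq\emptyset$; being open, this set meets $\C\setminus\R\subseteq\rho(A_0)$, so fix a non-real $z_0$ in it. Applying \eqref{II.1.4_01} to each $\Theta_i$ at $z_0$ gives $(A_{\Theta_i}-z_0)^{-1}-(A_0-z_0)^{-1}=\gamma(z_0)(\Theta_i-M(z_0))^{-1}\gamma^*(\overline{z_0})$, and since the left side is bounded and everywhere defined while $\gamma(z_0)$ and $\gamma^*(\overline{z_0})$ have, respectively, bounded left and right inverses, it follows that $(\Theta_i-M(z_0))^{-1}\in[\cH]$, i.e.\ $\zeta_0:=M(z_0)\in\rho(\Theta_1)\cap\rho(\Theta_2)$. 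Subtracting \eqref{II.1.4_01} for $\Theta_1$ and $\Theta_2$ at $z_0$,
\[
(A_{\Theta_1}-z_0)^{-1}-(A_{\Theta_2}-z_0)^{-1}=\gamma(z_0)\Big[(\Theta_1-M(z_0))^{-1}-(\Theta_2-M(z_0))^{-1}\Big]\gamma^*(\overline{z_0}),
\]
so by the first fact the left side lies in ${\mathfrak S}_p(\gH)$ iff the bracket lies in ${\mathfrak S}_p(\cH)$; and the resolvent identity for relations, with $C:=M(z_0)-\zeta_0 I\in[\cH]$, rewrites the bracket as
\[
\big(I+(\Theta_2-M(z_0))^{-1}C\big)\Big[(\Theta_1-\zeta_0)^{-1}-(\Theta_2-\zeta_0)^{-1}\Big]\big(I+C(\Theta_1-M(z_0))^{-1}\big),
\]
whose outer factors are bounded and boundedly invertible (inverses $I-(\Theta_2-\zeta_0)^{-1}C$ and $I-C(\Theta_1-\zeta_0)^{-1}$). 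Together with the base-point independence this yields \eqref{II.1.4_02}; the ``In particular'' clause is the case $\Theta_2=\{0\}\times\cH$, where $A_{\Theta_2}=A_0$ and $(\Theta_2-\zeta)^{-1}=0$.

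For part (ii), by (i) it suffices to verify $(\Theta_1-\zeta_0)^{-1}-(\Theta_2-\zeta_0)^{-1}\in{\mathfrak S}_p(\cH)$. Since $\Theta_1,\Theta_2\in\cC(\cH)$ with $\dom(\Theta_1)=\dom(\Theta_2)$, one has $(\Theta_1-\zeta_0)^{-1}-(\Theta_2-\zeta_0)^{-1}=(\Theta_1-\zeta_0)^{-1}(\Theta_2-\Theta_1)(\Theta_2-\zeta_0)^{-1}$, and because $\ran\big((\Theta_2-\zeta_0)^{-1}\big)=\dom(\Theta_2)\subseteq\dom\big(\overline{\Theta_2-\Theta_1}\big)$ the middle factor acts on that range as the bounded operator $\overline{\Theta_2-\Theta_1}$; hence $\overline{\Theta_1-\Theta_2}\in{\mathfrak S}_p(\cH)$ forces the product into ${\mathfrak S}_p(\cH)$, which is \eqref{II.1.4_03}. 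For part (iii), if in addition $\Theta_1,\Theta_2\in[\cH]$, then $(\Theta_i-\zeta_0)^{-1}$ are bounded and boundedly invertible, so rewriting the same identity as $\Theta_2-\Theta_1=(\Theta_1-\zeta_0)\big[(\Theta_1-\zeta_0)^{-1}-(\Theta_2-\zeta_0)^{-1}\big](\Theta_2-\zeta_0)$ shows that ${\mathfrak S}_p$-membership propagates backwards as well, so \eqref{II.1.4_03} becomes an equivalence.

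The only steps requiring real care are internal to the linear-relation calculus: setting up and checking the resolvent identity for the possibly multivalued $\Theta_i$ at the \emph{operator}-valued argument $M(z_0)$, together with the bounded invertibility of the flanking operators there — which is precisely where $M(z_0)\in\rho(\Theta_1)\cap\rho(\Theta_2)$ enters, itself obtained above from $z_0\in\rho(A_{\Theta_1})\cap\rho(A_{\Theta_2})\cap\rho(A_0)$ via \eqref{II.1.4_01} — as well as the surjectivity of $\gamma^*(\overline z)$ and the construction of its bounded right inverse. None of these is a genuine obstacle once \eqref{II.1.3_01} is unwound; the essential ingredient is Krein's formula \eqref{II.1.4_01}, which is already at our disposal.
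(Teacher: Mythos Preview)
The paper does not prove this proposition: it is quoted from \cite{DM91} with the single remark that it ``is deduced from \eqref{II.1.4_01}'', i.e.\ from the Krein-type resolvent formula. Your plan does exactly this, so the approach matches the paper's hint and is correct in substance.

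One notational slip to clean up: you write $\zeta_0:=M(z_0)$ (an operator in $[\cH]$) and then, two lines later, set $C:=M(z_0)-\zeta_0 I$, where $\zeta_0$ now has to be a \emph{scalar} in $\rho(\Theta_1)\cap\rho(\Theta_2)$ for the argument to be nontrivial (otherwise $C=0$ and the displayed identity collapses). The intended logic is clear --- first establish $(\Theta_i-M(z_0))^{-1}\in[\cH]$ via \eqref{II.1.4_01}, then use the relational resolvent identity between the operator-valued point $M(z_0)$ and the given scalar $\zeta\in\rho(\Theta_1)\cap\rho(\Theta_2)$ to transfer $\mathfrak S_p$-membership from $(\Theta_1-M(z_0))^{-1}-(\Theta_2-M(z_0))^{-1}$ to $(\Theta_1-\zeta)^{-1}-(\Theta_2-\zeta)^{-1}$ --- but keep the two symbols distinct. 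Everything else (the one-sided invertibility of $\gamma(z)$ and $\gamma^*(\overline z)$, the base-point independence via the factored resolvent-difference identity, and the algebraic manipulations for (ii) and (iii)) is in order.
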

%
%
%%Formula \eqref{2.30} is a generalization of the known Krein
%%formula for resolvents. We note also, that all objects in
%%\eqref{2.30} are expressed by means of the trace mappings
%%$\Gamma_0,\Gamma_1$ (see formulae \eqref{2.3A} and \eqref{bijB}) (cf. \cite{DM87,DM91,DM95}).

\subsubsection{Generalized boundary triplets of bounded type}\label{sss_II.1.4 GBT}

In many applications the notion of a boundary triplet is too
restrictive because of the assumption $\dom(\Gamma_j) = \mathfrak H_+$, $j\in \{0,1\}$.
Inspiring by possible applications as well as certain  theoretical reasons
this concept was   relaxed  in \cite[Section 6]{DM95}.
%%%%%%%%%%%%%%%%%%%%%%%%%%%%%%%
   \begin{definition}[{\cite{DM95}}]\label{def_II.2.1_generalized_bt}
Let $A$ be a closed densely defined symmetric operator in $\gH$
with equal deficiency indices. Let $A_* \supseteq A$ be a not
necessarily closed extension of $A$ such that $(A_*)^* = A$. A
triplet $\Pi=\{\cH,\gG_0,\gG_1\}$  is called a
\emph{generalized boundary triplet of bounded type} (in short, \emph{$B$-generalized boundary triplet}) for $A^*$   if $\cH$ is a Hilbert space and
$\Gamma_j: \dom(\Gamma):= \dom(\Gamma_0)\cap
\dom(\Gamma_1)=\dom(A_*)\to\cH$, $j\in\{0,1\},$ are linear
mappings such that

\item $(B1)$ $\gG_0$ is surjective,

 \item $(B2)$
$A_{*0}:=A_*\upharpoonright\ker(\gG_0)$ is a self-adjoint
operator,

\item $(B3)$ the Green's identity  holds
%%%%%%%%%%%%%%%%%%%%%%
   \begin{equation}\label{II.2.1_03}
(A_*f, g)_{\gH} - (f, A_*g)_{\gH} = (\gG_1f, \gG_0 g)_{\cH} -
(\gG_0 f, \gG_1 g)_{\cH},\qquad f,g\in\dom(A_*)=\dom(\Gamma).
       \end{equation}
       \end{definition}
%%%%%%%%%%%%%%%%%%%%%%%%%%%%%%%%%%%%
Note that one always has $A \subseteq A_* \subseteq A^* =
\overline{A_*}$.
%%%%The following properties of a generalized boundary triplet have been  established in \cite{DM95}.
%

For any $B$-generalized boundary triplet $\Pi=\{\cH,\gG_0,\gG_1\}$
we set $A_{*j}:=A^*\lceil\ker(\gG_j)$, $j \in\{ 0,1\}$. Note  that
the extensions $A_{*0}$ and $A_{*1}$ are always  disjoint but not
necessarily transversal.

Starting with Definition \ref{def_II.2.1_generalized_bt} of a
$B$-generalized boundary triplet $\Pi$, one can  introduce
concepts of the (generalized) $\gamma$-field $\gamma(\cdot)$ and
the Weyl function $M(\cdot)$ corresponding to $\Pi$ in much the
same way as  in Definition  \ref{def_Weylfunc} for
(ordinary) boundary triplet (for detail see \cite{DM95}). Let us
mention only the following result  (cf. \cite[Proposition
6.2]{DM95} and \cite[Proposition 5.9]{DHMS06}).
     \begin{proposition}\label{prop2.10}
Let $\Pi=\{\cH,\gG_0,\gG_1\}$ be a $B$-generalized boundary
triplet for   $A^*$, $A_{*}=A^*\lceil \dom(\Gamma)$, and let
$M(\cdot)$ be the corresponding Weyl function. Then:

\item $(i)$\  $M(\cdot)$ is an $[\cH]$-valued Nevanlinna function
satisfying $\ker(\im M(z))= \{0\}, \ z\in \C_+$.

\item  $(ii)$\  $\Pi$ is an ordinary boundary triplet  if and only
if $0\in \rho(\im M(i))$.

\item  $(iii)$\  Moreover, if $\Pi=\{\cH,\gG_0,\gG_1\}$ is a
generalized boundary triplet for $A^*$ (a boundary relation in the
sense of \cite{DHMS06}) and $M(\cdot)$ is an $R[\cH]$-function
satisfying $\ker(\im M(i))= \{0\}$, then $\Pi=\{\cH,\gG_0,\gG_1\}$
is a $B$-generalized boundary triplet for $A^*$, i.e. $(B1)$ and
$(B2)$ are satisfied.
    \end{proposition}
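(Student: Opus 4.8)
The plan is to derive the three assertions directly from the defining properties $(B1)$--$(B3)$ of a $B$-generalized boundary triplet, essentially as in \cite[Section~6]{DM95} and \cite[Section~5]{DHMS06}. First I would record, for every $z\in\rho(A_{*0})$, the direct decomposition $\dom(A_*)=\dom(A_{*0})\,\dotplus\,\ker(A_*-z)$, where $\ker(A_*-z):=\ker(A^*-z)\cap\dom(A_*)$: given $f\in\dom(A_*)$ one solves $(A_{*0}-z)f_0=(A_*-z)f$ with $f_0\in\dom(A_{*0})$ by $(B2)$, so that $f-f_0\in\ker(A_*-z)$, and the sum is direct because $\ker(A_{*0}-z)=\{0\}$. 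Together with the surjectivity $(B1)$ of $\gG_0$ this shows that $\gG_0\!\upharpoonright\!\ker(A_*-z)$ is a bijection onto $\cH$, so $\gamma(z):=(\gG_0\!\upharpoonright\!\ker(A_*-z))^{-1}$ and $M(z):=\gG_1\gamma(z)$ are well defined, and $\gamma(z)\in[\cH,\gH]$, $M(z)\in[\cH]$ by \cite[Proposition~6.2]{DM95}. Substituting $f=\gamma(z)h$ and $g=\gamma(\zeta)k$ into Green's identity \eqref{II.2.1_03} gives identity \eqref{II.1.3_02} (with $A_0$ replaced by $A_{*0}$), and the choice $\zeta=z$ yields the key relation $\im M(z)=(\im z)\,\gamma(z)^*\gamma(z)$. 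This proves $(i)$: $M(\cdot)$ is an $[\cH]$-valued Nevanlinna function and, $\gamma(z)$ being injective, $\ker(\im M(z))=\ker\gamma(z)=\{0\}$ for all $z\in\C_+$.

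For $(ii)$, the ``only if'' part is the observation recalled after \eqref{II.1.3_03}: if $\Pi$ is an ordinary boundary triplet then $\gG_0$ is bounded on $\mathfrak H_+$, so $\gamma(i)=(\gG_0\!\upharpoonright\!\ker(A^*-i))^{-1}$ is bounded below, whence $\im M(i)=\gamma(i)^*\gamma(i)$ is boundedly invertible. Conversely, assume $0\in\rho(\im M(i))$. Then $\gamma(i)$ is bounded below, so its range $\ker(A_*-i)$ is closed in $\gH$; but $\ker(A_*-i)$ is also dense in $\ker(A^*-i)$, for if $g\in\ker(A^*-i)$ we may take $g_n\in\dom(A_*)$ with $g_n\to g$ and $A_*g_n\to A^*g=ig$ (as $\overline{A_*}=A^*$), write $g_n=g_n^0+g_n^i$ along the decomposition above, and note that $(A_{*0}-i)g_n^0=(A_*-i)g_n\to0$ forces $g_n^0\to0$, hence $g_n^i\to g$. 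Therefore $\ker(A_*-i)=\ker(A^*-i)\subseteq\dom(A_*)$, which combined with $\dom(A^*)=\dom(A_{*0})\,\dotplus\,\ker(A^*-i)$ (valid since $A_{*0}=A_{*0}^*\subseteq A^*$) forces $\dom(A_*)=\dom(A^*)$, i.e.\ $A_*=A^*$. It only remains to see that $\gG$ is surjective onto $\cH\oplus\cH$: since $\ker(A^*\pm i)\subseteq\dom(A^*)=\dom(A_*)$ and $\gG_0\!\upharpoonright\!\ker(A_*\pm i)$ is bijective onto $\cH$, we get $\gG\gamma(\pm i)h=\binom{h}{M(\pm i)h}$, so $\ran\gG$ contains the graphs of $M(i)$ and of $M(-i)=M(i)^*$; given $\binom{a}{b}\in\cH\oplus\cH$, the equations $h+k=a$ and $M(i)h+M(-i)k=b$ are solved by $h=\bigl(2i\,\im M(i)\bigr)^{-1}(b-M(-i)a)$ and $k=a-h$. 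Hence $\Pi$ satisfies Definition \ref{def_ordinary_bt}.

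Assertion $(iii)$ I would deduce from \cite[Proposition~5.9]{DHMS06} (cf.\ also \cite[Section~6]{DM95}): for a boundary relation $\gG$ whose Weyl family $M(\cdot)$ is an honest $[\cH]$-valued Nevanlinna function with $\ker(\im M(i))=\{0\}$, one runs the construction of the first paragraph in reverse. Single-valuedness of $M(i)$ forces $\gG_0\!\upharpoonright\!\ker(A_*-i)$ to be an operator, $\dom M(i)=\cH$ forces its surjectivity and hence $(B1)$, and $\ker(\im M(i))=\{0\}$ makes $\gamma(i)$ injective; the latter, via the splitting $\dom(A_*)=\ker\gG_0\,\dotplus\,\ker(A_*-i)$, yields that $A_{*0}=A_*\!\upharpoonright\!\ker\gG_0$ is self-adjoint, i.e.\ $(B2)$.

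I expect the only real obstacle to be the converse in $(ii)$: the passage from $0\in\rho(\im M(i))$ to the closedness of $\ran\gamma(i)=\ker(A_*-i)$ and its identification with the whole defect subspace $\ker(A^*-i)$, which is precisely what upgrades the core $\dom(A_*)$ to all of $\dom(A^*)$; after that, surjectivity of $\gG$ falls out of the resolvent-type identity $2i\,\im M(i)=M(i)-M(-i)$. The rest is bookkeeping with the identities \eqref{II.1.3_02'}--\eqref{II.1.3_02} and the boundedness statements of \cite{DM95}.
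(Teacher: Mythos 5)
The paper does not prove Proposition \ref{prop2.10} at all: it is quoted from the literature with the pointers to \cite[Proposition 6.2]{DM95} and \cite[Proposition 5.9]{DHMS06}. Your write-up therefore supplies something the paper omits, and what you supply is correct. For (i) and (ii) you use exactly the machinery those references use: the direct decomposition $\dom(A_*)=\dom(A_{*0})\dotplus\ker(A_*-z)$ coming from $(B2)$, the identity $M(z)-M(\zeta)^*=(z-\bar\zeta)\gamma(\zeta)^*\gamma(z)$ extracted from Green's identity $(B3)$, and in the converse of (ii) the two key steps — closedness of $\ran\gamma(i)$ from $\im M(i)\ge cI$ together with density of $\ker(A_*-i)$ in $\ker(A^*-i)$ (via $\overline{A_*}=A^*$ and the uniform bound $\|(A_{*0}-i)f\|\ge\|f\|$) to upgrade $A_*$ to $A^*$, and then solvability of $h+k=a$, $M(i)h+M(i)^*k=b$ through the invertibility of $2i\,\im M(i)$ to get surjectivity of $\Gamma$. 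These are all sound. Two places rely on citation rather than argument: the boundedness of $\gamma(z)$ and $M(z)$ (which you could close in one line — $M(z)$ is everywhere defined and dissipative for $z\in\C_+$, hence closable, hence closed and bounded, and then $\|\gamma(z)h\|^2=(\im z)^{-1}\im(M(z)h,h)$ bounds $\gamma(z)$), and all of (iii), which is verbatim \cite[Proposition 5.9]{DHMS06}; your sketch of the latter is plausible but does not actually establish why injectivity of $\gamma(i)$ forces self-adjointness of $A_{*0}$ for a general boundary relation, so there it is the citation, not the sketch, that carries the claim — which matches how the paper itself treats this item.
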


\subsection{Direct sums of boundary triplets}\label{subsec_III.1_dirsum}

Let $S_n$ be a densely defined symmetric operator in a Hilbert
space $\mathfrak{H}_n$ with  %%%%equal deficiency indices,
$\mathrm{n}_+(S_n) = \mathrm{n}_-(S_n) \le \infty,\ n\in \N.$
Consider the operator $A:= \bigoplus^{\infty}_{n=1}S_n$ acting in
$\mathfrak{H} := \bigoplus_{n=1}^{\infty}\mathfrak{H}_n$, the Hilbert
direct sum of Hilbert  spaces $\mathfrak{H}_n.$ By definition, $
\mathfrak{H} = \{f= \oplus^{\infty}_{n=1} f_n: \ f_n\in
\mathfrak{H}_n, \ \sum^{\infty}_{n=1}\|
 f_n\|^2<\infty\}.$ Clearly,
%%%%%%%%%%%%%%%%%%%%%%%%%%%%%%
   \begin{equation}\label{III.1_01}
A^* = \bigoplus^{\infty}_{n=1}S^*_n,\qquad \dom(A^*) =  \{f =
\oplus^{\infty}_{n=1} f_n\in \mathfrak{H}:\
 f_n\in\dom(S^*_n),\ \  \sum_{n\in \N}\|S^*_n
 f_n\|^2<\infty\}.
   \end{equation}
%%%%%%%%%%%%%%%%%%%%%%%%%
We equip   the domains $\dom(S^*_n)=: \gH_{n+}$ and
$\dom(A^*)=: \gH_+$ with the graph norms  $\|f_n\|^2_{\gH_{n+}} :=
\|f_n\|^2 + \|S^*_nf_n\|^2$ and  $\|f\|^2_{\gH_+}  := \|f\|^2 + \|A^*f\|^2= \sum_n
\|f_n\|^2_{\gH_{n+}}$, respectively.

Further, let   $\Pi_n=\{\cH_n, \Gamma^{(n)}_0, \Gamma^{(n)}_1\}$
be a boundary triplet  for $S^*_n$, $n\in \N$.
By $\|\Gamma_j^{(n)}\|$ we denote  the norm of the linear mapping $\Gamma^{(n)}_j\in[\gH_{n+},\cH_n]$, $
j\in\{0,1\}$,  $n\in \N$.
Let also $\cH :=\bigoplus_{n=1}^{\infty}\cH_n$ be a Hilbert direct sum of
$\cH_n$. Define  mappings $\Gamma_0$ and $\Gamma_1$
%$\Gamma_j := \oplus_{k=1}^{\infty}\Gamma_j^{(k)}, \ j\in\{0,1\}$,
by setting
%%%%%%%%%%%%%%%%%%%%%
   \begin{equation}\label{III.1_02}
\Gamma_j  := \bigoplus_{n=1}^{\infty} \Gamma^{(n)}_j,\qquad
\dom(\Gamma_j) = \bigl\{f = \oplus^{\infty}_{n=1} f_n
\in\dom(A^*): \ \sum_{n\in \N}\|\Gamma^{(n)}_j f_n\|^2_{\cH_n}
<\infty\bigr\}.
   \end{equation}
%%%%%%%%%%%%%%%%%%%%%%
\ \ \  Clearly,  $\dom(\Gamma) :=\dom(\Gamma_1)\cap\dom(\Gamma_0)$
is dense in $\gH_+$.
Define the operators $S_{nj} := S_{n}^*\upharpoonright \ker
\Gamma^{(n)}_j$ and ${A}_j := \bigoplus^{\infty}_{n=1}S_{nj}$, $j\in
\{0,1\}$. Then ${A}_0$ and ${ A}_1$ are self-adjoint extensions of
$A$. Note that ${A}_0$ and ${A}_1$ are disjoint but not
necessarily transversal.
%%%%%%%%%%%%%%%%%%%%%%%%%%
Finally, we  set
\begin{equation}\label{III.1_03}
A_* = A^*\upharpoonright\dom(\Gamma)\quad \text{and}\quad A_{*j} :=
A_*\upharpoonright\ker(\Gamma_j),\quad  j\in\{0,1\}.
   \end{equation}
%%%%%%%%%%%%%%%%%%%%%%%%%%
Clearly, $A_{*j}$ is symmetric (not necessarily self-adjoint or
even closed!) extension of $A$, $A_{*j} \subset {A}_j$, $j\in
\{0,1\}$,  and
\[
\dom (A_{*j})= \{f =
\oplus^{\infty}_{n=1} f_n\in \mathfrak{H}:\
 f_n\in \ker \Gamma^{(n)}_j,\ \  \sum_{n\in \N}\bigl(\|S^*_n
 f_n\|^2 + \|\Gamma_{j'}^{(n)}f_n\|^2\bigr) <\infty\},\quad (0':=1,\ 1':=0).
\]

%%%%%%%%%%%%%%%%%%%%%%%%%%%%%%%%%%%%%%%%%%%
       \begin{definition}\label{def_III.1_01}
Let  $\Gamma_j$ be defined by \eqref{III.1_02} and $\cH
=\bigoplus_{n=1}^{\infty}\cH_n$. A collection $\Pi=\{\cH, \Gamma_0,
\Gamma_1\}$ will be called  a direct sum of boundary triplets and
will be assigned as $\Pi:= \bigoplus^{\infty}_{n=1}\Pi_n$.
    \end{definition}
%%%%%%%%%%%%%%%%%%%%%%%%%%%%%%%

It easily follows from  \eqref{III.1_01}--\eqref{III.1_03} and
Definition \ref{def_III.1_01},  that for
 $f=\oplus^{\infty}_{n=1} f_n,$\ $g =\oplus^{\infty}_{n=1}
g_n\in\dom(A_*) = \dom (\Gamma)$ Green's identity
\eqref{II.2.1_03} holds
%%%%%%%%%%%%%%%%%%%%%%%%%
   \begin{gather}\label{III.1_06}
(A_*f, g)_\gH- (f, A_* g)_\gH = \sum_{n\in \N}[(S^*_n f_n, g_n)_{\gH_n}- (f_n,S^*_n g_n)_{\gH_n}] \notag \\
= \sum_{n\in \N}\left[(\Gamma^{(n)}_1
f_n,\Gamma^{(n)}_0g_n)_{\cH_n}
 - (\Gamma^{(n)}_0 f_n,\Gamma^{(n)}_1 g_n)_{\cH_n}\right] = (\Gamma_1
f,\Gamma_0 g)_{\cH}-(\Gamma_0 f,\Gamma_1 g)_{\cH}.
   \end{gather}
%%%%%%%%%%%%%%%%%%%%%%%%%%%%%%

The series in the above equality converge due to \eqref{III.1_01}
and \eqref{III.1_02}. However the direct sum
$\Pi=\bigoplus_{n=1}^\infty\Pi_n$ \emph{is not a boundary triplet and
even a $B$-generalized  boundary triplet for $A^*$ without
additional restrictions}. This fact was discovered  in \cite{Koc_79} (in this connection see also simple examples in  \cite{KM, MN2012} and Proposition \ref{prop3.5direcsum} below). At the same
time, according to \cite[Theorem 3.2]{KM}
$\Pi=\bigoplus_{n=1}^\infty\Pi_n$  is always a boundary relation in
the sense of \cite{DHMS06}.

The following criterion has been obtained in \cite{MN2012}, \cite{KM}.
%%%%%%%%%%%%%%%%%%%%%%%%%%%%%%%
%%%%%%%%%%%%%%%%%%%%%%%%%%%%%%%%%%%%
    \begin{theorem}\label{th_criterion(bt)}
Let  $\Pi_n=\{\cH_n, \Gamma_0^{(n)}, \Gamma_1^{(n)} \}$ be a
boundary triplet for $S_{n}^*$ and  $M_n(\cdot)$  the
corresponding Weyl function, $n\in \N $. A direct sum
$\Pi=\bigoplus_{n=1}^{\infty}\Pi_n$ forms  an ordinary boundary
triplet for the operator $A^* =\bigoplus_{n=1}^{\infty}S_n^*$ if and only if
%%%%%%%%%%%%%%%%%%%%%%%%
        \begin{equation}\label{WF_criterion}
C_1= \sup_n\|M_n(i)\|_{\mathcal{H}_n}  < \infty
\quad\text{and}\quad   C_2 = \sup_n\|(\im
M_n(i))^{-1}\|_{{\cH}_n} < \infty.
   \end{equation}
         \end{theorem}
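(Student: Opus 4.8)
The statement to prove is Theorem~\ref{th_criterion(bt)}: the direct sum $\Pi=\bigoplus_{n=1}^{\infty}\Pi_n$ is an ordinary boundary triplet for $A^*=\bigoplus_{n=1}^\infty S_n^*$ if and only if \eqref{WF_criterion} holds, i.e. $C_1=\sup_n\|M_n(i)\|<\infty$ and $C_2=\sup_n\|(\im M_n(i))^{-1}\|<\infty$. The natural strategy is to exploit the fact (recalled in the excerpt) that $\Pi$ always satisfies Green's identity \eqref{III.1_06} and is always a boundary relation in the sense of \cite{DHMS06}; so the whole issue is whether $\Gamma:=\binom{\Gamma_0}{\Gamma_1}$ is defined on all of $\dom(A^*)$ (equivalently $\dom(\Gamma)=\dom(A^*)$) and surjective onto $\cH\oplus\cH$. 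I would reduce both of these to the single analytic condition \eqref{WF_criterion} by writing everything in terms of the block-diagonal Weyl function $M(z)=\bigoplus_n M_n(z)$ and the block-diagonal $\gamma$-field $\gamma(z)=\bigoplus_n\gamma_n(z)$, evaluated at $z=i$. The key identity I will use repeatedly is \eqref{II.1.3_02} at $z=\zeta=i$, which gives $\im M_n(i)=\gamma_n(i)^*\gamma_n(i)$, hence $\|\gamma_n(i)\|^2=\|\im M_n(i)\|$ and $\|\gamma_n(i)^{-1}\|^2=\|(\im M_n(i))^{-1}\|$ (the latter on $\ran\gamma_n(i)=\cN_i(S_n)$).

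\textbf{Sufficiency.} Assume \eqref{WF_criterion}. First I would show $\dom(\Gamma)=\dom(A^*)$. Decompose each $f_n\in\dom(S_n^*)$ as $f_n=f_n^0+\varphi_n$ with $f_n^0\in\dom(S_{n0})=\ker\Gamma_0^{(n)}$ and $\varphi_n=\gamma_n(i)\Gamma_0^{(n)}f_n\in\cN_i(S_n)$; this is the standard decomposition $\dom(S_n^*)=\dom(S_{n0})\dotplus\cN_i(S_n)$. On $\dom(S_{n0})$ one has $\Gamma_1^{(n)}f_n^0=0$ by Green's identity tested against $\cN_{-i}$-vectors is not quite it — instead use that $\Gamma_1\!\restriction\!\cN_i=M_n(i)\Gamma_0\!\restriction\!\cN_i$ together with $(A^*f=i f \text{ on }\cN_i)$, yielding the estimate $\|\Gamma_1^{(n)}f_n\|\le\|M_n(i)\|\,\|\Gamma_0^{(n)}f_n\|+(\text{bounded})\|f_n^0\|_{\gH_{n+}}$, and $\|\Gamma_0^{(n)}f_n\|=\|\gamma_n(i)^{-1}\varphi_n\|\le C_2^{1/2}\|\varphi_n\|\le C_2^{1/2}\|f_n\|_{\gH_{n+}}$ (the projection onto $\cN_i$ along $\dom S_{n0}$ is $\|\cdot\|_{\gH_{n+}}$-bounded with constant independent of $n$ — this uniformity is where one must be a little careful, and it follows again from $C_1,C_2<\infty$ controlling the angle between the two subspaces uniformly). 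Summing over $n$ and using $\sum_n\|f_n\|^2_{\gH_{n+}}=\|f\|^2_{\gH_+}<\infty$ gives $\sum_n\|\Gamma_j^{(n)}f_n\|^2<\infty$, so $f\in\dom(\Gamma)$; thus $\dom(\Gamma)=\dom(A^*)$. Then $\Gamma$ is a bounded operator from $\gH_+$ to $\cH\oplus\cH$ by closed graph, Green's identity \eqref{III.1_06} holds on all of $\dom(A^*)$, and surjectivity follows blockwise: given $(h_0,h_1)=\bigoplus_n(h_{0,n},h_{1,n})\in\cH\oplus\cH$, solve $\Gamma^{(n)}(f_n)=(h_{0,n},h_{1,n})$ with $f_n$ of minimal graph norm; the solution $f_n=f_n^0+\gamma_n(i)h_{0,n}$ with $f_n^0$ chosen so that $\Gamma_1^{(n)}f_n=h_{1,n}$ (possible since $\Gamma^{(n)}$ is surjective) satisfies $\|f_n\|_{\gH_{n+}}\le \text{const}\cdot(\|h_{0,n}\|+\|h_{1,n}\|)$ with the constant uniform in $n$ thanks again to \eqref{WF_criterion}; summing, $\oplus_n f_n\in\dom(A^*)$ and $\Gamma(\oplus f_n)=(h_0,h_1)$. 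Hence $\Pi$ is an ordinary boundary triplet.

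\textbf{Necessity.} Conversely, suppose $\Pi$ is an ordinary boundary triplet for $A^*$. By Proposition~\ref{prop2.10}(ii) applied to $\Pi$, the Weyl function $M(\cdot)$ of $\Pi$ — which is block-diagonal, $M(z)=\bigoplus_n M_n(z)$, since the $\gamma$-fields and Weyl functions of a direct sum are block sums — must satisfy $0\in\rho(\im M(i))$, i.e. $\im M(i)=\bigoplus_n\im M_n(i)$ is boundedly invertible on $\cH=\bigoplus\cH_n$. Bounded invertibility of a block-diagonal operator is equivalent to $\sup_n\|(\im M_n(i))^{-1}\|<\infty$, giving $C_2<\infty$. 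Moreover $M(i)\in[\cH]$ forces $\sup_n\|M_n(i)\|<\infty$, i.e. $C_1<\infty$. (Alternatively, $\Gamma$ bounded on $\gH_+$ forces, via $\Gamma_1\!\restriction\!\cN_i(S_n)=M_n(i)\,\Gamma_0\!\restriction\!\cN_i(S_n)$ and uniform bounds on $\gamma_n(i),\gamma_n(i)^{-1}$, precisely $C_1,C_2<\infty$.) This closes the equivalence.

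\textbf{Main obstacle.} The delicate point throughout is \emph{uniformity in $n$}: the bare hypothesis $n_\pm(S_n)$ equal and $\Pi_n$ a triplet for each $n$ gives, for each fixed $n$, that $\Gamma^{(n)}$ is bounded with bounded generalized inverse, but with constants a priori depending on $n$; the content of the theorem is that these per-block constants stay bounded precisely when $C_1,C_2<\infty$. Concretely, the hard estimate is the uniform bound on the skew projection $P_n\colon\dom(S_n^*)\to\cN_i(S_n)$ along $\ker\Gamma_0^{(n)}$ (and its companion for $\Gamma_1$), in the graph norm $\|\cdot\|_{\gH_{n+}}$. I would establish this by the standard computation expressing $P_n$ through $\gamma_n(i)$ and $(\im M_n(i))^{-1}$: for $f_n\in\dom(S_n^*)$, writing $\Gamma_0^{(n)}f_n$ in terms of $\gamma_n(\pm i)^*(A^*\mp i)f_n$ and inverting, one gets $\|\Gamma_0^{(n)}f_n\|\le c(C_1,C_2)\,\|f_n\|_{\gH_{n+}}$ with $c$ an explicit function of $C_1$ and $C_2$ only. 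Once that single uniform estimate is in hand — and its analogue for $\Gamma_1$ via $\Gamma_1^{(n)}\!\restriction\!\cN_i=M_n(i)\Gamma_0^{(n)}\!\restriction\!\cN_i$ and $C_1<\infty$ — both the domain claim $\dom(\Gamma)=\dom(A^*)$ and surjectivity follow by the blockwise summation arguments sketched above, and the proof is complete.
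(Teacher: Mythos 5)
Your proposal is correct in substance, but note first that the paper itself gives no proof of Theorem \ref{th_criterion(bt)}: it is imported verbatim from \cite{MN2012} and \cite{KM}. Measured against the toolkit the paper does develop, your two directions split as follows. Your necessity argument is exactly the route the paper's own machinery suggests: the direct sum is a boundary relation whose Weyl family is the block operator $\bigoplus_n M_n(\cdot)$, and Proposition \ref{prop2.10}(ii) together with the elementary facts $\|M(i)\|=\sup_n\|M_n(i)\|$ and $0\in\rho(\im M(i))\Leftrightarrow\sup_n\|(\im M_n(i))^{-1}\|<\infty$ gives \eqref{WF_criterion}. For sufficiency you instead do direct blockwise estimates on $\Gamma^{(n)}$; this is a legitimate, more self-contained alternative to invoking Proposition \ref{prop2.10}(iii)+(ii), and it is essentially the argument of \cite{MN2012}. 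Two remarks on your estimates. (1) Your worry about the skew projection $P_n:\dom(S_n^*)\to\cN_i(S_n)$ along $\dom(S_{n0})$ is unfounded and needs neither $C_1$ nor $C_2$: writing $f_n=f_n^0+\varphi_n$ one has $(S_n^*-i)f_n=(S_{n0}-i)f_n^0$, so $\|f_n^0\|\le\|(S_{n0}-i)^{-1}\|\,\|(S_n^*-i)f_n\|\le\sqrt2\,\|f_n\|_{\gH_{n+}}$ with a universal constant, whence $\|\varphi_n\|\le C\|f_n\|_{\gH_{n+}}$. (2) The identity you should make explicit, since it carries both the uniform bound on $\Gamma_1^{(n)}$ and the surjectivity step, is $\Gamma_1^{(n)}f_0=\gamma_n(i)^*(S_{n0}+i)f_0$ for $f_0\in\ker\Gamma_0^{(n)}$ (obtained from Green's identity tested against $\gamma_n(i)h$). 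It yields $\|\Gamma_1^{(n)}\upharpoonright\ker\Gamma_0^{(n)}\|\le\sqrt{2C_1}$ because $\|\gamma_n(i)\|^2=\|\im M_n(i)\|\le C_1$, and, since $\gamma_n(i)^*$ has the right inverse $\gamma_n(i)(\im M_n(i))^{-1}$ of norm at most $C_1^{1/2}C_2$, it produces the uniformly bounded blockwise solution of $\Gamma^{(n)}f_n=(h_{0,n},h_{1,n})$ that your surjectivity argument asserts. With these two points written out, the proof is complete; no step fails.
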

%%%%%%%%%%%%%%%%%%%%%%
%%%%%%%%%%%%%%%%%%%%%%%%%%%%%
 Theorem \ref{th_criterion(bt)} makes it possible to
construct a boundary triplet by regularizing each summand  in a direct sum
$\Pi=\bigoplus_{n=1}^\infty \Pi_n$ of arbitrary boundary triplets. The
corresponding result was obtained in  \cite[Theorem
5.3]{MN2012} (see also \cite{MalNei11} and \cite[Theorems 3.10,
3.11]{KM}).

%$\wt\Pi=\oplus_{k=1}^\infty \wt\Pi_k = \{\mathcal{H},
%\wt{\Gamma}_0, \wt{\Gamma}_1 \}$ for $A^*$ satisfying
%$\ker\Gamma_0 = \ker\wt\Gamma_0.$
%%Let us note that to construct a boundary triplet for the operator,
%%which is a direct sum of infinite number of operators, we should
%%consider a regularization of the triplets.
%%%%%%%%%%%%%%%%%%%%%%%%%%%
    \begin{theorem}[\cite{MN2012, MalNei11}]\label{th_III.2.1_02}
Let $S_n$ be a symmetric operator in $\mathfrak{H}_n$ with
deficiency indices $\mathrm{n}_{\pm}(S_k) = \mathrm{n}_n\le
\infty$ and $S_{n0}= S_{n0}^*\in \Ext S_n, \  n\in \N.$ Then for
any $n\in \N$ there exists a boundary triplet
$\Pi_n=\{\mathcal{H}_n, \Gamma_0^{(n)}, \Gamma_1^{(n)} \}$  for
$S_{n}^*$ such that $\ker\Gamma_0^{(n)}= \dom(S_{n0})$ and
$\Pi=\bigoplus_{n=1}^{\infty}\Pi_n$ forms an ordinary  boundary
triplet for $A^* =\bigoplus_{n=1}^{\infty}S_n^*$ satisfying
$\ker\Gamma_0 = \dom(\widetilde{A}_{0}):=
\bigoplus_{n=1}^{\infty}\dom(S_{n0}).$
     \end{theorem}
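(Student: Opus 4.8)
The plan is to construct, for each fixed $n$, a \emph{renormalized} boundary triplet $\Pi_n$ for $S_n^*$ whose Weyl function $M_n(\cdot)$ is uniformly controlled at the single point $z=i$ in the sense of \eqref{WF_criterion}, and then invoke Theorem \ref{th_criterion(bt)} to conclude that the direct sum is an ordinary boundary triplet. Start with an arbitrary boundary triplet $\widetilde\Pi_n=\{\cH_n,\widetilde\Gamma_0^{(n)},\widetilde\Gamma_1^{(n)}\}$ for $S_n^*$ with $\ker\widetilde\Gamma_0^{(n)}=\dom(S_{n0})$ (such a triplet exists by the remarks following Definition \ref{def_ordinary_bt}), and let $\widetilde M_n(\cdot)$ be its Weyl function. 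Since $S_{n0}=S_{n0}^*$ we have $i\in\rho(S_{n0})$, so $\widetilde M_n(i)\in[\cH_n]$ and, because $\widetilde M_n$ is a Nevanlinna function with $0\in\rho(\im\widetilde M_n(i))$, the operator $\im\widetilde M_n(i)$ is boundedly invertible and positive. Write $\widetilde M_n(i)=:R_n+iJ_n$ with $R_n=R_n^*$, $J_n=J_n^*>0$, $J_n^{-1}\in[\cH_n]$.

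The key algebraic step is the standard linear transformation of the triplet: set $Q_n:=J_n^{1/2}\in[\cH_n]$ (with $Q_n^{-1}\in[\cH_n]$) and define
\begin{equation}\label{eq:renorm}
\Gamma_0^{(n)}:=Q_n\,\widetilde\Gamma_0^{(n)},\qquad
\Gamma_1^{(n)}:=Q_n^{-1}\bigl(\widetilde\Gamma_1^{(n)}-R_n\widetilde\Gamma_0^{(n)}\bigr).
\end{equation}
A direct substitution into Green's identity \eqref{II.1.2_green_f} shows that $\{\cH_n,\Gamma_0^{(n)},\Gamma_1^{(n)}\}$ again satisfies the Green identity, because the bounded self-adjoint correction $-R_n\widetilde\Gamma_0^{(n)}$ produces a term $(R_n\widetilde\Gamma_0^{(n)}f,\widetilde\Gamma_0^{(n)}g)-(\widetilde\Gamma_0^{(n)}f,R_n\widetilde\Gamma_0^{(n)}g)=0$ by symmetry of $R_n$, and the factors $Q_n$, $Q_n^{-1}$ cancel in the cross terms. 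Surjectivity of $\binom{\Gamma_0^{(n)}}{\Gamma_1^{(n)}}$ follows from surjectivity of $\binom{\widetilde\Gamma_0^{(n)}}{\widetilde\Gamma_1^{(n)}}$ together with bounded invertibility of the block matrix $\left(\begin{smallmatrix}Q_n&0\\-Q_n^{-1}R_n&Q_n^{-1}\end{smallmatrix}\right)$. Moreover $\ker\Gamma_0^{(n)}=\ker\widetilde\Gamma_0^{(n)}=\dom(S_{n0})$, so the distinguished self-adjoint extension is unchanged; hence by standard formulas the new Weyl function is $M_n(z)=Q_n^{-1}\bigl(\widetilde M_n(z)-R_n\bigr)Q_n^{-1}$, giving $M_n(i)=Q_n^{-1}(iJ_n)Q_n^{-1}=iI_{\cH_n}$.

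With this normalization $\|M_n(i)\|=1$ and $\|(\im M_n(i))^{-1}\|=\|I_{\cH_n}\|=1$ for every $n$, so condition \eqref{WF_criterion} holds with $C_1=C_2=1$. Theorem \ref{th_criterion(bt)} then yields that $\Pi:=\bigoplus_{n=1}^\infty\Pi_n$ is an ordinary boundary triplet for $A^*=\bigoplus_{n=1}^\infty S_n^*$. Finally, $\ker\Gamma_0=\bigoplus_{n=1}^\infty\ker\Gamma_0^{(n)}=\bigoplus_{n=1}^\infty\dom(S_{n0})=\dom(\widetilde A_0)$, where the middle equality is exactly \eqref{III.1_02} restricted to $\ker\Gamma_0$: for $f=\oplus f_n$ with each $f_n\in\ker\Gamma_0^{(n)}$ one has $\sum\|\Gamma_0^{(n)}f_n\|^2=0<\infty$ automatically, and since $M_n(i)=iI$ one also controls $\sum\|\Gamma_1^{(n)}f_n\|^2$ via $\|\Gamma_1^{(n)}f_n\|=\|\gamma_n(i)^{*}\,\cdot\|$-type estimates built from the resolvent of $S_{n0}$, so $f\in\dom(\Gamma)$. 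This completes the argument.

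The step I expect to be the main (if modest) obstacle is verifying that $\Pi=\bigoplus\Pi_n$ is not merely a $B$-generalized but an \emph{ordinary} boundary triplet, i.e. that $\dom(\Gamma_0)\cap\dom(\Gamma_1)=\dom(A^*)$ rather than a proper dense subset; this is precisely the content of Theorem \ref{th_criterion(bt)}, and the whole point of choosing $Q_n=J_n^{1/2}$ (rather than an arbitrary bounded renormalizer) is to make the \emph{uniform} bounds $\sup_n\|M_n(i)\|<\infty$ and $\sup_n\|(\im M_n(i))^{-1}\|<\infty$ trivially satisfied. Everything else is the routine linear-transformation calculus for boundary triplets recorded, e.g., in \cite{DM95}.
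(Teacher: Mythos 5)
Your proof is correct and is essentially the argument behind the cited result \cite[Theorem 5.3]{MN2012}: the paper only quotes this theorem, but your renormalization $\Gamma_0^{(n)}=(\im\widetilde M_n(i))^{1/2}\widetilde\Gamma_0^{(n)}$, $\Gamma_1^{(n)}=(\im\widetilde M_n(i))^{-1/2}\bigl(\widetilde\Gamma_1^{(n)}-\re\widetilde M_n(i)\,\widetilde\Gamma_0^{(n)}\bigr)$ is exactly the transformation scheme \eqref{III.2.2_08} anchored at $z=i$, and forcing $M_n(i)=i I_{\cH_n}$ so that \eqref{WF_criterion} holds with $C_1=C_2=1$ is precisely the intended mechanism for invoking Theorem \ref{th_criterion(bt)}. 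The only cosmetic remark is that your closing estimate on $\sum_n\|\Gamma_1^{(n)}f_n\|^2$ is superfluous: once Theorem \ref{th_criterion(bt)} yields that $\Pi$ is an ordinary boundary triplet, one has $\dom(\Gamma)=\dom(A^*)$ automatically, and the identity $\ker\Gamma_0=\bigoplus_{n=1}^{\infty}\dom(S_{n0})$ follows directly from $\ker\Gamma_0^{(n)}=\dom(S_{n0})$.
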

%%%%%%%%%%%%%%%%%%%%%%%%%%%%%%%%%%%
%%%%%%%%%%%%%%%%%%%%%%%%%%%

Next we assume that the operator $A = \bigoplus^{\infty}_{n=1}S_n$
has a regular real point, i.e., there exists $a={\overline
a}\in{\hat\rho}(A)$. The latter is equivalent to the existence of
$\varepsilon> 0$ such that
%%%%%%%%%%%%%%%%%%%%%%%%%%%%
   \begin{equation}\label{III.2.2_01}
(a-\varepsilon, a + \varepsilon) \subset \cap^{\infty}_{n=1}
{\widehat\rho}(S_n).
       \end{equation}
%%%%%%%%%%%%%%%%%%%%%%%%%%%%%%

Emphasize that  condition
$a\in\cap^{\infty}_{n=1}{\widehat\rho}(S_n)$ is not enough for the
inclusion ${a\in\widehat\rho}(A).$

It is known \cite{Kre47} (see also \cite{DM91}) that under condition \eqref{III.2.2_01}  for every
$k\in{\N}$ there exists a selfadjoint extension ${\widetilde
S}_{k} = {\widetilde S}^*_{k}$ of $S_k$ preserving the gap
$(a-\varepsilon, a+\varepsilon)$. Moreover, the Weyl function of
the pair $\{S_k, {\widetilde S}_{k}\}$ is regular within the gap
$(a-\varepsilon, a+\varepsilon)$. Assuming condition
\eqref{III.2.2_01} to be satisfied, one can simplify conditions
\eqref{WF_criterion} of Theorem \ref{th_criterion(bt)} (cf.
\cite[Theorem 3.13]{KM}). In the following theorem  we slightly complete \cite[Theorem
3.13]{KM}.   %%% and present an independent proof which  differs from that in....  %%%% as follows.
%%%%%%%%%%%%%%%%%%%%%%%%%%%%%%%%%%%%%%%
       \begin{theorem}\label{th_III.2.2_01}
Let $\{S_n\}_{n= 1}^\infty$ be a sequence of symmetric operators
satisfying \eqref{III.2.2_01}. Let also  $\Pi_n=\{\cH_n,
\Gamma^{(n)}_0, \Gamma^{(n)}_1\}$ be a boundary triplet for
$S^*_n$ such that
$(a-\varepsilon,a+\varepsilon)\subset\rho(S_{n0})$ and let
$M_n(\cdot)$ be  the corresponding Weyl function. Then:

\item $(i)$ $\Pi= \bigoplus^{\infty}_{n=1} \Pi_n$ forms a $B$-generalized
boundary triplet for $A^*= \bigoplus^{\infty}_{n=1}S_n^*$ if and only
if
%%%%%%%%%%%%%%%%%%%%%%%%%%%%%%%%%%
    \begin{equation}\label{III.2.2_02}
C_3 := \sup_{n\in\N}\|M_n(a)\|_{\cH_n} < \infty \qquad \text{and}
\qquad
C_4:= \sup_{n\in\N}\|M'_n(a)\|_{\cH_n} < \infty,   %%%%%\quad \text{and}  %%%C_5 := \sup_{n\in\N}\|\bigl(M'_n(a)\bigr)^{-1}\|<\infty,
     \end{equation}
%%%%%%%%%%%%%%%%%%%%%%%%%%%%%%%%%
where  $M'_n(a):=({dM_n}(z)/{dz})|_{z=a}$.

\item $(ii)$  $\Pi= \bigoplus^{\infty}_{n=1} \Pi_n$ is an ordinary  boundary
triplet for $A^*= \bigoplus^{\infty}_{n=1}S_n^*$ if and only if in
addition to \eqref{III.2.2_02}  the following condition is
satisfied
    \begin{equation}\label{III.2.2_02NEW}
 C_5 := \sup_{n\in\N}\|\bigl(M'_n(a)\bigr)^{-1}\|_{\cH_n} < \infty.
     \end{equation}
        \end{theorem}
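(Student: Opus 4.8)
The plan is to deduce both statements from Theorem \ref{th_criterion(bt)} by comparing the Weyl function $M_n(\cdot)$ at the regular real point $a$ with its value at $i$, using the fact that each $M_n(\cdot)$ is a Nevanlinna function holomorphic on the gap $(a-\varepsilon,a+\varepsilon)$. The key structural input is the integral representation \eqref{WF_intrepr}: for each $n$,
\begin{equation*}
M_n(z)=C_{0,n}+\int_{\R}\left(\frac{1}{t-z}-\frac{t}{1+t^2}\right)d\Sigma_{M_n}(t),
\end{equation*}
where, because $(a-\varepsilon,a+\varepsilon)\subset\rho(S_{n0})$, the measure $\Sigma_{M_n}$ is supported off that interval. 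Differentiating under the integral sign gives $M_n'(a)=\int_\R (t-a)^{-2}\,d\Sigma_{M_n}(t)\ge 0$, a bounded nonnegative operator, and more generally $M_n^{(k)}(a)=k!\int_\R(t-a)^{-k-1}d\Sigma_{M_n}(t)$. First I would record the elementary operator inequality, valid for all $t$ with $|t-a|\ge\varepsilon$: there are constants $c_\varepsilon, C_\varepsilon>0$ (depending only on $\varepsilon$) with
\begin{equation*}
c_\varepsilon\,\frac{1}{(t-a)^2}\le \frac{1}{(t-i)(t+i)}=\frac{1}{1+t^2}\le C_\varepsilon\,\frac{1}{(t-a)^2},
\end{equation*}
since both sides behave like $t^{-2}$ at infinity and are comparable on the compact set $\{|t-a|\ge\varepsilon\}\cap\{|t|\le R\}$. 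Integrating against $d\Sigma_{M_n}$ yields $c_\varepsilon M_n'(a)\le \operatorname{Im}M_n(i)\le C_\varepsilon M_n'(a)$, an operator inequality uniform in $n$. This already identifies $\operatorname{Im}M_n(i)$ and $M_n'(a)$ as being comparable boundedly from above \emph{and} below, which is exactly the bridge between the two sets of conditions.

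For part $(i)$, I would argue that $\Pi=\bigoplus\Pi_n$ is a $B$-generalized boundary triplet iff $\sup_n\|M_n(i)\|<\infty$; this is the appropriate variant of Theorem \ref{th_criterion(bt)} — condition $C_1$ alone (without $C_2$) characterizes the $B$-generalized case, as in \cite[Theorem 3.13]{KM} or the cited \cite{MN2012} results. Now $\|M_n(i)\|\le\|\operatorname{Re}M_n(i)\|+\|\operatorname{Im}M_n(i)\|$. Using $M_n(a)=C_{0,n}+\int(\frac{1}{t-a}-\frac{t}{1+t^2})d\Sigma_{M_n}(t)$ and $M_n(i)=C_{0,n}+\int(\frac{1}{t-i}-\frac{t}{1+t^2})d\Sigma_{M_n}(t)$, subtracting gives $M_n(i)-M_n(a)=\int\big(\frac{1}{t-i}-\frac{1}{t-a}\big)d\Sigma_{M_n}(t)$, and a pointwise estimate $\big|\frac{1}{t-i}-\frac{1}{t-a}\big|\le \mathrm{const}_\varepsilon\cdot\frac{1}{(t-a)^2}$ on $\{|t-a|\ge\varepsilon\}$ shows $\|M_n(i)-M_n(a)\|\le \mathrm{const}_\varepsilon\,\|M_n'(a)\|$. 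Hence $\sup_n\|M_n(i)\|<\infty$ iff $\sup_n\|M_n(a)\|<\infty$ and $\sup_n\|M_n'(a)\|<\infty$, i.e. iff \eqref{III.2.2_02} holds. Conversely, $\|M_n(a)\|\le \|M_n(i)\|+\mathrm{const}_\varepsilon\|M_n'(a)\|$ and $\|M_n'(a)\|\le c_\varepsilon^{-1}\|\operatorname{Im}M_n(i)\|\le c_\varepsilon^{-1}\|M_n(i)\|$, closing the equivalence.

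For part $(ii)$, Theorem \ref{th_criterion(bt)} says $\Pi$ is an ordinary boundary triplet iff additionally $\sup_n\|(\operatorname{Im}M_n(i))^{-1}\|<\infty$. By the two-sided comparison $c_\varepsilon M_n'(a)\le\operatorname{Im}M_n(i)\le C_\varepsilon M_n'(a)$, invertibility is equivalent and the inverse norms are comparable: $\|(\operatorname{Im}M_n(i))^{-1}\|\le c_\varepsilon^{-1}\|(M_n'(a))^{-1}\|$ and vice versa. Therefore $\sup_n\|(\operatorname{Im}M_n(i))^{-1}\|<\infty$ iff $C_5<\infty$, which gives exactly \eqref{III.2.2_02NEW} on top of \eqref{III.2.2_02}. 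The main obstacle I anticipate is purely bookkeeping: making the pointwise scalar inequalities into genuine \emph{operator} inequalities integrated against the operator-valued measure $\Sigma_{M_n}$, and ensuring all constants depend only on $\varepsilon$ (not on $n$). Once the inequality $c_\varepsilon(t-a)^{-2}\le(1+t^2)^{-1}\le C_\varepsilon(t-a)^{-2}$ on $\{|t-a|\ge\varepsilon\}$ is in hand, everything else is a uniform comparison of norms; the only subtlety worth care is that $M_n'(a)$ may fail to be boundedly invertible for some $n$ even when $\operatorname{Im}M_n(i)$ is — but the two-sided bound rules this out, so no extra hypothesis is needed.
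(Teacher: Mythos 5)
Your proposal is correct and follows essentially the same route as the paper: both arguments rest on the integral representation \eqref{WF_intrepr} with the measure supported off the gap, the two-sided comparison of $(1+t^2)^{-1}$ with $(t-a)^{-2}$ there (the paper's inequality \eqref{2.25}), and the reduction to Proposition \ref{prop2.10}(iii) for part (i) and Theorem \ref{th_criterion(bt)} for part (ii). The only cosmetic difference is that you compare $M_n(i)-M_n(a)$ directly (so the constants $C_{0,n}$ cancel), whereas the paper bounds $\sup_n\|C_{0,n}\|$ as an intermediate step; the substance is the same.
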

%%%%%%%%%%%%%%%%%%%%%%%%%%%%%%%%%%%%%
 \begin{proof} (i) According to \eqref{WF_intrepr} each $M_n(\cdot),\  n\in\mathbb N$, admits a representation
     \begin{equation}\label{2.21}
M_n(z) = C_{0,n} + \int_{\R\setminus
G_{\varepsilon}}\left(\frac{1}{t-z}-\frac{t}{1+t^2}\right)d\Sigma_n(t),\qquad
\int_\R \frac{1}{1 + t^2}d\Sigma_n(t) \in [\cH_n],
   \end{equation}
where $C_{0,n} = C_{0,n}^*\in [\cH_n]$ and
$G_{\varepsilon}:=(a-\varepsilon, a+\varepsilon)$. Hence
    \begin{equation}\label{2.22}
M_n(a) = C_{0,n} + \int_{\R\setminus
G_{\varepsilon}}\frac{1+at}{(t-a)(1+t^2)}d\Sigma_n(t) \quad
\text{and}\quad M_n'(a) =  \int_{\R\setminus
G_{\varepsilon}}\frac{1}{(t-a)^2}d\Sigma_n(t).
    \end{equation}
Noting that with some $k>0$
   \begin{equation}
|{(1+at)(t-a)}{(1+t^2)^{-1}}|\le k, \qquad  t\in{\mathbb R},
  \end{equation}
we get from \eqref{2.22} that the second condition in
\eqref{III.2.2_02} implies
  $$
\sup_n\|M_n(a) - C_{0,n}\|_{\cH_n} \le
k\sup_n\|M_n'(a)\|_{\cH_n} < \infty.
$$
Combining this estimate with the first condition in
\eqref{III.2.2_02}  yields $\sup_n\|C_{0,n}\|_{\cH_n} < \infty.$

Further, it follows from  \eqref{2.21} that
     \begin{equation}\label{2.24}
M_n(i) = C_{0,n} + i \int_{\R\setminus
G_{\varepsilon}}\frac{1}{1+t^2} d\Sigma_n(t) \in [\cH_n].
   \end{equation}
It is easily seen that there exist constants $k_1, \  k_2>0$ such
that
    \begin{equation}\label{2.25}
0<k_1< (1 + t^2)(t-a)^{-2} < k_2, \qquad    t\in\mathbb
R\setminus(a-\varepsilon, a+\varepsilon).
    \end{equation}
Taking this inequality into account and combining \eqref{2.24}
with \eqref{2.22}   we get that  the second condition  in
\eqref{III.2.2_02}  is equivalent to
$\sup_n\|M_n(i)-C_{0,n}\|_{\cH_n} < \infty.$ Combining this
estimate with $\sup_n\|C_{0,n}\|_{\cH_n} < \infty$ yields
$\sup_n\|M_n(i)\|_{\cH_n} < \infty$,  i.e. $M(i)\in[\cH]$. The
latter  means that $M(\cdot)\in R[\cH]$. Since $\ker \im M(i)=0$,
it remains to  apply Proposition \ref{prop2.10}(iii).  %%%% $\Pi$ is a $B$-generalized boundary triplet.

(ii)  Using representation \eqref{2.22}  for $M'_n(a)$ we rewrite
condition \eqref{III.2.2_02NEW}  as
     \begin{equation}
\int_{\R\setminus G_{\varepsilon}}\frac{1}{(t-a)^2}d\Sigma_n(t) =
M_n'(a)   \ge C_5^{-1}, \qquad n\in\mathbb N.
     \end{equation}
Combining these inequalities with representation \eqref{2.24} and
taking into  account  inequality \eqref{2.25}  we obtain
    \begin{eqnarray}
\im M_n(i)= \int_{\R\setminus G_{\varepsilon}}\frac{1}{1+t^2}
d\Sigma_n(t) = \int_{\R\setminus
G_{\varepsilon}}\frac{1}{(t-a)^2}\cdot\frac{(t-a)^2}{1+t^2}
d\Sigma_n(t)  \nonumber   \\
\ge k^{-1}_2 \int_{\R\setminus
G_{\varepsilon}}\frac{1}{(t-a)^2}d\Sigma_n(t)  \ge k^{-1}_2
C^{-1}_5, \qquad n\in \N.
    \end{eqnarray}
This is amount  to saying that  $\sup_n\|(\im
M_n(\I))^{-1}\|_{{\cH}_n} \le k_2 C_5$. To complete the proof it remains to apply
Theorem \ref{th_criterion(bt)}.
 \end{proof}

For operators $A = \bigoplus_{n=1}^\infty S_n$ satisfying
\eqref{III.2.2_01} we complete Theorem \ref{th_III.2.2_01}  by
presenting a regularization procedure for $\Pi =
\bigoplus_{n=1}^\infty\Pi_n$ leading to a boundary triplet.
In applications to symmetric  operators with a gap this regularization  is
substantially simpler  than the one  described in Theorem
\ref{th_III.2.1_02}.
%%%%%%%%%%%%%%%%%%%%%%%%%%%%%%%%%%%%%%%%
%%%%%%%%%%%%%%%%%%%%%%%%%%%%%%%%%%%%%
   \begin{corollary}\label{cor_III.2.2_02}
Let $\{S_n\}^{\infty}_{n=1}$ be a sequence of symmetric operators
satisfying \eqref{III.2.2_01}. Let also ${\widetilde \Pi}_n =
\{\cH_n, {\widetilde\Gamma}^{(n)}_0, {\widetilde\Gamma}^{(n)}_1\}$
be a boundary triplet for $S^*_n$ such that $(a-\varepsilon,
a+\varepsilon)\subset\rho(S_{n0})$, $S_{n0}=
S_n^*\upharpoonright\ker({\widetilde\Gamma}^{(n)}_0)$, and
${\widetilde M}_n(\cdot)$ the corresponding Weyl function. Assume
also that for some  operators $R_n$  such that $R_n,  R^{-1}_n
\in[\cH_n]$, the following conditions are satisfied
%%%%%%%%%%%%%%%%%%%%%%%%%%
%
       \begin{equation}\label{III.2.2_11}
%%%,\quad \sup_k\|R_k^{-1}\|<\infty
\sup_n \|R_n^{-1}({\widetilde M}'_n(a))(R_n^{-1})^*\|_{\cH_n} <
\infty \quad \text{and}\quad \sup_n \|R_n^*({\widetilde
M}'_n(a))^{-1}R_n\|_{\cH_n} < \infty, \quad n\in\N.
       \end{equation}
Then the direct sum $\Pi=\bigoplus_{n=1}^\infty \Pi_n$ of boundary
triplets
    \begin{equation}\label{III.2.2_08}
\Pi_n=\{\cH_n,\Gamma^{(n)}_0,\Gamma^{(n)}_1\}\quad
\text{with}\quad \Gamma^{(n)}_0 := R_n{\widetilde\Gamma}^{(n)}_0,\qquad
 \Gamma^{(n)}_1:= (R^{-1}_n)^*\bigl({\widetilde\Gamma}^{(n)}_1 -
 \wt M_n(a){\widetilde\Gamma}^{(n)}_0\bigr),
     \end{equation}
forms a boundary triplet for $A^* = \bigoplus_{n=1}^\infty S_n^*$.
  \end{corollary}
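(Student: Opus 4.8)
The plan is to reduce the assertion to Theorem \ref{th_III.2.2_01}(ii) by computing the Weyl function $M_n(\cdot)$ of the transformed triplet $\Pi_n = \{\cH_n, \Gamma_0^{(n)}, \Gamma_1^{(n)}\}$ in terms of the given Weyl function $\wt M_n(\cdot)$ and then checking that conditions \eqref{III.2.2_02} and \eqref{III.2.2_02NEW} hold with $a$ as the regular point. First I would verify that $\Pi_n$ as defined in \eqref{III.2.2_08} is indeed a (ordinary) boundary triplet for $S_n^*$: this is a standard fact about transformations of boundary triplets, since the pair $(\Gamma_0^{(n)}, \Gamma_1^{(n)}) = (R_n \wt\Gamma_0^{(n)}, (R_n^{-1})^*(\wt\Gamma_1^{(n)} - \wt M_n(a)\wt\Gamma_0^{(n)}))$ is obtained from $(\wt\Gamma_0^{(n)}, \wt\Gamma_1^{(n)})$ by the block matrix $\left(\begin{smallmatrix} R_n & 0 \\ -(R_n^{-1})^*\wt M_n(a) & (R_n^{-1})^* \end{smallmatrix}\right)$, which is $J$-unitary because $\wt M_n(a) = \wt M_n(a)^*$ (recall $a$ is real and in $\rho(S_{n0})$); hence Green's identity is preserved and surjectivity is clear. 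Also $\ker\Gamma_0^{(n)} = \ker\wt\Gamma_0^{(n)} = \dom(S_{n0})$, so $(a-\varepsilon, a+\varepsilon) \subset \rho(S_{n0})$ is inherited, and Theorem \ref{th_III.2.2_01} applies to $\Pi = \bigoplus \Pi_n$.

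Next I would compute $M_n(\cdot)$. Writing the defect vectors and using Definition \ref{def_Weylfunc}: if $f_z \in \cN_z(S_n)$ with $\Gamma_0^{(n)} f_z = R_n \wt\Gamma_0^{(n)} f_z$, then $\wt\Gamma_1^{(n)} f_z = \wt M_n(z) \wt\Gamma_0^{(n)} f_z$, so
\begin{equation*}
\Gamma_1^{(n)} f_z = (R_n^{-1})^* (\wt M_n(z) - \wt M_n(a)) \wt\Gamma_0^{(n)} f_z = (R_n^{-1})^* (\wt M_n(z) - \wt M_n(a)) R_n^{-1} \Gamma_0^{(n)} f_z,
\end{equation*}
whence $M_n(z) = (R_n^{-1})^* \bigl(\wt M_n(z) - \wt M_n(a)\bigr) R_n^{-1}$. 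In particular $M_n(a) = 0$, so $C_3 = \sup_n \|M_n(a)\| = 0 < \infty$ trivially, and differentiating, $M_n'(a) = (R_n^{-1})^* \wt M_n'(a) R_n^{-1}$. Then the first inequality in \eqref{III.2.2_11} says exactly $C_4 = \sup_n \|M_n'(a)\| = \sup_n \|R_n^{-1} \wt M_n'(a) (R_n^{-1})^*\| < \infty$ (after noting that for any bounded operator $T$ one has $\|(R_n^{-1})^* T R_n^{-1}\| = \|R_n^{-1} T^* (R_n^{-1})^*\|$, or more directly that $\wt M_n'(a) \ge 0$ so the two quantities agree up to a harmless adjoint), verifying \eqref{III.2.2_02}. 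Finally, $(M_n'(a))^{-1} = R_n (\wt M_n'(a))^{-1} R_n^*$, and the second inequality in \eqref{III.2.2_11} is precisely $C_5 = \sup_n \|(M_n'(a))^{-1}\| < \infty$, which is \eqref{III.2.2_02NEW}. Applying Theorem \ref{th_III.2.2_01}(ii) then yields that $\Pi = \bigoplus \Pi_n$ is an ordinary boundary triplet for $A^*$.

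I do not expect a serious obstacle here; the corollary is essentially a bookkeeping consequence of Theorem \ref{th_III.2.2_01}. The one point that needs a little care is the matching of operator norms in \eqref{III.2.2_11} with the norms of $M_n'(a)$ and $(M_n'(a))^{-1}$: one must be consistent about whether $R_n$ acts before or after taking adjoints, and use that $\wt M_n'(a)$ is a nonnegative (indeed, by the strictness from $M(\cdot) \in R^u$, boundedly invertible after transformation) self-adjoint operator so that $\|R_n^{-1} \wt M_n'(a)(R_n^{-1})^*\| = \|(R_n^{-1})^* \wt M_n'(a) R_n^{-1}\|$. Once that identification is made, both conditions of Theorem \ref{th_III.2.2_01}(ii) are immediate, and the proof is complete.
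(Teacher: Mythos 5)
Your reduction to Theorem \ref{th_III.2.2_01}(ii) is exactly the intended argument, and it is the one the paper itself runs (for the concrete Dirac triplets) inside the proof of Theorem \ref{th_bt_2}: the block transformation in \eqref{III.2.2_08} preserves Green's identity because $\wt M_n(a)=\wt M_n(a)^*$ at the real regular point $a$, it keeps $\ker\Gamma_0^{(n)}=\ker\widetilde\Gamma_0^{(n)}=\dom(S_{n0})$, the transformed Weyl function is $M_n(z)=(R_n^{-1})^*\bigl(\wt M_n(z)-\wt M_n(a)\bigr)R_n^{-1}$, hence $M_n(a)=0$ and $M_n'(a)=(R_n^{-1})^*\wt M_n'(a)R_n^{-1}$, and Theorem \ref{th_III.2.2_01}(ii) finishes the job. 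So the structure and all the main computations are right.

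The one step that is actually false is the norm identity you invoke to match \eqref{III.2.2_11} with $\|M_n'(a)\|$: for $T=T^*\ge 0$ one has $\bigl((R_n^{-1})^*TR_n^{-1}\bigr)^*=(R_n^{-1})^*TR_n^{-1}$, so taking adjoints does \emph{not} move the stars, and in general $\|(R_n^{-1})^*TR_n^{-1}\|=\|T^{1/2}R_n^{-1}\|^2$ while $\|R_n^{-1}T(R_n^{-1})^*\|=\|R_n^{-1}T^{1/2}\|^2$; these can differ badly for non-normal $R_n$. Quantitatively, the two printed hypotheses in \eqref{III.2.2_11} are equivalent to $c\,R_nR_n^*\le \wt M_n'(a)\le C\,R_nR_n^*$ uniformly in $n$, whereas what \eqref{III.2.2_02}--\eqref{III.2.2_02NEW} require for $M_n'(a)=(R_n^{-1})^*\wt M_n'(a)R_n^{-1}$ is the analogous two-sided bound with $R_n^*R_n$; since $R_nR_n^*$ and $R_n^*R_n$ need not be uniformly comparable as quadratic forms, the implication does not follow by your route as literally written. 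This is an adjoint-placement artifact in the statement of \eqref{III.2.2_11} itself (the paper's own remark records typos in the ancestor of this corollary in \cite{KM}); it evaporates when $R_n=R_n^*$, which is the case in the only application (Theorem \ref{th_bt_2}, where $R_n$ is a positive diagonal matrix). Either read \eqref{III.2.2_11} with the adjoints placed to match $M_n'(a)$ and $(M_n'(a))^{-1}$, or add $R_n=R_n^*$ to the hypotheses; with that repair your proof is complete and coincides with the paper's intent.
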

%%%%%%%%%%%%%%%%%%%%%%%%%%%%%%%%%%%%%%%%%%
   \begin{example}\label{examp2.14}
\item $(i)$  Let $F(z)= Bz$ where $B\in[\cH],\  B=B^*>0$ and
$0\in\sigma(B)\setminus\sigma_p(B)$. Then $0(\in \sigma_c(B))$ is
an accumulation point for $\sigma(B)$ and the operator  $B$ admits a
decomposition $B = \oplus^{\infty}_{n=1} B_n$ with $B_n=B^*_n\in[\cH_n]$
and  $0\in\rho(B_n), n \in \mathbb N.$  Clearly,
$F(\cdot) = \bigoplus^{\infty}_{n=1} F_n(\cdot)\in R[\cH],$ where $F_n(z) =
B_n z$ and $F_n(\cdot)\in R^u[\cH_n]$, i.e. $0\in\rho\bigl(\im
F_n(i)\bigr)$, $n\in \N$.  However, $-F^{-1}(\cdot)\in
R(\cH)\setminus R[\cH]$, i.e. $-F^{-1}(\cdot)$ is a Nevanlinna
function with (unbounded) values in $\mathcal C(\cH)$. Clearly,
    \begin{equation}
F_n(0)=0,\qquad  F'_n(0)=B_n\in[\cH_n] \quad \text{and}\quad
\sup_{n\in \N}\|F'_n(0)\| = \|B\| <\infty,
    \end{equation}
and conditions  \eqref{III.2.2_02}  are satisfied. At the same
time, $\sup_{n\in \N}\|\bigl(F'_n(0)\bigr)^{-1}\| = \sup_{n\in
\N}\|B_n^{-1}\| = \infty$ and condition \eqref{III.2.2_02NEW} is
violated. Thus, condition \eqref{III.2.2_02NEW} is not implied  by  conditions \eqref{III.2.2_02}.

\item $(ii)$  Let $F(z) = Bz$ where $B=B^*\in \mathcal C(\cH)\setminus [\cH]$ is unbounded positively definite operator, $0\in \rho(B)$.
Clearly, $B = \bigoplus^{\infty}_{n=1}B_n$ where  $B_n\in[\cH],  n\in \N$ and $F(\cdot) = \bigoplus^{\infty}_{n=1} F_n(\cdot)$ with $F_n = B_nz$.
It is easily seen that
       \begin{equation}
F_n(0)=0, \quad    \bigl(F'_n(0)\bigr)^{-1} = B^{-1}_n \quad \text{and}\quad  \sup_{n\in \N}\|\bigl(F'_n(0)\bigr)^{-1}\| = \sup_{n\in\mathbb N}\|B_n^{-1}\| = \|B^{-1}\| < \infty.
       \end{equation}
On the other hand, $\sup_{n\in \N}\|F'_n(0)\| = \sup_{n\in
\N}\|B_n\| = \infty$ and the second condition in \eqref{III.2.2_02}  is
violated. This example shows that  the second condition in  \eqref{III.2.2_02}
does not follow from  the first one  and condition \eqref{III.2.2_02NEW}.
      \end{example}
%%%%%%%%%%%%%%%%%%%%%%%%%%%%%%%%%%%%%%%%%
%%%%%%%%%%%%%%%%%%%%%%%%%%%%%%%%%%%%%%%%%
      \begin{remark}
\item $(i)$ In  \cite[Theorem 3.13]{KM} it is incorrectly stated that the
direct sum $\Pi= \bigoplus^{\infty}_{n=1} \Pi_n$  forms an ordinary
boundary triplet for $A^*$  whenever both  the
first condition in  \eqref{III.2.2_02} and condition \eqref{III.2.2_02NEW} are satisfied. However the  proof in \cite[Theorem 3.13]{KM}  can easily  be fixed by posing
the second condition in \eqref{III.2.2_02} and using formula
\eqref{II.1.3_02} connecting $M(i)$ and $M(a)$. In Theorem
\ref{th_III.2.2_01}(ii) we presented another proof of this fact
that seems to be simpler.

\item $(ii)$ Note also that the first  inequality in \eqref{III.2.2_11}
was occasionally missed in  \cite[Corollary 3.15]{KM}. We mention
also a misprint in formula (59) of \cite{KM}: there should be
$R_n$ in place of $R_n^{-1}$.
   \end{remark}

\section{Dirac operators with point interactions on the line}
Let $D$ be the differential expression
\begin{equation}\label{1.2}
D=-i\,c\,\frac{d}{dx}\otimes\sigma_{1}+\frac{c^{2}}{2}\otimes\sigma_{3}=\left(\begin{array}{cc}{c^{2}}/{2} & -i\,c\,\frac{d}{dx} \\-i\,c\,\frac{d}{dx} & -{c^{2}}/{2}\end{array}\right)
\end{equation}
acting on $\C^2$-valued functions of a real variable. Here
\begin{equation}\label{pauli}
\sigma_{1}=\left(\begin{array}{cc}0 & 1 \\1 &
0\end{array}\right),\quad\sigma_{2}=\left(\begin{array}{cc}0 & -i
\\i & 0\end{array}\right), \quad\sigma_{3}=\left(\begin{array}{cc}1
& 0 \\0 & -1\end{array}\right),
\end{equation}
are the Pauli matrices in $\C^{2}$ and $c>0$ denotes the velocity
of light.\par We set
\begin{equation}\label{1.5}
k(z):=c^{-1}\sqrt{z^{2}-\left({c^{2}}/{2}\right)^{2}},\qquad z\in\C,
\end{equation}
where the branch of the multifunction $\sqrt{\cdot}$ is selected
such that $k(x)>0$ for $x>{c^{2}}/{2}$. It is easily seen that
$k(\cdot)$ is holomorphic in $\C$ with two cuts along the half-lines
$\left(-\infty,-{c^{2}}/{2}\right]$ and
$\left[{c^{2}}/{2},\infty\right)$ and
$k(\overline{z})=-\overline{k(z)}$.

\noindent
Thus, $k(\cdot)$ itself is not R-function (Nevanlinna function), although the extension
   \begin{equation}
\widetilde{k}(z)=\left\{\begin{array}{c}k(z),\quad z\in\C_{+}, \\
\\ -k(\overline{z}),\quad z\in\C_{-}\end{array}\right.
  \end{equation}
is already a R-function, i.e. a holomorphic function in
$\C\backslash\R$, that maps $\C_{+}$ into $\C_{+}$ and satisfies
$f(\overline{z})=\overline{f(z)}.$
%%\noindent
Next we put
\begin{equation}\label{1.6}
k_{1}(z):=\frac{c\,k(z)}{z+{c^{2}}/{2}}
=\sqrt{\frac{z-{c^{2}}/{2}}{z+{c^{2}}/{2}}},\qquad z\in\C.
\end{equation}

We can independently define the right-hand side in
$\C\backslash\big\{\left(-\infty,-{c^{2}}/{2}\right]\bigcup\left[{c^{2}}/{2},\infty\right)\big\}$
by selecting the branch of the corresponding multifunction in such
a way that $\sqrt{\frac{x -{c^{2}}/{2}}{x +{c^{2}}/{2}}}>0$ for
$x>{c^{2}}/{2}$,
\par

Next we construct boundary triplets for $D_X^*$ using the technique elaborated in
\cite{KM} and \cite{MN2012}. %%%{MalNei11}.

\subsection{Boundary triplets for  Dirac building blocks}
We begin with a construction of a  boundary triplet for the
maximal Dirac operator on an interval.

\subsubsection{The case of a finite interval} Let $D_{n}$ be the minimal  operator generated in
$L^{2}[x_{n-1},x_{n}]\otimes\C^{2}$  by the differential
expression (\ref{1.2}) %%%in $L^{2}[x_{n-1},x_{n}]\otimes\C^{2}$
  \begin{equation}\label{3.6A}
D_{n}=D\upharpoonright\dom(D_n),
\qquad\dom(D_{n})=W^{1,2}_{0}[x_{n-1},x_{n}] \otimes\C^{2}.
\end{equation}
We also  put  $d_n:=x_{n}-x_{n-1}>0$.
      \begin{lemma}\label{2.3} $D_{n}$ is a symmetric operator
with deficiency indices $n_{\pm}(D_{n})=2$.  Its adjoint $D_{n}^*$ is given by

%
%
%   \begin{equation}
%S_n=D\upharpoonright\dom(S_n),\quad\dom(S_n)=W^{1,2}_{0}[x_{n-1},x_{n}]
%\otimes\C^{2}\,.
%  \end{equation}
%
%

%\noindent
%Its adjoint $D_{n}^*$ is given by
%
%
$$
D_{n}^*=D\upharpoonright\dom(D_{n}^*), \qquad
\dom(D_{n}^*)=W^{1,2}[x_{n-1},x_{n}]\otimes\C^{2}.
$$
The defect subspace $\mathfrak N_z :=\text{\rm ker}(D_{n}^*-z)$ is
spanned by the vector functions $f_n^\pm(\cdot, z)$,
  \begin{equation}\label{3.7}
f_n^{\pm}(x, z) := \begin{pmatrix}e^{\pm i\, k(z)\,x}\\\pm
k_{1}(z)e^{\pm i\,k(z)\,x}\end{pmatrix}.
   \end{equation}
Moreover, the following  is  true:
%
%%% \begin{itemize}%%\par\noindent
\item $(i)$ The triplet $\widetilde{\Pi}^{(n)}=\big\{\C^{2},
\widetilde{\Gamma}_{0}^{(n)},\widetilde{\Gamma}_{1}^{(n)}\big\}$,
where
   \begin{equation}\label{triple2}
\widetilde{\Gamma}_{0}^{(n)}f:=\widetilde{\Gamma}_{0}^{(n)}
\begin{pmatrix}f_{1}\\f_{2}\end{pmatrix}=
\left(\begin{array}{c}
                 f_{1}(x_{n-1}+)\\
                 i\,c\, f_{2}(x_{n}-)
                       \end{array}\right) \qquad\text{and}\qquad
\widetilde{\Gamma}_{1}^{(n)}f:=
\widetilde{\Gamma}_{1}^{(n)}\begin{pmatrix}f_{1}\\f_{2}\end{pmatrix}
=\left(\begin{array}{c}
                                                                           i\,c\,f_{2}(x_{n-1}+)\\
                                                                            f_{1}(x_{n}-)
                                                                          \end{array}\right)\,,
\end{equation}
forms a boundary triplet for $D^{*}_n$.

\item $(ii)$  The spectrum of the operator
${D}_{n,0}:=D_n^{*}\upharpoonright\ker\widetilde{\Gamma}^{(n)}_{0}$,  where
    \begin{equation}\label{3.8}
\dom(D_{n,0}) = %%\ker\widetilde{\Gamma}^{(n)}_{0} =
\{\{f_1, f_2\}^{\tau}\in
W^{1,2}[x_{n-1},x_{n}]\otimes\C^{2}:  f_{1}(x_{n-1}+)=
f_{2}(x_{n}-)=0\},
     \end{equation}
is discrete,
\begin{equation}\label{spetcom}
\sigma({D}_{n,0})= \sigma_d({D}_{n,0})=
\left\{\pm\sqrt{\frac{
c^2\pi^2}{d^2_n}\,\left(j+\frac12\right)^{2}+\left(\frac{c^{2}}{2}\right)^{2}}\,,\quad j=0,1,\dots
\right\}.
\end{equation}
\item $(iii)$  The $\gamma$-field $\wt\gamma_n(\cdot):\C^2\to
  L^2[x_{n-1},x_{n}]\otimes\C^2$, corresponding to the triplet
  $\wt\Pi^{(n)}$ is given in the standard basis in $\C^2$ by
\begin{equation}\label{1.21+}
\wt\gamma_n(z)\binom{v_1}{v_2} = \frac{1}{\cos(d_nk(z))}
\left(\begin{array}{cc}\cos(k(z)(x_{n}-x)) & -(c\,k_1(z))^{-1}
\sin(k(z)(x_{n-1}-x)) \\
-i\,k_1(z)\sin(k(z)(x_{n}-x))&
-i\,c^{-1}\cos(k(z)(x_{n-1}-x))\end{array}\right)\binom{v_1}{v_2}, \qquad z\in\rho(D_{n,0}).
   \end{equation}
\item $(iv)$ The  Weyl function $ \widetilde{M}_{n}(\cdot)$
corresponding to the triplet $\wt\Pi^{(n)}$ is
           \begin{equation}\label{IV.1.1_09.2}
 \widetilde{M}_{n}(z)=\frac{1}{\cos(d_n\,k(z))}\left(\begin{array}{cc}c\,k_1(z)\,\sin(d_n\,k(z))
   & 1 \\1 & (c\,k_1(z))^{-1}\sin(d_n\,k(z))
\end{array}\right),\qquad z\in\rho(D_{n,0}).
         \end{equation}
%
%
%%%\end{itemize}

\end{lemma}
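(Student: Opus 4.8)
The plan is to establish each of the four claims by direct computation with the explicit ODE system, using the defect functions $f_n^\pm(\cdot,z)$ given in \eqref{3.7} as a basis for $\cN_z$. The overall skeleton: (1) identify $D_n^*$ and its deficiency indices; (2) verify the Green identity and surjectivity for $\widetilde\Pi^{(n)}$; (3) compute the $\gamma$-field by inverting $\widetilde\Gamma_0^{(n)}\!\upharpoonright\cN_z$; (4) obtain $\widetilde M_n(z)=\widetilde\Gamma_1^{(n)}\widetilde\gamma_n(z)$; and along the way read off $\sigma(D_{n,0})$ from the zeros of $\cos(d_nk(z))$.

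\textbf{Identification of $D_n^*$ and deficiency indices.} First I would note that $D$ is a first-order symmetric system on a compact interval with $W^{1,2}_0$ boundary conditions, so by a standard integration-by-parts argument $D_n^* = D\!\upharpoonright W^{1,2}[x_{n-1},x_n]\otimes\C^2$. To find $\cN_z=\ker(D_n^*-z)$ one solves the $2\times 2$ first-order ODE $Df=zf$: writing it out componentwise gives $-ic\,f_2' + (c^2/2)f_1 = z f_1$ and $-ic\,f_1' - (c^2/2)f_2 = z f_2$, which decouples into a second-order equation $f_1'' = -k(z)^2 f_1$ with $k(z)$ as in \eqref{1.5}; the two independent solutions $e^{\pm i k(z)x}$ yield \eqref{3.7} after solving for $f_2$, and one checks $k_1(z)$ as in \eqref{1.6} is the correct proportionality constant. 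Hence $\dim\cN_z = 2$ for $z\in\rho(D_{n,0})$, giving $n_\pm(D_n)=2$.

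\textbf{The boundary triplet, spectrum, $\gamma$-field, and Weyl function.} For (i) I would plug into the Green formula: for $f,g\in W^{1,2}[x_{n-1},x_n]\otimes\C^2$, integration by parts gives $(D_n^*f,g) - (f,D_n^*g) = -ic\big[f_2\bar g_1 + f_1\bar g_2\big]_{x_{n-1}+}^{x_n-}$, and a direct algebraic rearrangement of this boundary form into the components $f_1(x_{n-1}+)$, $f_2(x_n-)$, $cf_2(x_{n-1}+)$, $cf_1(x_n-)$ matches $(\widetilde\Gamma_1^{(n)}f,\widetilde\Gamma_0^{(n)}g) - (\widetilde\Gamma_0^{(n)}f,\widetilde\Gamma_1^{(n)}g)$; surjectivity of $\binom{\widetilde\Gamma_0^{(n)}}{\widetilde\Gamma_1^{(n)}}$ onto $\C^4$ follows because the four scalar traces $f_1(x_{n-1}+), f_2(x_{n-1}+), f_1(x_n-), f_2(x_n-)$ can be prescribed independently by an appropriate $W^{1,2}$ function. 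For (ii), $D_{n,0}=\ker\widetilde\Gamma_0^{(n)}$ imposes $f_1(x_{n-1}+)=f_2(x_n-)=0$; substituting a general solution $v_+ f_n^+ + v_- f_n^-$ of $D_nf=zf$ into these two conditions gives a $2\times 2$ linear system whose determinant vanishes precisely when $\cos(d_n k(z))=0$, i.e. $d_n k(z) = \pi(j+\tfrac12)$, which when solved for $z$ via \eqref{1.5} yields exactly \eqref{spetcom}; discreteness is then clear. For (iii), $\widetilde\gamma_n(z)$ is obtained by writing $f_z = v_+ f_n^+(\cdot,z) + v_- f_n^-(\cdot,z)$, solving the $2\times 2$ system $\widetilde\Gamma_0^{(n)}f_z = \binom{v_1}{v_2}$ for $(v_+,v_-)$ — the coefficient matrix here is invertible off $\sigma(D_{n,0})$, and its inverse contributes the $1/\cos(d_n k(z))$ prefactor — and then substituting back and simplifying the exponentials into the $\cos/\sin$ form \eqref{1.21+} using trigonometric identities. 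Finally (iv) is immediate: apply $\widetilde\Gamma_1^{(n)}$ to the already-computed $\widetilde\gamma_n(z)$, i.e. evaluate the first row of \eqref{1.21+} at $x=x_{n-1}$ and the second at $x=x_n$ (up to the factors of $ic$), producing \eqref{IV.1.1_09.2}; one should double-check that the result is symmetric and satisfies the Nevanlinna property, which is a good consistency test.

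\textbf{Expected main obstacle.} The conceptual content is light — everything reduces to solving a constant-coefficient $2\times 2$ system and manipulating $2\times 2$ matrices — so the real difficulty is bookkeeping: keeping the one-sided traces at $x_{n-1}+$ versus $x_n-$ straight, tracking the factors of $c$, $i$, and $k_1(z)$ through the inversion of $\widetilde\Gamma_0^{(n)}\!\upharpoonright\cN_z$, and converting complex-exponential expressions into the $\cos/\sin$ normal forms of \eqref{1.21+} and \eqref{IV.1.1_09.2} without sign errors. The one genuinely nontrivial check is that this particular choice of traces — which differs from the "symmetric" triplet \eqref{triple1Intro} — still satisfies Green's identity exactly; this hinges on the specific pairing $f_2\bar g_1 + f_1\bar g_2$ appearing in the boundary form, which is dictated by the off-diagonal $\sigma_1$ structure of $D$, and it is precisely this asymmetric grouping that will later make the regularized direct sum produce a \emph{Jacobi} (rather than pentadiagonal) boundary operator.
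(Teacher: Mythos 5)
Your proposal is correct and follows essentially the same route as the paper: the authors likewise declare (i) and (ii) straightforward, obtain (iii) by expanding $\wt\gamma_n(z)\binom{v_1}{v_2}=w_1 f_n^-+w_2 f_n^+$ and inverting the $2\times2$ matrix $\Lambda(z)$ with $\det\Lambda(z)=2ic\,k_1(z)\cos(d_nk(z))$, and get (iv) from $\wt M_n(z)=\widetilde\Gamma_1^{(n)}\wt\gamma_n(z)$. One tiny bookkeeping slip in your parenthetical for (iv): by \eqref{triple2} the first component of $\wt M_n(z)v$ is $ic$ times the \emph{second} row of \eqref{1.21+} evaluated at $x_{n-1}+$, and the second component is the \emph{first} row at $x_n-$, not the other way around — this does not affect the argument.
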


%%%%%%%%%%%%%%%%%%%%%%%%%%%%%%%%%%%%%%%%%%%%%%%
%%%%%%%%%%%%%%%%%%%%%%%%%%%%%%%%%%%%%%%%%%%%%%%
\begin{proof}

\noindent
(i) and (ii)
are straightforward.

(iii)\  Since $f_n^-$ and $f_n^+$ form a basis in the defect
subspace  $\mathfrak N_z,$ we get from  the definition of the
$\gamma$-field,
$$
\wt\gamma_n(z)\begin{pmatrix}v_{1}\\v_{2}\end{pmatrix}=
w_1(z)f_n^-(x,z) + w_2(z)f_n^+(x,z). \quad
%\begin{pmatrix}w_{1}\\w_{2}\end{pmatrix}\in{\mathbb C}^2,
$$
Applying to this identity the mapping $\widetilde{\Gamma}_0^{(n)}$
and using \eqref{3.7}, \eqref{triple2} and definition
\eqref{II.1.3_01} we get
    \begin{equation}
\binom{v_1}{v_2}  =
\begin{pmatrix}
e^{-i k(z)x_{n-1}} & e^{ik(z)x_{n-1}}\\
-i\,c\,k_{1}(z)\,e^{-i k(z)x_n} & i\,c\,k_{1}(z)e^{ik(z)x_n}.
   \end{pmatrix}
\binom{w_1(z)}{w_2(z)} =: \Lambda(z)\binom{w_1(z)}{w_2(z)}
    \end{equation}
Hence $\binom{w_1(z)}{w_2(z)} = \Lambda^{-1}(z)\binom{v_1}{v_2}.$
Setting
$\Delta(z):=\det\Lambda(z)=2\,i\,c\,k_{1}(z)\,\cos\bigl(d_n
k(z)\bigr)$ we find
   \begin{eqnarray*}
\widetilde{\gamma}(z)\binom{v_1}{v_2}=\frac{1}{\Delta(z)} \left(i\,c\,k_{1}(z)\,e^{ik(z)x_{n}}v_1 - e^{ik(z)x_{n-1}}v_2\right)\begin{pmatrix}e^{-i\,
k(z)\,x}\\ -
k_{1}(z)e^{-i\,k(z)\,x}
 \end{pmatrix} \nonumber \\
+\  \frac{1}{\Delta(z)}\left(i\,c\,k_{1}(z)\,e^{-ik(z)x_{n}}v_1 +
e^{-ik(z)x_{n-1}}v_2\right)
\begin{pmatrix}e^{i\, k(z)\,x}\\
k_{1}(z)e^{i\,k(z)\,x}
\end{pmatrix} \nonumber \\
= \frac{1}{\cos(k(z)d_{n})
} \left(\begin{array}{cc}\cos(k(z)(x_{n}-x))
&
-(c\,k_{1}(z))^{-1}\sin(k(z)(x_{n-1}-x)) \\
-i\,k_1(z)\sin(k(z)(x_{n}-x))&
(i\,c)^{-1}\cos(k(z)(x_{n-1}-x))\end{array}\right) \binom{v_1}{v_2}.
   \end{eqnarray*}
This proves  \eqref{1.21+}.

(iv) This statement is immediate  from  \eqref{1.21+},
\eqref{triple2}   and the identity $\wt M_n(z)=
\wt\Gamma_1^{(n)}\wt\gamma_n(z)$.
\end{proof}

\subsubsection{The case of a half-line}
   In this section we construct  boundary triplets for the Dirac operator
$D$ on  half-lines $\R_a^{-}:=(-\infty,a)$ and  $\R_b^{+}:=(b,+\infty)$.

\noindent
Denote by $D_{a-}$ the minimal Dirac operator
generated  by differential expression (\ref{1.2})
in $L^{2}(\R_a^{-})\otimes\C^{2}$,  i.e.
  \begin{equation}
D_{a-}=D\upharpoonright\dom(D_{a-}),
\quad\dom(D_{a-})=W^{1,2}_{0}(\R_a^{-})
\otimes\C^{2}\,.
\end{equation}
    \begin{lemma}\label{2.3-}

$D_{a-}$ is a closed symmetric operator  with deficiency indices
$n_{\pm}(D_{a-})=1$. Its adjoint $D_{a-}^{*}$ is given by
   $$D_{a-}^{*}=D\upharpoonright\dom(D_{a-}^{*})\,,\qquad
\dom(D_{a-}^{*})=W^{1,2}(\R_a^{-})\otimes\C^{2}\,.$$
The defect subspace $\text{\rm ker}(D^*_{a-}-z)$ is spanned by the
vector function  %%%$f_{a}^-(z)$,
   \begin{equation}\label{3.17}
f_{a}^{-}(x, z) := \begin{pmatrix}e^{- i\, k(z)\,x}\\ -k_{1}(z)
e^{- i\,k(z)\,x}\end{pmatrix} .
   \end{equation}
Moreover, the following hold
%
%%%\begin{itemize}
\item $(i)$ The triplet ${\Pi}^{(a-)}=\big\{\C,
{\Gamma}_{0}^{(a-)},{\Gamma}_{1}^{(a-)}\big\}$ where
  \begin{equation}\label{triple.a}
{\Gamma}_{0}^{(a-)}f := {\Gamma}_{0}^{(a-)}
\begin{pmatrix}f_{1}\\f_{2}\end{pmatrix}=i\,c\,f_{2}(a-) \qquad\text{and}\qquad
{\Gamma}_{1}^{(a-)}f:={\Gamma}_{1}^{(a-)}
\begin{pmatrix}f_{1}\\f_{2}\end{pmatrix}
=f_{1}(a-)\,,
  \end{equation}
forms a boundary triplet for $D^{*}_{a-}$\,.
\item $(ii)$  The spectrum of the operator ${D}_{a-,0}:=D_{a-}^{*}
\upharpoonright\ker\widetilde{\Gamma}^{(a-)}_{0}$ is absolutely
continuous,  of the multiplicity one,
  \begin{equation}
\sigma({D}_{a-,0})=\sigma_{ac}({D}_{a-,0})
=(-\infty,c^2/2]\cup[c^2/2,+\infty)\,.
   \end{equation}
\item $(iii)$ The corresponding $\gamma$-field
$\gamma_{a-}(\cdot):\C\to L^2(\R_a^{-})\otimes\C^2$, is given by
 %%($z\in\rho(A_{a_-,0})$)
%
%
  \begin{equation}\label{gamma.a}
\gamma_{a-}(z)w =w\,\frac{i\,e^{i\,
k(z)\,a}}{c\,k_1(z)}\,f_{a}^-(z),  \qquad z\in\rho(D_{a_-,0}).
  \end{equation}
\item $(iv)$
 The  Weyl function $ {M}_{a-}(\cdot)$, corresponding to the triplet
 $\Pi^{(a-)}$  is   %%%(\ref{triple.a})
           \begin{equation}\label{IV.1.1_09.a}
{M}_{a-}(z)=\frac{i}{c\,k_1(z)}, \qquad z\in\rho(D_{a_-,0}).
         \end{equation}
%
%
%%%\end{itemize}
   \end{lemma}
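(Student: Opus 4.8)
The plan is to follow the scheme of the finite-interval case (Lemma~\ref{2.3}), which here is even simpler since $\R_a^-=(-\infty,a)$ has a single regular endpoint. First I would identify the adjoint and the defect subspaces by explicit ODE analysis. Solving $(D-z)f=0$, i.e. the constant-coefficient first-order system $-ic\,\sigma_1 f'=(z-\tfrac{c^2}{2}\sigma_3)f$, and eliminating one component, each component satisfies $f_j''=-k(z)^2 f_j$, so the two independent solutions are $f_n^\pm(\cdot,z)$ as in \eqref{3.7}. For $z\in\C\setminus\R$ exactly one of them is square-integrable near $-\infty$: the branch in \eqref{1.5} is chosen so that $\im\widetilde k(z)>0$ for $z\in\C_+$, hence $e^{-ik(z)x}\to 0$ as $x\to-\infty$ while the other exponential blows up; therefore $\ker(D_{a-}^*-z)=\Span\{f_a^-(\cdot,z)\}$ is one-dimensional and $n_\pm(D_{a-})=1$. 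Since $D_{a-}$ acts on $W^{1,2}$-vector functions vanishing at $a$, the identity $\dom(D_{a-}^*)=W^{1,2}(\R_a^-)\otimes\C^2$ follows exactly as in Lemma~\ref{2.3}, using that a $W^{1,2}$ function on a half-line tends to $0$ at the infinite endpoint.

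Next, part (i). The Green identity is obtained by integration by parts: the $\sigma_3$-part of $D$ is pointwise Hermitian and cancels, while the $\sigma_1$-part produces a total derivative, so for $f,g\in\dom(D_{a-}^*)$
\[
(D_{a-}^*f,g)-(f,D_{a-}^*g)=-ic\int_{-\infty}^{a}\frac{d}{dx}\bigl(f_1\overline{g_2}+f_2\overline{g_1}\bigr)\,dx=-ic\bigl(f_1(a-)\overline{g_2(a-)}+f_2(a-)\overline{g_1(a-)}\bigr),
\]
the contribution at $-\infty$ vanishing by the decay just mentioned. This is precisely $(\Gamma_1^{(a-)}f,\Gamma_0^{(a-)}g)_{\C}-(\Gamma_0^{(a-)}f,\Gamma_1^{(a-)}g)_{\C}$ for the maps \eqref{triple.a}, and $\binom{\Gamma_0^{(a-)}}{\Gamma_1^{(a-)}}$ is onto $\C^2$ since $(f_1(a-),f_2(a-))$ ranges over all of $\C^2$; hence $\Pi^{(a-)}$ is a boundary triplet.

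Items (iii) and (iv) are then immediate computations from \eqref{II.1.3_01}. Applying $\Gamma_0^{(a-)}$ to $f_a^-(\cdot,z)$ gives $ic\cdot(-k_1(z)e^{-ik(z)a})=-ic\,k_1(z)e^{-ik(z)a}$, so inverting $\Gamma_0^{(a-)}\!\upharpoonright\ker(D_{a-}^*-z)$ yields $\gamma_{a-}(z)w=w\,\tfrac{i\,e^{ik(z)a}}{c\,k_1(z)}f_a^-(\cdot,z)$, i.e. \eqref{gamma.a}; then $M_{a-}(z)=\Gamma_1^{(a-)}\gamma_{a-}(z)$ is computed using that the first component of $f_a^-(\cdot,z)$ at $a$ equals $e^{-ik(z)a}$, giving $M_{a-}(z)=i/(c\,k_1(z))$, i.e. \eqref{IV.1.1_09.a}. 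For (ii), $D_{a-,0}$ is the self-adjoint realization with boundary condition $f_2(a-)=0$; since $D_{a-}$ is simple, $\sigma(D_{a-,0})$ equals the complement of the domain of holomorphy of the scalar Weyl function $M_{a-}(\cdot)=i/(c\,k_1(\cdot))$, which is exactly the union of the two cuts of $k_1$, namely $(-\infty,-c^2/2]\cup[c^2/2,+\infty)$. Because $M_{a-}$ is scalar and $\im M_{a-}(x+i0)$ is a continuous, locally integrable density supported precisely on these two rays (no atoms, no singular part), the spectral equivalence of $E_{D_{a-,0}}$ with $\Sigma_{M_{a-}}$ recalled in Section~\ref{Sec_II_Prelim} shows that the spectrum is purely absolutely continuous of multiplicity one.

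The one step needing genuine attention, rather than routine repetition of Lemma~\ref{2.3}, is the control of the branches of $k(\cdot)$ and $k_1(\cdot)$: one must verify $\im k(z)>0$ on $\C_+$ so that $f_a^-(\cdot,z)$ and not the other exponential solution is the $L^2$ defect element near $-\infty$, and correctly locate the two cuts of $k_1(\cdot)$ in order to read off both $\sigma(D_{a-,0})$ and its absolutely continuous component. Everything else is bookkeeping.
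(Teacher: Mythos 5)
Your proof is correct and is precisely the routine verification the paper omits (it declares the proofs of Lemmas \ref{2.3-} and \ref{2.3+} ``straightforward''), mirroring the scheme of the finite-interval case in Lemma \ref{2.3}: explicit ODE analysis of $(D-z)f=0$ with the branch condition $\im k(z)>0$ on $\C_+$ selecting $f_a^-$ as the $L^2$ defect element near $-\infty$, integration by parts for the Green identity with vanishing boundary term at $-\infty$, and the definitions \eqref{II.1.3_01} yielding $\gamma_{a-}$ and $M_{a-}$ by a one-line computation. Note only that your formula $(-\infty,-c^2/2]\cup[c^2/2,+\infty)$ for $\sigma(D_{a-,0})$ is the correct one; the display in the lemma as printed reads $(-\infty,c^2/2]\cup[c^2/2,+\infty)$, which is a sign typo.
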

%%%%%%%%%%%%%%%%%%%%%%%%%%%%%%%%%%%%%%%%%%%%%%%%%%%%%%%%%%
Next we denote by $D_{b+}$ the minimal Dirac  operator
generated by  differential expression (\ref{1.2})
in $L^{2}(\R_b^{+})\otimes\C^{2}$, i.e.
  \begin{equation}\label{3.20}
D_{b+}=D\upharpoonright\dom(D_{b+}),
\quad\dom(D_{b+})=W^{1,2}_{0}(\R_b^+)
\otimes\C^{2}\,.
\end{equation}
%%%%%%%%%%%%%%%%%%%%%%%%%%%%
  \begin{lemma}\label{2.3+}
$D_{b+}$ is a symmetric operator  with the  deficiency indices
$n_{\pm}(D_{b+})=1$. The adjoint operator $D_{b+}^*$ is given by
   \begin{equation}\label{3.20A}
D_{b+}^{*}=D\upharpoonright\dom(D_{b+}^{*})\,,\qquad
\dom(D_{b+}^{*})=W^{1,2}(\R_b^{+})\otimes\C^{2}\,,
  \end{equation}
and the defect subspace $\mathfrak N_z = \text{\rm ker}(D^*_{b+}-z)$ is spanned by the vector
function $f_{b}^+(\cdot, z)$,
  \begin{equation}\label{3.23}
f_{b}^{+}(x, z):=\begin{pmatrix}e^{i\, k(z)\,x}\\ k_{1}(z) e^{
  i\,k(z)\,x}\end{pmatrix}.
  \end{equation}
Moreover, the following is true:
%
%%%\begin{itemize}
\item $(i)$ The triplet  ${\Pi}^{(b+)}=\big\{\C,
{\Gamma}_{0}^{(b+)},{\Gamma}_{1}^{(b+)}\big\}$ where
\begin{equation}\label{triple.b}
{\Gamma}_{0}^{(b+)}f := {\Gamma}_{0}^{(b+)}
\begin{pmatrix}f_{1}\\f_{2}\end{pmatrix}=f_{1}(b+) \qquad\text{and}\qquad
{\Gamma}_{1}^{(b+)}f:={\Gamma}_{1}^{(b+)}
\begin{pmatrix}f_{1}\\f_{2}\end{pmatrix}
=i\,c\,f_{2}(b+)\,,
  \end{equation}
forms a boundary triplet for $D^{*}_{b+}$.
\par\noindent

\item $(ii)$ The spectrum of the operator
${D}_{b+,0}:=D^*_{b+}\upharpoonright\ker{\Gamma}^{(b+)}_0=D_{b+,0}^*$
is absolutely continuous of the multiplicity  one,
\begin{equation}
      \sigma(D_{b+,0})=\sigma_{ac}(D_{b+,0}) = (-\infty,c^2/2]\cup[c^2/2,+\infty).
\end{equation}
\item $(iii)$  The corresponding $\gamma$-field
$\gamma_{b+}(\cdot):\C\to L^2(\R_b^{+})\otimes\C^2$, is
  \begin{equation}\label{gamma.b}
\gamma_{b+}(z)w =w\,e^{-i\, k(z)\,b}f_{b}^+(z), \qquad
z\in\rho(D_{b+,0}).
   \end{equation}
\item $(iv)$
 The Weyl function $ {M}_{b+}(\cdot)$,
corresponding to the triplet $\Pi^{(b+)}$ is %%%of the form (\ref{triple.b})
           \begin{equation}\label{IV.1.1_09.b}
{M}_{b+}(z)=i\,c\,k_{1}(z), \qquad z\in\rho(D_{b+,0}).
         \end{equation}
%
%%%%\end{itemize}
\end{lemma}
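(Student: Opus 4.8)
The plan is to prove Lemma \ref{2.3+} by direct computation, in close analogy with Lemma \ref{2.3} and Lemma \ref{2.3-}, exploiting the explicit exponential solutions of the Dirac system. First I would establish that $D_{b+}$ is symmetric and closed and compute its adjoint: the formal adjoint of the first-order system (\ref{1.2}) is $D$ itself (since $\sigma_1$, $\sigma_3$ are Hermitian), and integration by parts on $(b,+\infty)$ shows $D_{b+}^* = D\upharpoonright W^{1,2}(\R_b^+)\otimes\C^2$, with the boundary form reducing to the single term at $x=b$; the absence of a boundary contribution at $+\infty$ follows because $W^{1,2}$ functions on a half-line decay. To get the deficiency indices, solve $(D-z)f=0$: writing $f=\binom{f_1}{f_2}$ the system is $(c^2/2)f_1 - ic f_2' = z f_1$, $-ic f_1' - (c^2/2) f_2 = z f_2$, whose solutions are spanned by $e^{\pm i k(z)x}\binom{1}{\pm k_1(z)}$; only the one that is $L^2$ at $+\infty$ survives, which for $z\in\C_+$ is $f_b^+(\cdot,z)$ in (\ref{3.23}) (here one checks $\im k(z)>0$ on $\C_+$ from the branch choice in (\ref{1.5})). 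Hence $n_\pm(D_{b+})=1$ and the defect subspace is as claimed.

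Next, for part (i), I would verify the abstract Green identity (\ref{II.1.2_green_f}) for the triplet $\Pi^{(b+)}$. A direct integration by parts gives, for $f,g\in W^{1,2}(\R_b^+)\otimes\C^2$,
\begin{equation*}
(D_{b+}^*f,g) - (f,D_{b+}^*g) = -\,ic\bigl(f_1(b+)\overline{g_2(b+)} - f_2(b+)\overline{g_1(b+)}\bigr),
\end{equation*}
(the sign and the precise combination being fixed by the off-diagonal $-ic\,d/dx$ terms), and one checks that this equals $(\Gamma_1^{(b+)}f,\Gamma_0^{(b+)}g) - (\Gamma_0^{(b+)}f,\Gamma_1^{(b+)}g)$ with $\Gamma_0^{(b+)}f=f_1(b+)$, $\Gamma_1^{(b+)}f=ic\,f_2(b+)$. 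Surjectivity of $\binom{\Gamma_0^{(b+)}}{\Gamma_1^{(b+)}}:\dom(D_{b+}^*)\to\C\oplus\C$ is immediate since the trace values $(f_1(b+),f_2(b+))$ already range over all of $\C^2$ as $f$ ranges over $W^{1,2}(\R_b^+)\otimes\C^2$. This gives a boundary triplet.

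For part (ii), $D_{b+,0}=D_{b+}^*\upharpoonright\{f_1(b+)=0\}$; self-adjointness follows from the triplet, and the spectrum is computed by noting that the generalized eigenfunction equation $(D-\lambda)f=0$ with the boundary condition $f_1(b+)=0$ has bounded (non-$L^2$) solutions exactly for real $\lambda$ with $|\lambda|\ge c^2/2$ (so that $k(\lambda)\in\R$), each $\lambda$ contributing multiplicity one, hence $\sigma(D_{b+,0})=\sigma_{ac}(D_{b+,0})=(-\infty,-c^2/2]\cup[c^2/2,\infty)$; this mirrors exactly Lemma \ref{2.3-}(ii). For parts (iii) and (iv): by definition $\gamma_{b+}(z)w$ is the multiple of $f_b^+(\cdot,z)$ lying in $\ker(D_{b+}^*-z)$ with $\Gamma_0^{(b+)}\gamma_{b+}(z)w = w$; since $\Gamma_0^{(b+)}f_b^+(\cdot,z)=e^{ik(z)b}$, we get $\gamma_{b+}(z)w = w\,e^{-ik(z)b}f_b^+(\cdot,z)$, which is (\ref{gamma.b}); then $M_{b+}(z)=\Gamma_1^{(b+)}\gamma_{b+}(z)= ic\cdot k_1(z)e^{ik(z)b}\cdot e^{-ik(z)b}=ic\,k_1(z)$, giving (\ref{IV.1.1_09.b}).

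The computations are all routine; the only point requiring a little care — and the one I expect to be the main technical obstacle — is bookkeeping the signs and the identification of which exponential solution is square-integrable at $+\infty$ (as opposed to at $-\infty$ in Lemma \ref{2.3-}), i.e. tracking the branch of $k(z)$ from (\ref{1.5}) so that $\im k(z)>0$ for $z\in\C_+$ and hence $e^{ik(z)x}\in L^2(\R_b^+)$. Getting this consistent is what makes $\Pi^{(b+)}$ come out with $\Gamma_0^{(b+)}=f_1(b+)$, $\Gamma_1^{(b+)}=ic\,f_2(b+)$ (rather than the swapped pair appearing in $\Pi^{(a-)}$), and it is also what produces the correct sign $M_{b+}(z)=+\,ic\,k_1(z)$ versus $M_{a-}(z)=+\,i/(c\,k_1(z))$.
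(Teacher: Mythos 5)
Your proposal is correct and is precisely the routine verification the paper omits (it states only that "the proofs of Lemmas \ref{2.3-} and \ref{2.3+} are straightforward"), so there is nothing to compare against: the adjoint/defect computation, the Green identity, the surjectivity of the trace map, and the formulas $\gamma_{b+}(z)w=w\,e^{-ik(z)b}f_b^+(\cdot,z)$ and $M_{b+}(z)=ic\,k_1(z)$ all check out, and your identification of the $L^2$ solution via $\im k(z)>0$ on $\C_+$ (which the paper guarantees by noting $\widetilde k$ is an R-function) is the right pivot. One small slip: the boundary form you display should read
\begin{equation*}
(D_{b+}^*f,g)-(f,D_{b+}^*g)=i\,c\bigl(f_2(b+)\overline{g_1(b+)}+f_1(b+)\overline{g_2(b+)}\bigr),
\end{equation*}
i.e.\ both traces enter with a plus sign (since $-ic\int_b^\infty \tfrac{d}{dx}(f_2\overline{g_1}+f_1\overline{g_2})\,dx$ picks up only the contribution at $x=b$ with a sign flip); with your $\Gamma_0^{(b+)}f=f_1(b+)$, $\Gamma_1^{(b+)}f=ic\,f_2(b+)$ this does equal $(\Gamma_1^{(b+)}f,\Gamma_0^{(b+)}g)-(\Gamma_0^{(b+)}f,\Gamma_1^{(b+)}g)$, whereas the combination you wrote, taken literally, would not.
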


The proofs of  Lemmas \ref{2.3-} and \ref{2.3+} are straightforward.

%%%%%%%%%%%%%%%%%%%%%%%%%%%%%%%%%%%%%%%%%%%%%%%
%%\subsection{Boundary triplet for Dirac operators with point interactions on the line}

\subsection{Trace properties of functions from the Sobolev space $ W^{1,2}(\R_+\setminus X)$}
Let $X=\{x_{n}\}_{n=1}^\infty$ be a discrete subset of
the interval $\cI= (a, b)$,  $x_{n-1}<x_{n},\ n\in \N,$ with the accumulation point $b$, i.e. such that  $b:=\sup X\equiv\lim_{n\to\infty}x_n$.  We set  $x_{0}:=a$
and
%%$$
%%a:=x_{0}\,,\qquad b:=\sup X\equiv\lim_{n\to\infty}x_n\,,
%%$$
%
%
\begin{equation}\label{3.26AAA}
d_n:=x_{n}-x_{n-1}\,,\qquad d_{*}(X):=\inf_{n}d_n\,,\qquad
d^{*}(X):=\sup_{n}d_n\,.
\end{equation}

In what follows we always  assume for convenience that $a:=x_{0}= 0.$
Then we define the minimal operator $D_X$ on $\mathfrak H =  L^{2}({\cI})\otimes \mathbb C^2$ %%%the line
by setting
  \begin{equation}\label{3.26}
D_X:=\bigoplus_{n\in \N}D_n\,. \qquad%%\text{\rm or}\qquad
%%D_X:=  \bigl(\bigoplus_{n\in \N} D_n\bigr) \bigoplus D_{b+}\,.
   \end{equation}
where $D_n$, $n\in \N,$ is given by  \eqref{3.6A}.  Clearly,
   \begin{equation}\label{3.27}
\dom(D_X) = W^{1,2}_0({\R_+}\setminus X)\otimes \mathbb C^2 =
\bigoplus_{n\in \N} W^{1,2}_0[x_{n-1},x_{n}]\otimes\C^{2}.
   \end{equation}
Investigating non-relativistic limit in the case $b < \infty$ we will also consider operators
$D_X \bigoplus D_{b+}.$  %%%% =  \bigl(\bigoplus_{n\in \N} D_n\bigr) \bigoplus D_{b+}\,.$

Here we construct (ordinary) boundary triplets for Dirac operators with point
interactions on the halfline as well as on the line. It is natural
to define a boundary triplet for $D^*_X= \bigoplus_{n=1}^\infty
D_n^*$ as the direct sum $\widetilde{\Pi} =
\bigoplus_{n=1}^\infty\widetilde{\Pi}_n$ of boundary triplets
$\widetilde{\Pi}^{(n)}=\big\{\C^{2},
\widetilde{\Gamma}_{0}^{(n)},\widetilde{\Gamma}_{1}^{(n)}\big\}$
defined in Lemma \ref{2.3}. However, $\widetilde{\Pi}$ is not an
ordinary boundary triplet,  in general. First we find necessary and
sufficient conditions for a discrete set
$X=\{x_n\}^{\infty}_{n=0}$ which guarantee this property for the
direct
sum $\bigoplus_{n=1}^\infty\widetilde{\Pi}_n.$  %%%%of boundary triplets $\Pi_n$ for $D^*_n$ to form a boundary triplet.
This problem is closely related  to the property of trace
mapping defined on the Sobolev space $ W^{1,2}({\mathbb R_+})$ by
   \begin{equation}\label{3.29}
\pi:\   W^{1,2}({\mathbb R_+})\to l_2({\mathbb N}), \qquad \pi(f) =
\{f(x_n)\}^{\infty}_{n=1}.
   \end{equation}
%
%
%%%%%%%%%%%%%%%%%%%%%%%
   \begin{proposition}\label{prop3.5trace}
Let $X=\{x_n\}_{n=1}^\infty$ be as above. Then the mapping $\pi$
is surjective if and only if $d_*(X)>0$.
   \end{proposition}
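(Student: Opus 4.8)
My plan is to prove the equivalence directly, by elementary one-dimensional Sobolev estimates; the only genuinely delicate point is the choice of test functions for surjectivity.

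\emph{Sufficiency} ($d_*(X)>0\Rightarrow\pi$ surjective). Let $\{a_n\}_{n\ge1}\in l^2(\N)$. I would \emph{not} interpolate linearly through all the nodes $x_n$: since $d^*(X)$ may be infinite, such a function need not be square-integrable. Instead put $\delta:=\tfrac12\min\{d_*(X),1\}>0$ and let $f$ be the function that on each interval $[x_n-\delta,x_n+\delta]$ equals the triangular bump which is linear, vanishes at $x_n\pm\delta$, and equals $a_n$ at $x_n$, and vanishes on the rest of $\R_+$. Since $x_n-x_{n-1}=d_n\ge d_*(X)\ge2\delta$, consecutive bumps have disjoint interiors, and since $x_1-\delta\ge d_*(X)-\delta\ge\delta>0$ we have $\supp f\subset\R_+$; hence $f$ is continuous and piecewise linear with $f(x_n)=a_n$, and a direct computation gives
\[
\int_{\R_+}|f|^2=\frac{2\delta}{3}\sum_{n\ge1}|a_n|^2,\qquad\int_{\R_+}|f'|^2=\frac{2}{\delta}\sum_{n\ge1}|a_n|^2 ,
\]
both finite, so $f\in W^{1,2}(\R_+)$ and $\pi(f)=\{a_n\}$. (One should also note that then $\pi$ does map into $l^2(\N)$: with $m:=\min\{d_*(X),1\}$ the intervals $[x_n-m,x_n]$ are contained in $\R_+$ and mutually measure-disjoint, and averaging the identity $|f(x_n)|^2=|f(x)|^2+\int_x^{x_n}(|f|^2)'$ over $x\in[x_n-m,x_n]$ and summing yields $\sum_n|f(x_n)|^2\le m^{-1}\|f\|_{L^2}^2+\|f\|_{L^2}^2+\|f'\|_{L^2}^2$.)

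\emph{Necessity} ($d_*(X)=0\Rightarrow\pi$ not surjective). Here the obstruction is Cauchy--Schwarz: for every $f\in W^{1,2}(\R_+)$, writing $x_0=0$,
\[
\frac{|f(x_n)-f(x_{n-1})|^2}{d_n}=\frac1{d_n}\Bigl|\int_{x_{n-1}}^{x_n}f'\Bigr|^2\le\int_{x_{n-1}}^{x_n}|f'|^2,\qquad\text{hence}\qquad\sum_{n\ge1}\frac{|f(x_n)-f(x_{n-1})|^2}{d_n}\le\|f'\|_{L^2(\R_+)}^2<\infty .
\]
Since $\inf_n d_n=0$, I would choose indices $2\le n_1<n_2<\cdots$ with $n_{k+1}\ge n_k+2$ and $d_{n_k}\le k^{-3}$, and set $a_{n_k}:=1/k$ and $a_m:=0$ for every other $m\in\N$. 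Then $\{a_m\}\in l^2(\N)$, and $a_{n_k-1}=0$ because no $n_k-1$ equals any $n_j$ (the selected indices differ by at least $2$ and $n_1\ge2$). If some $f\in W^{1,2}(\R_+)$ satisfied $f(x_m)=a_m$ for all $m\ge1$, the bound above would force $\infty>\sum_{n\ge1}|a_n-a_{n-1}|^2/d_n\ge\sum_{k\ge1}k^{-2}/d_{n_k}\ge\sum_{k\ge1}k=\infty$, a contradiction; hence $\{a_m\}\notin\ran\pi$.

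The step I expect to be the main obstacle is the sufficiency direction: the naive piecewise-linear interpolant through the $x_n$ need not lie in $L^2(\R_+)$ when the gaps $d_n$ are unbounded, so one must localize the construction at a small \emph{fixed} scale $\delta$, and it is precisely the hypothesis $d_n\ge d_*(X)$ that keeps the localized bumps from overlapping and makes all the norm estimates close. The necessity direction, resting on the single Cauchy--Schwarz inequality above, and the well-definedness remark are then routine.
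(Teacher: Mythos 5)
Your proof is correct. The sufficiency half is essentially the paper's: both arguments place mutually disjoint bumps of a fixed width at most $d_*(X)$ at the nodes, scaled by $a_n$, so that the $W^{1,2}$-norm of the sum is a fixed multiple of the $l^2$-norm of $\{a_n\}$; whether the bump is smooth (the paper's $u_0\in C_0^\infty$) or triangular (yours) is immaterial, and your check that $\pi$ really maps all of $W^{1,2}(\R_+)$ into $l^2(\N)$ when $d_*(X)>0$ is a useful addition the paper leaves implicit. The necessity half is where you genuinely diverge. The paper argues from surjectivity: by the closed graph theorem a surjective $\pi$ has a bounded right inverse, so the coordinate vectors $e_n$ admit preimages $v_n$ with $\sup_n\|v_n\|_{W^{1,2}}<\infty$, and the Cauchy--Schwarz bound $d_n^{-1}=d_n^{-1}|v_n(x_n)-v_n(x_{n-1})|^2\le\|v_n\|_{W^{1,2}}^2$ then forces $\sup_n d_n^{-1}<\infty$. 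You run the contrapositive constructively: the same Cauchy--Schwarz inequality, summed over $n$, gives the a priori constraint $\sum_n d_n^{-1}|f(x_n)-f(x_{n-1})|^2\le\|f'\|_{L^2}^2$ on any admissible trace, and when $d_*(X)=0$ you exhibit an explicit $l^2$ sequence (nonzero only on a sparse set of indices with $d_{n_k}\le k^{-3}$, with vanishing neighbours so the increments are known) that violates it. Your route avoids the open-mapping machinery altogether, which is a real simplification here since $\pi$ is only densely defined when the $x_n$ accumulate and the paper's appeal to a bounded inverse needs some care; the paper's route is marginally more quantitative in that it shows the preimages of $e_n$ directly control $d_n^{-1}$. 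A cosmetic point only: in your last display the symbol $a_{n-1}$ for $n=1$ should be read as $f(x_0)=f(0)$, which is harmless because your lower bound uses only the indices $n_k\ge 2$.
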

%%%%%%%%%%%%%%%%%%%%%%%%%%%%%%%%%%%%%%%%%%
   \begin{proof}
\emph{Sufficiency } Let $d_*(X)>0$. Denote by $u_0(\cdot)\in
C^{\infty}_0(\mathbb R)$ a function with compact support $\supp
u_0\subset \bigl(-d_*(X)/2,  d_*(X)/2\bigr)$ and satisfying
$u_0(0)=1$. Next we put $u_n(x) := u_0(x -x_{n-1} + d_n/2),\  n\in{\mathbb N},$ and
note that $\supp u_n\subset [x_{n-1}, x_n]$ and  $\|u_n\|_{W^{1,2}} = \|u_0\|_{W^{1,2}},\  n\in\mathbb N$.
Since $\supp u_k\cap\supp u_j=\emptyset$ for $j\not = k$,  for any sequence $\{a_k\}^{\infty}_1\in l^2(\mathbb N)$ the following
series converges  in $W^{1,2}(\R_+)$,
    \begin{equation*}
f := \sum^{\infty}_{k=1}a_k u_k\in W^{1,2}(\mathbb R_+)
\qquad\text{and}\qquad
\|f\|^2_{W^{1,2}(\R_+)} = \|u_0\|^2_{W^{1,2}(\R_+)} \cdot\sum^{\infty}_{k=1}a^2_k.
    \end{equation*}
Clearly, $f(x_k) = a_k, \ k\in \N$, i.e.
 $\pi(f)=\{a_k\}^{\infty}_1$ and the mapping $\pi$ is surjective.

\emph{Necessity.}  Assume that $\pi$ is surjective. Choose any sequence $\{u_n\}^{\infty}_1\in
W^{1,2}({\mathbb R_+})$ satisfying
    \begin{equation}
u_n(x_{n})=1,\quad  u_n(x_{n-1})=0, \qquad  n\in\N.
    \end{equation}
Then, with the above  notation $d_n = x_{n} - x_{n-1}$ we get
 \begin{equation}\label{3.32}
d^{-1}_n = d^{-1}_n\left(\int_{x_{n-1}}^{x_n}u'_n(t)dt\right)^2
\le \int_{x_{n-1}}^{x_n}|u'_n(t)|^2 dt \le
\|u_n\|^2_{W^{1,2}({\R_+})}, \qquad  n\in \N.
 \end{equation}
If $\pi$ is surjective, then, by closed graph theorem,
there  exists a bounded "inverse", i.e. a  surjective mapping $\pi^{(-1)}$ such that
    \begin{equation}
\pi^{(-1)}:l^2(\mathbb N)\to W_1\subset W^{1,2}(\R_+), \qquad
\pi\pi^{(-1)}=I_{l^2},
    \end{equation}
where $W_1$ is a (closed) subspace of $W^{1,2}({\R_+})$. Hence there
exists a bounded in $W^{1,2}({\R_+})$ sequence
$\{v_n\}^{\infty}_1\subset W^{1,2}({\R_+})$ covering the coordinate
basis  $e_n:=\{\delta_{mn}\}^{\infty}_{m=1},\ n\in{\N},$ in
$l^2(\N)$, i.e. satisfying $\pi(v_n) = e_n,\  n\in{\N}$.
Substituting the sequence  $\{v_n\}^{\infty}_1$ in  \eqref{3.32}
in place of $\{u_n\}^{\infty}_1$,  we conclude that the sequence
$\{d^{-1}_n\}_1^\infty$ is bounded, i.e. $d_*(X)>0$.
  \end{proof}

Next we give a complete trace characteristic  of the space
$W^{1,2}(\R_a^+\setminus X)$ assuming for convenience  that $a=0$.
Due to the embedding theorem,  the trace mappings %%$\pi_{\pm}$
     \begin{equation}\label{3.31}
\pi_{\pm}:  W^{1,2}(\R_+\setminus X) \to l^2(\N), \qquad
\pi_+(f)=\{f(x_{n-1}+)\}^{\infty}_1, \qquad
\pi_-(f)=\{f(x_n-)\}^{\infty}_1,
     \end{equation}
are well defined for functions with compact supports, i.e. for
$f\in \bigoplus_1^N W^{1,2}[x_{n-1},x_n],$  $N\in \mathbb N.$
We  assume  $\pi_{\pm}$ to be defined on its maximal domain $\dom(\pi_{\pm}):= \{f\in W^{1,2}(\R_+\setminus X):\ \pi_{\pm}f\in l^2(\N)\}$. Clearly,  $\dom(\pi_{\pm})$ is dense in  $W^{1,2}(\R_+\setminus X)$ although, in
general, $\dom(\pi_{\pm}) \not =  W^{1,2}(\R_+\setminus X).$
   \begin{proposition}\label{prop3.6}
Let $X =\{x_{n}\}_{n=1}^\infty$ be as above with $x_{0}=0$ and
$X\subset\overline\R_+$.  Then:

\item $(i)$  For any pair of sequences $a^{\pm} = \{a^{\pm}_n\}_1^{\infty}$ satisfying
    \begin{equation}\label{3.36}
a^{\pm} = \{a^{\pm}_n\}^{\infty}_1  \in l^2({\mathbb
N};\{d_n\}) \qquad \text{and} \qquad \{a^+_n-a^-_n\}^{\infty}_1\in l^2({\mathbb
N};\{d^{-1}_n\}),
   \end{equation}
there exists a (non-unique) function
$f\in W^{1,2}(\R_+\setminus X)$ such that $\pi_{\pm}(f) =
a^{\pm}$.  Moreover,  the mapping $\pi_+-\pi_-:
W^{1,2}(\R_+\setminus X)  \to l^2({\mathbb N};\{d^{-1}_n\})$ is
surjective and contractive, i.e.
    \begin{equation}\label{3.30AA}
\sum_{n\in\N} d_n^{-1}{|f(x_n-) - f(x_{n-1}+)|^2} \le
\|f\|^2_{W^{1,2}({\R}_+\setminus X)},\qquad f\in
W^{1,2}(\R_+\setminus X).
    \end{equation}

\item $(ii)$  Assume in addition, that $d^{*}(X)<\infty.$ Then the mapping $\pi_{\pm}$
can be extended  to a bounded surjective  mapping from $W^{1,2}(\R_+\setminus X)$
onto $l^2({\mathbb N};\{d_n\})$. More precisely,  the following estimate holds
   \begin{equation}\label{3.30AB}
\sum_{n\in\N}d_n \left( |f(x_{n-1}+)|^2 + |f(x_n-)|^2\right)\le 4\left(d^{*}(X)^2 \|f'\|^2_{L^{2}({\R}_+)} + \|f\|^2_{L^{2}({\R}_+)}\right)
\le C_1\|f\|^2_{W^{1,2}({\R}_+\setminus X)}, \qquad f\in
W^{1,2}(\R_+\setminus X).
    \end{equation}
where  $C_1 := 4\max\{d^{*}(X)^2, 1\}$. Besides, the traces $a^{\pm} := \pi_{\pm}(f)$ of each $f\in W^{1,2}({\R}_+\setminus X)$  satisfy conditions \eqref{3.36}.  Moreover,  the assumption $d^{*}(X)<\infty$ is necessary  for the inequality   \eqref{3.30AB} to hold with some $C_1>0$.

\item $(iii)$ The trace mapping $\pi_{\pm}: \dom(\pi_{\pm}) \to
l^2(\N)$ is  closed. Moreover, it is  surjective, $\ran(\pi_{\pm})\supset l^2({\mathbb N}\}$, if and only if  $d^{*}(X)<\infty.$

\item $(iv)$ The  mapping $\pi_{\pm}$ is  bounded, i.e. $\dom(\pi_{\pm})
= W^{1,2}({\mathbb R}_+\setminus X)$, if and only if
    \begin{equation}\label{3.34}
0<d_{*}(X)<d^{*}(X)<\infty.
    \end{equation}
   \end{proposition}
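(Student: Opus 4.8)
The plan is to reduce each of the four assertions to an elementary one--dimensional estimate on a single building block $[x_{n-1},x_n]$ and then sum over $n$. For (i), the contractivity \eqref{3.30AA} is the fundamental theorem of calculus plus Cauchy--Schwarz: $|f(x_n-)-f(x_{n-1}+)|^2=\big|\int_{x_{n-1}}^{x_n}f'\big|^2\le d_n\int_{x_{n-1}}^{x_n}|f'|^2$, so dividing by $d_n$ and summing bounds the left side of \eqref{3.30AA} by $\|f'\|_{L^2(\R_+)}^2\le\|f\|_{W^{1,2}(\R_+\setminus X)}^2$. For the realization part I would interpolate affinely: given $a^\pm$ obeying \eqref{3.36}, set $f:=a^+_n+(a^-_n-a^+_n)(x-x_{n-1})/d_n$ on $[x_{n-1},x_n]$; then $\int_{x_{n-1}}^{x_n}|f'|^2=|a^+_n-a^-_n|^2/d_n$ and $\int_{x_{n-1}}^{x_n}|f|^2\le d_n(|a^+_n|^2+|a^-_n|^2)$, so the two conditions in \eqref{3.36} are exactly what puts the piecewise--affine $f$ into $W^{1,2}(\R_+\setminus X)$ with $\pi_\pm(f)=a^\pm$. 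Non--uniqueness is obvious (add anything vanishing at all $x_n\pm$), and surjectivity of $\pi_+-\pi_-$ onto $l^2(\N;\{d_n^{-1}\})$ follows by taking $a^-_n=0$ and $a^+_n$ equal to the prescribed jump.

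For (ii) the main device is the Poincar\'e--type averaging identity: for $x\in[x_{n-1},x_n]$ one has $f(x)=d_n^{-1}\int_{x_{n-1}}^{x_n}f(y)\,dy+d_n^{-1}\int_{x_{n-1}}^{x_n}\big(\int_y^x f'(t)\,dt\big)dy$. Bounding the first term by $d_n^{-1}\int_{x_{n-1}}^{x_n}|f|^2$ and the second by $\big(\int_{x_{n-1}}^{x_n}|f'|\big)^2\le d_n\int_{x_{n-1}}^{x_n}|f'|^2$ gives $|f(x)|^2\le 2d_n^{-1}\int_{x_{n-1}}^{x_n}|f|^2+2d_n\int_{x_{n-1}}^{x_n}|f'|^2$; applying this at $x=x_{n-1}+$ and $x=x_n-$, multiplying by $d_n$, using $d_n\le d^*(X)$ and summing over $n$ yields \eqref{3.30AB} with $C_1=4\max\{d^*(X)^2,1\}$, whence $\pi_\pm(f)\in l^2(\N;\{d_n\})$ and, together with \eqref{3.30AA}, the traces $a^\pm:=\pi_\pm(f)$ satisfy \eqref{3.36}. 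Surjectivity onto $l^2(\N;\{d_n\})$ I would obtain from piecewise--constant functions: $f\equiv a_n$ on $[x_{n-1},x_n]$ gives $\|f\|_{W^{1,2}}^2=\sum_n d_n|a_n|^2$ and $\pi_\pm(f)=\{a_n\}$. For the necessity of $d^*(X)<\infty$ I would test \eqref{3.30AB} on a triangular bump of height $1$ and base $\min(d_n,\sqrt3)$ glued to the left end of a single interval $[x_{n-1},x_n]$: the left side is $\ge d_n$ while the right side is $\le 2C_1/\sqrt3$ as soon as $d_n\ge\sqrt3$, forcing $\sup_n d_n<\infty$.

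Part (iii): closedness of $\pi_\pm:\dom(\pi_\pm)\to l^2(\N)$ is the routine trace argument --- if $f_k\to f$ in $W^{1,2}(\R_+\setminus X)$ and $\pi_\pm(f_k)\to g$ in $l^2(\N)$, then $f_k\to f$ in $W^{1,2}[x_{n-1},x_n]$ for each fixed $n$, so the $1$D Sobolev embedding gives $\pi_\pm(f_k)_n\to\pi_\pm(f)_n$, whence $\pi_\pm(f)_n=g_n$ for all $n$ and $\pi_\pm(f)=g\in l^2(\N)$, i.e. $f\in\dom(\pi_\pm)$. The surjectivity onto $l^2(\N)$ then follows from (ii): if $d^*(X)<\infty$, $l^2(\N)\subseteq l^2(\N;\{d_n\})=\ran(\pi_\pm)$, and if $d^*(X)=\infty$ the bump test of (ii) produces an element of $l^2(\N)$ outside the range. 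Part (iv): if $0<d_*(X)\le d^*(X)<\infty$, then \eqref{3.30AB} gives $\sum_n|f(x_{n-1}+)|^2\le d_*(X)^{-1}\sum_n d_n|f(x_{n-1}+)|^2\le d_*(X)^{-1}C_1\|f\|_{W^{1,2}}^2$, so $\dom(\pi_\pm)=W^{1,2}(\R_+\setminus X)$ and, being everywhere defined and closed by (iii), $\pi_\pm$ is bounded by the closed graph theorem; conversely $d^*(X)<\infty$ is forced by the bump test of (ii) (and holds automatically when the accumulation point $b$ is finite, since $\sum_n d_n\le b$), while $d_*(X)>0$ is forced because if $d_{n_j}\to0$ one can pick a piecewise--constant $f$ equal to $c_{n_j}$ on $[x_{n_j-1},x_{n_j}]$ and $0$ elsewhere with $\sum_j d_{n_j}c_{n_j}^2<\infty$ but $\sum_j c_{n_j}^2=\infty$, so $\pi_\pm(f)\notin l^2(\N)$ --- the same obstruction as in the necessity half of Proposition \ref{prop3.5trace}, cf.\ \eqref{3.32}.

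The per--interval estimates and all the sufficiency directions are routine; the step I expect to be most delicate is manufacturing the test functions for the necessity directions with the correct scaling in $d_n$. One must choose the bump width $\min(d_n,\sqrt3)$ rather than $d_n$ so that \eqref{3.30AB} is genuinely violated when $d^*(X)=\infty$, and the subsequence with $d_{n_j}\to0$ fast enough that the piecewise--constant function stays in $W^{1,2}(\R_+\setminus X)$ while its trace leaves $l^2(\N)$; it is also here that one has to keep careful track of the constant $C_1$ and remember that on long intervals the governing inequality is the half--line trace estimate rather than the short--interval one.
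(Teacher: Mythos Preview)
Your treatment of (i), (ii), the closedness part of (iii), all the sufficiency directions in (iii) and (iv), and the necessity of $d_*(X)>0$ in (iv) follows essentially the same route as the paper (piecewise affine interpolants, the one--interval Sobolev/averaging estimate, step functions for surjectivity onto $l^2(\N;\{d_n\})$, and a piecewise constant function to violate $\dom(\pi_\pm)=W^{1,2}$ when $d_*(X)=0$). For the necessity of $d^*(X)<\infty$ in (ii) the paper uses a different test function --- a rescaled $f_0\in W^{1,2}[0,1]$ with $f_0(0)=f_0(1)=1$ and $\|f_0\|_{L^2[0,1]}^2=1/C_1$ --- which cancels the $L^2$ contribution exactly; your triangular bump with base $\min(d_n,\sqrt{3})$ is a valid alternative.

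There is, however, a genuine gap in your handling of the necessity of $d^*(X)<\infty$ in (iii) and (iv). The bump test from (ii) controls the \emph{weighted} trace norm: for a height--$1$ bump on $[x_{n-1},x_n]$ the left side of \eqref{3.30AB} contributes $d_n$, and that is what blows up. But in (iii) and (iv) the target space is the \emph{unweighted} $l^2(\N)$; a single height--$1$ bump has $l^2(\N)$--trace norm equal to $1$ regardless of $d_n$, so it neither exhibits an element of $l^2(\N)$ outside $\ran(\pi_\pm)$ nor violates boundedness into $l^2(\N)$. Concretely, if $d_n\ge 1$ for all $n$, the one--dimensional trace inequality on the unit subinterval $[x_{n-1},x_{n-1}+1]$ already gives $\sum_n|f(x_{n-1}+)|^2\le C\|f\|_{W^{1,2}}^2$, which shows that no bump--type argument can force $d^*(X)<\infty$ in these two items.

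The paper's approach to these converses is genuinely different: assuming $\ran(\pi_+)=l^2(\N)$ (for (iii)) or $\dom(\pi_+)=W^{1,2}(\R_+\setminus X)$ (for (iv)), it asserts --- citing (i) --- the set--theoretic inclusion $l^2(\N)\subset l^2(\N;\{d_n\})$, then observes that the identical embedding $l^2(\N)\hookrightarrow l^2(\N;\{d_n\})$ is closed and applies the closed graph theorem to obtain $\sup_n d_n<\infty$. The new ingredient compared with your proposal is this passage from range information to an inclusion of sequence spaces, followed by the closed graph theorem; it is not a pointwise test--function argument. Be aware, though, that the inference ``by (i), $l^2(\N)\subset l^2(\N;\{d_n\})$'' is not transparent from the statement of (i) as written (part (i) only asserts that $\ran(\pi_+)\supset l^2(\N;\{d_n\})$, not $\ran(\pi_+)\subset l^2(\N;\{d_n\})$), so if you follow the paper's route you will need to supply that step carefully.
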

%%%%%%%%%%%%%%%%%%%%%%%%%%%%%%%%%%%%%%%%%%%%%%%%%%%%%
  \begin{proof} (i)
Let  conditions \eqref{3.36} be satisfied.
Define a function $g_n$ by setting
    \begin{equation}\label{3.37}
g_n(x) = a^+_n+d^{-1}_n(x-x_{n-1})(a^-_n-a^+_n),\qquad
x\in[x_{n-1},x_n], \quad n\in{\mathbb N}.
  \end{equation}
Let us check that the piecewise linear function $g = \oplus_{n\in
\N}g_n$ has the required properties.
Clearly, $g_n(x_{n-1}+)= a^+_n$ and  $g_n(x_{n}-) = a^-_n$. Moreover,
$g'_n(x) = d^{-1}_n(a^-_n-a^+_n)$ and
     \begin{equation} \label{3.39A}
\|g'\|^2_{L^2({\mathbb R}_+)} = \sum_n \|g'_n\|^2_{L^2[x_{n-1},x_n]}
= \sum_n d^{-1}_n|a^-_n - a^+_n|^2 < \infty.
     \end{equation}
In other words, $g'\in L^2(\mathbb R_+)$ if and only if $\{a^+_n-a^-_n\}^{\infty}_1\in
l^2({\mathbb N};\{d^{-1}_n\})$.

Next,  it is easily seen that
     \begin{eqnarray}\label{3.38A}
\|g_n\|^2_{L^2[x_{n-1},x_n]} = d_n\left( \re(a^+_n\overline{a^-_n}) + 3^{-1}|a^-_n - a^+_n|^2\right) \nonumber   \\
= 3^{-1}d_n \left(|a^+_n|^2 + |a^-_n|^2 + \re(a^+_n\overline{a^-_n})\right),
\qquad  n\in\mathbb N.
     \end{eqnarray}
On the other hand, by the Cauchy-Schwartz inequality
     \begin{equation}\label{3.39B}
6^{-1}(|z_1|^2+|z_2|^2)\le 3^{-1}\bigl( |z_1|^2+|z_2|^2 +
\re(z_1\overline{z}_2)\bigr) \le 2^{-1}(|z_1|^2+|z_2|^2).
  \end{equation}
Combining  \eqref{3.38A} with \eqref{3.39B}  we arrive at  the
following two-sided estimate for $g = \oplus_{n\in \N} g_n$
    \begin{eqnarray}\label{3.40}
6^{-1}\sum^{\infty}_{n=1}d_n \bigl(|a^+_n|^2 + |a^-_n|^2 \bigr)\le
\|g\|^2_{L^2(\mathbb R_+)} =  \sum^{\infty}_{n=1}
\|g_n\|^2_{L^2[x_{n-1},x_n]}   %%\nonumber   \\
\le 2^{-1} \sum^{\infty}_{n=1}d_n
\bigl(|a^+_n|^2 + |a^-_n|^2 \bigr).
    \end{eqnarray}
In other words,   $g\in L^2(\mathbb R_+)$ if and
only if $a^{\pm}  \in l^2(\mathbb N;\{d_n\})$.
Thus, it follows from \eqref{3.40} and \eqref{3.39A}  that $g=\oplus_{n\in \N} g_n\in W^{1,2}(\mathbb R_+\setminus X)$
if and only if both assumptions in \eqref{3.36} are satisfied.

To prove   surjectivity of the mapping $\pi_+:W^{1,2}(\mathbb
R_+\setminus X)\to l^2(\mathbb N;\{d_n\})$ we choose any
$a^+=\{a^+_n\}\in l^2(\mathbb N;\{d_n\})$ and put
$a^- := \{a^-_n\}=a^+$. Clearly,
$\{a^+_n-a^-_n\}^{\infty}_1=\{0\}^{\infty}_1\in l^2(\mathbb
N;\{d^{-1}_n\})$ and both conditions \eqref{3.36} are satisfied.
Thus, the step function  $g=\oplus_{n\in \N} g_n$ with $g_{n}:=
a_n \in W^{1,2}[x_{n-1},x_n], \  n\in \mathbb N,$ belongs to $W^{1,2}(\mathbb R_+\setminus X)$ and  satisfies  $\pi_{\pm}(g)=a^{\pm}$.

Further,  for any $f_n\in W^{1,2}[x_{n-1}, x_n],$ $n\in \mathbb N,$
one easily gets
    \begin{equation}\label{3.35}
 d_n^{-1}{|f_n(x_n-) - f_n(x_{n-1}+)|^2} = d^{-1}_n\left(\int_{x_{n-1}}^{x_n}f'_n(t)dt\right)^2 \le
\int_{x_{n-1}}^{x_n}|f'_n(t)|^2 dt \le \|f_n\|^2_{W^{1,2}[x_{n-1},
x_n]}.
   \end{equation}
Taking a sum one arrives at the  inequality \eqref{3.30AA}.

(ii)  Next, let  $d^*(X)<\infty$.  By the  Sobolev embedding
theorem, for any $f\in W^{1,2}(\mathbb R_+\setminus X)$
    \begin{equation}\label{3.45}
\max\left\{d_n |f(x_{n-1}+)|^2,\  d_n|f(x_n-)|^2\right\}
\le 2 \left(d^2_n\|f'\|^2_{L^2[x_{n-1},x_n]} +
\|f\|^2_{L^2[x_{n-1},x_n]}\right), \qquad n\in \mathbb N.
   \end{equation}
Taking a sum   and noting that  $d^*(X)<\infty$ we arrive at   \eqref{3.30AB}. It follows that the mapping $\pi_{\pm}$ (see \eqref{3.31}) originally defined on functions with compact supports can be extended
to  bounded surjective  mappings from $W^{1,2}(\R_+\setminus X)$
onto $l^2({\mathbb N};\{d_n\})$.

Further, let $f\in W^{1,2}(\mathbb R_+\setminus X)$ and let  $a^+_n :=f(x_{n-1}+),\  a^-_n := f(x_{n}-)$.
Since $d^*(X)<\infty$, it follows from  \eqref{3.30AB} that
the sequences $a^{\pm}=\{a^{\pm}_n\}_1^\infty$ satisfy the first condition in  \eqref{3.36}.
The second condition in  \eqref{3.36} is implied by \eqref{3.30AA}.

It remains to prove  the necessity of the assumption $d^{*}(X)<\infty$  for  the validity of inequality \eqref{3.30AB}.
Choose  $f_0 \in W^{1,2}[0,1]$ such that
$\|f_0\|^2_{L^2[0,1]} = \frac{1}{C_1}$ and $f_0(0)=f_0(1) = 1$  and  put
     \begin{equation}
f_n(x) := f_0\bigl((x-x_{n-1})d^{-1}_n\bigr), \qquad    n\in\mathbb N.
%%%\quad \text{and}\quad f := \sum_{n\in \N}a_n f_n.
     \end{equation}
Clearly,  $f_n(x_{n-1}+) = f_n(x_{n}-) =1$  and
    \begin{equation}\label{3.48}
\|f_n\|^2_{W^{1,2}[x_{n-1}, x_n]} = \frac{1}{d_n}\int^1_0|f'_0(t)|^2 dt +  d_n \int^1_0|f_0(t)|^2 dt, \qquad n\in \N.
   \end{equation}
Substituting $f_n$ in \eqref{3.30AB} with account of \eqref{3.48} we arrive at the estimate
    \begin{eqnarray*}
2d_{n} \le C_1 \left(d_{n}^{-1}\|f_0'\|^2_{L^2[0,1]} + d_{n}\|f_0\|^2_{L^2[0,1]}\right)
= C_1 \left(d_n^{-1} \|f'_0\|^2_{L^{2}[0,1]}  + C_1^{-1}d_{n}\right), \quad n\in \N.
    \end{eqnarray*}
This estimate  is equivalent  to
$$
d_{n} \le C_1 d_{n}^{-1}\|f_0'\|^2_{L^2[0,1]}, \qquad n\in \N.
$$
In turn, the latter is equivalent to  $d_{n}^2 \le C_1 \|f_0'\|^2_{L^2[0,1]}$  for $n\in \N$, which implies  $d^*(X)<\infty$.

(iii) Let $\lim_{n\to\infty}f_n = f$ in $W^{1,2}(\mathbb R_+\setminus X)$ and   $\lim_{n\to\infty}\pi_{\pm}(f_n) = h_{\pm}$ in  $l^2({\mathbb N})$.  By the embedding theorem,
$\pi_{\pm}(f_n)$  weakly converges to $\pi_{\pm}(f)$ as $n\to\infty.$ Thus, $f\in \dom(\pi_{\pm})$, $\pi_{\pm}(f) = h_{\pm}$  and the mapping  $\pi_{\pm}$ is closed.

Further, let   $d^*(X)<\infty.$  Then the space $l^2({\mathbb N})$  is
continuously  embedded in $l^2({\mathbb N};\{d_n\})$. Therefore the
surjectivity  of the mapping  $\pi_{\pm}$  is implied by the
statement (i).

Conversely, let the trace mapping $\pi_+: \dom(\pi_+) \to l^2({\mathbb N})$ be  surjective, i.e.
$\ran(\pi_+) = l^2({\mathbb N}).$ Then, by (i), $l^2({\mathbb N})$  is a
subset of $l^2({\mathbb N};\{d_n\}).$  It is easily seen  that the
identical  embedding $i_+: l^2({\mathbb N}) \hookrightarrow
l^2({\mathbb N};\{d_n\})$ is closed.  By the closed graph
theorem, $i_+$ is continuous, i.e. there exists a constant $C>0$
such that $\sum_n d_n|a_{n}|^2 \le C\sum_n|a_{n}|^2.$  It follows
that $d^*(X) = \sup_n d_n \le C.$

(iv) Let conditions \eqref{3.34} be satisfied and
$f=\bigoplus^{\infty}_1 f_n\in W^{1,2}({\mathbb R}_+\setminus X)$. Since
$d_*(X)>0$, we get from \eqref{3.30AB}
    \begin{equation}
\|\pi_+(f)\|_{l^2(\N)}= \sum_{n\in\N} |f_n(x_{n-1}+)|^2  \le
d_*(X)^{-1} \sum_{n\in\N}d_n |f_n(x_{n-1}+)|^2  \le C_3\|f\|^2_{
W^{1,2}({\mathbb R}_+\setminus X)},
    \end{equation}
where $C_3 = C_1d_*(X)^{-1}.$ Similarly we get
$\|\pi_-(f)\|_{l^2(\N)}= \sum_{n\in\N} |f_n(x_{n}-)|^2 \le
C_3\|f\|^2_{ W^{1,2}({\mathbb R}_+\setminus X)}.$

Conversely, let $\dom(\pi_+)= W^{1,2}({\mathbb R}_+\setminus X)$.
Then $l^2(\mathbb N)$ is isomorphic algebraically and topologically
to the quotient  space $W^{1,2}({\mathbb R}_+\setminus X)/\ker\pi_+.$
Combining this fact with the statement (i), we get that  $l^2(\mathbb
N)$ is a subset of the weighted $l^2$-space $l^2({\mathbb N};\{d_n\})$.
Hence, as it is proved in the step (iii), $d^*(X)<\infty.$  In turn, by the statement (ii),  we get that $l^2(\N)$ coincides algebraically and topologically with
$l^2({\mathbb N};\{d_n\})$. This immediately yields
conditions \eqref{3.34}.
       \end{proof}
%%%%%%%%%%%%%%%%%%%%%%%%%%%%%%%%%%%%%%%%%%%%%%%%%%%%%%%%%%%%%%%%%
%%%%%%%%%%%%%%%%%%%%%%%%%%%%%%%%%%%%%%%%%%%%%%%
   \begin{remark}
Let $d^*(X)<\infty$.  Starting with $f  \in W^{1,2}(\mathbb R_+\setminus X)$ we  set $a^+_n :=f(x_{n-1}+),\  a^-_n := f(x_{n}-),\ n\in \mathbb N,$ and  define the  function  $g = \oplus_1^{\infty}g_n$   where  $g_n, \ n\in \N$, is  given by \eqref{3.37}.
It is proved in statement (ii) that
the sequences $a^{\pm}=\{a^{\pm}_n\}_1^\infty$ satisfy conditions \eqref{3.30AB}
and  $g\in W^{1,2}(\mathbb R_+\setminus X)$.
Therefore  $f$ admits the unique decomposition
 $$
f=g + f_0, \qquad \text{where}\qquad  f_0:= f-g \in W^{1,2}_0(\mathbb R_+\setminus X).
$$
In the case $d^*(X) = \infty$ this decomposition fails since $g\not \in W^{1,2}_0(\mathbb R_+\setminus X)$, in general.
   \end{remark}
%%%%%%%%%%%%%%%%%%%%%%%%%%%%%%%%%%%%%%%%%%%%%%%%%%%%%%%%%%%%%%%%%%%%%%%%%%%%%%%%
\begin{remark} Assume that $d^*(X)<\infty$. Then using the continuity  and surjectivity  of the trace  mapping  $\tau:=(\pi_-,\pi_+)$ furnished in Proposition \ref{prop3.6},  and  following  the approach from \cite{Pos01} one can  obtain a description of the set of self-adjoint extensions of the operator $D_{X}$  by means  of the  Krein type formula for resolvents.
It is a way alternative  to that discussed  in the next  section.
\end{remark}

\subsection{Boundary triplets for Dirac operators with point interactions}
Here we  construct a boundary triplet  for the operator $D_X^* :=\bigoplus_{n=1}^\infty D_n^*.$
First we show that without additional restriction on $X$ the direct sum $\bigoplus_{n=1}^\infty
\widetilde{\Pi}^{(n)}$  of boundary triplets $\widetilde{\Pi}^{(n)}$  given by
\eqref{triple2}  forms only \emph{a $B$-generalized boundary triplet}  for $D_X^*.$
%%%%%%%%%%%%%%%%%%%%%%%
%
  \begin{proposition}\label{prop3.5direcsum}
Let $X$ be as above, $d^*(X)< \infty$,  and let
$\widetilde{\Pi}^{(n)}=\big\{\C^{2},
\widetilde{\Gamma}_{0}^{(n)},\widetilde{\Gamma}_{1}^{(n)}\big\}$
be the boundary triplet for the operator $D_n^*$ defined in Lemma
\ref{2.3}. Let also $A := D_X:=\bigoplus_{1}^\infty D_n,$   $\cH =
l^{2}(\N)\otimes\C^{2},$ and $\widetilde{\Gamma}_{j}
=\bigoplus_{n=1}^\infty \widetilde{\Gamma}^{(n)}_{j},$
$j\in\{0,1\},$ i.e.
   \begin{equation}\label{triple2NEW}
\widetilde{\Gamma}_{0} \begin{pmatrix}f_{1}\\f_{2}\end{pmatrix}=
\left\{\left(\begin{array}{c}
                 f_{1}(x_{n-1}+)\\
                 i\,c\, f_{2}(x_{n}-)
                       \end{array}\right)\right\}_{n\in \N}\,\qquad\text{and}\qquad
\widetilde{\Gamma}_{1}\begin{pmatrix}f_{1}\\f_{2}\end{pmatrix}
=\left\{\left(\begin{array}{c}
                                      i\,c\,f_{2}(x_{n-1}+)\\
                                       f_{1}(x_{n}-)
                                       \end{array}\right)\right\}_{n\in \N}\,,\qquad f= \binom{f_1}{f_2}\in \dom(D_X^*).
\end{equation}
%
%
%%for any $f= \binom{f_1}{f_2}\in \dom(D_X^*).$
Then:  %%%% $f_k := \oplus_{n=1}^\infty f_{kn},\ k\in \{1,2\}.$

\item $(i)$ The mappings $\widetilde{\Gamma}_0$ and $\widetilde{\Gamma}_1$
are densely defined and closed. Moreover, the operator  $A_* := A^*\upharpoonright\dom(A_*)$ satisfies
      \begin{eqnarray}\label{3.42A}
\dom(A_*) := \dom(\widetilde{\Gamma}_0)\cap\dom(\wt{\Gamma}_1)=\dom(\wt{\Gamma}_0)=\dom(\wt{\Gamma}_1)\qquad \qquad \qquad \nonumber  \\
=\left\{f = \binom{f_1}{f_2}\in W^{1,2}({\R_+}\setminus X)\otimes
\mathbb
C^2:\{f_{j}(x_{n-1}+)\}^{\infty}_1,\{f_{j}(x_n-)\}^{\infty}_1\in
l^2(\mathbb N),\  j\in \{1,2\}\right\}.
  \end{eqnarray}

\item $(ii)$ The direct sum  $\widetilde{\Pi} := \bigoplus_{n=1}^\infty
\widetilde{\Pi}^{(n)} =\big\{\cH,\widetilde{\Gamma}_{0},
\widetilde{\Gamma}_{1}\big\}$
forms a $B$-generalized boundary triplet  for $D^*_X$ in the sense
of Definition \ref{def_II.2.1_generalized_bt}. In particular,
$\ran(\widetilde{\Gamma}_{0}) = \cH.$

\item $(iii)$ The transposed triplet $\wt{\Pi}^\top =
\{\cH,\wt{\Gamma}^\top_0,\wt{\Gamma}^\top_1\}:=\{\cH,\wt{\Gamma}_1,-\wt{\Gamma}_0\}$
also forms a $B$-generalized boundary triplet for $D_X^*$. In
particular, $\ran(\widetilde{\Gamma}_{1}) = \cH.$

\item $(iv)$ The triplet  $\widetilde{\Pi}$ is  an (ordinary) boundary
triplet for the operator $D_X^* = \bigoplus_{n=1}^\infty D_{n}^{*}$
if and only if   %%%d_{*}(X)>0\quad \textrm{ (the first triplet)}\,,\quad
$
0 < d_{*}(X)<d^{*}(X)<\infty. \quad %%%%\textrm{(the  second triplet)}\,.
$
%
%
%$\left\{\begin{matrix}d_{*}>0\, \textrm{ for the first triplet}\\ 0<d_{*}<d^{*}<\infty \,\textrm{for the second triplet}\end{matrix}\right.$
\end{proposition}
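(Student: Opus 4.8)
The plan is to identify $\widetilde\Pi=\bigoplus_{n=1}^\infty\widetilde\Pi^{(n)}$ as a direct sum of the boundary triplets of Lemma~\ref{2.3} and to run the apparatus of Subsection~\ref{subsec_III.1_dirsum}. By \eqref{spetcom} one has $\sigma(D_{n,0})\subset\R\setminus(-c^2/2,c^2/2)$ for every $n$, hence $(-c^2/2,c^2/2)\subset\rho(D_{n,0})\subset\widehat\rho(D_n)$, so $\{D_n\}$ satisfies \eqref{III.2.2_01} with $a=0$, $\varepsilon=c^2/2$, and Theorems~\ref{th_criterion(bt)} and~\ref{th_III.2.2_01} are available at $a=0$. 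Item~(i) is essentially bookkeeping: by \eqref{III.1_02} and the explicit form of $\widetilde\Gamma^{(n)}_j$ in \eqref{triple2}, $\dom(\widetilde\Gamma_0)$ and $\dom(\widetilde\Gamma_1)$ are the set appearing in \eqref{3.42A}, and since $d^*(X)<\infty$ the two one-sided traces of a component differ by an $l^2$-sequence (estimate \eqref{3.30AA} with $d_n\le d^*(X)$), so $\dom(\widetilde\Gamma_0)=\dom(\widetilde\Gamma_1)=\dom(A_*)$; closedness and density of $\widetilde\Gamma_j$ follow from the embedding theorem exactly as for $\pi_\pm$ in Proposition~\ref{prop3.6}(iii). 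Items~(ii)--(iv) then require only the two matrices $\widetilde M_n(0)$ and $\widetilde M_n'(0)$.

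To compute them I would put $k(0)=\I c/2$ (the branch being the one for which $\widetilde k$ maps $\C_+$ into $\C_+$), whence $k_1(0)=\I$ and $c\,k_1(0)=\I c$, and substitute these together with $\cos(\I d_nc/2)=\cosh(d_nc/2)$, $\sin(\I d_nc/2)=\I\sinh(d_nc/2)$ into \eqref{IV.1.1_09.2}. This gives
\[
\widetilde M_n(0)=\begin{pmatrix}-c\tanh(d_nc/2) & \operatorname{sech}(d_nc/2)\\ \operatorname{sech}(d_nc/2) & c^{-1}\tanh(d_nc/2)\end{pmatrix},\qquad \det\widetilde M_n(0)=-\bigl(\tanh^2+\operatorname{sech}^2\bigr)(d_nc/2)=-1 .
\]
For the derivative, $k'(z)=z/\bigl(c^2 k(z)\bigr)$ gives $k'(0)=0$, so every term in $\tfrac{d}{dz}\widetilde M_n(z)\big|_{z=0}$ produced by the $k$-dependence drops out; using $k_1'(0)=-2\I/c^2$ (from $k_1(z)^2=(z-c^2/2)/(z+c^2/2)$) one is left with
\[
\widetilde M_n'(0)=\operatorname{diag}\!\Bigl(\tfrac{2}{c}\tanh(d_nc/2),\ \tfrac{2}{c^3}\tanh(d_nc/2)\Bigr),\qquad n\in\N .
\]

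From here everything is routine. Since $0<\tanh(d_nc/2)<1$ and $0<\operatorname{sech}(d_nc/2)\le1$ for all $n$, the norms $\|\widetilde M_n(0)\|$, $\|\widetilde M_n(0)^{-1}\|$ (invoke $\det\widetilde M_n(0)=-1$) and $\|\widetilde M_n'(0)\|$ are bounded, uniformly in $n$, by a constant depending only on $c$, with no restriction on $X$; hence \eqref{III.2.2_02} holds and Theorem~\ref{th_III.2.2_01}(i) yields (ii), that $\widetilde\Pi$ is a $B$-generalized boundary triplet for $D_X^*$. Item~(iii) is the corresponding assertion for the transposed triplet $\widetilde\Pi^\top=\{\cH,\widetilde\Gamma_1,-\widetilde\Gamma_0\}$ and follows in the same manner: Green's identity transfers, each $D_n^*\!\upharpoonright\!\ker\widetilde\Gamma_1^{(n)}$ is self-adjoint (so is their direct sum $D_X^*\!\upharpoonright\!\ker\widetilde\Gamma_1$), and $\widetilde\Gamma_1$ is onto $\cH$ by the trace construction of Proposition~\ref{prop3.6}(i) (using $d^*(X)<\infty$). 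Finally, for (iv), $\|(\widetilde M_n'(0))^{-1}\|=\tfrac12\max\{c,c^3\}\,(\tanh(d_nc/2))^{-1}$, and since $\tanh$ is continuous, strictly increasing on $\R_+$ and vanishes only at the origin, $\sup_n\|(\widetilde M_n'(0))^{-1}\|<\infty$ if and only if $\inf_n d_n=d_*(X)>0$. By Theorem~\ref{th_III.2.2_01}(ii) this shows that $\widetilde\Pi$ is an ordinary boundary triplet for $D_X^*$ precisely when $d_*(X)>0$, which under the standing hypothesis $d^*(X)<\infty$ of the Proposition is exactly the asserted condition $0<d_*(X)<d^*(X)<\infty$.

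The one step that genuinely requires care is the evaluation of $\widetilde M_n(0)$ and, above all, of $\widetilde M_n'(0)$: it is short solely because $k'(0)=0$, which is what renders $\widetilde M_n'(0)$ diagonal, and one must keep the right branches of $k$ and $k_1$ at the origin. The whole argument can alternatively be run at $z=\I$ via Theorem~\ref{th_criterion(bt)} --- an analogous (somewhat longer) computation yields $\im\widetilde M_n(\I)=\kappa^{-1}\tanh(d_n\kappa)\operatorname{diag}(1,c^{-2})$ with $\kappa=c^{-1}\sqrt{1+c^4/4}$, giving the same dichotomy --- but $z=0$ is the cleanest choice because there $k_1(0)=\I$.
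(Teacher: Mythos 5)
Your argument is correct, but for items (ii)--(iv) it follows a genuinely different route from the paper's primary proof. The paper establishes (ii)--(iv) by hand: it verifies $(B1)$--$(B3)$ and the surjectivity of $\wt{\Gamma}=(\wt{\Gamma}_0,\wt{\Gamma}_1)$ directly from the trace results of Proposition~\ref{prop3.6}, and only sketches the Weyl-function route in the remark following Theorem~\ref{th_bt_2} --- and there it is run at the point $a=c^2/2$, where \eqref{3.33B} gives matrices polynomial in $d_n$. Your choice $a=0$ is a clean and legitimate variant of that alternative: the common-gap condition \eqref{III.2.2_01} holds at $a=0$ with $\varepsilon=c^2/2$ for \emph{every} $X$ (each $D_{n,0}$ has $(-c^2/2,c^2/2)$ in its resolvent set by \eqref{spetcom}), whereas at $a=c^2/2$ that condition, as well as the uniform bound on the entry $d_n^3/3$ of $\wt M_n'(c^2/2)$, already requires $d^*(X)<\infty$. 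I checked your key computation independently: with $k(0)=\I c/2$, $k_1(0)=\I$ one indeed gets $\wt M_n'(0)=\wt\gamma_n(0)^*\wt\gamma_n(0)=\diag\bigl(2c^{-1}\tanh(cd_n/2),\,2c^{-3}\tanh(cd_n/2)\bigr)$, the off-diagonal entry vanishing because the integrand $c^{-1}\sinh\bigl(c(2x-x_{n-1}-x_n)/2\bigr)$ is odd about the midpoint of $[x_{n-1},x_n]$; and $\det\wt M_n(0)=-1$ gives the uniform bound on $\wt M_n(0)^{-1}$ needed for (iii) if one prefers to treat the transposed triplet by the same criterion rather than by your direct verification. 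Two bookkeeping remarks: the hypothesis $d^*(X)<\infty$ then enters your proof only through part (i) (via \eqref{3.30AA}), through the surjectivity of $\wt\Gamma_1$ in (iii), and in reducing the condition of (iv) from $d_*(X)>0$ to the stated $0<d_*(X)<d^*(X)<\infty$; and for the necessity half of (iv) you are implicitly using the ``only if'' direction of Theorem~\ref{th_III.2.2_01}(ii), which is fine since that theorem is stated as an equivalence. What each approach buys: the paper's direct proof exhibits explicitly which trace sequences are attainable (information reused later), while your computation isolates exactly where each hypothesis on $X$ is needed and makes the dichotomy ordinary/$B$-generalized visible in a single scalar quantity, $\tanh(cd_n/2)$.
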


\begin{proof}
(i)  By Definition \ref{def_III.1_01} and formula  \eqref{triple2}, the domain  of $\wt{\Gamma}_0$
is given by
    \begin{equation}\label{3.38}
\dom(\wt{\Gamma}_0)  = \left\{f = \binom{f_1}{f_2}\in
W^{1,2}({\R_+}\setminus X)\otimes \mathbb
C^2:\{f_{1}(x_{n-1}+)\}^{\infty}_1,\
 \{f_{2}(x_n-)\}^{\infty}_1\in l^2(\mathbb N)\right\}.
    \end{equation}
Since $d^*(X)<\infty$, it follows from \eqref{3.30AA} that
     \begin{equation}
\sum_{n\in\mathbb N}|f_{1}(x_n-)-f_{1}(x_{n-1}+)|^2\le d^*(X)
\sum_{n\in\mathbb N}d_n^{-1}|f_{1}(x_n-)-f_{1}(x_{n-1}+)|^2  \le d^*(X) \|f\|^2_{W^{1,2}({\R}_+\setminus X)} < \infty.
     \end{equation}
Combining this inequality with \eqref{3.38}, yields
$\{f_{1}(x_n-)\}^{\infty}_1\in l^2({\mathbb N})$. The inclusion
$\{f_{2}(x_{n-1}+)\}^{\infty}_1\in l^2({\mathbb N})$ is proved similarly.
Hence   $\dom(\wt{\Gamma}_0) = \dom(\widetilde{\Gamma}_0)\cap\dom(\wt{\Gamma}_1).$
The equality  $\dom(\wt{\Gamma}_1) = \dom(\widetilde{\Gamma}_0)\cap\dom(\wt{\Gamma}_1)$
is proved in much the same way.

(ii)  Due to (i)
$\ker\widetilde\Gamma_0\subset\dom(\widetilde\Gamma_1)=\dom(A_*)$.
Hence
\begin{equation}
A_{*0}:= A_*\lceil\ker\widetilde\Gamma_0=
A^*\lceil\ker\widetilde\Gamma_0 = \bigoplus^{\infty}_{n=1}D_{n,0}=A_0
=A_0^*,
\end{equation}
i.e. $A_{*0}=A_0$ is selfadjoint. The Green's identity
\eqref{III.1_06} is obviously satisfied for $f,g\in\dom(A_*)$ (see \eqref{III.1_06}). It
remains to show that $\ran(\Gamma_0)=\cH = l^2({\mathbb
N})\otimes\C^{2}$. This fact is immediate from \eqref{3.38} and
Proposition \ref{prop3.6}(iii).

%%Note  also that an alternative proof of this statement can easily
%%be extracted from Proposition  \ref{prop2.10}. Indeed, by \cite[Theorem 3.2]{KM},
%%$\widetilde{\Pi} := \bigoplus_{n=1}^\infty\widetilde{\Pi}^{(n)}$ is always a boundary relation in the sense of \cite{DHMS06}, i.e.
%% a generalized boundary triplet for $A^*$ (not necessarily of bounded  type). The corresponding Weyl function $\wt{M}(\cdot)$ is
%%$\wt M(\cdot)= \bigoplus_{n=1}^\infty \wt M_n(\cdot)$, where $\wt
%%M_n(\cdot)$ is given by \eqref{IV.1.1_09.2}.  It is easily seen
%%that $\wt M(i)$ is bounded, i.e. $\wt M(\cdot)\in R[\cH]$ whenever
%%$d^*(X)<\infty$. By Proposition \ref{prop2.10}(iii),
%%$\widetilde{\Pi}$ is a $B$-generalized boundary triplet for $A^*$.

(iii) The proof is similar to the proof of (ii).

(iv)  Let conditions \eqref{3.34} be satisfied. Then, by
Proposition \ref{prop3.6}(iii), $\dom(\pi_{\pm}) =
W^{1,2}({\R_+}\setminus X).$  Combining this fact with
\eqref{3.42A} we get $\dom(\wt{\Gamma}_j) =
W^{1,2}({\R_+}\setminus X)\otimes \mathbb C^2,\  j\in\{0,1\}$. Hence
Green's formula \eqref{II.2.1_03} holds for all $f,g\in
\dom(A^*).$

Next let us prove the surjectivity of the mapping
$\wt{\Gamma}=\{\wt{\Gamma}_0, \wt{\Gamma}_1\}.$ Let
$a_k=\{a_{kn}\}^{\infty}_{n=1}\in l^2(\mathbb N)$, $k\in
\{1,...,4\}.$   By Proposition \ref{prop3.6}(iii), there exist
$f_1, f_2\in W^{1,2}({\R_+}\setminus X)$ such that
    \begin{eqnarray*}
\pi_+(f_1) = \{f_{1}(x_{n-1}+)\}^{\infty}_{n=1} =
\{a_{1n}\}^{\infty}_{n=1}, \qquad \pi_-(f_1) =
\{f_{1}(x_{n}-)\}^{\infty}_{n=1} = \{a_{4n}\}^{\infty}_{n=1}, \\
\pi_+(f_2) = \{icf_{2}(x_{n-1}+)\}^{\infty}_{n=1} =
\{a_{3n}\}^{\infty}_{n=1}, \qquad \pi_-(f_2) =
\{icf_{2}(x_{n}-)\}^{\infty}_{n=1} = \{a_{2n}\}^{\infty}_{n=1}.
  \end{eqnarray*}
Combining these relations with \eqref{triple2NEW}, yields the
surjectivity of the  mapping $\wt{\Gamma}$.

Conversely let  $\dom(\wt{\Gamma}_j) = W^{1,2}({\mathbb R}_+\setminus
X)\otimes \C^2,\ j\in \{0,1\}.$
%%%$\ran(\wt{\Gamma})=l^2(\mathbb N)\otimes{\mathbb C}^4=\cH\oplus\cH$.
Then, by \eqref{triple2NEW}, $\dom(\pi_\pm) = W^{1,2}({\mathbb
R}_+\setminus X).$  Now Proposition \ref{prop3.6}(iii) yields the
condition $d_*(X)>0$.
    \end{proof}
   \begin{remark}
\item $(i)$  We emphasize the  difference between the trace mappings $\pi_{\pm}: W^{1,2}(\R_+\setminus X) \to l^2(\N)$ and $\pi: W^{1,2}({\mathbb R}_+)\to l^2(\N)$ (see \eqref{3.29}).
According to Proposition \ref{prop3.6}(iii) the
first one %%%trace mappings $\pi_{\pm}: W^{1,2}(\R_+\setminus X) \to l^2(\N)$
is surjective if and only if $d^{*}(X) < \infty.$   At the same
time, by Proposition \ref{prop3.5trace}, the second one is surjective
if and only if $0< d_*(X) < d^{*}(X) < \infty.$

\item $(ii)$  We emphasize that the  mapping $\wt{\Gamma}_j,\ j\in\{0,1\},$
is bounded, $\wt{\Gamma}_j \in[\mathfrak H_+,\cH]$, if and only if
$d_*(X)>0$. Indeed, it follows from \eqref{1.2} and  \eqref{triple2} that the estimate
    \begin{equation*}
\|\wt{\Gamma}_{0}^{(n)}f_n\|_{\mathbb C^2}\le\wt{C}_{0n}\left(\|D_n^*
f_n\|^2_{L^2[x_{n-1}, x_n]\otimes\mathbb C^2} + \|f_n\|^2_{L^2[x_{n-1},
x_n]\otimes\mathbb C^2}\right), \quad f_n\in \dom(D_n^*), \quad n\in\mathbb N,
     \end{equation*}
yields (in fact, is equivalent to) the estimate
    \begin{equation*}
|f_{1n}(x_{n-1}+)|^2 + |f_{2n}(x_n-)|^2 \le k_n
\left(\|f_{1n}\|^2_{W^{1,2}[x_{n-1}, x_n]} +
\|f_{2n}\|^2_{W^{1,2}[x_{n-1}, x_n]}\right), \quad f_{1n},
f_{2n}\in W^{1,2}[x_{n-1}, x_n].
\end{equation*}
Thus,  the mapping $\wt{\Gamma}_0 = \bigoplus_{n\in \N} \wt{\Gamma}_{0}^{(n)}$ is bounded, $\wt{\Gamma}_0 \in[\mathfrak H_+,\cH],$ if and only if $\sup_nk_n<\infty$.  In turn, according to the Sobolev embedding theorem, the latter is amount to saying that  $d_*(X)>0$.

This fact is similar to that for Schr\"odinger operator (cf.
\cite[Corollary 4.9]{KM}). It also shows that the condition
$\sup_n \|\wt{\Gamma}_{0}^{(n)}\| < \infty$
 is  only sufficient for a triplet $\wt \Pi= \bigoplus_{n\in
\N}\wt\Pi_n$ to form a $B$-generalized boundary triplet (cf. \cite[Proposition 3.6]{KM}).
         \end{remark}
%%%%%%%%%%%%%%%%%%%%%%%

To obtain an appropriate  boundary triplet for the operator $D_X^* =
\bigoplus_{n=1}^\infty D_{n}^{*}$ in the case $d_{*}(X)=0$ we
regularize the boundary triplets $\widetilde \Pi_n$ for $D_n^*,\
n\in \N,$ given by  \eqref{triple2}. To this end we apply
the regularization procedure proposed in Corollary
\ref{cor_III.2.2_02} (cf. formula \eqref{III.2.2_08}).
%%%%%%%%%%%%%%%%%%%%%%%%%%%%%%%%%%
%
\begin{theorem}\label{th_bt_2} Let $X=\{x_{n}\}_{n=1}^\infty$ be
as above and  $d^*(X)<+\infty$. Define the mappings
$$\Gamma_j^{(n)}: W^{1,2}[x_{n-1},x_n]\otimes\C^2\to\C^2\,, \quad n\in \N\,,\quad j\in \{0,1\}\,,
$$
by setting
\begin{equation}\label{IV.1.1_12}
\Gamma_0^{(n)}f:=\left(\begin{array}{c}
                 \gd_n^{1/2}  f_{1}(x_{n-1}+)\\
                 i\,c\,\gd_n^{3/2}\sqrt{1+\frac{1}{c^{2}\gd_{n}^{2}}}\, f_{2}(x_{n}-)
                       \end{array}\right),
\end{equation}
 \begin{equation}\label{IV.1.1_12.1}
                       \Gamma_1^{(n)}f:=\left(\begin{array}{c}
                                                                           i\,c\,\gd_n^{-1/2}\,(f_{2}(x_{n-1}+)-f_{2}(x_{n}-))\\
                                                                           \gd_n^{-3/2}\left(1+\frac{1}{c^{2}\,\gd_{n}^{2}}\right)^{-1/2}
                                                                           (f_{1}(x_{n}-)-f_{1}(x_{n-1}+)-i\,c\,\gd_n\,f_2(x_{n}-))
                                                                 \end{array}\right).
\end{equation}
%%%%%%%%%%%%%%%%%%%%%%
Then:
%%%\begin{itemize}
\item $(i)$ For any $n\in \N$,  $\Pi^{(n)}=\{\C^2,
\Gamma_0^{(n)},\Gamma_1^{(n)}\}$ is a boundary triplet for
$D_n^*$.

\item $(ii)$ The direct sum $\Pi:=\bigoplus_{n=1}^\infty\Pi^{(n)} =
\{\cH, \Gamma_0,\Gamma_1\}$ with $\cH = l^{2}(\N,\C^{2})$ and
${\Gamma}_{j} =\bigoplus_{n=1}^\infty {\Gamma}^{(n)}_{j},$
$j\in\{0,1\},$  is a boundary triplet for the operator $D_X^* =
\bigoplus_{n=1}^\infty D_{n}^{*}$.
%\item[(iii)] $A_{0}:=A^{*}\upharpoonright\ker\Gamma_{0}=
%\bigoplus_{n\in\Z}S_{n,0}$, where $S_{n,0}=D\upharpoonright\ker\Gamma_{0}^{(n)}$.
%%%\end{itemize}
       \end{theorem}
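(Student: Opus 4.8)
The plan is to apply the regularization machinery of Corollary~\ref{cor_III.2.2_02}, with the base triplets $\widetilde\Pi^{(n)}$ from Lemma~\ref{2.3}, and a suitable choice of the invertible operators $R_n \in [\C^2]$ and of the reference point $a$. First I would fix the gap: since $\sigma_{\mathrm{ess}}$ of any Dirac realization lies outside $(-c^2/2,c^2/2)$, the point $a=0$ is a regular real point of every $D_n$, and in fact $(-c^2/2,c^2/2)\subset\widehat\rho(D_n)$ uniformly in $n$, so condition \eqref{III.2.2_01} holds with, say, $\varepsilon = c^2/4$. Moreover $0\in\rho(D_{n,0})$, as one sees directly from \eqref{spetcom} (the eigenvalues of $D_{n,0}$ all have modulus $\ge c^2/2$). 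Thus the hypotheses of Corollary~\ref{cor_III.2.2_02} on $\{S_n\}=\{D_n\}$ and on the triplets $\widetilde\Pi^{(n)}$ are in force once I verify \eqref{III.2.2_11}.

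The second step is to compute $\widetilde M_n(0)$ and $\widetilde M_n'(0)$ from the explicit formula \eqref{IV.1.1_09.2}. At $z=0$ one has $k(0) = c^{-1}\sqrt{-c^4/4} = \I c/2$ (on the appropriate branch), so $\cos(d_n k(0)) = \cosh(c d_n/2)$, $\sin(d_n k(0)) = \I\sinh(c d_n/2)$, and $k_1(0) = \sqrt{(-c^2/2)/(c^2/2)} = \I$; hence $c\,k_1(0) = \I c$. Substituting, $\widetilde M_n(0)$ is a real matrix whose entries are elementary functions of $c d_n$, and its off-diagonal entries equal $1/\cosh(c d_n/2)$, which is bounded in $n$ but tends to $1$ only when $d_n\to 0$. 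Differentiating \eqref{IV.1.1_09.2} at $z=0$ (using $k'(z) = z/(c^2 k(z))$, so $k'(0)=0$, and $k_1'(z) = (c^2/2)\,\big((z+c^2/2)^2 k_1(z)\big)^{-1}$, finite at $z=0$) produces $\widetilde M_n'(0)$, again with entries that are explicit functions of $d_n$. The key qualitative point, exactly as in the Schrödinger case \cite{KM}, is that as $d_n\to 0$ the matrix $\widetilde M_n'(0)$ degenerates: some entries blow up like $d_n^{-3}$ and some like $d_n^{-1}$, while $\widetilde M_n'(0)$ is not uniformly bounded and $(\widetilde M_n'(0))^{-1}$ is not uniformly bounded either. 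This is precisely the obstruction that forces the non-trivial regularizer $R_n$, and is the reason the naive direct sum $\widetilde\Pi = \oplus\widetilde\Pi^{(n)}$ of Proposition~\ref{prop3.5direcsum} fails to be an ordinary boundary triplet when $d_*(X)=0$.

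The third and central step is the algebraic one: choose $R_n = \mathrm{diag}\big(\delta_n^{-1/2},\ (\I c\,\delta_n^{3/2}\sqrt{1+c^{-2}\delta_n^{-2}})^{-1}\big)$ (up to conjugating back), so that $R_n\widetilde\Gamma_0^{(n)}$ reproduces \eqref{IV.1.1_12}, and then verify that the transformed Weyl function $R_n^{-1}\widetilde M_n'(0)(R_n^{-1})^*$ — which by the scaling is a fixed $2\times 2$ matrix with a well-defined, invertible limit as $d_n\to 0$ — satisfies $\sup_n\|R_n^{-1}\widetilde M_n'(0)(R_n^{-1})^*\| < \infty$ and $\sup_n\|R_n^*(\widetilde M_n'(0))^{-1}R_n\| < \infty$, using $d^*(X)<\infty$. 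Granting \eqref{III.2.2_11}, Corollary~\ref{cor_III.2.2_02} immediately gives that $\Gamma_j^{(n)}$ defined by $\Gamma_0^{(n)}=R_n\widetilde\Gamma_0^{(n)}$, $\Gamma_1^{(n)}=(R_n^{-1})^*(\widetilde\Gamma_1^{(n)} - \widetilde M_n(0)\widetilde\Gamma_0^{(n)})$ form boundary triplets $\Pi^{(n)}$ for $D_n^*$ (this is part (i)) whose direct sum $\Pi=\oplus_n\Pi^{(n)}$ is a boundary triplet for $D_X^*$ (part (ii)). It then remains only to check that these abstract $\Gamma_j^{(n)}$ coincide with the concrete formulas \eqref{IV.1.1_12}, \eqref{IV.1.1_12.1}: $\Gamma_0^{(n)}$ is immediate from the choice of $R_n$, while for $\Gamma_1^{(n)}$ one substitutes $\widetilde\Gamma_0^{(n)}$, $\widetilde\Gamma_1^{(n)}$ from \eqref{triple2} and $\widetilde M_n(0)$ from the computation above, and simplifies.

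The main obstacle I expect is the bookkeeping in step two and the verification in step three that the scaled matrix $R_n^{-1}\widetilde M_n'(0)(R_n^{-1})^*$ has a nondegenerate $d_n\to 0$ limit with both it and its inverse uniformly bounded: one has to be careful that the factor $\nu(d_n) = (1+(c^2 d_n^2)^{-1})^{-1/2}$ built into $R_n$ is exactly what cancels the leading singular behaviour of the relevant entries of $\widetilde M_n'(0)$, and that the remaining, subleading terms (controlled by $d^*(X)<\infty$) do not destroy invertibility. This is the place where the Dirac case genuinely differs from the Schrödinger case and where the ``additional difficulty of an algebraic character'' mentioned in the introduction must be resolved.
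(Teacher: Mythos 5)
Your overall strategy --- regularize the triplets $\widetilde\Pi^{(n)}$ of Lemma \ref{2.3} via Corollary \ref{cor_III.2.2_02} with a diagonal $R_n$ chosen so that $R_n\widetilde\Gamma_0^{(n)}$ equals \eqref{IV.1.1_12} --- is the paper's strategy. But there is a genuine gap in your choice of reference point. You anchor the regularization at $a=0$, whereas the theorem's formula \eqref{IV.1.1_12.1} is exactly $(R_n^{-1})^*\bigl(\widetilde\Gamma_1^{(n)}-Q_n\widetilde\Gamma_0^{(n)}\bigr)$ with $Q_n=\left(\begin{smallmatrix}0&1\\1&d_n\end{smallmatrix}\right)$, and this $Q_n$ is $\widetilde M_n(c^2/2)$, not $\widetilde M_n(0)$. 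At $z=0$ one has $k(0)=\I c/2$, $ck_1(0)=\I c$, so
\begin{equation*}
\widetilde M_n(0)=\begin{pmatrix}-c\tanh(cd_n/2) & 1/\cosh(cd_n/2)\\ 1/\cosh(cd_n/2) & c^{-1}\tanh(cd_n/2)\end{pmatrix}\neq \begin{pmatrix}0&1\\1&d_n\end{pmatrix},
\end{equation*}
so the map $(R_n^{-1})^*\bigl(\widetilde\Gamma_1^{(n)}-\widetilde M_n(0)\widetilde\Gamma_0^{(n)}\bigr)$ produced by Corollary \ref{cor_III.2.2_02} does \emph{not} simplify to \eqref{IV.1.1_12.1}: its first component contains the extra terms $c\tanh(cd_n/2)f_1(x_{n-1}+)$ and a factor $1/\cosh(cd_n/2)$ in front of $\I c f_2(x_n-)$. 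Your final step ("it then remains only to check that these coincide") therefore fails. The paper avoids this by taking $a=c^2/2$ (the edge of the gap, which lies in $\rho(D_{n,0})$ uniformly because $d^*(X)<\infty$ keeps the eigenvalues \eqref{spetcom} away from $c^2/2$), where $\widetilde M_n(c^2/2)$ and $\widetilde M_n'(c^2/2)$ take the clean form \eqref{3.33B} and the coincidence with \eqref{IV.1.1_12}, \eqref{IV.1.1_12.1} is exact. Your route could be salvaged by additionally showing that the $a=0$ triplet and the stated one differ by $\Gamma_1\mapsto\Gamma_1-C\Gamma_0$ with $C=\oplus_n R_n^{-1}\bigl(\widetilde M_n(0)-\widetilde M_n(c^2/2)\bigr)R_n^{-1}$ uniformly bounded and self-adjoint, but you do not make (or even flag) that argument.

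Two further points. First, your description of the degeneration mechanism is backwards: the entries of $\widetilde M_n'(a)$ do not "blow up like $d_n^{-3}$ and $d_n^{-1}$"; they \emph{vanish} as $d_n\to 0$ (cf.\ \eqref{3.33B}, where all entries are $O(d_n)$), so $\sup_n\|\widetilde M_n'(a)\|<\infty$ always holds when $d^*(X)<\infty$ --- which is precisely why the naive direct sum is still a $B$-generalized boundary triplet (Proposition \ref{prop3.5direcsum}(ii)). The obstruction when $d_*(X)=0$ is solely that $(\widetilde M_n'(a))^{-1}$ is not uniformly bounded. Second, the $R_n$ you write down, $\mathrm{diag}\bigl(\delta_n^{-1/2},(\I c\,\delta_n^{3/2}\sqrt{1+c^{-2}\delta_n^{-2}})^{-1}\bigr)$, is essentially the \emph{inverse} of the correct regularizer and carries a spurious factor $\I c$ (which is already present in the second component of $\widetilde\Gamma_0^{(n)}$); the correct choice is $R_n=\mathrm{diag}\bigl(\delta_n^{1/2},\ \delta_n^{3/2}\sqrt{1+c^{-2}\delta_n^{-2}}\bigr)$ as in \eqref{IV.1.1_19}. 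With $a=c^2/2$ and this $R_n$ the verification of \eqref{III.2.2_11} reduces to the explicit matrices \eqref{IV.1.1_08B}, \eqref{IV.1.1_08BB}, both uniformly bounded, and the proof closes.
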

%%%%%%%%%%%%%%%%%%%%%%%%%%%%%%%
      \begin{proof}
According to Lemma \ref{2.3}(ii) %%(cf. formula \eqref{spetcom}),
 each operator $D_{n,0} = D_{n,0}^*$ has a gap
$(-\alpha_{n},\alpha_{n})\supset(-c^2/2,c^2/2)$. Hence the
symmetric operator  $D_X = \bigoplus_{n=1}^\infty D_{n}$ has a gap
$(-\alpha,\alpha) :=  \cap_n^\infty (-\alpha_{n},\alpha_{n})$.
Since $d^*(X)<\infty,$  it follows from \eqref{spetcom} that
$\alpha >c^2/2$, i.e. $(-\alpha,\alpha)\supset(-c^2/2, c^2/2)$.
Moreover, by \eqref{IV.1.1_09.2},
 \begin{equation}\label{3.33B}
 \wt M_{n}\left(\frac{c^2}{2}\right)=\left(\begin{array}{cc}
                           0 & 1\\
                           1 & d_{n}
                           \end{array}\right)\qquad \text{and} \qquad \wt M'_{n}\left(\frac{c^2}{2}\right)=\left(\begin{array}{cc}
                           d_n & d^2_n/2\\
                           d^2_n/2 & d_n/c^2 + d_n^3/3
                           \end{array}\right)\,.
 \end{equation}

Since $c^2/2 \in (0,\alpha),$  we can apply Corollary \ref{cor_III.2.2_02}
 to regularize the sequence of boundary triplets
$\wt{\Pi}^{(n)}=\big\{\C^{2},
\wt{\Gamma}_{0}^{(n)},\wt{\Gamma}_{1}^{(n)}\big\}$  for $D_n^*, \ n\in \N,$ defined by
\eqref{triple2}. Starting with  \eqref{3.33B} we define the
matrices $R_n =R_n^*$ and $Q_n = Q_n^*$ by setting
 \begin{equation}\label{IV.1.1_19}
 R_n :=\left(\begin{array}{cc}
                           \gd_n^{1/2} & 0\\
                   0 & \gd_n^{3/2}\sqrt{1+\frac{1}{c^{2}\,\gd_{n}^{2}}}
                           \end{array}\right)\,,
                           \qquad
Q_n :=\wt M_n\left(\frac{c^{2}}{2}\right)=\left(\begin{array}{cc}
                           0 & 1\\
                           1 & \gd_n
                           \end{array}\right), \qquad n \in \N.
\end{equation}
Next we define  a new sequence of boundary triplets $\Pi_n=\{\mathbb
C^{2},\Gamma^{(n)}_0,\Gamma^{(n)}_1\}$ by formulas
\eqref{III.2.2_08},
%%%%%%%%%%%%%%%%%%%%%%%%%%%%%%%%%
%
%
   \begin{equation}\label{IV.1.1_09AAA}
\Gamma^{(n)}_0 := R_n{\widetilde\Gamma}_0^{(n)},\quad
 \Gamma^{(n)}_1 :=
R^{-1}_n({\widetilde\Gamma}_1^{(n)}-Q_n{\widetilde\Gamma}_0^{(n)}),
\qquad n \in \N.
       \end{equation}
%%%%%%%%%%%%%%%%%%%%%%%%%%%%%
%
Clearly, the corresponding Weyl function is
      \begin{equation}\label{IV.1.1_09A}
M_n(z) = R^{-1}_n(\widetilde{M}_n(z)-Q_n)R^{-1}_n,\qquad n \in \N.
       \end{equation}
Let us check that the family $\{M_n(\cdot)\}_{n=1}^\infty$ of the
Weyl functions satisfies conditions \eqref{III.2.2_11} of Corollary
\ref{cor_III.2.2_02}.  Indeed, by \eqref{IV.1.1_19},
$M_n({c^{2}}/{2}) = 0.$ Moreover, combining \eqref{IV.1.1_09A}
with  \eqref{IV.1.1_19} and \eqref{3.33B} we get
   \begin{equation}\label{IV.1.1_08B}
\begin{split}
M_{n}'\left(\frac{c^{2}}{2}\right)&=R^{-1}_n{\widetilde M}'_n\left(\frac{c^{2}}{2}\right)R^{-1}_n =R^{-1}_n\begin{pmatrix}
\gd_n & \gd_n^2/2\\
\gd_n^2/2 & \gd_n/c^{2}+\gd_n^3/3
\end{pmatrix}R^{-1}_n =\\
&=\begin{pmatrix}
1 & \frac{1}{2}\left(1+\frac{1}{c^{2}\,\gd_n^2}\right)^{-1/2}\\
\frac{1}{2}\left(1+\frac{1}{c^{2}\,\gd_n^2}\right)^{-1/2} &
\frac{1}{3}\frac{3 + c^2\gd_n^2}{1 + c^{2}\,\gd_n^2}
\end{pmatrix}, \qquad n\in \N.
\end{split}
  \end{equation}
Hence $\sup_{n\in \N}\|M_{n}'(c^{2}/{2})\| <\infty$ and the first
condition in  \eqref{III.2.2_11} is satisfied. Further,
   \begin{equation}\label{IV.1.1_08BB}
\begin{split}
(M_{n}'(c^{2}/{2}))^{-1} = R_n{\widetilde M}'_n(c^{2}/{2})^{-1}R_n
= \frac{1}{\Delta(c^2/2)}
    \begin{pmatrix}
\frac{1}{3}\frac{3 + c^2\gd_n^2}{1 + c^{2}\,\gd_n^2} & -\frac{1}{2}\left(1+\frac{1}{c^{2}\,\gd_n^2}\right)^{-1/2}\\
-\frac{1}{2}\left(1+\frac{1}{c^{2}\,\gd_n^2}\right)^{-1/2} & 1
\end{pmatrix},
\end{split}
  \end{equation}
where ${\Delta(c^2/2)}=12(1+c^2 d^2_n)(12+c^2 d^2_n)^{-1}$. Hence
$\sup_{n\in{\N}}\|\bigl(M'_n(c^2/2)\bigr)^{-1}\|<\infty$ and the
second condition in  \eqref{III.2.2_11} is satisfied too. Thus, by
Corollary \ref{cor_III.2.2_02},  the direct sum
$\bigoplus_{n=1}^\infty\Pi^{(n)}$ is an ordinary  boundary triplet for the
operator $D_X^*$.

To complete the proof it remains to note that   the mappings $\Gamma^{(n)}_0$
and $\Gamma^{(n)}_1$, $n\in \N,$
%%%boundary triplets ${\Pi}_n$  for $D_n^*, \ n\in \N,$
defined by \eqref{IV.1.1_09AAA}  coincide with the  mappings given by  \eqref{IV.1.1_12}, \eqref{IV.1.1_12.1}.
%%%%%%%%%%%%%%%%%%%%%%%%%%%%%%%%%%
\end{proof}
%%%%%%%%%%%%%%%%%%%%%%%%%%%%%%%%%%%
   \begin{remark}
It follows from \eqref{3.33B} that both  conditions
  $$
\sup_{n\in{\mathbb
N}}\|\wt{M}_n(c^2/2)\|<\infty \qquad \text{and} \qquad \sup_{n\in{\mathbb
N}}\|\wt{M}'_n(c^2/2)\|<\infty
$$
are  satisfied for any discrete
sequence $X=\{x_n\}_{n=1}^\infty$ whenever  $d^*(X)<\infty$.
Applying Theorem \ref{th_III.2.2_01}(i)  we obtain an alternative proof of Proposition
\ref{prop3.5direcsum}(ii).

Further, it is easily seen that $\wt M_n^{-1}\left(\frac{c^{2}}{2}\right)=\left(\begin{array}{cc}
                           -\gd_n & 1\\
                           1 & 0
                           \end{array}\right)$
and
    \begin{equation*}
\bigl(\wt {M}^{-1}_n\bigr)'\left(\frac{c^2}{2}\right)= \wt{M}^{-1}_n\left(\frac{c^2}{2}\right) \wt{M}'_n\left(\frac{c^2}{2}\right) \wt{M}^{-1}_n\left(\frac{c^2}{2}\right) =
\begin{pmatrix}
d^3_n/3 + d_n/c^2 & -d^2_n/2\\
-d^2_n/2 & d_n
\end{pmatrix}, \quad n\in \N.
    \end{equation*}
Thus, the sequence $\{-\wt {M}^{-1}_n(\cdot)\}_{n\in \N}$ satisfies  conditions \eqref{III.2.2_02} at the point $a=c^2/2$ provided that   $d^*(X)<\infty$.  Again, by  Theorem \ref{th_III.2.2_01}(i), the direct sum  $\widetilde{\Pi} := \bigoplus_{n=1}^\infty
\widetilde{\Pi}^{(n)}$ forms a $B$-generalized boundary triplet  for $D^*_X$ in the sense
of Definition \ref{def_II.2.1_generalized_bt}. These reasonings
give an alternative proof of  Proposition \ref{prop3.5direcsum}(iii).

At the same time, condition
$\sup_{n\in{\mathbb N}}\|\bigl(\wt{M}'_n(c^2/2)\bigr)^{-1}\|<\infty$
is satisfied if and only if $d_*(X)>0$. Hence Theorem
\ref{th_III.2.2_01}(ii)  gives an alternative proof of Proposition
\ref{prop3.5direcsum}(iv).
On the other hand, this example
shows that condition \eqref{III.2.2_02NEW}  of Theorem
\ref{th_III.2.2_01}  is not implied by conditions
\eqref{III.2.2_02} (cf.  Example \ref{examp2.14}(i)).
   \end{remark}
%%%%%%%%%%%%%%%%%%%%%%%%%%%%%%%%%%%
\begin{corollary}\label{s.a.Dirac}
Let $\Pi=\{\cH,\Gamma_0,\Gamma_1\}$ be a boundary triplet for
the operator $D_X^{*}$  defined in Theorem \ref{th_bt_2}, i.e.
 $\Pi=\bigoplus_{n=1}^{\infty} {\Pi}^{(n)}$.   %%%%$\Pi=\Pi^{(a-)}\bigoplus\left(\bigoplus_{n=1}^{\infty}{\Pi}^{(n)}\right)\bigoplus\Pi^{(b+)}$.
Then:
%
%
%%%\begin{itemize}
\item $(i)$  The set of closed proper extensions of $D_X$ is
parameterized as  follows
  \begin{equation}\label{31d}
\widetilde{D_X} =
D_{X,\Theta}:=D_X^{*}\upharpoonright\dom(D_{X,\Theta}),
\quad\dom(D_{X,\Theta})= \left\{f\in W^{1,2}(\cI \setminus X)\otimes \C^2:
\{\Gamma_{0}f,\Gamma_{1}f\}\in\Theta\right\},
  \end{equation}
\noindent where $\Theta\in \tilde \cC(\cH)\setminus \{\{0\}\cup
\cH\oplus\cH$\}.
\item $(ii)$   $D_{X,\Theta}$ is symmetric (self-adjoint) if and only if
so is ${\Theta}$. Moreover,  $n_{\pm}(D_{X,\Theta}) =
n_{\pm}({\Theta}).$
\item $(iii)$   $D_{X,\Theta}$  is $m$-dissipative
($m$-accumulative) if and only if so is $\Theta$.
\item $(iv)$   $\widetilde{D_X}=D_{X,\Theta}$ is disjoint with
$D_{X,0}:=D_X^{*}\upharpoonright\ker\Gamma_{0}$ if and only if
$\Theta$ is a closed operator. In this case relation (\ref{31d})
takes the form
\begin{equation}\label{32d}
\widetilde{D_X}=D_{X,\Theta}:=D^{*}\upharpoonright\ker
\left(\Gamma_{1}-\Theta\Gamma_{0}\right).
\end{equation}
Moreover, $\widetilde{D_X}=D_{X,\Theta}$ and $D_{X,0}$ are
transversal if and only if (\ref{32d}) holds with $\Theta\in
\left[\cH\right]$.
%%%   \end{itemize}
\end{corollary}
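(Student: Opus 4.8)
The plan is to deduce the statement directly from the abstract parametrization of proper extensions (Proposition \ref{prop_II.1.2_01}) applied to the boundary triplet $\Pi=\{\cH,\Gamma_0,\Gamma_1\}$ constructed in Theorem \ref{th_bt_2}.

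First I would record the structural facts about $D_X$ needed to invoke the abstract machinery. By \eqref{3.26}--\eqref{3.27} the operator $D_X=\bigoplus_{n\in\N}D_n$ is densely defined and closed, and by Lemma \ref{2.3} each summand $D_n$ is symmetric with $n_\pm(D_n)=2$; hence $D_X$ is symmetric with equal (infinite) deficiency indices and $\dom(D_X^*)=W^{1,2}(\cI\setminus X)\otimes\C^2$. By Theorem \ref{th_bt_2}(ii), $\Pi=\{\cH,\Gamma_0,\Gamma_1\}$ with $\cH=l^2(\N,\C^2)$ is an ordinary boundary triplet for $D_X^*$, so all hypotheses of Proposition \ref{prop_II.1.2_01} are met with $A=D_X$.

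For (i), I would invoke Proposition \ref{prop_II.1.2_01}: the map $\widetilde{D_X}\mapsto\Gamma\dom(\widetilde{D_X})=\{\{\Gamma_0f,\Gamma_1f\}:f\in\dom(\widetilde{D_X})\}$ is a bijection between $\Ext_{D_X}$ and $\widetilde\cC(\cH)$, with inverse $\Theta\mapsto D_{X,\Theta}:=D_X^*\upharpoonright\{f\in\dom(D_X^*):\{\Gamma_0f,\Gamma_1f\}\in\Theta\}$. Since $\ker\Gamma=\dom(D_X)$ and $\Gamma$ is surjective, the two non-proper extensions $D_X$ and $D_X^*$ correspond precisely to $\Theta=\{(0,0)\}$ and $\Theta=\cH\oplus\cH$; excluding these yields the bijection onto $\widetilde\cC(\cH)\setminus\{\{0\},\cH\oplus\cH\}$ asserted in \eqref{31d}, where the domain is rewritten using $\dom(D_X^*)=W^{1,2}(\cI\setminus X)\otimes\C^2$.

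Statements (ii)--(iv) are then read off from the corresponding items of Proposition \ref{prop_II.1.2_01}: (ii) is its item (ii) (symmetry/self-adjointness of $D_{X,\Theta}$ iff of $\Theta$, together with $n_\pm(D_{X,\Theta})=n_\pm(\Theta)$); (iii) is its item (i) (the $m$-dissipative/$m$-accumulative dichotomy); and (iv) is its item (iii) together with \eqref{II.1.2_01AB}, namely $D_{X,\Theta}$ is disjoint with $D_{X,0}=D_X^*\upharpoonright\ker\Gamma_0$ iff $\Theta$ is a closed operator, in which case $D_{X,\Theta}=D_X^*\upharpoonright\ker(\Gamma_1-\Theta\Gamma_0)$, and transversality holds iff $\Theta\in[\cH]$. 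Since the whole content is a transcription of the abstract results to the concrete triplet of Theorem \ref{th_bt_2}, no genuine obstacle arises; the only points requiring (minor) care are the bookkeeping of the two exceptional relations removed in \eqref{31d} and the explicit rewriting of $\dom(D_{X,\Theta})$ through the Sobolev description of $\dom(D_X^*)$.
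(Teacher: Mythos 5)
Your proposal is correct and follows the same route as the paper: identify $\dom(D_X^*)=\bigoplus_n W^{1,2}[x_{n-1},x_n]\otimes\C^2=W^{1,2}(\cI\setminus X)\otimes\C^2$ and then transcribe Proposition \ref{prop_II.1.2_01} for the ordinary boundary triplet of Theorem \ref{th_bt_2}. The extra bookkeeping you note (excluding the two non-proper relations) is implicit in the paper's one-line argument but does not change the proof.
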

%%%%%%%%%%%%%%%%%%%%%%%%%%%%%%%%%%%%%%%%%
  \begin{proof}
According to \eqref{3.26} $D_X^{*} =  \bigoplus^{\infty}_{n=1}D_n^{*}$, hence
\[\dom(D_X^{*}) =  \bigoplus^{\infty}_{n=1}\dom(D_n^{*})  = \bigoplus^{\infty}_{n=1} W^{1,2}[x_{n-1}, x_n]\otimes \C^2
= W^{1,2}(\cI \setminus X)\otimes \C^2.
\]
One completes the  proof by applying  Proposition \ref{prop_II.1.2_01}.
   \end{proof}
%%%%%%%%%%%%%%%%%%%%%%%%%%%%%
%%%%%%%%%%%%%%%%%%%%%%%%%%%%%%%%%%%
   \begin{remark}\label{RMD} %%%%%\label{finite}
Consider  a Dirac operator with point  interactions supported on the set $X=\{x_k\}_{k\in I}\subset \cI=(a,b)$, $x_{k-1}<x_k,\ k\in I,$ where either  $I =\N$ or $I =\Z$. Moreover,  we  assume as  usual that   $\lim_{k\to\infty}x_{k} = b\le \infty$  and  $\lim_{k\to-\infty}x_{k} = a \ge -\infty$ in the second case.
Now in place of \eqref{3.26} the minimal operator is  $D_X:=\bigoplus_{n\in I} D_n$.\par
To investigate the non-relativistic limit on the line we also will consider the operators
  \begin{equation}\label{3.26B}
D_{a-}\bigoplus D_X \bigoplus D_{b+}\,,
%%D_X:=D_{a-}\bigoplus \left(\bigoplus_{n\in \N} D_n\right) \bigoplus D_{b+}\,.
   \end{equation}
where the first (resp. the third)  summand is missing whenever  $a =-\infty$ (resp. $b = \infty$).
%%%%where the component $D_{X}$ is defined by $D_X:=\bigoplus_{n\in I} D_n$ and $I$ is given either
%%%by $I=\N$ (whenever $X=\{x_{n}\}_{n\in\N}$) or $I=\Z$ (whenever $X=\{x_{n}\}_{n\in\Z}$).
%%    \begin{remark}\label{RMD}
%%%Consider Dirac operators acting either in $L^{2}(\R)\otimes\C^{2}$ or in $L^{2}(-\infty,+b)\otimes\C^{2}$  with $b<+\infty$.
The corresponding maximal operators are given  by
$D_X^*= \bigoplus_{n\in I}D^{*}_{n}$ and $D^{*}_{a-}\bigoplus D_X^* \bigoplus D^{*}_{b+}$, respectively.\par
%%the  direct sums of one of the following forms $D^{*}_{a-}\bigoplus \left(\bigoplus_{n=1}^{\infty}D^{*}_{n}\right)
%%\bigoplus D^{*}_{b+}$, $ \bigoplus_{n\in \Z}D^{*}_{n}$ and
The appropriate  boundary triplets for the maximal operators are of the form
$\bigoplus_{n\in I}{\Pi}^{(n)}$ and $\Pi^{(a-)}\bigoplus\left(\bigoplus_{n\in I}^{\infty}{\Pi}^{(n)}\right)\bigoplus\Pi^{(b+)}$, respectively.
Here  $\Pi^{(a-)}$ and $\Pi^{(b+)}$ are the boundary triplets defined in
Lemmas  \ref{2.3-} and \ref{2.3+}, respectively.
Using these boundary triplets one parameterizes the set of proper extensions of the operators  $D_X$
and  \eqref{3.26B}  in just the same way as in Corollary \ref{s.a.Dirac}.
%Now let  $X=\{x_{n}\}_{n\in \N}\subset \cI=(a,b)$ be the set of interactions satisfying $b = \lim_{k\to\infty}x_{k} \le +\infty$.
%In the next section alongside the minimal operator $D_X= \bigoplus_{n=1}^{\infty}D_{n}$ we will also consider the
%%%minimal Dirac operator $D_{a-}\bigoplus D_X \bigoplus D_{b+}$ (cf. \eqref{3.26B}). The appropriate boundary triplet for the adjoint  operator is
%$\Pi^{(a-)}\bigoplus\left(\bigoplus_{n\in \N}{\Pi}^{(n)}\right)\bigoplus\Pi^{(b+)}$.
%%%$D^{*}_{a-}\bigoplus \left(\bigoplus_{n\in \N}D^{*}_{n}\right)$.
  \end{remark}

\section{Non-relativistic limit}
Here we investigate the non-relativistic resolvent limit
of maximal dissipative (accumulative) extension $D^c_{X,\Theta}$ defined
by (\ref{31d}).
\noindent To this end  we consider the
operator $-{d^2}/{dx^2}$ with point interactions
on a discrete set and following \cite{KM} describe the
corresponding boundary triplets, Weyl functions, etc.
%%%using the technique of \cite{KM} and \cite{MalNei11}.

\subsection{Boundary triplets for  Schr\"{o}dinger building blocks}
First we present a boundary triplet for the maximal operator  $-\frac{d^2}{dx^2}$ on a finite  interval.  Let $H_{n}$ denote
the minimal  operator associated with the differential expression
$-\frac{d^2}{dx^2}$ in $L^{2}[x_{n-1},x_{n}]$ by
%generated by the %differential expression (\ref{1.2})
%
%
  \begin{equation}
H_{n} := -\frac{d^2}{dx^2}\upharpoonright\dom(H_n),
\qquad\dom(H_{n})=W^{2,2}_{0}[x_{n-1},x_{n}],\qquad n\in \N.
  \end{equation}
%
%%%%%%%%%%%%%%%%%%%%%%%%%%%%%%%%%%%%%%%%%%%%%%%
%%%%%%%%%%%%%%%%%%%%%%%%%%%%%%%%%%%%%%%%%%%%%%%
    \begin{lemma}\label{2.3.2.H}
The operator $H_{n}$ is  symmetric in $L^{2}[x_{n-1},x_{n}]$
 with the deficiency indices
$n_{\pm}(H_{n})= 2$. Its adjoint $H_n^*$ is given by
$$
H_{n}^{*}=-\frac{d^2}{dx^2}\upharpoonright\dom(H_{n}^{*}),
\qquad\dom(H_{n}^{*}) = W^{2,2}[x_{n-1},x_{n}],
$$
and the defect subspace $\mathfrak N_z = \text{\rm ker}(H^*_{n}-z)$ is spanned by
the functions $f_{n,H}^\pm(\cdot),$
  \begin{equation}\label{3.43}
f_{n,H}^{\pm}(z)(x):=e^{\pm i\, \sqrt z\,x}\,,\qquad\text{\rm Im}(\sqrt z)\ge
0\,.
  \end{equation}
Moreover, the following holds.
%
%%%\begin{itemize}
\item $(i)$  The triplet $\widetilde\Pi_H^{(n)}=\big\{\C^{2},
\widetilde{\Gamma}_{0,H}^{(n)},\widetilde{\Gamma}_{1,H}^{(n)}\big\}$
where
\begin{equation}\label{triple2.H}
\widetilde{\Gamma}_{0,H}^{(n)}f:=
\left(\begin{array}{c}
                 f(x_{n-1}+)\\
                 f'(x_{n}-)
                       \end{array}\right)\,,\qquad
\widetilde{\Gamma}_{1,H}^{(n)}f:=\left(\begin{array}{c}
                                                                           f'(x_{n-1}+)\\
                                                                            f(x_{n}-)
                                                                          \end{array}\right)\,,
%\quad f\in  W^{2,2}[x_{n},x_{n+1}],
\end{equation}
forms a boundary triplet for $H^{*}_{n}$.
\item $(ii)$  The spectrum of the operator
${H}_{n,0}:=H_{n}^{*}\upharpoonright\ker\widetilde{\Gamma}^{(n)}_{0,H}$ is discrete,
  \begin{equation}
\sigma({H}_{n,0})=\sigma_d({H}_{n,0})=
\left\{\frac{\pi^2}{d^2_n}\,\left(j+\frac12\right)^2\,,\quad j\in
\{0\}\cup \N  \right\}.
  \end{equation}
\item $(iii)$  The $\gamma$-field $\wt\gamma_n(\cdot):\C^2\to
  L^2[x_{n-1},x_{n}]$,   corresponding to the triplet
  $\wt\Pi^{(n)}$ is given  by %%%%the row vector (i.e. $1\times 2$ matrix) function
   \begin{align}\label{1.21+.H2}
\wt\gamma_n(z) \begin{pmatrix} w_1\\w_2
\end{pmatrix} = &\frac{1}{\cos(d_n\sqrt z)}\left(
\cos(\sqrt z\,(x_{n}-x))\,,\  -\frac{\sin(\sqrt z
(x_{n-1}-x))}{\sqrt z}\right)\begin{pmatrix} w_1\\w_2
\end{pmatrix} \nonumber  \\
=&\frac{1}{\cos(d_n\sqrt z)}\left(w_1\cos(\sqrt z\,(x_{n}-x))
-w_2\,\frac{\sin(\sqrt z (x_{n-1}-x))}{\sqrt z}\right) \,,\quad
z\in\rho(H_{n,0}).
\end{align}
\item $(iv)$   The  Weyl function $\widetilde{M}_{n,H}(\cdot)$,  %%%%$z\in\rho(H_{n,0})$,
corresponding to the triplet $\wt\Pi_H^{(n)}$ is
           \begin{equation}\label{IV.1.1_09.2.H}
 \widetilde{M}_{n,H}(z)=\frac{1}{\cos(d_n\,\sqrt z)}\left(\begin{array}{cc}\sqrt z\,\sin(d_n\,\sqrt z)
   & 1 \\1 & z^{-1/2}\sin(d_n\,\sqrt z)
\end{array}\right),\qquad z\in\rho(H_{n,0}).
         \end{equation}
%
%%%    \end{itemize}
\end{lemma}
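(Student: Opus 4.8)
The plan is to mirror, in the self-adjoint setting, the proof of Lemma~\ref{2.3} for the Dirac building block; indeed the triplet $\wt\Pi_H^{(n)}$ is the Schr\"odinger counterpart of the non-standard Dirac triplet $\wt\Pi^{(n)}$ there, and the statement is essentially contained in \cite{KM}. First I would dispatch the purely ODE-theoretic part. The operator $H_n$ is symmetric because two integrations by parts on $[x_{n-1},x_n]$ produce boundary terms involving $f,f',g,g'$ at $x_{n-1}$ and $x_n$, all of which vanish on $W^{2,2}_0[x_{n-1},x_n]$. Standard regularity theory for $-d^2/dx^2$ on a compact interval identifies $\dom(H_n^*)=W^{2,2}[x_{n-1},x_n]$ with $H_n^*=-d^2/dx^2$, and for $z\notin\R$ the defect space $\mathfrak N_z=\ker(H_n^*-z)$ is the (two-dimensional) solution space of $-f''=zf$, hence spanned by $f_{n,H}^{\pm}(z)(x)=e^{\pm i\sqrt z\,x}$, so $n_\pm(H_n)=2$.

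For (i), the same two integrations by parts give $(H_n^*f,g)-(f,H_n^*g)=[\,-f'\bar g+f\bar g'\,]_{x_{n-1}}^{x_n}$, and a direct term-by-term comparison shows this equals $(\wt\Gamma_{1,H}^{(n)}f,\wt\Gamma_{0,H}^{(n)}g)_{\C^2}-(\wt\Gamma_{0,H}^{(n)}f,\wt\Gamma_{1,H}^{(n)}g)_{\C^2}$ for the maps in~\eqref{triple2.H}, which is Green's identity. Surjectivity of $\wt\Gamma_H^{(n)}=(\wt\Gamma_{0,H}^{(n)},\wt\Gamma_{1,H}^{(n)})$ onto $\C^2\oplus\C^2$ is immediate by Hermite interpolation, since a cubic polynomial realises arbitrary values of $f$ and $f'$ at $x_{n-1}$ and at $x_n$; alternatively it follows from $\ker\wt\Gamma_H^{(n)}=W^{2,2}_0[x_{n-1},x_n]=\dom(H_n)$ together with the dimension count $\dim(\C^2\oplus\C^2)=4=n_+(H_n)+n_-(H_n)$ and non-degeneracy of the boundary form. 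This establishes (i).

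For (ii), $\ker\wt\Gamma_{0,H}^{(n)}$ imposes $f(x_{n-1}+)=0$ and $f'(x_n-)=0$, so $H_{n,0}$ is the Dirichlet--Neumann realization of $-d^2/dx^2$ on $[x_{n-1},x_n]$; it is self-adjoint (via Proposition~\ref{prop_II.1.2_01} it corresponds to the self-adjoint relation $\{0\}\times\C^2$) and has compact resolvent, hence purely discrete spectrum. Solving $-f''=\lambda f$ with $f(x_{n-1})=0$ forces $f(x)=\sin(\sqrt\lambda\,(x-x_{n-1}))$, and the Neumann condition at $x_n$ gives $\cos(\sqrt\lambda\,d_n)=0$, i.e. $\sqrt\lambda\,d_n=\pi(j+\tfrac12)$, $j\in\{0\}\cup\N$, which is precisely the claimed list.

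Finally, for (iii) and (iv) I would compute $\wt\gamma_n(z)$ as the unique $f\in\mathfrak N_z$ whose $\wt\Gamma_{0,H}^{(n)}$-data is $(w_1,w_2)^\top$, i.e. the solution of $-f''=zf$ with $f(x_{n-1})=w_1$ and $f'(x_n)=w_2$: writing $f(x)=w_1\cos(\sqrt z\,(x-x_{n-1}))+C\,(\sqrt z)^{-1}\sin(\sqrt z\,(x-x_{n-1}))$ and imposing $f'(x_n)=w_2$ gives $C=(w_1\sqrt z\sin(d_n\sqrt z)+w_2)/\cos(d_n\sqrt z)$; rewriting via $\cos(\sqrt z(x_n-x))=\cos(d_n\sqrt z)\cos(\sqrt z(x-x_{n-1}))+\sin(d_n\sqrt z)\sin(\sqrt z(x-x_{n-1}))$ and $\sin(\sqrt z(x_{n-1}-x))=-\sin(\sqrt z(x-x_{n-1}))$ produces exactly~\eqref{1.21+.H2}. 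Then $\wt M_{n,H}(z)=\wt\Gamma_{1,H}^{(n)}\wt\gamma_n(z)$ is obtained by reading off $f'(x_{n-1})=C$ and $f(x_n)=(w_1+z^{-1/2}\sin(d_n\sqrt z)\,w_2)/\cos(d_n\sqrt z)$, which gives~\eqref{IV.1.1_09.2.H}; as a consistency check the poles of $\wt M_{n,H}$ are exactly $\sigma(H_{n,0})$ from (ii). There is no genuinely hard step — the argument is routine Sturm--Liouville bookkeeping — so the only real obstacle is arranging the trigonometric addition formulas so that the $\gamma$-field and Weyl function emerge in the precise normalization of~\eqref{1.21+.H2} and~\eqref{IV.1.1_09.2.H}, and keeping the branch of $\sqrt z$ with $\operatorname{Im}\sqrt z\ge0$ fixed throughout.
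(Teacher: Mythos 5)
Your proposal is correct; the paper in fact states Lemma \ref{2.3.2.H} without proof (it is the Schr\"odinger counterpart of Lemma \ref{2.3}, essentially quoted from \cite{KM}), and your argument follows the same route as the paper's proof of that Dirac analogue: Green's identity plus surjectivity for (i), explicit eigenfunctions for (ii), and a direct solution of the boundary value problem in $\mathfrak N_z$ for the $\gamma$-field and $\wt M_{n,H}(z)=\wt\Gamma_{1,H}^{(n)}\wt\gamma_n(z)$. The only cosmetic difference is that in Lemma \ref{2.3} the authors expand $\wt\gamma_n(z)$ in the exponential defect basis and invert the $2\times2$ matrix $\Lambda(z)$ of boundary data, whereas you impose the $\wt\Gamma_{0,H}^{(n)}$-conditions directly on a cosine/sine fundamental system; both yield \eqref{1.21+.H2} and \eqref{IV.1.1_09.2.H}, and your computations check out.
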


Next we present  a boundary triplet for the operator   %%%Schr\"{o}dinger
$-\frac{d^2}{dx^2}$ on the half-line. Denote by  $H_{a-}$   the minimal
operator associated with  the differential  expression $-\frac{d^2}{dx^2}$
on $L^{2}(\R_a^-)$ by
\begin{equation}
H_{a-}=-\frac{d^2}{dx^2}\upharpoonright\dom(H_{a-}),
\qquad\dom(H_{a-}) = W^{2,2}_{0}(\R_a^-).
\end{equation}
Similarly,  $H_{b+}$ denotes  the minimal operator generated by
the expression $-d^2/dx^2$  on $L^{2}(\R_b^+)$,
  \begin{equation}
H_{b+}=-\frac{d^2}{dx^2}\upharpoonright\dom(H_{b+}),
\qquad\dom(H_{b+})=W^{2,2}_{0}(\R_b^+)\,.
\end{equation}
   \begin{lemma}\label{2.3-.H}
The operator $H_{a-}$ is  symmetric  with the deficiency indices
$n_{\pm}(H_{a-})= 1$. Its adjoint $H^*_{a-}$ is given by
$$
H_{a-}^{*} := (H_{a-})^{*} =-\frac{d^2}{dx^2}
\upharpoonright\dom(H_{a-}^{*})\,,
\qquad\dom(H_{a-}^{*})=W^{2,2}(\R_a^{-})\,,
$$
and the defect subspace $\mathfrak N_z = \text{\rm ker}(H^*_{a-}-z)$ is spanned by
the vector function
   \begin{equation}\label{3.51}
f_{a}^{-}(x, z):=e^{- i\, \sqrt z\,x}\,, \qquad\text{\rm Im}(\sqrt
z)> 0.
\end{equation}
Moreover, the following holds:
%
%
%%%\begin{itemize}
 \item $(i)$  The triplet $\Pi_H^{(a-)}=\big\{\C^{2},
{\Gamma}_{0,H}^{(a-)},{\Gamma}_{1,H}^{(a-)}\big\}$, where
  \begin{equation}\label{triple.a.H}
{\Gamma}_{0,H}^{(a-)}f:=f'(a-)\,,\qquad
{\Gamma}_{1,H}^{(a-)}f:=f(a-)\,,
%\qquad f\in W^{2,2}(\R_a^{-}),
  \end{equation}
forms a boundary triplet for the operator $H^{*}_{a-}$.

\item $(ii)$   The spectrum of the operator
${H}_{a-,0}:=H_{a-}^{*}\upharpoonright\ker{\Gamma}^{(a-)}_{0,H}=({H}_{a-,0})^*$
is absolutely continuous,
\begin{equation}
\sigma({H}_{a-,0})=\sigma_{ac}({H}_{a-,0})=[0,+\infty)\,.
\end{equation}
\item $(iii)$  The corresponding $\gamma$-field
$\gamma_{a-,H}(\cdot):\C\to L^2(\R_a^{-})$, is
  \begin{equation}\label{gamma.a.H}
\gamma_{a-,H}(z)w =w\,\frac{i}{\sqrt z}\,e^{i\, \sqrt
z\,a}f_{a,H}^-(\cdot, z)\,, \quad w\in \C_+,  \quad  z\in\rho(H_{a-,0}).
  \end{equation}
\item $(iv)$  The  Weyl function ${M}_{a-,H}(\cdot)$ corresponding
to the triplet $\Pi_H^{(a-)}$ is
           \begin{equation}\label{IV.1.1_09.a.H}
{M}_{a-,H}(z)=\frac{i}{\sqrt z},\qquad  z\in\rho(H_{a-,0}).
         \end{equation}
%
%
%%%\end{itemize}
\end{lemma}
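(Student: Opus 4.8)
The plan is to prove this by the same routine one-dimensional Weyl--Titchmarsh computation used for the Dirac half-line operator in Lemma~\ref{2.3-}. First I would record the basic structure of $H_{a-}$. Since $x_{0}=a$ is a regular endpoint of $-d^2/dx^2$ while $-\infty$ lies in the limit-point case, the minimal operator $H_{a-}$ is closed and symmetric, its adjoint is the maximal operator $-d^2/dx^2\upharpoonright W^{2,2}(\R_a^{-})$, and the deficiency indices are $n_{\pm}(H_{a-})=2-1=1$; accordingly the auxiliary Hilbert space of the triplet is one-dimensional, $\cH=\C$. The defect subspace $\cN_z=\ker(H_{a-}^*-z)$ is spanned by the unique $L^2(\R_a^{-})$-solution of $-f''=zf$: of the two solutions $e^{\pm i\sqrt z x}$, only $f_{a}^-(\cdot,z)=e^{-i\sqrt z x}$ is square-integrable near $-\infty$ when $\mathrm{Im}(\sqrt z)>0$, since its modulus is $e^{\mathrm{Im}(\sqrt z)\,x}$, which tends to $0$ as $x\to-\infty$. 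This gives \eqref{3.51}.

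For part~(i) I would verify the two defining properties of an ordinary boundary triplet (Definition~\ref{def_ordinary_bt}) directly. Integration by parts on $\R_a^{-}$ gives, for $f,g\in W^{2,2}(\R_a^{-})$,
\[
(H_{a-}^*f,g)-(f,H_{a-}^*g)=\bigl[\,f(x)\overline{g'(x)}-f'(x)\overline{g(x)}\,\bigr]_{x=-\infty}^{x=a},
\]
and the contribution at $-\infty$ vanishes because $f,f',g,g'$ belong to $L^2$ near $-\infty$ together with $f'',g''$ and hence tend to $0$ there; the surviving boundary term equals $(\Gamma_{1,H}^{(a-)}f)\overline{\Gamma_{0,H}^{(a-)}g}-(\Gamma_{0,H}^{(a-)}f)\overline{\Gamma_{1,H}^{(a-)}g}$, which is the Green identity. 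Surjectivity of $\Gamma^{(a-)}=(\Gamma_{0,H}^{(a-)},\Gamma_{1,H}^{(a-)})\colon\dom(H_{a-}^*)\to\C\oplus\C$ is immediate, since for any prescribed Cauchy data $(c_0,c_1)$ one can take a function in $C_0^\infty(\R)$ supported in a small left neighbourhood of $a$ with $f'(a-)=c_0$ and $f(a-)=c_1$.

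Parts~(iii) and~(iv) are then direct computations from the definitions \eqref{II.1.3_01}. Since $\Gamma_{0,H}^{(a-)}f_{a}^-(\cdot,z)=-i\sqrt z\,e^{-i\sqrt z a}$ (the derivative of $e^{-i\sqrt z x}$ at $x=a$), inverting $\Gamma_{0,H}^{(a-)}\upharpoonright\cN_z$ yields
\[
\gamma_{a-,H}(z)w=\frac{w}{-i\sqrt z\,e^{-i\sqrt z a}}\,f_{a}^-(\cdot,z)=w\,\frac{i}{\sqrt z}\,e^{i\sqrt z a}\,f_{a}^-(\cdot,z),
\]
which is \eqref{gamma.a.H}, and then $M_{a-,H}(z)w=\Gamma_{1,H}^{(a-)}\gamma_{a-,H}(z)w=w\,\frac{i}{\sqrt z}\,e^{i\sqrt z a}\,f_{a}^-(a,z)=w\,\frac{i}{\sqrt z}$, which is \eqref{IV.1.1_09.a.H}. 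For part~(ii), $H_{a-,0}=H_{a-}^*\upharpoonright\ker\Gamma_{0,H}^{(a-)}$ is the Neumann realization at $a$; I would identify it with the Neumann Laplacian on $\R_+$ via the reflection $x\mapsto 2a-x$, which is purely absolutely continuous with simple spectrum $[0,\infty)$, or read the same conclusion off the boundary values of the Weyl function: $\mathrm{Im}\,M_{a-,H}(x+i0)=\mathrm{Im}\bigl(i/\sqrt{x+i0}\bigr)$ is positive for $x>0$ and vanishes for $x<0$, and since the spectral measure $E_{H_{a-,0}}$ is equivalent to the measure $\Sigma_M$ in the integral representation \eqref{WF_intrepr}, this gives $\sigma(H_{a-,0})=\sigma_{ac}(H_{a-,0})=[0,\infty)$.

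No step presents a genuine difficulty; the one point I would take care to spell out is the vanishing of the boundary form at $-\infty$ — that is, the limit-point classification of that endpoint — which is also precisely what pins the deficiency index at $1$ and makes the boundary triplet one-dimensional.
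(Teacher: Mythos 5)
Your proof is correct, and it is exactly the routine computation the paper has in mind: the lemma is stated without proof (as with its Dirac counterparts, declared ``straightforward''), and your verification of the Green identity, surjectivity of the trace map, the limit-point classification at $-\infty$, and the explicit formulas for $\gamma_{a-,H}$ and $M_{a-,H}$ all check out. The only cosmetic remark is that the auxiliary space should indeed be $\C$ rather than the $\C^{2}$ appearing in the lemma's statement (a typo in the paper, consistent with $n_{\pm}(H_{a-})=1$), which you correctly identify.
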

     \begin{lemma}\label{2.3+.H}
The operator $H_{b+}$ is  symmetric  with the  deficiency indices
$n_{\pm}(H_{b+})= 1$. Its adjoint $H^*_{b+}$ is given by
$$
H_{b+}^{*} := (H_{b+})^{*} =-\frac{d^2}{dx^2}
\upharpoonright\dom(H_{b+}^{*})\,,
\qquad\dom(H_{b+}^{*})=W^{2,2}(\R_b^{+})\,,
$$ and the defect subspace
$\mathfrak N_z = \text{\rm ker}(H^*_{b+}-z)$ is spanned by the vector function
  \begin{equation}\label{3.57}
f_{b, H}^{+}(x, z) := e^{i\, \sqrt z\,x}\,,\qquad\text{\rm Im}(\sqrt z)>0\,.
  \end{equation}
Moreover, the following holds:
%
%%%\begin{itemize}
\item $(i)$ The triplet   $\Pi_H^{(b+)}=\big\{\C^{2},
{\Gamma}_{0,H}^{(b+)},{\Gamma}_{1,H}^{(b+)}\big\}$, where
  \begin{equation}\label{triple.b.Hb}
{\Gamma}_{0,H}^{(b+)}f:=f(b+)\,,\qquad
{\Gamma}_{1,H}^{(b+)}f:=f'(b+),
%\qquad  f\in W^{2,2}(\R_b^{+}),
   \end{equation}
forms a boundary triplet for the operator $H^{*}_{b+}$.
\item $(ii)$  The spectrum of the operator ${H}_{b+,0} :=
H_{b+,H}^{*}\upharpoonright\ker{\Gamma}^{(b+)}_{0,H} =
{H}_{b+,0}^*$ is absolutely continuous,
  \begin{equation}
\sigma({H}_{b+,0})=\sigma_{ac}({H}_{b+,0})=[0,+\infty)\,.
  \end{equation}
\item $(iii)$   The  $\gamma$-field
$\gamma_{b+,H}(\cdot):\C\to L^2(\R_b^{+})$ corresponding to the
triplet $\Pi_H^{(b+)}$ is given by
  \begin{equation}\label{gamma.b.H}
\gamma_{b+}(z)w = w\,e^{-i\, \sqrt z\,b}f_{b,z, H}^{+}, \quad w\in \C, \quad
z\in\rho(H_{b+,0}).
  \end{equation}
\item $(iv)$ The  Weyl function $ {M}_{b+,H}(\cdot)$
corresponding to the triplet $\Pi_H^{(b+)}$ is  %%%%of the form \eqref{triple.b.Hb}
           \begin{equation}\label{IV.1.1_09.2.Hb}
{M}_{b+,H}(z)=i\,\sqrt z, \qquad z\in\rho(H_{b+,0}).
         \end{equation}
%
%
%%  \end{itemize}
\end{lemma}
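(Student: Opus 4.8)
The plan is to argue exactly as in the proof of Lemma~\ref{2.3} and of Lemmas~\ref{2.3-}, \ref{2.3+}, since $H_{b+}$ is the minimal Sturm--Liouville operator for $-d^2/dx^2$ on $(b,+\infty)$, which is regular at the endpoint $b$ and in the limit-point case at $+\infty$. First I would recall the standard facts: $H_{b+}$ is densely defined, closed and symmetric, and $\dom(H_{b+}^*) = W^{2,2}(\R_b^+)$. The last identification is the classical description of the maximal operator, and it rests on the fact that every $f\in W^{2,2}(\R_b^+)$ satisfies $f(x),f'(x)\to 0$ as $x\to+\infty$, so that in the integration-by-parts formula the only surviving boundary contribution is the one at $b$. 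Solving $-f''=zf$ on $(b,+\infty)$ for $z\in\C\setminus[0,+\infty)$ shows that up to a scalar the only $L^2$-solution is $f_{b,H}^+(\cdot,z) = e^{i\sqrt z\,x}$ with $\im(\sqrt z)>0$; hence $\mathfrak N_z = \ker(H_{b+}^*-z)$ is one-dimensional and $\mathrm n_\pm(H_{b+}) = \dim\mathfrak N_{\pm i} = 1$.

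For (i), the second Green identity for $-d^2/dx^2$ on $(b,+\infty)$ gives, for $f,g\in W^{2,2}(\R_b^+)$,
\begin{equation*}
(H_{b+}^*f,g)_{L^2(\R_b^+)} - (f,H_{b+}^*g)_{L^2(\R_b^+)} = f'(b+)\,\overline{g(b+)} - f(b+)\,\overline{g'(b+)},
\end{equation*}
which is precisely $(\Gamma_{1,H}^{(b+)}f,\Gamma_{0,H}^{(b+)}g) - (\Gamma_{0,H}^{(b+)}f,\Gamma_{1,H}^{(b+)}g)$ for the maps $\Gamma_{0,H}^{(b+)}f=f(b+)$, $\Gamma_{1,H}^{(b+)}f=f'(b+)$. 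Surjectivity of $(\Gamma_{0,H}^{(b+)},\Gamma_{1,H}^{(b+)})$ is immediate: given $(c_0,c_1)$, any $\varphi\in C_0^\infty[b,+\infty)$ with $\varphi(b)=c_0$, $\varphi'(b)=c_1$ lies in $\dom(H_{b+}^*)$ and realizes these boundary values. This proves that $\Pi_H^{(b+)}$ is an (ordinary) boundary triplet.

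For (ii), $H_{b+,0}:=H_{b+}^*\upharpoonright\ker\Gamma_{0,H}^{(b+)}$ is the Dirichlet realization of $-d^2/dx^2$ on $(b,+\infty)$; via the translation $x\mapsto x-b$ it is unitarily equivalent to the Dirichlet Laplacian on $\R_+$, whose spectrum is simple and purely absolutely continuous, equal to $[0,+\infty)$. Hence $\sigma(H_{b+,0}) = \sigma_{ac}(H_{b+,0}) = [0,+\infty)$. For (iii), by Definition~\ref{def_Weylfunc}, $\gamma_{b+,H}(z) = (\Gamma_{0,H}^{(b+)}\!\upharpoonright\mathfrak N_z)^{-1}$; since $\Gamma_{0,H}^{(b+)}f_{b,H}^+(\cdot,z) = e^{i\sqrt z\,b}$, we get $\gamma_{b+,H}(z)w = w\,e^{-i\sqrt z\,b}f_{b,H}^+(\cdot,z)$ for $z\in\rho(H_{b+,0})$, which is \eqref{gamma.b.H}. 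For (iv), $M_{b+,H}(z) = \Gamma_{1,H}^{(b+)}\gamma_{b+,H}(z) = e^{-i\sqrt z\,b}\,\bigl(e^{i\sqrt z\,x}\bigr)'\big|_{x=b} = i\sqrt z$, which is \eqref{IV.1.1_09.2.Hb}.

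All of this is routine; the only step that calls for the usual (minor) care is the identification $\dom(H_{b+}^*) = W^{2,2}(\R_b^+)$ with no boundary term at $+\infty$, i.e. the limit-point property of $-d^2/dx^2$ at $+\infty$, which is classical. These assertions are the scalar, half-line analogues of Lemma~\ref{2.3}; Lemma~\ref{2.3-.H} is handled identically with the endpoints (and the sign of the boundary term in the Green identity) interchanged, which is why there one gets $M_{a-,H}(z) = i/\sqrt z$.
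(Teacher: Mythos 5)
Your proof is correct and follows exactly the standard route the paper leaves implicit (it states Lemmas \ref{2.3-.H} and \ref{2.3+.H} without proof, just as it declares the Dirac analogues \ref{2.3-} and \ref{2.3+} ``straightforward''): limit-point case at $+\infty$ giving $\mathrm{n}_\pm=1$ and $\dom(H_{b+}^*)=W^{2,2}(\R_b^+)$, Green's identity plus surjectivity of the trace map for the boundary triplet, and the direct computation $\gamma_{b+,H}(z)=(\Gamma_{0,H}^{(b+)}\!\upharpoonright\mathfrak N_z)^{-1}$, $M_{b+,H}(z)=\Gamma_{1,H}^{(b+)}\gamma_{b+,H}(z)=i\sqrt z$. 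All the individual computations check out, so nothing further is needed.
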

%%%%%%%%%%%%%%%%%%%%%%%%%%%%%%%%%%%%%%%%%%%%%%%
%%%%%%%%%%%%%%%%%%%%%%%%%%%%%%%%%%%%%%%%%%%%%%%
\subsection{Boundary triplet for Schr\"{o}dinger operators with point interactions}

Let $X=\{x_n\}_{n=1}^\infty$, $a=x_0,\  x_{n-1}<x_n,$ be a
discrete set as in Section 3.2. We define $H_X$ by
\begin{equation}\label{3.28.H}
H_X:=\bigoplus_{n\in\N} H_n
%\qquad \text{\rm or by} \qquad
%H_X:=H_{a-}\bigoplus \left(\bigoplus_{n=1}^\infty H_n\right) \bigoplus
%H_{b+}
\,.
\end{equation}
Now we are ready to construct a boundary triplet for the operator
$H_X^{*}$.
    \begin{proposition}(\cite{KM})
Let  $\widetilde\Pi_H^{(n)}=\big\{\C^{2},
\widetilde{\Gamma}_{0,H}^{(n)},\widetilde{\Gamma}_{1,H}^{(n)}\big\}$
be the boundary triplet for the operator $H_n^*,\ n\in \N,$
defined in \eqref{triple2.H}. Then the  direct sum
    \begin{equation}\label{3.30.H}
\wt\Pi= \bigoplus_{n=1}^\infty\widetilde{\Pi}_H^{(n)}
=\big\{\cH,\widetilde{\Gamma}_{0,H},
\widetilde{\Gamma}_{1,H}\big\}\,,\quad\cH =l_{2}(\N)\otimes
\C^2\,,\quad \widetilde{\Gamma}_{j,H}=\bigoplus_{n=1}^\infty
\widetilde{\Gamma}^{(n)}_{j,H}\,,\quad j\in\{0,1\}\,,
   \end{equation}
forms a boundary triplet for the operator $H_X^*=
\bigoplus_{n=1}^\infty H_{n}^{*}$ if and only if $0<d_{*}(X)<
d^{*}(X)<\infty$.
 \end{proposition}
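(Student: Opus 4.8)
The plan is to argue exactly as was done for the Dirac building blocks in Proposition~\ref{prop3.5direcsum} (via Proposition~\ref{prop3.6}). Since $H_X^*=\bigoplus_n H_n^*$ with $\dom(H_X^*)=W^{2,2}(\R_+\setminus X)$, and since the block Green identity \eqref{III.1_06} holds on $\dom(\wt\Gamma_{0,H})\cap\dom(\wt\Gamma_{1,H})$ automatically, the two things that remain to be decided in order that $\wt\Pi=\bigoplus_n\wt\Pi_H^{(n)}$ be an \emph{ordinary} boundary triplet for $H_X^*$ are: (a) whether $\dom(\wt\Gamma_{0,H})\cap\dom(\wt\Gamma_{1,H})$ is \emph{all} of $W^{2,2}(\R_+\setminus X)$, and (b) whether $\wt\Gamma_H:=\binom{\wt\Gamma_{0,H}}{\wt\Gamma_{1,H}}$ maps \emph{onto} $\cH\oplus\cH$, $\cH=l^2(\N)\otimes\C^2$. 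By the definition \eqref{triple2.H} of $\wt\Gamma^{(n)}_{j,H}$, both (a) and (b) are questions about the four trace sequences $\{f(x_{n-1}+)\},\{f(x_n-)\},\{f'(x_{n-1}+)\},\{f'(x_n-)\}$ of $f\in W^{2,2}(\R_+\setminus X)$.

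So the first step is a second-order analogue of Proposition~\ref{prop3.6}. On $[x_{n-1},x_n]$ of length $d_n$ I would use the Sobolev embedding together with the scaled interpolation inequality $\|f_n'\|_{L^2}^2\lesssim d_n^{-2}\|f_n\|_{L^2}^2+d_n^2\|f_n''\|_{L^2}^2$ to obtain two-sided estimates bounding the endpoint values of $f_n$ and $f_n'$ by $\|f_n\|_{L^2}+\|f_n''\|_{L^2}$ with explicit powers of $d_n$ as weights, and conversely to realise any prescribed endpoint data by an explicit $f_n$ (polynomials suffice). Summing over $n$, one reads off that $\wt\Gamma_{j,H}$ is everywhere defined on $W^{2,2}(\R_+\setminus X)$ and that $\wt\Gamma_H$ is onto precisely when the relevant $d_n$-weights are two-sided bounded, i.e. when $d_*(X)>0$ and $d^*(X)<\infty$; combined with the block Green identity this gives one implication.

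A shorter route to the same implication, in the spirit of the Remark after Theorem~\ref{th_bt_2}, bypasses the trace lemma. Since $\sigma(H_{n,0})=\{\tfrac{\pi^2}{d_n^2}(j+\tfrac12)^2:\,j\ge0\}\subset(0,\infty)$, there is a common gap $(-\varepsilon,\varepsilon)\subset\bigcap_n\widehat\rho(H_n)$ with $(-\varepsilon,\varepsilon)\subset\rho(H_{n,0})$ for all $n$ if and only if $d^*(X)<\infty$; when this holds Theorem~\ref{th_III.2.2_01} applies at $a=0$, and from \eqref{IV.1.1_09.2.H}
\begin{equation*}
\wt M_{n,H}(0)=\begin{pmatrix}0&1\\1&d_n\end{pmatrix},\qquad
\wt M'_{n,H}(0)=\begin{pmatrix}d_n&d_n^2/2\\ d_n^2/2&d_n^3/3\end{pmatrix},\qquad
\det\wt M'_{n,H}(0)=\tfrac1{12}d_n^4 .
\end{equation*}
Hence $C_3=\sup_n\|\wt M_{n,H}(0)\|$ and $C_4=\sup_n\|\wt M'_{n,H}(0)\|$ are finite once $d^*(X)<\infty$, while $C_5=\sup_n\|(\wt M'_{n,H}(0))^{-1}\|<\infty$ if and only if $d_*(X)>0$, so Theorem~\ref{th_III.2.2_01}(ii) yields that $\wt\Pi$ is an ordinary boundary triplet for $H_X^*$ exactly when $d_*(X)>0$ and $d^*(X)<\infty$. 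For the converse in the remaining case $d^*(X)=\infty$, which lies outside the hypotheses of Theorem~\ref{th_III.2.2_01}, I would revert to the trace argument: if $d^*(X)=\infty$ one exhibits, as in Proposition~\ref{prop3.6}(iii), an element of $\cH$ not lying in $\ran(\wt\Gamma_{0,H})$, and if $d_*(X)=0$ one exhibits $f\in W^{2,2}(\R_+\setminus X)$ of finite graph norm with $\wt\Gamma_{0,H}f\notin\cH$; either way $\wt\Pi$ fails to be an ordinary boundary triplet.

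The hard part will be the second-order trace lemma of the first step: in contrast to the first-order Proposition~\ref{prop3.6}, one must simultaneously control $f$ and $f'$ at the partition points, and the interpolation bounding $\|f_n'\|_{L^2}$ in terms of $\|f_n\|_{L^2}$ and $\|f_n''\|_{L^2}$ carries $d_n$-dependent constants, so pinning down the sharp weights — and thereby isolating exactly where $d_*(X)>0$ and where $d^*(X)<\infty$ become unavoidable — will take some care. Everything else (Green's identity from \eqref{III.1_06}, the reduction to trace properties, and the $2\times2$ matrix computations above) is routine.
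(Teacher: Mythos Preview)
The paper does not supply its own proof of this proposition; it is quoted from \cite{KM}. Your proposal is nonetheless exactly in line with the paper's methodology: the two routes you outline---the trace-lemma argument and the Weyl-function computation at $a=0$ via Theorem~\ref{th_III.2.2_01}---are precisely the arguments the paper gives for the Dirac analogue in Proposition~\ref{prop3.5direcsum}(iv) and in the Remark following Theorem~\ref{th_bt_2}, merely transplanted from $W^{1,2}$ to $W^{2,2}$. In that sense your plan reproduces the paper's approach as closely as possible given that no Schr\"odinger-specific proof appears here.

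One caution on the case $d^*(X)=\infty$: your plan to ``exhibit an element of $\cH$ not lying in $\ran(\wt\Gamma_{0,H})$'' is not obviously going to work. If you compute $\wt M_{n,H}(i)$ directly from \eqref{IV.1.1_09.2.H} and send $d_n\to\infty$, you find $\wt M_{n,H}(i)\to\diag(i^{3/2},i^{1/2})$ (since $\tan(d_n\sqrt i)\to i$ and $1/\cos(d_n\sqrt i)\to0$), so both $\|\wt M_{n,H}(i)\|$ and $\|(\im\wt M_{n,H}(i))^{-1}\|$ stay bounded in that regime. Thus the general criterion of Theorem~\ref{th_criterion(bt)} does not obstruct the triplet from being ordinary when $d^*(X)=\infty$ but $d_*(X)>0$, and the surjectivity failure you hope to display may simply not occur. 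Most likely the hypothesis $d^*(X)<\infty$ is a standing assumption carried over from \cite{KM} (as it is throughout the present paper---see Theorems~\ref{th_bt_2} and~\ref{th_bt_2.H}) rather than a genuine additional necessary condition, so do not spend effort trying to force it.
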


In the case $d_{*}(X)=0$ a boundary triplet for the operator
$H_X^{*}$  was constructed in \cite[Theorems 4.1, 4.7]{KM} by
applying to triplets
 $\widetilde{\Pi}_H^{(n)}$  the regularization procedure
described in Corollary \ref{cor_III.2.2_02}.
    \begin{theorem}(\cite[Theorem 4.7]{KM})\label{th_bt_2.H}
Assume that $d^*(X)<+\infty$ and define the mappings $
\Gamma_{j,H}^{(n)}: W^{2,2}[x_{n-1},x_n]\to\C^2\,,\  n\in \N\,,
\  j\in \{0,1\},$ by setting  %1 either (the first triplet)
%
%
 %1  \begin{equation}\label{IV.1.1_06.H}
 %1\Gamma_{0,H}^{(n)}f
 %1:=\sqrt{d_n}\left(\begin{array}{c}
  %1                 f(x_{n-1}+)\\
   %1               -  f(x_{n}-)
    %1                    \end{array}\right),
 %1\end{equation}
 %1\begin{equation}
 %1\label{IV.1.1_06_1.H}
 %1\Gamma_{1,H}^{(n)}f:=\frac{1}{d_n^{3/2}}
 %1\left(\begin{array}{c}
                                                                            %1d_n f'(x_{n-1}+)+(f(x_{n-1}+)-f(x_{n}-))\\
                                                                             %1d_n f'(x_{n}-)+(f(x_{n-1}+)-f(x_{n}-))
                                                                           %1\end{array}\right),
 %1\end{equation}
%%
%
 %1or ( the second triplet)
%
%
\begin{equation}\label{IV.1.1_06.H.2}
\Gamma_{0,H}^{(n)}f
:=\left(\begin{array}{c}
                  d_n^{1/2}f(x_{n-1}+)\\
                 d_n^{3/2}  f'(x_{n}-)
                       \end{array}\right),
\end{equation}
\begin{equation}
\label{IV.1.1_06_1.H.2}
\Gamma_{1,H}^{(n)}f:=
\left(\begin{array}{c}
                                    d_n^{-1/2} (f'(x_{n-1}+)-f'(x_{n}-))\\
                                   d_n^{-3/2} (f(x_{n}-)-f(x_{n-1}+))-d_n^{-1/2}f'(x_{n}-)
                                  \end{array}\right).
\end{equation}
Then:
%
%
%%%  \begin{itemize}
\item $(i)$  For any $n\in \N$,  $\Pi_H^{(n)}=\{\C^2,
\Gamma_{0,H}^{(n)},\Gamma_{1,H}^{(n)}\}$ is a boundary triplet for
$H_{n}^*$.

\item $(ii)$ The direct sum $\Pi = \bigoplus_{n=1}^\infty\Pi_H^{(n)} =
\{\cH,\Gamma_{0,H},\Gamma_{1,H}\}$
%of the form \eqref{III.1.02}, \eqref{III_07}
with $\cH = l^{2}(\N, \C^{2})$ and ${\Gamma}_{j,H}
=\bigoplus_{n=1}^\infty {\Gamma}^{(n)}_{j,H},$ $j\in\{0,1\},$  is
a boundary triplet for the operator $H_X^* = \bigoplus_{n=1}^\infty
H_{n}^*$.
%%%\end{itemize}
      \end{theorem}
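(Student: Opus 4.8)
The plan is to repeat, almost verbatim, the proof of the Dirac analogue Theorem \ref{th_bt_2}, with the Dirac Weyl functions $\widetilde M_n(\cdot)$ replaced throughout by the Schr\"odinger ones $\widetilde M_{n,H}(\cdot)$ from Lemma \ref{2.3.2.H}(iv) and the common regular point $c^2/2$ replaced by the point $a=0$. First I would check that $0$ lies in a gap common to all the self-adjoint operators $H_{n,0}=H_{n,0}^*$: by Lemma \ref{2.3.2.H}(ii), $\sigma(H_{n,0})\subset[\pi^2/(4d_n^2),+\infty)$, so, since $d^*(X)<\infty$, the interval $(-\varepsilon,\varepsilon)$ with $\varepsilon:=\pi^2/(4d^*(X)^2)>0$ lies in $\bigcap_{n}\rho(H_{n,0})$; hence $0\in\widehat\rho(H_X)$ and the hypotheses of Corollary \ref{cor_III.2.2_02} hold at $a=0$.

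Next I would evaluate $\widetilde M_{n,H}$ and its derivative at $z=0$. Even though the entries of \eqref{IV.1.1_09.2.H} involve $\sqrt z$ and $z^{-1/2}$, the functions $z\mapsto z^{-1/2}\sin(d_n\sqrt z)$, $z\mapsto\sqrt z\,\sin(d_n\sqrt z)$ and $z\mapsto\cos(d_n\sqrt z)$ are entire and even in $\sqrt z$, so $\widetilde M_{n,H}(\cdot)$ is in fact holomorphic near $z=0$; a short Taylor expansion then gives
\begin{equation*}
\widetilde M_{n,H}(0)=\begin{pmatrix}0&1\\1&d_n\end{pmatrix},\qquad
\widetilde M'_{n,H}(0)=\begin{pmatrix}d_n&d_n^2/2\\ d_n^2/2&d_n^3/3\end{pmatrix},
\end{equation*}
which are exactly the formal $c\to\infty$ limits of \eqref{3.33B}. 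Following \eqref{IV.1.1_19} I would set $R_n:=\diag(d_n^{1/2},d_n^{3/2})$ and $Q_n:=\widetilde M_{n,H}(0)$ (both self-adjoint, $R_n$ real), and define the regularized triplets $\Pi_H^{(n)}=\{\C^2,\Gamma_{0,H}^{(n)},\Gamma_{1,H}^{(n)}\}$ by \eqref{III.2.2_08}, i.e. $\Gamma_{0,H}^{(n)}:=R_n\widetilde\Gamma_{0,H}^{(n)}$ and $\Gamma_{1,H}^{(n)}:=R_n^{-1}(\widetilde\Gamma_{1,H}^{(n)}-Q_n\widetilde\Gamma_{0,H}^{(n)})$. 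A direct matrix computation using \eqref{triple2.H} shows these coincide with the mappings \eqref{IV.1.1_06.H.2}, \eqref{IV.1.1_06_1.H.2}; since such a transformation always sends a boundary triplet to a boundary triplet, assertion (i) follows at once.

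For (ii) I would compute the Weyl function of $\Pi_H^{(n)}$, namely $M_{n,H}(z)=R_n^{-1}(\widetilde M_{n,H}(z)-Q_n)R_n^{-1}$, so that $M_{n,H}(0)=0$ and, crucially,
\begin{equation*}
M'_{n,H}(0)=R_n^{-1}\widetilde M'_{n,H}(0)R_n^{-1}=\begin{pmatrix}1&1/2\\1/2&1/3\end{pmatrix},
\end{equation*}
a fixed positive-definite matrix \emph{independent of} $n$ (in contrast with the $n$-dependent matrix \eqref{IV.1.1_08B} that occurs in the Dirac case). Hence $\sup_n\|M'_{n,H}(0)\|<\infty$ and $\sup_n\|(M'_{n,H}(0))^{-1}\|<\infty$, i.e. both conditions \eqref{III.2.2_11} of Corollary \ref{cor_III.2.2_02} are trivially satisfied with the $R_n$ above, and therefore $\Pi=\bigoplus_{n=1}^\infty\Pi_H^{(n)}$ is an ordinary boundary triplet for $H_X^*=\bigoplus_{n=1}^\infty H_n^*$.

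I do not expect a genuine obstacle: the statement is the non-relativistic shadow of the already-established Theorem \ref{th_bt_2}, and the regularization machinery of Corollary \ref{cor_III.2.2_02} does all the work once the Weyl-function data are in place. The only slightly delicate step is recognising that $z=0$ is a bona fide regular point of $\widetilde M_{n,H}(\cdot)$ despite the fractional powers appearing in \eqref{IV.1.1_09.2.H}, and computing $\widetilde M_{n,H}(0)$ and $\widetilde M'_{n,H}(0)$ correctly; after that, everything collapses to the two trivial supremum bounds above.
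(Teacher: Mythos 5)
Your proposal is correct and follows exactly the route the paper indicates: the theorem is quoted from \cite[Theorem 4.7]{KM}, and the surrounding text states that the triplet is obtained by applying the regularization procedure of Corollary \ref{cor_III.2.2_02} to the triplets $\widetilde\Pi_H^{(n)}$, which is precisely your argument (mirroring the proof of Theorem \ref{th_bt_2} at the regular point $a=0$). Your values $\widetilde M_{n,H}(0)$, $\widetilde M'_{n,H}(0)$, the identification of $R_n\widetilde\Gamma_{0,H}^{(n)}$ and $R_n^{-1}(\widetilde\Gamma_{1,H}^{(n)}-Q_n\widetilde\Gamma_{0,H}^{(n)})$ with \eqref{IV.1.1_06.H.2}--\eqref{IV.1.1_06_1.H.2}, and the $n$-independent matrix $M'_{n,H}(0)=\bigl(\begin{smallmatrix}1&1/2\\1/2&1/3\end{smallmatrix}\bigr)$ all check out.
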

%%%%%%%%%%%%%%%%%%%%%%%%%%%%%%%
%%%%%%%%%%%%%%%%%%%%%%%%%%%%%%%%%%%
%
%
 \begin{corollary}\label{s.a.H}
Let $\Pi_H = \{\cH,\Gamma_{0,H},\Gamma_{1,H}\}
:=\bigoplus_{n=1}^\infty \Pi_H^{(n)}$ be the  boundary triplet
for $H^{*}_X$ defined in Theorem \ref{th_bt_2.H}.
%or $\Pi_H:=\Pi^{(a-)}_{H}\bigoplus\left(\bigoplus_{n=1}^\infty \Pi_H^{(n)}\right)\bigoplus\Pi^{(b+)}_{H}$.
%%%$\Pi_{H}$ according to the above list.
Then:
%
%%%\begin{itemize}
\item $(i)$  The set of closed proper extensions of $H_X$ is
parameterized as follows:
   \begin{equation}\label{31d.H}
\widetilde{H_X} = H_{X,\Theta} := H_X^{*}
\upharpoonright\dom(H_{X,\Theta}), \quad
\dom(H_{X,\Theta})=\left\{f\in W^{2,2}(\cI\setminus X):\{\Gamma_{0,H}f,
\Gamma_{1,H}f\}\in\Theta\right\},
  \end{equation}
\noindent where $\Theta\in \tilde \cC(\cH)\setminus \{\{0\}\cup
\cH\oplus\cH\}$.
\item $(ii)$ $H_{X,\Theta}$ is symmetric (self-adjoint)
if and only if
so is ${\Theta}$. Moreover,  $n_{\pm}(H_{X,\Theta}) =
n_{\pm}({\Theta}).$
\item $(iii)$   $H_{X,\Theta}$  is $m$-dissipative
($m$-accumulative) if and only if so is $\Theta$.
\item $(iv)$   $\widetilde{H_X}=H_{X,\Theta}$ is disjoint with
$H_{X,0}:=H_X^{*} \upharpoonright\ker\Gamma_{0,H} (= H_{X,0}^*)$
if and only if $\Theta$ is a closed operator. In this case
relation (\ref{31d.H}) takes the form
  \begin{equation}\label{32d.H}
\widetilde{H_X}=H_{X,\Theta}:=H_X^{*}\upharpoonright
\ker\left(\Gamma_{1,H}-\Theta\Gamma_{0,H}\right).
  \end{equation}
Moreover,  $\widetilde{H_X}=H_{X,\Theta}$ and $H_{X,0}$ %%%:=H_X^{*} \upharpoonright\ker\Gamma_{0,H}$
are transversal if and only if (\ref{32d.H}) holds with
$\Theta\in \left[\cH\right]$.
%%%  \end{itemize}
%
\end{corollary}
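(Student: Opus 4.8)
The plan is to deduce Corollary \ref{s.a.H} directly from the abstract parametrization of proper extensions by means of a boundary triplet, in complete analogy with the proof of Corollary \ref{s.a.Dirac}. First I would invoke Theorem \ref{th_bt_2.H}(ii), which asserts that $\Pi_H = \{\cH,\Gamma_{0,H},\Gamma_{1,H}\}$ with $\cH = l^{2}(\N,\C^{2})$ is an ordinary boundary triplet for $H_X^* = \bigoplus_{n=1}^\infty H_n^*$. Then, combining the definition \eqref{3.28.H} of $H_X$ with the description of $\dom(H_n^*)$ in Lemma \ref{2.3.2.H}, I would record the identification
\[
\dom(H_X^*) = \bigoplus_{n=1}^\infty \dom(H_n^*) = \bigoplus_{n=1}^\infty W^{2,2}[x_{n-1},x_n] = W^{2,2}(\cI\setminus X),
\]
so that the boundary condition $\{\Gamma_{0,H}f,\Gamma_{1,H}f\}\in\Theta$ in \eqref{31d.H} indeed selects a subspace of $W^{2,2}(\cI\setminus X)$.

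With these two facts in hand, all four items follow from Proposition \ref{prop_II.1.2_01} applied to $A = H_X$, $A^* = H_X^*$ and the triplet $\Pi_H$. Indeed, the bijective correspondence $\Ext_{H_X}\ni \widetilde{H_X} = H_{X,\Theta}\leftrightarrow \Theta\in\widetilde\cC(\cH)$ yields (i), the exclusion $\Theta\notin\{\{0\}\cup\cH\oplus\cH\}$ simply removing the two non-proper extensions $H_X$ and $H_X^*$; Proposition \ref{prop_II.1.2_01}(ii) gives the symmetry/self-adjointness equivalence together with $n_\pm(H_{X,\Theta}) = n_\pm(\Theta)$ in (ii); Proposition \ref{prop_II.1.2_01}(i) gives the $m$-dissipativity and $m$-accumulativity equivalences in (iii); and Proposition \ref{prop_II.1.2_01}(iii), together with the representation \eqref{II.1.2_01AB}, gives the disjointness/transversality characterization and formula \eqref{32d.H} in (iv). The extension $H_{X,0} = H_X^*\!\upharpoonright\ker\Gamma_{0,H}$ is self-adjoint by the general remarks following Proposition \ref{prop_II.1.2_01}, since $\Gamma_{0,H}$ corresponds to the ``pure'' relation $\Theta_0 = \{0\}\times\cH$.

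Since the whole argument is a transcription of the abstract scheme, I do not expect any genuine obstacle; the only point deserving attention is the verification that $\dom(H_X^*) = W^{2,2}(\cI\setminus X)$ and that $H_{X,0}$ is self-adjoint with the explicit boundary conditions read off from \eqref{IV.1.1_06.H.2}. These facts --- which are precisely what make \eqref{31d.H} a faithful parametrization in terms of boundary conditions rather than a tautology --- are already contained in Theorem \ref{th_bt_2.H} and Lemma \ref{2.3.2.H}, so the proof amounts to assembling them and quoting Proposition \ref{prop_II.1.2_01}.
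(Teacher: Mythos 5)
Your proof is correct and follows exactly the route the paper intends: the paper states Corollary \ref{s.a.H} without proof, but its proof of the analogous Corollary \ref{s.a.Dirac} consists precisely of the two steps you give, namely the identification $\dom(H_X^*)=\bigoplus_n W^{2,2}[x_{n-1},x_n]=W^{2,2}(\cI\setminus X)$ followed by an application of Proposition \ref{prop_II.1.2_01} to the ordinary boundary triplet from Theorem \ref{th_bt_2.H}. Nothing further is needed.
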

%%%%%%%%%%%%%%%%%%%%%%%%%%%%%%%
\begin{remark}\label{RMH}
Consider  Schr\"odinger operator with point  interactions supported on the set $X=\{x_k\}_{k\in I}\subset \cI=(a,b)$, $x_{k-1}<x_k,\ k\in I,$ where either  $I =\N$ or $I =\Z$. Moreover,  we  assume as  usual that   $\lim_{k\to\infty}x_{k} = b\le \infty$  and  $\lim_{k\to-\infty}x_{k} = a \ge -\infty$ if $I =\Z$.
Now in place of \eqref{3.28.H} the minimal operator is  $H_X:=\bigoplus_{n\in I} H_n$.
In the next section  we  will also use  the operators
  \begin{equation}\label{3.26H}
H_{a-}\bigoplus H_X \bigoplus H_{b+}\,,
%%D_X:=D_{a-}\bigoplus \left(\bigoplus_{n\in \N} D_n\right) \bigoplus D_{b+}\,.
   \end{equation}
where the first (resp. the third)  summand is missing whenever  $a =-\infty$ (resp. $b = \infty$).
%%%%where the component $D_{X}$ is defined by $D_X:=\bigoplus_{n\in I} D_n$ and $I$ is given either
%%%by $I=\N$ (whenever $X=\{x_{n}\}_{n\in\N}$) or $I=\Z$ (whenever $X=\{x_{n}\}_{n\in\Z}$).
%%    \begin{remark}\label{RMD}
%%%Consider Dirac operators acting either in $L^{2}(\R)\otimes\C^{2}$ or in $L^{2}(-\infty,+b)\otimes\C^{2}$  with $b<+\infty$.
The corresponding maximal operators are given  by
$H_X^*= \bigoplus_{n\in I}H^{*}_{n}$ and $H^{*}_{a-}\bigoplus H_X^* \bigoplus H^{*}_{b+}$, respectively.\par
%%the  direct sums of one of the following forms $D^{*}_{a-}\bigoplus \left(\bigoplus_{n=1}^{\infty}D^{*}_{n}\right)
%%\bigoplus D^{*}_{b+}$, $ \bigoplus_{n\in \Z}D^{*}_{n}$ and
The appropriate  boundary triplets for the maximal operators are of the form
$\bigoplus_{n\in I}{\Pi}_{H}^{(n)}$ and $\Pi_{H}^{(a-)}\bigoplus\left(\bigoplus_{n\in I}^{\infty}{\Pi}_{H}^{(n)}\right)\bigoplus\Pi_{H}^{(b+)}$, respectively.
Here  $\Pi_{H}^{(a-)}$ and $\Pi_{H}^{(b+)}$ are the boundary triplets defined in
Lemmata \ref{2.3-.H} and \ref{2.3+.H}.
Using these boundary triplets one parameterizes the set of proper extensions of the operators  $H_X$
and  \eqref{3.26H}  in just the same way as in Corollary \ref{s.a.H}.
\end{remark}

\subsection{The non-relativistic  limit for general realizations}

 In this section we
investigate  the non-relativistic limit of any
$m$-accumulative,  $m$-dissipative (in particular self-adjoint)
extension $\widetilde D_X$ of $D_X$. We confine ourself to the case
of the  half-line $\R_a = (a, +\infty)$ only, i.e. assume that $b=+\infty$.  The cases of Dirac operators either in $L^{2}(-\infty,b)\otimes\C^{2}$ or in $L^{2}(\R)\otimes\C^{2}$, with $b<+\infty$, can be treated similarly by using
the boundary triplets discussed  in Remarks \ref{RMD} and \ref{RMH}.\par

Denote by   $X=\{x_{n}\}_{n=1}^\infty$ the discrete subset of  $\R_a$,
$x_0=a$ (cf. \eqref{3.26AAA}).  We will use  the boundary triplet
$\Pi=\{\cH,\Gamma_0,\Gamma_1\}=
 \bigoplus_{n=1}^{\infty}{\Pi}^{(n)}
%%%\bigoplus\Pi^{(b+)}$  %%%%%$\Pi^{(a-)}
 $
defined in  Theorem \ref{th_bt_2}  %%%%%%Corollary \ref{s.a.Dirac},
 where ${\Pi}^{(n)}$ is given  by (\ref{IV.1.1_12}) and
(\ref{IV.1.1_12.1}).
In what follows we equip all objects
related to the Dirac operators by index $c$ to exhibit
dependence on the velocity of light. For instance,
 we write $D_X^{c}, D_n^{c}, M_{n,c}(\cdot)$ in place of $D_X, D_n, M_{n}(\cdot)$.
   \begin{theorem}\label{2.5}
Assume that $d^*(X)<+\infty$,  $\Pi_D=\{\cH,\Gamma_0,\Gamma_1\}$
and $\Pi_H=\{\cH,\Gamma_{0,H},\Gamma_{1,H}\}$ are the  boundary
triplets for $(D_X^c)^*$ and $H_{X}^*$  defined in Corollaries
\ref{s.a.Dirac} and \ref{s.a.H}, respectively.  Let also
$\widetilde D_X^c$ and $\widetilde H_{X}$ be
$m$-accumulative ($m$-dissipative), in particular, selfadjoint,
extensions of  $D_X^c$ and  $H_X$, respectively, and let
$\Theta_c$ and $\Theta$ be the corresponding boundary relations in
the boundary triplets  $\Pi_D$ and $\Pi_H$, i.e. $\widetilde D_X^c
= {D_{X,\Theta_c}^c}$ and $\widetilde H_{X} =  H_{X,\Theta}$
according to formulae  \eqref{31d} and \eqref{31d.H},
respectively. If  $l^2_0(\N, \C^2)$ is a core for $\Theta$,
 $l^2_0(\N, \C^2)\subset \cap_{c>1} \dom(\Theta_c)\cap
\dom(\Theta)$ and
%%% parameterized (according to Corollaries \ref{s.a.Dirac} and \ref{s.a.H}) by
%%linear relations $\Theta_c$ and $\Theta$, respectively.
%%$\rho(D_{X,\Theta_c}^c)\cap \rho(H_{X,\Theta})\neq \emptyset$ for all sufficiently large values of $c$ and
%
%
\begin{equation}\label{1.23.0}
\lim_{c\to +\infty}\Theta_c h =\Theta h, \qquad h\in l^2_0(\N,
\C^2),
%%\mathcal D(\Theta),
  \end{equation}
%
%
%%where $\mathcal D(\Theta)$ is a core of $\Theta$ and $\mathcal D(\Theta)\subset \dom(\Theta_c)$ for sufficiently large $c.$
%%%in the  norm (strong)  resolvent sense.
then
  \begin{equation}\label{1.23}
s-\lim_{c\to
  +\infty}\left(D_{X,\Theta_c}^c-(z+{c^{2}}/{2})\right)^{-1}=
(H_{X,\Theta}-z)^{-1}\bigotimes \left(\begin{array}{cc} 1 & 0 \\0 &
0\end{array}\right), \qquad z\in \C_+ \, (z\in \C_-).
  \end{equation}
   \end{theorem}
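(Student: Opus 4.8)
The plan is to prove the resolvent convergence \eqref{1.23} by passing to the limit $c\to+\infty$ in the Krein-type formula \eqref{II.1.4_01}, exploiting that all objects occurring there—the $\gamma$-field, the Weyl function, and the resolvent of the reference extension—are explicitly computable from Lemma \ref{2.3} and the regularization in Theorem \ref{th_bt_2}. The first step is to establish the ``free'' non-relativistic limit for the reference operators $D_{X,0}^c=(D_X^c)^*\!\upharpoonright\ker\Gamma_0$ and $H_{X,0}=H_X^*\!\upharpoonright\ker\Gamma_{0,H}$, namely
\begin{equation}\label{free-limit}
s\text{-}\lim_{c\to+\infty}\bigl(D_{X,0}^c-(z+c^2/2)\bigr)^{-1}=(H_{X,0}-z)^{-1}\otimes\begin{pmatrix}1&0\\0&0\end{pmatrix},\qquad z\in\C_+ .
\end{equation}
Since $D_{X,0}^c=\bigoplus_n D_{n,0}^c$ and $H_{X,0}=\bigoplus_n H_{n,0}$ decompose along the intervals, this reduces to a uniform (in $n$) version of the classical block-by-block statement, which can be read off from the explicit $\gamma$-fields \eqref{1.21+}, \eqref{1.21+.H2}: after the shift $z\mapsto z+c^2/2$ one has $k(z+c^2/2)\to\sqrt z$ and $k_1(z+c^2/2)=\sqrt{z/(z+c^2)}\to 0$ like $O(1/c)$, so that the Dirac $\gamma$-field \eqref{1.21+} converges entrywise to the Schrödinger one embedded in the first component, with error estimates uniform over $n$ because $d^*(X)<\infty$ makes $\cos(d_nk(z))$ bounded away from $0$ for $z$ in the relevant region.

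The second step is the convergence of the Weyl functions and regularized $\gamma$-fields of the triplets $\Pi_D$ and $\Pi_H$. Using \eqref{IV.1.1_09A}, $M_{n,c}(z+c^2/2)=R_n^{-1}(\widetilde M_{n,c}(z+c^2/2)-Q_n)R_n^{-1}$, and the analogous Schrödinger formula from \cite[Theorem 4.7]{KM}; from the explicit expression \eqref{IV.1.1_09.2} for $\widetilde M_{n,c}$ together with $k(z+c^2/2)\to\sqrt z$, $c\,k_1(z+c^2/2)\to\sqrt z$, and the scaling matrices $R_n$ in \eqref{IV.1.1_19} (note $\nu(d_n)=(1+(c^2d_n^2)^{-1})^{-1/2}\to 1$), one checks that $M_{n,c}(z+c^2/2)\to M_{n,H}(z)$ uniformly in $n$, hence $M_c(z+c^2/2)\to M_H(z)$ strongly in $[\cH]$; similarly $\gamma_c(z+c^2/2)\to\gamma_H(z)\otimes\binom{1}{0}$ strongly. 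The third, and main analytical, step is to combine \eqref{1.23.0} with these convergences to pass to the limit in
\begin{equation}\label{krein-limit}
(D_{X,\Theta_c}^c-(z+c^2/2))^{-1}-(D_{X,0}^c-(z+c^2/2))^{-1}=\gamma_c(z+c^2/2)(\Theta_c-M_c(z+c^2/2))^{-1}\gamma_c^*(\overline z+c^2/2).
\end{equation}
Here the obstacle is to control $(\Theta_c-M_c(z+c^2/2))^{-1}$: one has $\im M_c(z+c^2/2)>0$ for $z\in\C_+$ and $\Theta_c$ is $m$-accumulative (or self-adjoint), so the resolvent exists and is uniformly bounded by $1/\im z$ times the norm of $\im M_c$; strong convergence then follows from \eqref{1.23.0} (with $l_0^2(\N,\C^2)$ a core for $\Theta$) by the standard argument that if $\Theta_c\to\Theta$ on a common core and the resolvents are uniformly bounded with limits having dense range, then $(\Theta_c-w)^{-1}\to(\Theta-w)^{-1}$ strongly—this is where the core hypothesis and the inclusion $l_0^2(\N,\C^2)\subset\bigcap_{c>1}\dom(\Theta_c)\cap\dom(\Theta)$ are used in an essential way.

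Finally I would assemble the pieces: on the right-hand side of \eqref{krein-limit}, the outer factors converge strongly and the middle factor converges strongly while being uniformly bounded, so the product converges strongly to $\bigl(\gamma_H(z)\otimes\binom{1}{0}\bigr)(\Theta-M_H(z))^{-1}\bigl(\gamma_H(\overline z)^*\otimes(1\ 0)\bigr)$, which by the Schrödinger Krein formula equals $\bigl[(H_{X,\Theta}-z)^{-1}-(H_{X,0}-z)^{-1}\bigr]\otimes\begin{pmatrix}1&0\\0&0\end{pmatrix}$; adding \eqref{free-limit} gives \eqref{1.23}. The case $z\in\C_-$ with $m$-dissipative extensions is symmetric. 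I expect the genuinely delicate point to be the \emph{uniformity in $n$} of all the block estimates—this is exactly what Theorem \ref{th_bt_2} was built to guarantee via $d^*(X)<\infty$—together with the passage to the limit in $(\Theta_c-M_c)^{-1}$ when $\Theta$ is an unbounded relation, for which the core assumption in the hypotheses is indispensable.
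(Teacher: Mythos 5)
Your overall architecture is the same as the paper's: pass to the limit in the Krein-type formula \eqref{II.1.4_01}, treating separately the reference resolvent, the $\gamma$-fields, and the middle factor $(\Theta_c-M_c(z+c^2/2))^{-1}$. However, there are two genuine gaps at exactly the points you yourself flag as delicate.

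First, the control of $(\Theta_c-M_c(z+c^2/2))^{-1}$ is not correctly established. Your claim that this inverse is ``uniformly bounded by $1/\im z$ times the norm of $\im M_c$'' is backwards: since $\Theta_c$ is $m$-accumulative, what bounds $\|(\Theta_c-M_c)^{-1}\|$ is a \emph{strictly positive lower bound} on $\im M_c(z+c^2/2)$, i.e.\ a bound on $\|(\im M_c)^{-1}\|$, not on $\|\im M_c\|$. The non-trivial issue is that this lower bound must be \emph{uniform in $c$} even though the evaluation point $z+c^2/2$ escapes to infinity and the gap of $A_0^c$ grows like $c^2$. The paper obtains $\im M_c(i+c^2/2)\ge\tfrac{\varepsilon_1}{16}I_\cH$ for $c\ge 1$ by combining the integral representation \eqref{WF_intrepr} with the uniform positivity $M'_{n,c}(c^2/2)>\tfrac1{16}I_2$ from \eqref{3.87} and the estimate $\alpha_c-c^2/2\ge\varepsilon_0$ on the distance from $c^2/2$ to the spectrum of $A_0^c$; without some such argument your step three does not go through. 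A closely related omission: the uniform-in-$c$ bound $\sup_{c>1}\|\gamma_c(c^2/2\pm i)\|<\infty$, which you need both to upgrade blockwise convergence of $\gamma_{n,c}$ to strong convergence of $\gamma_c$ and to pass to the limit in the triple product, is asserted but never justified; the paper derives it from $\pm\im M_c(c^2/2\pm i)=\gamma_c^*\gamma_c\le M'_c(c^2/2)<192\,I_\cH$.

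Second, your repeated appeal to uniformity in $n$ is false in the regime the theorem is actually designed for: the paper notes explicitly that the convergences $M_{n,c}(z+c^2/2)\to M_{n,H}(z)$ and $R_n(c)^{-1}\to R_{n,H}^{-1}$ are uniform in $n$ \emph{if and only if} $d_*(X)>0$, since the rate is governed by $cd_n$. This is not fatal — strong convergence of a direct sum needs only blockwise convergence together with a uniform bound on the block norms (which is where $\|(D_{n,0}^c-(z+c^2/2))^{-1}\|\le|\im z|^{-1}$ and the uniform $\gamma$-field bound enter), and convergence of $M_c h$ is only needed for finitely supported $h$ — but the mechanism you invoke is not the one that works. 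Finally, for the free block limit your route via direct computation from \eqref{1.21+} and \eqref{1.21+.H2} is a legitimate alternative to the paper's reduction to the whole-line free Dirac operator via a two-point boundary triplet; the paper itself remarks that the direct calculation is possible.
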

%%%%%%%%%%%%%%%%%%%%%%%%%%%%%%%%%%%%%%%%
\begin{proof} %We  present the  proof only in the case $b=\infty$. In this case  the boundary triplet is  $\Pi=\bigoplus_{n=1}^{\infty}{\Pi}^{(n)}$.  The cases $(a,b)$, $(-\infty, b)$  with $b < \infty$ and the case of the  line $\R$ where  either the summand  $\Pi^{(b+)}$ or both summands $\Pi^{(a-)}$ and $\Pi^{(b+)}$ appears,  are treated  similarly.\par
(i) First we investigate the limit of the Weyl functions
$M_{n,c}(\cdot)$ given by \eqref{IV.1.1_09A}.  It follows from
\eqref{1.5} and \eqref{1.6} that
    \begin{equation}\label{limk}
\lim_{c\to+\infty}\, k(z+c^2/2)=\lim_{c\to+\infty}\,
c\,k_1(z+c^2/2)=\sqrt z\,,\quad z\in \C_{\pm}.
   \end{equation}
Next we find  the  limits as $c\to \infty$  of the basis defect
vectors defined by \eqref{3.7}, \eqref{3.17} and \eqref{3.23},
respectively. Taking into account \eqref{limk} and relation
\eqref{3.43}
%, \eqref{3.51}, \eqref{3.57}
we obtain
$$
\lim_{c\to+\infty}\, f^\pm_n(\cdot, z+c^2/2)
=\begin{pmatrix}f_{n,H}^\pm(\cdot, z)\\0\end{pmatrix}\,,
%\qquad p=n, a-, b+,
$$
where the convergence is in the
Hilbert spaces $L^2(x_{n-1},x_{n})\otimes \C^2$. %$L^2(\R_a^-)\otimes
%\C^2$ and $L^2(\R^+_b)\otimes \C^2$, according to the choice of
%the index $p$.
%
\par
According to  \eqref{IV.1.1_09A} and \cite[equation (83)]{KM}
  \begin{equation}\label{3.81}
M_{n,c}(z) = R_n(c)^{-1}(\wt M_{n,c}(z)-Q_n)R_n(c)^{-1}\,,\qquad
M_{n,H}(z)= R_{n,H}^{-1}(\wt M_{n,H}(z)-Q_n)R_{n,H}^{-1}\,,
    \end{equation}
where $\wt M_{n,c}(\cdot)$ and $\wt M_{n,H}(\cdot)$ are defined
by \eqref{IV.1.1_09.2} and \eqref{IV.1.1_09.2.H},  respectively.
Besides, $R_n(c):= R_n$ and $R_{n,H}$ are defined by
\eqref{IV.1.1_19} and \cite[formula (94)]{KM}, respectively, i.e.
      \begin{equation}\label{IV.1.1_19H}
 R_n(c) := R_n= \left(\begin{array}{cc}
                           \gd_n^{1/2} & 0\\
                   0 & \gd_n^{3/2}\sqrt{1+\frac{1}{c^{2}\,\gd_{n}^{2}}}
                           \end{array}\right)\quad \text{and}\qquad
 R_{n,H}:=\left(\begin{array}{cc}
                           \gd_n^{1/2} & 0\\
                   0 & \gd_n^{3/2}
                           \end{array}\right)\,, \qquad n\in \N.
\end{equation}
 Clearly,
     \begin{equation}\label{3.84}
(1 + c^{-2}d^{-2}_n)^{-1}\to 1\qquad \text{as}\qquad c\to\infty.
  \end{equation}
It follows from   \eqref{IV.1.1_19H}  and  \eqref{3.84} that
$\lim_{c\to\infty}R_n(c) = R_{n,H},\ n\in \N,$ and
   \begin{equation}\label{3.85}
 R_n(c)^{-1} :=\left(\begin{array}{cc}
                           \gd_n^{-1/2} & 0\\
                   0 & \gd_n^{-3/2}(1 + c^{-2}d^{-2}_n)^{-1/2}
                           \end{array}\right) \to R^{-1}_{n,H} \qquad  \text{as}\quad c\to\infty, \quad n\in \N.
\end{equation}

Next,  combining    \eqref{IV.1.1_09.2} and \eqref{IV.1.1_09.2.H}
with  \eqref{3.81}   and taking into account relations
\eqref{limk} and \eqref{3.85} we arrive at
%%\eqref{limk},  \eqref{IV.1.1_09.2} and \eqref{IV.1.1_09.2.H},
%%%\eqref{IV.1.1_09.a}, \eqref{IV.1.1_09.b}
%%%\eqref{IV.1.1_09.a.H}, \eqref{IV.1.1_09.2.Hb},
%
%
   \begin{equation}\label{limMn}
 \lim_{c\to+\infty} M_{n,c}(z+c^2/2) = M_{n,H}(z) \,, \qquad n\in \N.
  \end{equation}

We emphasize however that the convergence in \eqref{3.84},  hence
the convergence in \eqref{3.85}  and \eqref{limMn},  is uniform in
$n\in\mathbb N$ if and only if $d_*(X)>0$.

(ii)
In this step we show that
     \begin{equation}\label{3.87AA}
s-\lim_{c\to+\infty}\left(\Theta_c - M_c(z+c^2/2)\right)^{-1} =
\left(\Theta - M_H(z)\right)^{-1}, \qquad z\in{\mathbb C}_+.
     \end{equation}
We consider the case of $m$-accumulative $\Theta$, the case of
$m$-dissipative $\Theta$  is treateded similarly.

Straightforward calculations show that the matrices
$M'_{n,c}(c^2/2)$ are uniformly positive in $n\in\mathbb N$ and
$c\in(0,\infty)$. Indeed, it easily follows from
\eqref{IV.1.1_08B} that the following  inequalities hold
    \begin{equation}\label{3.87}
M'_{n.c}(c^2/2) = \begin{pmatrix}
1 & \frac{1}{2}\left(1+\frac{1}{c^{2}\,\gd_n^2}\right)^{-1/2}\\
\frac{1}{2}\left(1+\frac{1}{c^{2}\,\gd_n^2}\right)^{-1/2} &
\frac{1}{3}\frac{3 + c^2\gd_n^2}{1 + c^{2}\,\gd_n^2}
\end{pmatrix} > \frac{1}{16}I_2,\qquad c\in(0,\infty),\quad
n\in\N.
    \end{equation}
Note that the Weyl function $M_c(\cdot)$ corresponding to the
triplet $\Pi(c)=\bigoplus_1^\infty\Pi_n(c)$ is
$M_c(\cdot)=\bigoplus^{\infty}_1 M_{n,c}(\cdot)\in[\cH].$
Combining this fact with inequalities \eqref{3.87} and the
integral representation of the $R[\cH]$-function $M_c(\cdot)$ (see
\eqref{WF_intrepr}) we obtain
     \begin{equation}\label{3.88A}
M'_c(c^2/2) = \bigoplus^{\infty}_{n=1} M_{n,c}'(c^2/2) =
\int_{{\mathbb
R}\setminus(-\alpha_c,\alpha_c)}\frac{1}{(t-c^2/2)^2}d\Sigma_c(t)
> \frac{1}{16}I_\cH,\qquad c\in(0,\infty),
   \end{equation}
where  $(-\alpha_c,\alpha_c) :=
\cap_{n=1}^\infty (-\alpha_{n,c},\alpha_{n,c})$ and $\alpha_c >
c^2/2$. Further, it follows from \eqref{spetcom} that with some
$\varepsilon_0>0$
    \begin{eqnarray}
\alpha_c - \frac{c^2}{2} = \sqrt{\frac{c^2\pi^2}{4 d^*(X)^2} +
\left(\frac{c^2}{2}\right)^2} - \frac{c^2}{2} = \frac{c^2}{2}
\left[\sqrt{\frac{\pi^2}{d^*(X)^2 c^2}+1}-1 \right] \nonumber  \\
= \frac{2^{-1}\pi^2 d^*(X)^{-2}}{\sqrt{\pi^2
d^*(X)^{-2}c^{-2}+1}+1}\ge\varepsilon_0\qquad \text{for}\qquad
c\ge 1.
    \end{eqnarray}
Hence $|t-c^2/2|>\alpha_c - c^2/2  \ge \varepsilon_0,\ t\in{\mathbb
R}\setminus\bigl(-\alpha_c, \alpha_c\bigr)$ and  $c>1.$ In turn,
this inequality yields
   \begin{equation}
\frac{(t-c^2/2)^2}{ (t-c^2/2)^2 + 1} \ge \varepsilon_1 :=
\frac{\varepsilon^2_0}{1+\varepsilon^2_0}, \qquad t\in{\mathbb
R}\setminus\bigl(-\alpha_c, \alpha_c\bigr), \quad c>1.
   \end{equation}
Combining this inequality with \eqref{3.88A}  and using the
integral representation \eqref{WF_intrepr} of the Weyl function $M_c(\cdot)$ we arrive at the
following  uniform estimate
      \begin{equation}\label{3.92}
\im M_c(i+ c^2/2)=  \int_{{\mathbb
R}\setminus(-\alpha_c,\alpha_c)}\frac{1}{(t-c^2/2)^2 +
1}d\Sigma_c(t) \ge \frac{\varepsilon_1}{16}I_\cH,\qquad  c\ge 1.
      \end{equation}
Since $\Theta_c$ is $m$-accumulative, we easily get from
\eqref{3.92} that
  \begin{eqnarray*}
\qquad  \left\|\left(\Theta_c - M_c(i+c^2/2)\right)h\right\|\cdot \|h\| \ge
\left|\left((\Theta_c-M_c(i+c^2/2))h,h\right)\right| \ge \left|\im \left((\Theta_c-
M_c(i+c^2/2))h,h\right)\right|
 \nonumber \\
= - \im\left(\Theta_c h,h\right) + \im \left(M_c(i+c^2/2)h,h\right)  \ge
\frac{1}{16}\varepsilon_1\|h\|^2,\qquad c\ge 1, \qquad h\in l^2_0(\N, \C^2). %%%\mathcal D(\Theta).
  \end{eqnarray*}
Since $l^2_0(\N, \C^2)$ is dense in $\cH = l^2(\N, \C^2)$, this
inequality yields
    \begin{equation}\label{3.88}
\left\|\left(\Theta_c - M_c(i + c^2/2)\right)^{-1}\right\| \le
16\varepsilon_1^{-1}, \qquad c\ge 1.
  \end{equation}
%
%
%%%Note that $\sup_n\|M_{n,c}(z)||=C(z)<\infty$ for any $z\in{\mathbb C_+}$ since $M(z)$ is bounded.
Further, relations  \eqref{limMn} immediately imply $
\lim_{c\to+\infty} M_c(z+c^2/2)h = M_H(z)h, \   h\in l^2_0(\N,
\C^2).$
Combining this relation  with \eqref{1.23.0}, yields
    \begin{equation}\label{3.89}
\lim_{c\to+\infty}\, \left(\Theta_c - M_c(z+c^2/2)\right)h = \left(\Theta -
M_H(z)\right)h,   \qquad h\in l^2_0(\N, \C^2), \quad z\in{\mathbb C}_+.
    \end{equation}
In turn, combining \eqref{3.89} with the uniform estimate  \eqref{3.88} and applying
\cite[Theorem 8.1.5]{Kato66}  we arrive at \eqref{3.87AA}.

(iii)  In this step we prove the following limit relation
  \begin{eqnarray}\label{1.24.1}
s-\lim_{c\to+\infty}\, \gamma_c(z +c^2/2)\left(\Theta_c - M_c(z +
c^2/2)\right)^{-1}\gamma_c^{*}(\overline{z} + c^2/2) \nonumber \\
=\left(\gamma_H(z)\left(\Theta - M_H(z)\right)^{-1}\gamma_H^{*}(\overline{z})\right)\bigotimes
\left(\begin{array}{cc} 1 & 0 \\0 & 0\end{array}\right)\,,\qquad
z\in \C_+.
   \end{eqnarray}
It follows from the first identity in \eqref{IV.1.1_09AAA} and
Definition \ref{def_Weylfunc} (see formula \eqref{II.1.3_01}) that
$$
\gamma_{n,c}(z)=\wt\gamma_{n,c}(z)R_n^{-1} \qquad\text{and}\qquad
\gamma_{n,H}(z)=\wt\gamma_{n,H}(z)R_{n,H}^{-1}, \qquad n\in \N.
$$
Combining  these identities  with  \eqref{1.21+} and
\eqref{1.21+.H2} and taking  \eqref{limk} and \eqref{3.85} into
account we obtain
%%%again and taking into account relations \eqref{1.21+},  \eqref{1.21+.H2},  it follows that
%
%
\begin{equation}\label{limgn}
\lim_{c\to+\infty}\, \gamma_{n,c}(z+c^2/2)
=\left(\begin{array}{cc} (\gamma_{n,H}(z))_1 & (\gamma_{n,H}(z))_2
\\0 & 0\end{array}\right), \qquad  \lim_{c\to+\infty}\, \gamma^*_{n,c}(z+c^2/2)
=\left(\begin{array}{cc} (\gamma^*_{n,H}(z))_1 & 0
\\(\gamma^*_{n,H}(z))_2  & 0\end{array}\right).
\end{equation}
Here $(\gamma_{n,H}(z))_j$ denotes the $j$th component of the
vector function $\gamma_{n,H}(z)$ and the convergence is
understood in $L^2(x_{n-1},x_{n})\otimes \C^4$ and $\cH_n$,
respectively.

Next we prove that the family $\gamma_c(z + c^2/2) =
\bigoplus^{\infty}_{n=1}\gamma_{n,c}(i+c^2/2)$ is uniformly
bounded in $c>1.$ More precisely, assuming for simplicity that
$z=i$ we show that
      \begin{equation}\label{3.96}
\sup_{c>1}\|\gamma_c(c^2/2 \pm i)\| = \sup_{c>1}\|\gamma_c^*(c^2/2
\pm i)\| \le 8\sqrt 3.
    \end{equation}
It follows from   \eqref{II.1.3_02}  that
     \begin{equation}\label{3.97}
\im M_{c}(c^2/2 \pm i)= \im(c^2/2 \pm i)\gamma_{c}^*(c^2/2 \pm
i)\gamma_c(c^2/2 \pm i)= \pm \gamma_c^*(c^2/2 \pm
i)\gamma_c(c^2/2\pm i).
 \end{equation}
So, it suffices to estimate $\im M_c(c^2/2 \pm i)$ from above.

Taking into account formula \eqref{IV.1.1_08BB} where  ${\Delta(c^2/2)} :=
12(1+c^2 d^2_n)(12+c^2 d^2_n)^{-1}< 12$,  we obtain from \eqref{IV.1.1_08BB} and \eqref{3.87}  that
      \begin{equation}
(M_{n,c}'(c^{2}/{2}))^{-1} = \frac{1}{\Delta(c^2/2)}
    \begin{pmatrix}
\frac{1}{3}\frac{3 + c^2\gd_n^2}{1 + c^{2}\,\gd_n^2} &
-\frac{1}{2}\left(1+\frac{1}{c^{2}\,\gd_n^2}\right)^{-1/2}\\
-\frac{1}{2}\left(1+\frac{1}{c^{2}\,\gd_n^2}\right)^{-1/2} & 1
\end{pmatrix}
%%\end{split}
> \frac{1}{12}\cdot\frac{1}{16}, \quad   n\in\mathbb N,\  c>1.
  \end{equation}
Hence
    \begin{equation}\label{3.99}
{16}^{-1} < M'_{n,c}(c^2/2)<192 \qquad \text{and}\qquad
M'_c(c^2/2) = \bigoplus^{\infty}_{n=1} M_{n,c}'(c^2/2) <192 \cdot I_{\cH}.
  \end{equation}
On the other hand, it follows from \eqref{3.92} and  \eqref{3.88A}
with account of  \eqref{3.99}  that
    \begin{eqnarray}
\pm \im M_c(c^2/2 \pm i)=  \int_{{\mathbb
R}\setminus(-\alpha_c,\alpha_c)}\frac{1}{(t-c^2/2)^2 +
1}d\Sigma_c(t) \le \int_{{\mathbb
R}\setminus(-\alpha_c,\alpha_c)}\frac{1}{(t-c^2/2)^2}d\Sigma_c(t) \nonumber \\
  = M'_c(c^2/2) <192\cdot I_{\cH}, \quad c\ge 1.
    \end{eqnarray}
Combining this estimate with \eqref{3.97}  we arrive at \eqref{3.96}.

Further, note  that the convergence in \eqref{limgn} implies the convergence of
finite direct sums. Finally, combining this fact  with the uniform
estimate \eqref{3.96} we obtain
   \begin{equation}\label{limgnNEW}
s-\lim_{c\to+\infty}\, \gamma_{c}(c^2/2 \pm i)
=\left(\begin{array}{cc} (\gamma_{H}(\pm i))_1 & (\gamma_{H}(\pm
i))_2
\\0 & 0\end{array}\right), \qquad s-\lim_{c\to+\infty}\, \gamma^*_{c}(c^2/2 \pm i)
=\left(\begin{array}{cc} (\gamma^*_{H}(\pm i))_1 & 0
\\(\gamma^*_{H}(\pm i))_2  & 0\end{array}\right),
  \end{equation}
where $(\gamma_{H}(\cdot))_j,\ j\in \{1,2\},$ denotes the $j$th component of the
vector function $\gamma_{H}(\cdot)$. The convergence in  \eqref{limgnNEW} holds
in the spaces $L^2(a,b)\otimes \C^4$ and $\cH$, respectively.
Combining relations  \eqref{limgnNEW} with \eqref{3.87AA}  we arrive
at \eqref{1.24.1}.

(iv)  In this step we prove  formula \eqref{1.23} for
the operators  $D_X^c = \bigoplus_{n=1}^\infty D_{n}^c$
and $H_X = \bigoplus_{n=1}^\infty H_n$ assuming for the moment that the following limit
formula holds
  \begin{equation}\label{convn0New}
u-\lim_{c\to+\infty}\left(D_{n,0}^{c}- (z+c^{2}/2)\right)^{-1}=
\left(H_{n,0}-z\right)^{-1}\bigotimes \left(\begin{array}{cc} 1 & 0
\\0 & 0\end{array}\right)\,, \qquad n\in \N.
  \end{equation}
Here  $D^c_{n,0}:=D_n^{*}\upharpoonright\ker{\Gamma}^{(n)}_{0},$
 ${H}_{n,0}:=H_{n}^{*}\upharpoonright\ker{\Gamma}^{(n)}_{0,H}$ and
$$
\dom(D^c_{n,0}) = \ker{\Gamma}^{(n)}_{0} = \{\{f_1, f_2\}^{\tau}\in
W^{1,2}[x_{n-1},x_{n}]\otimes\C^{2}:  f_{1}(x_{n-1}+)=
f_{2}(x_{n}-)=0\} \quad \text{and} \quad
\dom({H}_{n,0}) = W^{2,2}_0[x_{n-1},x_{n}].
$$
The proof  of \eqref{convn0New}  is  postponed to the next step. Note that
convergence in \eqref{convn0New} is uniform in $L^2[x_{n-1}, x_{n}]\otimes
\C^2.$   \par

According to the Krein-type formula for resolvents (see
\eqref{II.1.4_01}) %%%{prop_II.1.4_spectrum})
\begin{equation}\label{1.24}
\left(D_{X,\Theta_c}^c - z\right)^{-1} = \left(D^c_{X,0} -
z\right)^{-1} + \gamma_c(z)\left(\Theta_c -
M_c(z)\right)^{-1}\gamma_c^{*}(\overline{z})
\end{equation}
and
   \begin{equation}\label{1.24+}
\left(H_{X,\Theta}-z\right)^{-1}=\left(H_{X,0}-z\right)^{-1}+\gamma_H(z)\left(\Theta-M_H(z)\right)^{-1}\gamma_H^{*}(\overline{z})\,.
  \end{equation}
Here the realizations $D^c_{X,0}$ and $H_{X,0}$ are given by
      \begin{equation}\label{3.106}
D^c_{X,0} = \bigoplus_{n=1}^\infty D^c_{n,0}=
\bigoplus_{n=1}^\infty (D^c_{n,0})^* = (D^c_{X,0})^* \qquad \text{and} \qquad
H_{X,0}= \bigoplus_{n=1}^\infty H_{n,0} =
\bigoplus_{n=1}^\infty (H_{n,0})^* = H_{X,0}^*.
   \end{equation}
Combining  relations  \eqref{convn0New}  with  \eqref{3.106} and noting
that $\left\|\left(D_{n,0}^{c}- (z+c^{2}/2)\right)^{-1}\right\| \le |\im
z|^{-1}$ for any $n$ and  $c>0$, we obtain
  \begin{equation}\label{conv0New}
s-\lim_{c\to+\infty}\left(D_{X,0}^{c}- (z+c^{2}/2)\right)^{-1}=
\left(H_{X,0}-z\right)^{-1}\bigotimes \left(\begin{array}{cc} 1 & 0
\\0 & 0\end{array}\right)\,,\qquad z\in \C_+ .
  \end{equation}
Finally, combining this relation with \eqref{1.24.1} and applying the Krein
type formulae \eqref{1.24} and  \eqref{1.24+}  we arrive at
\eqref{1.23}.

\   (v) In this step we prove formula \eqref{convn0New} as well as the following formulas

  \begin{equation}\label{conv0NewForTwoPoints}
u-\lim_{c\to+\infty}\left(D_{\tau}^{c}- (z+c^{2}/2)\right)^{-1}=
\left(H_{\tau}-z\right)^{-1}\bigotimes \left(\begin{array}{cc} 1 & 0
\\0 & 0\end{array}\right)\,,\qquad \tau = a-, b+, \qquad z\in \C_+ .
  \end{equation}
All formulae can be obtained by direct calculations but we prefer to extract them
from the classical result for the "free" Dirac operator considered on the whole line.
To be precise denote  by  $D^c_{\rm  free}$ and $H_{\rm free}$  the "free" Dirac and
Schr\"{o}dinger operators generated by deferential expressions \eqref{1.2} and $-\frac{d^2}{dx^2}$ on $L^2(\R)\otimes\C^2$ and $L^2(\R)$, respectively.
By definition,  $\dom(D^c_{\rm free})=W^{1,2}(\R)\otimes\C^2$
and $\dom(H_{\rm free})=W^{2,2}(\R)$.
Then  according to the classical result (see e.g. \cite[Chapter 6]{Tha92})
  \begin{equation}\label{free}
u-\lim_{c\to+\infty}\left(D^{c}_{\rm free}-(z+c^{2}/2)\right)^{-1}=
\left(H_{\rm free}-z\right)^{-1}\bigotimes \left(\begin{array}{cc} 1 & 0 \\0 & 0\end{array}\right)\,.
  \end{equation}
\noindent
To this end we introduce  a two points  set $Y:= \{x_{n-1},x_n\} =: \{a,b\}$
and consider  the boundary triplet $\Pi_Y^c = \Pi^{(a-)}\bigoplus \wt\Pi^{(n)}\bigoplus \Pi^{(b+)}$  constructed  in Corollary \ref{s.a.Dirac} for the operator  $(D^c_Y)^* =  (D^c_{a^-}\bigoplus D^c_n\bigoplus D^c_{b^+})^*$.
In other words, $\Pi_{Y}^c =\{\mathbb C^4, \Gamma^c_{0},\Gamma^c_{1}\}$  where $\Gamma_{j}^c: = \Gamma^{(a-)}_{j}\bigoplus\wt{\Gamma}^{(n)}_{j}\bigoplus\Gamma^{(b+)}_{j},$\  $j\in\{0,1\},$ and $\Gamma^{(a-)}_{j},$ $\wt{\Gamma}^{(n)}_{j},$\   and   $\Gamma^{(b+)}_{j}$ are given by \eqref{triple.a}, \eqref{triple2} and \eqref{triple.b}, respectively.

%%%Denote also  by  $D^c_{\rm  free}$ and $H_{\rm free}$  the free Dirac and
%%Schr\"{o}dinger operators on $L^2(\R)\otimes\C^2$ and $L^2(\R)$,
%%%respectively. By definition,  $\dom(D^c_{\rm free})=W^{1,2}(\R)\otimes\C^2$ and $\dom(H_{\rm free})=W^{2,2}(\R)$.

It is easily seen that in the triplet $\Pi_Y^c$ the operator $D^c_{\rm free}$ is given by
     \begin{equation}\label{4.52}
\Gamma_{1}^cf = \begin{pmatrix}
f_1(a-)\\
icf_2(a+)\\
f_1(b-)\\
icf_2(b+)
\end{pmatrix}
=
   \begin{pmatrix}
0&1&0& 0\\
1& 0&0 &0\\
0&0& 0 &1\\
0& 0&1& 0
   \end{pmatrix}
\begin{pmatrix}
ic f_2(a-)\\
f_1(a+)\\
ic f_2(b-)\\
f_1(b+)
\end{pmatrix} =: \Theta_{\rm  free}\Gamma_{0}^cf, \quad f\in \dom\left((D^c_Y)^*\right),
   \end{equation}
i.e. $D^c_{\rm free} = (D^c_Y)^*\upharpoonright\ker(\Gamma^c_1 - \Theta_{\rm  free}\Gamma^c_0)$.
We emphasize that  despite of the dependence of the  triplets $\Pi_Y^c$ on $c$,  the boundary operators $\Theta_{\rm free} = \sigma_1 \oplus \sigma_1$ do not depend on $c.$ Here $\sigma_1 =
\begin{pmatrix}
0&1\\
1&0
\end{pmatrix}$\  (see definition \eqref{pauli}).

Alongside the triplet $\Pi_Y^c$ we  consider the boundary triplet  $\Pi_{Y,H}$ %%% := \{\mathbb C^n,\Gamma_{0,H},\Gamma_{1,H}\}$  $\Pi_H^{(a-)}\bigoplus\Pi_H^{(n)}\bigoplus \Pi_H^{(b+)}$
for the maximal Schr\"{o}dinger operator
$$
H^*_Y=H^*_{a_-}\bigoplus H^*_n\bigoplus H^*_{b^+},
$$
given in Remark \ref{RMH} (see also Theorem  \ref{th_bt_2.H}). Clearly,
$$
\Pi_{Y,H} =  \{\mathbb C^n,\Gamma_{0,H},\Gamma_{1,H}\} := \Pi_H^{(a-)}\bigoplus\Pi_H^{(n)}\bigoplus \Pi_H^{(b+)}\quad
\text{with}\quad\Gamma_{j,H} := \Gamma^{(a-)}_{j,H}\bigoplus\wt{\Gamma}^{(n)}_{j,H}\bigoplus\Gamma^{(b+)}_{j,H}.
%%,\  j\in\{0,1\}.
$$
%
%

%%% Here $\Gamma_{j,H} := \Gamma^{(a-)}_{j,H}\bigoplus\wt{\Gamma}^{(n)}_{j,H}\bigoplus\Gamma^{(b+)}_{j,H},\  j\in\{0,1\}$
Here  $\Gamma^{(a-)}_{j,H}$, $\wt{\Gamma}^{(n)}_{j,H}$ and $\Gamma^{(b+)}_{j,H},\ j\in\{0,1\},$
 are given by \eqref{triple.a.H}, \eqref{triple2.H} and \eqref{triple.b.Hb},  respectively.
It is easily seen that in the boundary triplet  $\Pi_{Y,H}$ the free Schr\"{o}dinger operator $H_{\rm free}$  is given by $H_{\rm free} = H_Y^*\upharpoonright\ker(\Gamma_{1,H} - \Theta_{\rm  free}\Gamma_{0,H})$ with the same boundary operator $\Theta_{\rm free}$ as in \eqref{4.52}.

Consider formulae   \eqref{1.24}, \eqref{1.24+} and the limit relation \eqref{1.24.1}  with the set $Y=\{a,b\}$ in place  of $X$ and  $\Theta_c=\Theta=\Theta_{\rm free}$. In this case   \eqref{1.24.1}  holds in  the uniform sense  since $Y$ is finite.
Taking this relation into account and passing to the limit as $c\to\infty$ in  the
Krein type formulae   \eqref{1.24}, \eqref{1.24+}, with account of
 \eqref{free} we arrive at the identity
  \begin{equation}
  %\label{conv0NewForTwoPoints}
u-\lim_{c\to+\infty}\left(D_{Y,0}^{c}- (z+c^{2}/2)\right)^{-1}=
\left(H_{Y,0}-z\right)^{-1}\bigotimes \left(\begin{array}{cc} 1 & 0
\\0 & 0\end{array}\right)\,,\qquad z\in \C_+ .
  \end{equation}
In turn, it implies \eqref{convn0New} as well as relations \eqref{conv0NewForTwoPoints}.
   \end{proof}
%%%%%%%%%%%%%%%%%%%%%%%%%%%%%%
\begin{corollary}
Assume the conditions of Theorem  \ref{2.5}. Assume, in addition, that  $d_*(X)>0$ (in particular,  $X$ is finite)
and that  in place of \eqref{1.23.0}  the  uniform resolvent convergence holds, i.e.
\begin{equation}\label{1.23.0Unif}
\lim_{c\to +\infty}\|(\Theta_c -z)^{-1} - (\Theta -z)^{-1}\| = 0, \qquad z\in \C_+ \, (z\in \C_-).
%%\mathcal D(\Theta),
  \end{equation}
Then in place of \eqref{1.23} the uniform resolvent convergence holds, i.e.
  \begin{equation}\label{1.23Unif}
u-\lim_{c\to
  +\infty}\left(D_{X,\Theta_c}^c-(z+{c^{2}}/{2})\right)^{-1}=
(H_{X,\Theta}-z)^{-1}\bigotimes \left(\begin{array}{cc} 1 & 0 \\0 &
0\end{array}\right), \qquad z\in \C_+ \, (z\in \C_-).
  \end{equation}
\end{corollary}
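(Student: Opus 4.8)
The plan is to revisit the proof of Theorem \ref{2.5} and observe that, under the additional assumption $d_*(X)>0$ (together with the standing hypothesis $d^*(X)<\infty$, so that $0<d_*(X)\le d^*(X)<\infty$), each limit relation established there in the strong operator sense is in fact an operator-norm limit, and that replacing \eqref{1.23.0} by the norm-resolvent convergence \eqref{1.23.0Unif} then propagates through the Krein-type formulae \eqref{1.24}, \eqref{1.24+} to give \eqref{1.23Unif}. The backbone is again the decomposition
\[
\left(D_{X,\Theta_c}^c-(z+c^2/2)\right)^{-1}=\left(D^c_{X,0}-(z+c^2/2)\right)^{-1}+\gamma_c(z+c^2/2)\left(\Theta_c-M_c(z+c^2/2)\right)^{-1}\gamma_c^*(\overline{z}+c^2/2),
\]
together with the analogous identity \eqref{1.24+} for $H_{X,\Theta}$, so it suffices to prove norm convergence of the two summands separately. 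The enabling structural fact is that $X$ is either finite — in which case all direct sums below are finite and uniformity is automatic — or infinite with $d_n\in[d_*(X),d^*(X)]$ for every $n$, and in the second case every block-wise limit occurring in the proof of Theorem \ref{2.5} is uniform in $n$, since its only $n$-dependence enters through $d_n$, which now ranges over the compact set $[d_*(X),d^*(X)]\subset(0,\infty)$.

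Granting this, I would first treat the diagonal terms. Since $D^c_{X,0}=\bigoplus_n D^c_{n,0}$ and $H_{X,0}=\bigoplus_n H_{n,0}$ are block-diagonal, $\|(D^c_{X,0}-(z+c^2/2))^{-1}-(H_{X,0}-z)^{-1}\otimes\diag(1,0)\|=\sup_n\|(D^c_{n,0}-(z+c^2/2))^{-1}-(H_{n,0}-z)^{-1}\otimes\diag(1,0)\|$, and the fibre limit \eqref{convn0New}, obtained in step (v) of the proof of Theorem \ref{2.5} from the free-operator limit \eqref{free} via the two-point boundary triplet on $\{x_{n-1},x_n\}$, is uniform in $n$ once $d_n\in[d_*(X),d^*(X)]$; hence the diagonal terms converge in norm. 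For the Krein correction term I would upgrade the block-wise limits \eqref{3.85}, \eqref{limMn}, \eqref{limgn} of the proof of Theorem \ref{2.5} to norm limits in the same fashion: using $\|\bigoplus_n A_n\|=\sup_n\|A_n\|$ one gets $M_c(z+c^2/2)\to M_H(z)$, $\gamma_c(z+c^2/2)\to\gamma_H(z)\otimes(\cdot)$ and $\gamma_c^*(\overline{z}+c^2/2)\to\gamma_H^*(\overline{z})\otimes(\cdot)$ in operator norm, while the bounds \eqref{3.88}, \eqref{3.96} (proved there for $z=i$, but with the same argument for any fixed $z\in\C_+$, resp. $\C_-$) furnish $c$-uniform bounds on $\|(\Theta_c-M_c(z+c^2/2))^{-1}\|$, $\|\gamma_c(z+c^2/2)\|$ and $\|\gamma_c^*(\overline{z}+c^2/2)\|$. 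Once the norm limit $(\Theta_c-M_c(z+c^2/2))^{-1}\to(\Theta-M_H(z))^{-1}$ is available, the elementary estimate $\|A_cB_cC_c-ABC\|\le\|A_c-A\|\,\|B_c\|\,\|C_c\|+\|A\|\,\|B_c-B\|\,\|C_c\|+\|A\|\,\|B\|\,\|C_c-C\|$ yields norm convergence of the correction term, and adding it to the diagonal term gives \eqref{1.23Unif}; equivalently, this is relation \eqref{1.24.1} now holding in norm.

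The main obstacle is therefore the one new ingredient: deducing $(\Theta_c-M_c(z+c^2/2))^{-1}\to(\Theta-M_H(z))^{-1}$ in operator norm from (a) $M_c(z+c^2/2)\to M_H(z)$ in norm (with $\sup_{n,\,c\ge1}\|M_{n,c}(z+c^2/2)\|<\infty$, which follows from the same uniform-in-$n$ convergence together with boundedness of $M_{n,H}(z)$ for $d_n\in[d_*(X),d^*(X)]$), (b) the $c$-uniform bound $\|(\Theta_c-M_c(z+c^2/2))^{-1}\|\le\mathrm{const}$ of step (ii) of the proof of Theorem \ref{2.5} (here the $m$-accumulativity, resp. $m$-dissipativity, of $\Theta_c$ is used, as is invertibility of the limit $(\Theta-M_H(z))^{-1}$), and (c) the hypothesis \eqref{1.23.0Unif}. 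I would carry this out by two resolvent-identity telescopings: write $(\Theta_c-M_c)^{-1}-(\Theta-M_H)^{-1}=[(\Theta_c-M_c)^{-1}-(\Theta_c-M_H)^{-1}]+[(\Theta_c-M_H)^{-1}-(\Theta-M_H)^{-1}]$, estimate the first bracket by $\|M_c-M_H\|$ after a Neumann-series argument showing that $(\Theta_c-M_H(z))^{-1}$ exists with uniformly bounded norm for large $c$; then, fixing $\zeta_0\in\C_+$, express the second bracket through $(\Theta_c-\zeta_0)^{-1}-(\Theta-\zeta_0)^{-1}$ plus correction terms and solve the resulting identity for the difference using invertibility of $I-(M_H(z)-\zeta_0)(\Theta-M_H(z))^{-1}$. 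This bookkeeping is routine, but the uniform bounds that make the Neumann series converge are precisely those already recorded in the proof of Theorem \ref{2.5}, which is why nothing beyond $d_*(X)>0$ and \eqref{1.23.0Unif} is needed.
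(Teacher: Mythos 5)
Your proposal is correct and follows essentially the same route as the paper's (very terse) proof: both upgrade each strong limit in the proof of Theorem \ref{2.5} — the diagonal terms \eqref{conv0New}, the Weyl-function and $\gamma$-field limits, and the key relation \eqref{3.87AA} — to operator-norm limits using the uniformity in $n$ afforded by $0<d_*(X)\le d^*(X)<\infty$, and then combine them through the Krein formulae \eqref{1.24}, \eqref{1.24+}. The only difference is that where the paper delegates the step ``norm convergence of $\Theta_c$ and of $M_c$ implies norm convergence of $(\Theta_c-M_c)^{-1}$'' to a uniform counterpart of a lemma from \cite{DM95}, you carry out the resolvent-identity telescoping explicitly, which is a legitimate expansion of the same argument.
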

%%%%%%%%%%%%%%%%%%%%%%%%%%%%%%%%%%%%%%%%%%%%
\begin{proof}
It  can be proved that in the case $d_*(X)>0$ condition \eqref{1.23.0Unif} implies uniform convergence in \eqref{3.87AA}. It can be done using uniform counterpart of \cite[Lemma 3.1]{DM95}.
Moreover, in this case the convergence in \eqref{limgnNEW}, hence the convergence in \eqref{1.24.1}, is uniform too. Besides, in the case $d_*(X)>0$ the convergence in \eqref{conv0New} is also uniform.  Finally, combining these  relations  and applying the Krein
type formulae \eqref{1.24} and  \eqref{1.24+}  we arrive at \eqref{1.23Unif}.
  \end{proof}
%%%%%%%%%%%%%%%%%%%%%%%%%%%%%%%%%%%%%%%%%
     \begin{remark}\label{2.5.1}
Theorem \ref{2.5} with its proof remains valid  in
the case of Dirac operators $D_X$ on the  line with interaction set  $X=\{x_k\}_{k\in\Z}$,
$x_{k-1}<x_k$.
Indeed, it can be adapted to this case  by using the boundary triplets defined in Remarks \ref{RMD} and \ref{RMH}.

%%%(see Remark \ref{}{finite}).
%triple the boundary triplet $\Pi=\{\cH,\Gamma_0,\Gamma_1\}$ with
%$\Pi=\left(\oplus_{n=1}^{\infty}{\Pi}^{(n)}\right)\oplus\Pi^{(b+)}$.
In conclusion, note that Theorem \ref{2.5} comprises (see also  Theorem
\ref{nonrelGS} below) and extends  known results on
the non-relativistic limits of Dirac operators with point
interactions (see \cite{BD}, \cite{GS}, \cite[Appendix J]{Alb_Ges_88} and references therein).
\end{remark}

%%%%\subsection{First spectral properties}

\section{Gesztesy-\v{S}eba realizations }

%\subsection{Gesztesy-\v{S}eba  realizations}
Following \cite{GS} (see also \cite{Alb_Ges_88}) we define two
families of symmetric extensions, which  turn out to be closely
related to their non-relativistic counterparts $\delta$- and
$\delta'-$interactions.
 First we consider the case of a finite or infinite
interval $\cI = (a, b) \subseteq \R_a,$\ $-\infty <a.$ Let, as in the previous sections,
$$
X=\{x_n\}_{n\in \N}\,,\quad -\infty < a=:x_0 <x_1 < \ldots <
x_{n}<x_{n+1}<\ldots \,,\quad \lim_{n\to+\infty}x_n = b\le
\infty\,,\quad%%%\cI=(a,b)\subset \R_a\,,
$$
and let
$$
\alpha := \{\alpha_{n}\}_{n=1}^\infty\subset\R\cup
\{+\infty\}\,,\qquad \beta := \{\beta_{n}\}_{n=1}^\infty
\subset\R\cup\{+\infty\}\,.
$$
Then the two families of  Gesztesy-\v{S}eba operators (in short,
GS-operators or GS-realizations) on the interval $(a,b)$ are
defined to be the closures of the operators
   \begin{equation}\label{delta}
\begin{split}
 D_{X,\alpha}^0=& D\upharpoonright \dom(D_{X,\alpha}^0),\\
\dom(D_{X,\alpha}^0) = &\Big\{f\in W^{1,2}_{\comp}(\cI \backslash X)\otimes
\C^{2}: f_{1}\in AC_{\loc}(\cI),\ f_{2}\in AC_{\loc}(\cI\backslash X);\\&
f_2(a+)=0\,,\quad f_{2}(x_{n}+)-f_{2}(x_{n}-)
=-\frac{i\alpha_{n}}{c}f_{1}(x_{n}),\,\,n\in\N\Big\},
\end{split}
  \end{equation}
and
  \begin{equation}
\begin{split}\label{deltap}
D_{X,\beta}^0 = & D \upharpoonright \dom(D_{X,\beta}^0), \\
\dom(D_{X,\beta}^0) = & \Big\{f\in W^{1,2}_{\comp}(\cI \backslash X)
\otimes \C^{2}: f_{1}\in AC_{\loc}(\cI\backslash X),\
f_{2}\in AC_{\loc}(\cI);\\&
f_2(a+)=0\,,\quad
f_{1}(x_{n}+)-f_{1}(x_{n}-) = i\beta_{n}cf_{2}(x_{n}),\,\,n\in\N\Big\},
\end{split}
\end{equation}
respectively, i.e. $D_{X,\alpha} =  \overline{ D_{X,\alpha}^0}$ and  $D_{X,\beta} =  \overline{ D_{X,\beta}^0}$.
%%%%%%%%%%%%%%%%%%%%%%%%%%%%%%%%

It is easily seen that both operators  $D_{X,\alpha}$ and  $D_{X,\beta}$ are  symmetric,  but not necessarily self-adjoint, in general. However, both $D_{X,\alpha}$ and  $D_{X,\beta}$ are either symmetric or self-adjoint only simultaneously.
 If $D_{X,\alpha}$ and $D_{X,\beta}$ are  self-adjoint, then their  domains are described explicitly  (see Theorem \ref{1}(i)). Moreover,  the character feature of GS realizations $D_{X,\alpha}$ and  $D_{X,\beta}$ is that they are always  self-adjoint provided that $\cI = \R_{\pm}, \R,$
 (see Proposition \ref{cor_delta_carleman} and Theorem \ref{1}(ii)).
%%%%%%%%%%%%%%%%%%%%%%%%%%%%%%%%%%%%%%%%%%%%%%%%%%%
    \begin{remark}\label{remarkGS} \item $(i)$
Originally the GS-realizations  $D_{X,\alpha}$ and
$D_{X,\beta}$ have been introduced (cf. \cite{GS}) in the case of
point interactions distributed on the line $\R.$
%%%be similarly defined in the case the index $n$ ranges
%%over $\Z$ (indeed this was the original
%%situation in \cite{GS}).
In this case  $X=\{x_k\}_{k\in\Z}$,
$\alpha=\{\alpha_k\}_{k\in\Z}$, $\beta=\{\beta_k\}_{k\in\Z}$, and
$\lim_{n\to-\infty}x_n = -\infty$ and $\lim_{n\to +\infty}x_n =
+ \infty$. Moreover, in this case boundary conditions in
\eqref{delta} and \eqref{deltap} are labelled by $n\in \Z$ and the
condition $f_2(a+)=0$ is  dropped.
%%in \eqref{delta} (resp. \eqref{deltap}) the boundary
%%  condition $f_2(a+)=0$ (resp. $f_1(a+)=0$) does not appear.

\item $(ii)$  Note also that if $\alpha_n=\infty$ ($\beta_n =
\infty)$ for some $n\in \N$, then  the $n$-th boundary condition \eqref{delta} (resp. \eqref{deltap}) takes
the form
   \begin{equation}\label{delta_infty}
f_1(x_n) = 0 \qquad (\text{resp. } f_2(x_n) = 0).
   \end{equation}
\end{remark}

\noindent %%%%Integrating by parts one easily checks that the GS-operators $D_{X,\alpha}$ and $D_{X,\beta}$ are symmetric. \par
In what follows we call conditions (\ref{delta}), (\ref{deltap})
by Gesztesy-\v{S}eba boundary conditions (in short
GS-conditions).\par
%
%
 %   \begin{remark}\label{equiv}
%The  GS-realizations $D_{X,\alpha}$  and $-D_{X,\beta}$ on the
%line $\R$ are related  as follows
%%%     \end{remark}
%
%

To investigate Gesztesy-\v{S}eba realizations $D_{X,\alpha}$ in
the framework of boundary triplets approach we first  find
 boundary relations (operators) $\Theta$ that parameterize
operators $D_{X,\alpha}$ according to Corollary \ref{s.a.Dirac}.
It turns out that, as in the Schrodinger case (cf. \cite{KM}), the
boundary operator corresponding to $D_{X,\alpha}$
in the boundary triplet  constructed in Theorem \ref{th_bt_2},  is Jacobi
matrix.

\subsection{GS-realizations $D_{X,\gA}$: parametrization by Jacobi matrices} %%%with $\delta$-interactions}
%%esztesy-\v{S}eba

Consider the boundary triplet $\Pi=\{\cH,\Gamma_0,\Gamma_1\}$ for
$D_X^*$  constructed in Theorem \ref{th_bt_2}, (cf. formulae
\eqref{IV.1.1_12}, \eqref{IV.1.1_12.1}).  By Corollary
\ref{s.a.Dirac}(i),  the realization $D_{X,\gA}$ admits the
representation (cf. \eqref{31d})
%%%$\rH_{X,\gA}$ is a closed proper extension of $\rH_{\min}$,
%%%$\rH_{X,\gA} (\in\Ext {\rH_{\min}})$ admits two representations
%%%%%%%%%%%%%%%%%%%%%%%%%%%%%
      \begin{equation}\label{IV.1.1_12A}
D_{X,\gA}  = D_{\Theta(\gA)}:= D_X^* \lceil
\dom(D_{\Theta(\gA)}),\quad \dom(D_{\Theta(\gA)})=\{f\in
W^{1,2}(\cI \backslash X)\otimes\C^{2}:
\ \{\Gamma_0f,\Gamma_1f\}\in\Theta(\gA)\}.   %%%%%%\dom(D_X^*)
     \end{equation}
%%%%%%%%%%%%%%%%%%%%%%%%%%%%

Since  for any $\alpha$ the realizations  $D_{X,\gA}$ and $D_{X,0}
:= D^*_{X}\upharpoonright \ker(\Gamma_0)$ are disjoint,
$\Theta(\gA)$ is a (closed) operator in $\cH=l^2(\N)\otimes \C^2$,
$\Theta(\gA)\in \mathcal{C}(\cH)$.  We show that
$\Theta(\gA)$  is a Jacobi matrix.  More precisely, consider the
Jacobi matrix
   \begin{equation}\label{IV.2.1_01}
B_{X,\gA}=\left(
\begin{array}{cccccc}
  0& -\frac{\nu(\gd_{1})}{\gd_1^{2}}& 0 & 0& 0  &  \dots\\
   -\frac{\nu(\gd_{1})}{\gd_1^{2}} &  -\frac{\nu(\gd_{1})}{\gd_1^{2}}&\frac{\nu(\gd_{1})}{\gd_1^{3/2}\gd_2^{1/2}}& 0 & 0&  \dots\\
  0 & \frac{\nu(\gd_{1})}{\gd_1^{3/2}\gd_2^{1/2}} & \frac{\alpha_1}{\gd_2} & -\frac{\nu(\gd_{2})}{\gd_2^{2}}& 0&   \dots\\
  0 & 0 & -\frac{\nu(\gd_{2})}{\gd_2^{2}}&  -\frac{\nu(\gd_{2})}{\gd_2^{2}}&\frac{\nu(\gd_{2})}{\gd_2^{3/2}\gd_3^{1/2}}&  \dots\\
  0 & 0 & 0 & \frac{\nu(\gd_{2})}{\gd_2^{3/2}\gd_3^{1/2}}& \frac{\alpha_2}{\gd_3}&   \dots\\
\dots& \dots&\dots&\dots&\dots&\dots\\
 \end{array}%
\right)\,,
   \end{equation}
where
   \begin{equation}\label{IV.2.1_01NU}
\nu(x):=\frac{1}{\sqrt{1+\frac{1}{c^2x^2}}}\,.
   \end{equation}
Let $\tau_{X,\gA}$ be the second order difference expression associated with (\ref{IV.2.1_01}).
One defines the corresponding minimal symmetric operator in $l^2(\N)\otimes \C^2$ by (see \cite{Akh, Ber68})
  \begin{equation}\label{IV.2.1_02}
B^0_{X,\gA}f:=\tau_{X,\gA}f,\qquad f\in\dom(B^0_{X,\gA}) := l^2_0(\N)\otimes \C^2, %\{f\in l_{2,0}:\ f_0=0\},\no\\
\quad \mathrm{and}\quad B_{X,\gA}=\overline{B^0_{X,\gA}}.
  \end{equation}
%
 %Let $B_{X,\gA}$ denote the closure of $B^0_{X,\gA}$ in $l_2$.
 Recall that $B_{X,\gA}$\footnote{Usually  we will identify the Jacobi matrix with (closed) minimal symmetric operator  associated with it. Namely, we denote by $B_{X,\gA}$ the Jacobi matrix (\ref{IV.2.1_01})  as well as the minimal closed symmetric operator (\ref{IV.2.1_02}).}
 has equal deficiency indices and $\mathrm{n}_+(B_{X,\gA})=\mathrm{n}_-(B_{X,\gA})\leq 1$.

Note that $B_{X,\gA}$ admits a representation
      \begin{equation}\label{IV.2.2_01A}
B_{X,\gA}=R^{-1}_X(\widetilde{B}_{\gA}-Q_X)R^{-1}_X,\quad \text{where}\quad \widetilde{B}_{\gA} := \left(%
\begin{array}{cccccc}
  0& 0 & 0 & 0& 0  & \dots\\
  0 & 0& 1& 0 & 0&  \dots\\
  0 & 1 & \alpha_1 &   0& 0& \dots\\
  0 & 0 & 0 & 0 & 1&  \dots\\
  0 & 0 & 0 & 1 & \alpha_2&  \dots\\
\dots& \dots&\dots&\dots&\dots&\dots\\
 \end{array}
\right),
  \end{equation}
and $R_X= \bigoplus_{n=1}^\infty R_n,\ Q_X=\bigoplus_{n=1}^\infty Q_n$ and $R_n,$  $Q_n$,
are defined by \eqref{IV.1.1_19}.
     \begin{proposition}\label{prop_IV.2.1_01}
Let $\Pi=\{\cH,\Gamma_0,\Gamma_1\}$ be the boundary triplet for
$D_{X}^*$ constructed in Theorem \ref{th_bt_2} and let $B_{X,\gA}$
be the minimal Jacobi operator defined by \eqref{IV.2.1_01}. Then
$\Theta(\gA) = B_{X,\gA}$, i.e.,
   \begin{equation*}
D_{X,\gA}=D_{B_{X,\gA}}=D_{X}^*\upharpoonright\dom(D_{B_{X,\gA}}),\qquad
\dom(D_{B_{X,\gA}})=\{f\in W^{1,2}(\cI\setminus X)\otimes \C^2:
\Gamma_1f=B_{X,\gA}\Gamma_0f\}.
     \end{equation*}
%%%%%%%%%%%%%%%%%%%%%%%%%%%%%%%%%%%
     \end{proposition}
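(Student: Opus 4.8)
The plan is to identify the boundary operator $\Theta(\gA)$ parametrizing $D_{X,\gA}$ in the triplet $\Pi=\bigoplus_{n=1}^\infty\Pi^{(n)}$ of Theorem \ref{th_bt_2}. Since for every $\gA$ the realizations $D_{X,\gA}$ and $D_{X,0}=D_X^*\!\upharpoonright\ker\Gamma_0$ are disjoint, Corollary \ref{s.a.Dirac}(iv) already provides a closed operator $\Theta(\gA)\in\mathcal C(\cH)$ with $D_{X,\gA}=D_X^*\!\upharpoonright\ker(\Gamma_1-\Theta(\gA)\Gamma_0)$, that is, $\gr(\Theta(\gA))=\Gamma\dom(D_{X,\gA})=\{(\Gamma_0 f,\Gamma_1 f):f\in\dom(D_{X,\gA})\}$. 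Everything therefore reduces to showing $\Theta(\gA)=B_{X,\gA}$.

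The core of the argument is an algebraic computation relating the two families of boundary data. Identify $\cH=l^2(\N)\otimes\C^2$ with $l^2(\N)$ so that positions $2n-1$ and $2n$ carry the two components of $\Gamma_0^{(n)}f$ (respectively $\Gamma_1^{(n)}f$). The cleanest route is through the factorization \eqref{IV.2.2_01A}: transferring $\Gamma_1 f=B_{X,\gA}\Gamma_0 f$ across the diagonal weights $R_X,Q_X$ turns it into $\widetilde\Gamma_1 f=\widetilde B_\gA\widetilde\Gamma_0 f$ for the simpler tri-diagonal matrix $\widetilde B_\gA$ in the auxiliary $B$-generalized triplet $\widetilde\Pi=\bigoplus_n\widetilde\Pi^{(n)}$ of Proposition \ref{prop3.5direcsum}, and there the identity is read off directly from \eqref{triple2}. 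Either way one arrives at the following elementary equivalence, valid componentwise for any $f\in\dom(D_X^*)=W^{1,2}(\cI\setminus X)\otimes\C^2$: the first scalar equation of ``$\Gamma_1 f=\tau_{X,\gA}\Gamma_0 f$'' is equivalent to $f_2(a+)=0$, while its $n$-th pair of scalar equations is equivalent to the two Gesztesy--{\v S}eba conditions at $x_n$, namely $f_1(x_n+)=f_1(x_n-)$ together with $f_2(x_n+)-f_2(x_n-)=-\frac{i\gA_n}{c}f_1(x_n)$. Consequently, for $f\in\dom(D_X^*)$ with $\Gamma_0 f\in\dom(B_{X,\gA})$ one has $\Gamma_1 f=B_{X,\gA}\Gamma_0 f$ if and only if $f$ obeys the GS-conditions \eqref{delta}.

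Next I pass to cores. By definition $D_{X,\gA}=\overline{D^0_{X,\gA}}$, so $\dom(D^0_{X,\gA})$ is a core for $D_{X,\gA}$, and $l^2_0(\N)\otimes\C^2=\dom(B^0_{X,\gA})$ is a core for $B_{X,\gA}=\overline{B^0_{X,\gA}}$. If $f\in\dom(D^0_{X,\gA})$ then $\Gamma_0 f,\Gamma_1 f$ are finitely supported, hence lie in $\dom(B^0_{X,\gA})$, and the equivalence above gives $\Gamma_1 f=\tau_{X,\gA}\Gamma_0 f=B^0_{X,\gA}\Gamma_0 f$; thus $\Gamma\dom(D^0_{X,\gA})\subseteq\gr(B^0_{X,\gA})$. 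For the reverse inclusion, given $g\in l^2_0(\N)\otimes\C^2$ choose $N$ so large that $g$ vanishes from the $N$-th pair on; since each $\Pi^{(k)}$, $1\le k\le N$, is a boundary triplet for $D_k^*$ and $\mathrm n_\pm(D_k)=2$, the map $f_k\mapsto(\Gamma_0^{(k)}f_k,\Gamma_1^{(k)}f_k)$ is onto $\C^2\oplus\C^2$, and a straightforward (if slightly tedious) gluing then produces an $f$ supported in $[x_0,x_N]$ with $\Gamma_0 f=g$ and $\Gamma_1 f=\tau_{X,\gA}g$; by the equivalence such an $f$ automatically satisfies \eqref{delta}, so $f\in\dom(D^0_{X,\gA})$. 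Hence $\Gamma\dom(D^0_{X,\gA})=\gr(B^0_{X,\gA})$.

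Finally, $\Gamma$ is bounded from $\dom(D_X^*)$, equipped with the graph norm, into $\cH\oplus\cH$, and both $\Theta(\gA)$ and $B_{X,\gA}$ are closed; therefore
\[
\gr(\Theta(\gA))=\Gamma\dom(D_{X,\gA})=\overline{\Gamma\dom(D^0_{X,\gA})}=\overline{\gr(B^0_{X,\gA})}=\gr(B_{X,\gA}),
\]
which gives $\Theta(\gA)=B_{X,\gA}$ and, via \eqref{IV.1.1_12A}, the claimed representation of $D_{X,\gA}$. The only genuinely laborious point is the componentwise identity of the second paragraph, in particular keeping track of the weights $\nu(\gd_n)$ coming from $R_n$; the endpoint bookkeeping at $a$ and the closure argument are routine.
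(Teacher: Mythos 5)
Your argument is correct and follows essentially the same route as the paper: both reduce the identity $\Gamma_1 f=B_{X,\gA}\Gamma_0 f$ via the factorization \eqref{IV.2.2_01A} and the transformation \eqref{IV.1.1_09AAA} to the elementary equivalence $\wt\Gamma_1 f=\wt B_\gA\wt\Gamma_0 f\Leftrightarrow$ GS-conditions on compactly supported $f$, and then pass to closures using Corollary \ref{s.a.Dirac}. You merely spell out more explicitly than the paper the core/closure bookkeeping (boundedness of $\Gamma$ in the graph norm, closedness of $\gr(\Theta(\gA))$, and the surjectivity needed for $\Gamma\dom(D^0_{X,\gA})\supseteq\gr(B^0_{X,\gA})$), which is a welcome but not essentially different elaboration.
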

%%%%%%%%%%%%%%%%%%%%%%%%%%%%%%%%%%%%%%%%
  \begin{proof}
 Let $f\in W_{\comp}^{1,2}(\cI\setminus X)\otimes \C^2 = \bigl( W^{1,2}(\cI\setminus X)\cap L_{\comp}^{2}(\cI)\bigr)\otimes \C^2$. Then $f\in \dom(D_{X,\gA})$ if and only
 if $\wt{\Gamma}_1f=\wt{B}_\alpha\wt{\Gamma}_0f.$
 Here
$\wt{\Gamma}_j :=\bigoplus_{n\in \N}\wt{\Gamma}_j^{(n)}$ where
$\wt{\Gamma}_j^{(n)},\ j\in \{0,1\},$ are defined  by \eqref{triple2}                    %%% \eqref{IV.1.1_12},  \eqref{IV.1.1_12.1},
and $\wt{B}_\alpha$ is given  by \eqref{IV.2.2_01A}.
 Combining  \eqref{IV.1.1_09AAA}, \eqref{IV.1.1_19}  with  \eqref{IV.2.2_01A},
we rewrite the equality
$\wt{\Gamma}_1f=\wt{B}_\alpha\wt{\Gamma}_0f$  as
$\Gamma_1f=B_{X,\gA}\Gamma_0f$, $f\in W_{\comp}^{1,2}(\cI\setminus X)\otimes \C^2$.
 Taking the closures and applying Corollary
\ref{s.a.Dirac}(i) one completes  the
proof.
      \end{proof}
%%%%%%%%%%%%%%%%%%%%%%%%%%%%%%%%%%%%%%%%%%%%%%%
   \begin{remark}\label{remark4.4}
Note that the matrix \eqref{IV.2.1_01} has negative off-diagonal
entries, although, in the classical theory of Jacobi operators,
off-diagonal entries are assumed to be positive. But it is known
(see, for instance, \cite{Tes_98}) that the (minimal) operator
$B_{X,\alpha}$ is unitarily equivalent to the minimal Jacobi operator associated
with the matrix
   \begin{equation}\label{IV.2.1_01'}
B'_{X,\alpha} :=\left(%
\begin{array}{cccccc}
 0& \frac{\nu(\gd_{1})}{\gd_1^{2}}& 0 & 0& 0  &  \dots\\
   \frac{\nu(\gd_{1})}{\gd_1^{2}} &  -\frac{\nu(\gd_{1})}{\gd_1^{2}}&\frac{\nu(\gd_{1})}{\gd_1^{3/2}\gd_2^{1/2}}& 0 & 0&  \dots\\
  0 & \frac{\nu(\gd_{1})}{\gd_1^{3/2}\gd_2^{1/2}} & \frac{\alpha_1}{\gd_2} & \frac{\nu(\gd_{2})}{\gd_2^{2}}& 0&   \dots\\
  0 & 0 & \frac{\nu(\gd_{2})}{\gd_2^{2}}&  -\frac{\nu(\gd_{2})}{\gd_2^{2}}&\frac{\nu(\gd_{2})}{\gd_2^{3/2}\gd_3^{1/2}}&  \dots\\
  0 & 0 & 0 & \frac{\nu(\gd_{2})}{\gd_2^{3/2}\gd_3^{1/2}}& \frac{\alpha_2}{\gd_3}&   \dots\\
\dots& \dots&\dots&\dots&\dots&\dots\\
 \end{array}
\right).
    \end{equation}
In the sequel we will identify the operators $B_{X,\alpha}$ and
$B'_{X,\alpha}$ when investigating those spectral properties of the operator $\rH_{X,\gA}$, which are invariant under unitary transformations.
        \end{remark}

\subsubsection{Self-adjointness}

{\bf 1. Boundary triplets approach.}
\noindent First we  study self-adjointness of GS-realizations $D_{X,\gA}$
using the parametrization by means of the  Jacobi matrices
$B_{X,\gA}$ %%%\eqref{IV.2.1_01}.
%(see
%Propositions \ref{prop_IV.2.1_02} and
%\ref{prop_IV.2.1_01})

 Combining Corollary  \ref{s.a.Dirac}(ii)  with Propositions
%\ref{prop_IV.2.1_02} and
\ref{prop_IV.2.1_01} we arrive at the following statement.
     \begin{proposition}\label{th_delta_sa}
The GS-operator $D_{X,\gA}$ has equal deficiency indices and
$\mathrm{n}_+(D_{X,\gA})=\mathrm{n}_-(D_{X,\gA})\leq 1$. Moreover,
$\mathrm{n}_\pm(D_{X,\gA})=\mathrm{n}_\pm(B_{X,\gA})$, where
$B_{X,\gA}$ is the minimal Jacobi operator associated with the
Jacobi
matrix %either  \eqref{IV.2.1_05} or
\eqref{IV.2.1_01}. In
particular, $D_{X,\gA}$ is self-adjoint if and only if $B_{X,\gA}$
is.
      \end{proposition}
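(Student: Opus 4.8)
The plan is to read off the statement from the two structural results already established: Proposition \ref{prop_IV.2.1_01}, which identifies the boundary operator parametrizing $D_{X,\gA}$ with the minimal Jacobi operator $B_{X,\gA}$, and Corollary \ref{s.a.Dirac}(ii), which transfers symmetry/self-adjointness and deficiency indices from a boundary parameter to the corresponding realization. First I would recall that, since for every $\gA$ the realizations $D_{X,\gA}$ and $D_{X,0}=D_X^*\!\upharpoonright\ker(\Gamma_0)$ are disjoint (the GS-condition never constrains $\ker\Gamma_0$), the associated boundary relation $\Theta(\gA)$ is in fact a closed \emph{operator} in $\cH=l^2(\N)\otimes\C^2$, and by Proposition \ref{prop_IV.2.1_01} it coincides with $B_{X,\gA}$, the minimal closed symmetric operator associated with the Jacobi matrix \eqref{IV.2.1_01} via \eqref{IV.2.1_02}. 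Thus $D_{X,\gA}=D_{B_{X,\gA}}$ in the boundary triplet $\Pi=\{\cH,\Gamma_0,\Gamma_1\}$ of Theorem \ref{th_bt_2}.

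Next I would apply Corollary \ref{s.a.Dirac}(ii) with $\Theta=B_{X,\gA}$: it gives that $D_{X,\gA}=D_{X,B_{X,\gA}}$ is symmetric (resp.\ self-adjoint) if and only if $B_{X,\gA}$ is symmetric (resp.\ self-adjoint), together with the equality of deficiency indices
\begin{equation*}
\mathrm n_\pm(D_{X,\gA})=\mathrm n_\pm(B_{X,\gA}).
\end{equation*}
In particular the deficiency indices of $D_{X,\gA}$ are automatically equal, because those of a symmetric Jacobi operator are. It then remains only to invoke the classical limit-point/limit-circle dichotomy for semi-infinite Jacobi matrices (see \cite{Akh}, \cite[Chapter VII]{Ber68}), which yields $\mathrm n_+(B_{X,\gA})=\mathrm n_-(B_{X,\gA})\le 1$, hence the same bound for $D_{X,\gA}$; the self-adjointness equivalence is the case $\mathrm n_\pm=0$. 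This already records the fact noted right after \eqref{IV.2.1_02}.

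There is essentially no serious obstacle here once Theorem \ref{th_bt_2}, Corollary \ref{s.a.Dirac} and Proposition \ref{prop_IV.2.1_01} are in place: the whole content has been front-loaded into the construction of the boundary triplet $\Pi$ in which the GS-boundary conditions linearize to the tridiagonal form \eqref{IV.2.1_01}. The only point requiring a word of care is the verification that $\Theta(\gA)$ is genuinely an operator (so that one may speak of the minimal Jacobi \emph{operator} rather than a relation), and that taking operator closures commutes with the passage from the pre-boundary-triplet conditions $\wt\Gamma_1 f=\wt B_\alpha\wt\Gamma_0 f$ to $\Gamma_1 f=B_{X,\gA}\Gamma_0 f$ — but this is exactly what was checked in the proof of Proposition \ref{prop_IV.2.1_01}, so nothing further is needed.
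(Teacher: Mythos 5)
Your proof is correct and follows exactly the paper's route: the paper's entire argument is "Combining Corollary \ref{s.a.Dirac}(ii) with Proposition \ref{prop_IV.2.1_01} we arrive at the statement," relying on the fact (recorded right after \eqref{IV.2.1_02}) that a minimal Jacobi operator has equal deficiency indices bounded by $1$. Your additional remarks on the disjointness with $D_{X,0}$ and on $\Theta(\gA)$ being an operator merely make explicit what the paper leaves implicit.
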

%%%%%%%%%%%%%%%%%%%%%%%%%%%%%%%%%%%%%%%%%%%%
%    \begin{corollary}
%Let $B^{(1)}_{X,\gA}$ and $B^{(2)}_{X,\gA}$ be the minimal Jacobi
%operators associated with \eqref{IV.2.1_05} and
%\eqref{IV.2.1_01}, respectively. Then
%$\mathrm{n}_\pm(B^{(1)}_{X,\gA})=
%\mathrm{n}_\pm(B^{(2)}_{X,\gA}).$  In particular,
%$B^{(1)}_{X,\gA}$ is self-adjoint if and only if so is
%$B^{(2)}_{X,\gA}.$
%        \end{corollary}

Using Carleman's criterium of self-adjointness of Jacobi matrices (see
e.g. \cite{Akh, Ber68, KosMir99, KosMir01}), we obtain
%necessary and
sufficient conditions for the operator $D_{X,\gA}$ to be
self-adjoint  in $L^2(\cI, \C^2)$.
   \begin{proposition}\label{cor_delta_carleman}
Let $\cI$ be  an infinite interval, i.e.
either  $\cI=\R_{\pm}$ or  $\cI=\R$.
Then the GS-realization  $D_{X,\gA}$  is self-adjoint for any sequence
$\gA=\{\alpha_n\}_{n=1}^\infty\subset\R\cup \infty.$
      \end{proposition}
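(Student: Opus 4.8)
The strategy is to reduce the self-adjointness of $D_{X,\gA}$ on an infinite interval to the self-adjointness of the associated minimal Jacobi operator $B_{X,\gA}$ (via Proposition~\ref{th_delta_sa}), and then verify the Carleman test for $B_{X,\gA}$. I must treat the three geometric cases separately, since the structure of the interaction set and the relevant operator differ: (a) $\cI=\R_+$ (or $\cI=\R_-$), where $X=\{x_n\}_{n\ge1}$ accumulates only at $+\infty$ and $D_{X,\gA}=D_{B_{X,\gA}}$ by Proposition~\ref{prop_IV.2.1_01}; (b) $\cI=\R$ with $X=\{x_k\}_{k\in\Z}$ accumulating at both $\pm\infty$; (c) the intermediate situation where $X$ accumulates at one infinite endpoint and at one finite point, covered by the half-line analysis plus a half-line summand $D_{b+}$ (see Remark~\ref{RMD}). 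In every case the point is that an accumulation point at a \emph{finite} location inside $\cI$ is what could generate deficiency, and an infinite endpoint never does because the Dirac building block on a half-line (Lemmas~\ref{2.3-}, \ref{2.3+}) has deficiency indices $\le 1$ and its contribution is absorbed harmlessly.

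\textbf{Step 1 (reduction to a Jacobi matrix).} For $\cI=\R_+$, Proposition~\ref{th_delta_sa} gives $\mathrm n_\pm(D_{X,\gA})=\mathrm n_\pm(B_{X,\gA})\le 1$, so it suffices to show $B_{X,\gA}=B_{X,\gA}^*$. For $\cI=\R$, one writes $D_X=\bigoplus_{n\in\Z}D_n$, uses the boundary triplet of Remark~\ref{RMD}, and checks that the boundary operator parameterizing $D_{X,\gA}$ is again a (two-sided) Jacobi matrix $B_{X,\gA}$ indexed by $\Z$, by the same computation as in Proposition~\ref{prop_IV.2.1_01}; a two-sided Jacobi matrix is self-adjoint as soon as both its one-sided restrictions are, so this again reduces to a Carleman-type estimate on each half. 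The case with one finite accumulation point is handled by noting $D_{X,\gA}$ differs from $D_{b+}\oplus D_{X,\gA}^{(\text{half})}$ only by a bounded-rank boundary coupling, so again $\mathrm n_\pm\le 1$ and we are reduced to the one-sided Jacobi matrix.

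\textbf{Step 2 (Carleman test).} For the one-sided minimal Jacobi operator with off-diagonal weights $b_n$, Carleman's criterion (see \cite{Akh}, \cite[Ch.~VII.1.2]{Ber68}) states that $\sum_n |b_n|^{-1}=\infty$ implies self-adjointness. From the explicit matrix \eqref{IV.2.1_01} the off-diagonal entries come in pairs: within the $n$-th block the weight is $\nu(\gd_n)/\gd_n^{2}$, and between blocks $n$ and $n+1$ the weight is $\nu(\gd_n)\gd_n^{-3/2}\gd_{n+1}^{-1/2}$. Since $\nu(x)=\bigl(1+(c^2x^2)^{-1}\bigr)^{-1/2}\in(0,1)$, one has, for each $n$,
\begin{equation}
\frac{\gd_n^{2}}{\nu(\gd_n)}\ge \gd_n^{2},\qquad
\frac{\gd_n^{3/2}\gd_{n+1}^{1/2}}{\nu(\gd_n)}\ge \gd_n^{3/2}\gd_{n+1}^{1/2}.
\end{equation}
I do not want to bound these below by a convergent-controlled quantity; rather I use $\nu(\gd_n)\ge \bigl(1+(c^2\gd_n^2)^{-1}\bigr)^{-1/2}$ together with the elementary inequality $\nu(x)\ge \dfrac{cx}{\sqrt{1+c^2x^2}}\ge \dfrac{cx}{\sqrt{1+c^2x^2}}$, giving $\nu(\gd_n)\gd_n^{-2}\le \dfrac{\sqrt{1+c^2\gd_n^2}}{c\,\gd_n^{3}}$ — this is large precisely when $\gd_n\to0$, so its reciprocal is \emph{small}, which is the wrong direction. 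The correct observation is the opposite: the reciprocals $\nu(\gd_n)^{-1}\gd_n^{2}$ of the off-diagonal entries satisfy $\nu(\gd_n)^{-1}\gd_n^{2}\le \sqrt{1+c^{-2}\gd_n^{-2}}\,\gd_n^{2}=\sqrt{\gd_n^{4}+c^{-2}\gd_n^{2}}\le \gd_n^{2}+c^{-1}\gd_n$, and similarly for the cross terms the reciprocal is $\le \gd_n^{3/2}\gd_{n+1}^{1/2}+c^{-1}\gd_n^{1/2}\gd_{n+1}^{1/2}$. Because $X\subset\cI$ and $\sum_n\gd_n=|\cI|=\infty$ on an infinite interval, while also $\sum_n\sqrt{\gd_n\gd_{n+1}}$ diverges whenever $\sum_n\gd_n$ does (Cauchy–Schwarz or the AM–GM bound $\sqrt{\gd_n\gd_{n+1}}\ge \tfrac12\min(\gd_n,\gd_{n+1})$ is not quite enough, but $\sum\gd_n=\infty$ forces $\limsup\gd_n>0$ or infinitely many $\gd_n$ bounded below, in either subcase the sum of reciprocals of the off-diagonal entries diverges), the series of reciprocals of off-diagonal weights diverges. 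Hence Carleman's test applies and $B_{X,\gA}=B_{X,\gA}^*$, so $D_{X,\gA}=D_{X,\gA}^*$.

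\textbf{Main obstacle.} The delicate point is not the functional-analytic reduction (which is routine given Propositions~\ref{prop_IV.2.1_01} and~\ref{th_delta_sa}) but the arithmetic of showing that the divergence $\sum_n \gd_n=\infty$, valid on any infinite interval, actually forces $\sum_n b_n^{-1}=\infty$ for the specific off-diagonal weights $b_n$ appearing in \eqref{IV.2.1_01} — one must handle simultaneously the "block-internal" weights $\nu(\gd_n)\gd_n^{-2}$ and the "block-linking" weights $\nu(\gd_n)\gd_n^{-3/2}\gd_{n+1}^{-1/2}$, whose reciprocals are $\lesssim \gd_n^2+\gd_n$ and $\lesssim \sqrt{\gd_n\gd_{n+1}}(\gd_n+1)$ respectively, and argue that at least one of the two resulting series inherits the divergence of $\sum\gd_n$. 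The cleanest route is: if infinitely many $\gd_n\ge\varepsilon>0$, the reciprocals of those block-internal weights are $\ge c^{-1}\varepsilon$ infinitely often, done; otherwise $\gd_n\to0$, and then $\sum\gd_n=\infty$ with all $\gd_n<1$ gives $\sum(\gd_n^2+c^{-1}\gd_n)\ge c^{-1}\sum\gd_n=\infty$, again done. For $\cI=\R$ one applies this on each half-line and invokes the fact that a two-sided Jacobi matrix, or a sum of two half-line symmetric operators each with deficiency $\le1$ coupled along a finite-rank boundary, is self-adjoint when both pieces are.
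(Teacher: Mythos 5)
Your overall route is exactly the paper's: reduce via Proposition~\ref{th_delta_sa} (and Proposition~\ref{prop_IV.2.1_01}) to self-adjointness of the Jacobi matrix $B_{X,\gA}$ and then invoke Carleman's test, with divergence ultimately driven by $\sum_n \gd_n=|\cI|=\infty$. The conclusion is reachable this way, but as written Step~2 does not close: Carleman requires a \emph{lower} bound on the sum of reciprocals of the off-diagonal entries, whereas the estimates you actually display are upper bounds ($\nu(\gd_n)^{-1}\gd_n^{2}\le\gd_n^{2}+c^{-1}\gd_n$, and similarly for the cross terms), and divergence of a sum of upper bounds proves nothing about the sum itself; moreover the parenthetical claim that $\sum_n\gd_n=\infty$ forces $\limsup_n\gd_n>0$ (or infinitely many $\gd_n$ bounded below) is false, as $\gd_n=1/n$ shows. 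The case distinction in your final paragraph is also unnecessary: since $\nu(x)^{-1}=\sqrt{1+(cx)^{-2}}$, the reciprocal of each block-internal entry equals
\begin{equation*}
\frac{\gd_n^{2}}{\nu(\gd_n)}=\gd_n^{2}\sqrt{1+\frac{1}{c^{2}\gd_n^{2}}}=\frac{1}{c}\,\gd_n\sqrt{1+c^{2}\gd_n^{2}}\ \ge\ \frac{\gd_n}{c},
\end{equation*}
so the Carleman series dominates $c^{-1}\sum_n\gd_n=c^{-1}|\cI|=\infty$ in one line; this identity is precisely what the paper exploits in \eqref{IV.2.2_02}--\eqref{IV.2.2_02BB}. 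With that substitution your argument coincides with the paper's proof. Your supplementary remarks on $\cI=\R$ (a two-sided Jacobi matrix is self-adjoint once both half-line truncations are limit point, cf.\ Remark~\ref{RMD}) and on the sign of the off-diagonal entries (Remark~\ref{remark4.4}) are correct and fill in details the paper leaves implicit.
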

%%%%%%%%%%%%%%%%%%%%%%%%%%%%%
       \begin{proof}
Let $B_{X,\gA}'$ be the minimal Jacobi operator of the form
\eqref{IV.2.1_01'}. By Carleman's test  (see \cite{Akh},
\cite[Chapter VII.1.2]{Ber68}), $B_{X,\gA}'$ is self-adjoint
provided that
\begin{equation}\label{IV.2.2_02}
\sum_{n=1}^\infty \left(\gd_n^2+\gd_n^{3/2}\gd_{n+1}^{1/2}\right)\sqrt{1+\frac{1}{c^{2}\,\gd_n^2}} = \frac{1}{c}\sum_{n=1}^\infty \bigl(\gd_n + \gd_n^{1/2}\gd_{n+1}^{1/2}\bigr)\sqrt{1 + {c^{2}\gd_n^2}}    = \infty.
     \end{equation}

Since $d_n < \gd_n + \gd_n^{1/2}\gd_{n+1}^{1/2}\le \frac
32 d_n + \frac 12 d_{n+1},$ the series in the left-hand side of
\eqref{IV.2.2_02} diverges only simultaneously  with the series
    \begin{equation}\label{IV.2.2_02BB}
\sum_{n=1}^\infty \gd_n^2\sqrt{1 + \frac{1}{c^{2}\,\gd_n^2}} =
\frac{1}{c}\sum_{n=1}^\infty \gd_n\sqrt{1 + {c^{2}\gd_n^2}}.
   \end{equation}
The later  series  diverges if and only if
$\sum_{n=1}^{\infty}d_{n}=+\infty,$ i.e. if and only if the interval $\cI$ is
infinite.  Thus, the minimal Jacobi operator  $B_{X,\gA}$ is self-adjoint
whenever  the interval  $\cI$ is infinite.
It remains to apply Proposition \ref{th_delta_sa}.
   \end{proof}
%%%%%%%%%%%%%%%%%%%%%%
%%%%%%%%%%%%%%%%%%%%%
   \begin{remark} Note that the condition $\sum_{n=1}^{\infty}d_{n}=+\infty$   %%\eqref{carle_eq}
is equivalent to the following one %%%(as suggested by \eqref{IV.2.2_02})
  \begin{equation}\label{carle_eq}
\sum_{n=1}^{\infty}d_{n}\,\sqrt{d_{n}^{2}+\frac{1}{c^{2}}} =
+\infty\,.
  \end{equation}
The formal (non-relativistic) limit in \eqref{carle_eq} as
$c\to\infty$ leads to the condition
$\sum_{n=1}^{\infty}d^2_{n}=+\infty$, coinciding with that of
\cite[Proposition 5.7]{KM}. The latter guaranties  the self-adjointness
of   the Schr\"{o}dinger operator with point interactions.
%In the $c\to 0$ it seems that the selfadjoint condition for the Dirac operators it is always satisfied for every choice of the sets of point interactions.
\end{remark}
Next we present sufficient  conditions for GS-operators $D_{X,\alpha}$ on a finite interval to be self-adjoint.
%
%
%%%%%%%%%%%%%%%%%%%%%%%%%%%%%%%%%%%%%%%%%%%%
  \begin{proposition}\label{interval}
Assume that $|\cI| < \infty$. Then the $GS$-realization $D_{X,\alpha}$ in $L^2(\cI, \C^2)$ is selfadjoint provided that
\begin{equation}\label{4.14}
\sum_{n\in \N} \sqrt{d_{n}d_{n+1}}\,|\alpha_n|=+\infty\,.
 \end{equation}
  \end{proposition}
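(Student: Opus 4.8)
The plan is to use Proposition~\ref{th_delta_sa} to recast the claim as the self-adjointness of the minimal Jacobi operator $B_{X,\gA}$ of \eqref{IV.2.1_01}, and then to establish the latter by the Dennis--Wall determinacy test \cite[Problem~2, p.~25]{Akh}. Since $|\cI|<\infty$ forces $d^{*}(X)=\sup_n d_n\le|\cI|<\infty$, the boundary triplet of Theorem~\ref{th_bt_2}, the identification $D_{X,\gA}=D_{B_{X,\gA}}$ of Proposition~\ref{prop_IV.2.1_01}, and hence Proposition~\ref{th_delta_sa}, are all available, so it suffices to prove $B_{X,\gA}=B_{X,\gA}^{*}$ (equivalently, that the associated Hamburger moment problem is determinate). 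One may assume $\alpha_n\in\R$ for all $n$: if $\alpha_{n_0}=+\infty$, the GS-condition at $x_{n_0}$ reads $f_1(x_{n_0})=0$ and $D_{X,\gA}$ splits into an orthogonal sum of GS-realizations on the subintervals cut out by such points, so one argues on each piece separately.

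I would then recall the Dennis--Wall test: a Jacobi matrix with off-diagonal entries $a_n$ and diagonal entries $b_n$ generates a self-adjoint minimal operator whenever $\sum_n |b_n|\,|a_{n-1}a_n|^{-1}=+\infty$; by Remark~\ref{remark4.4} only the moduli $|a_n|$ and the $b_n$ matter, and these are read off directly from \eqref{IV.2.1_01}. The crucial observation is that Carleman's cruder test $\sum_n|a_n|^{-1}=+\infty$ fails here: a direct estimate (using $\nu(x)\ge cx(1+c^2d^{*}(X)^2)^{-1/2}$ for $0<x\le d^{*}(X)$) bounds $\sum_n|a_n|^{-1}$ by a constant multiple of $\sum_n d_n=|\cI|<\infty$, so the divergence needed for Dennis--Wall has to come from the unbounded diagonal entries $\alpha_k/d_{k+1}$ sitting at the ``interaction'' rows $n=2k+1$. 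At such a row the diagonal entry $\alpha_k/d_{k+1}$ is flanked by the off-diagonal entries $\nu(d_k)\,d_k^{-3/2}d_{k+1}^{-1/2}$ and $\nu(d_{k+1})\,d_{k+1}^{-2}$, so, using $\nu(x)=(1+(cx)^{-2})^{-1/2}<cx$,
\begin{equation*}
\frac{|b_{2k+1}|}{|a_{2k}a_{2k+1}|}=\frac{|\alpha_k|\,d_k^{3/2}d_{k+1}^{3/2}}{\nu(d_k)\,\nu(d_{k+1})}>\frac{1}{c^2}\,|\alpha_k|\,\sqrt{d_kd_{k+1}},\qquad k\in\N .
\end{equation*}
Summing over $k$ and using the hypothesis \eqref{4.14} gives $\sum_n |b_n|\,|a_{n-1}a_n|^{-1}\ge c^{-2}\sum_k|\alpha_k|\sqrt{d_kd_{k+1}}=+\infty$, whence $B_{X,\gA}$ is self-adjoint by Dennis--Wall, and $D_{X,\gA}$ is self-adjoint by Proposition~\ref{th_delta_sa}.

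I do not expect a genuine obstacle: the computation rests only on the elementary inequality $\nu(x)<cx$ together with $\sum_n d_n=|\cI|<\infty$. The points that need a little care are matching the entries of \eqref{IV.2.1_01} to its period--two index pattern (as a sanity check, the ``auxiliary'' rows $n=2k$ contribute $|b_{2k}|\,|a_{2k-1}a_{2k}|^{-1}\le\mathrm{const}\cdot\sqrt{d_kd_{k+1}}$, whose sum over $k$ is finite and hence harmless), the reduction of the case $\alpha_{n_0}=+\infty$ noted above, and quoting Dennis--Wall in the precise form of \cite{Akh} --- the estimate above is insensitive to an index shift $a_{n-1}a_n\leftrightarrow a_na_{n+1}$ or to replacing $|b_n|$ by $|b_n-z|$ for a fixed spectral parameter $z$, so any of the standard formulations suffices.
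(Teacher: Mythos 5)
Your proposal follows the same route as the paper: reduce via Proposition~\ref{th_delta_sa} to self-adjointness of the Jacobi matrix \eqref{IV.2.1_01} (equivalently \eqref{IV.2.1_01'}) and apply the Dennis--Wall test, with the divergence supplied by the ``interaction'' rows, where $|b_{2k+1}|\,|a_{2k}a_{2k+1}|^{-1}=|\alpha_k|\,d_k^{3/2}d_{k+1}^{3/2}/\bigl(\nu(d_k)\nu(d_{k+1})\bigr)\gtrsim c^{-2}|\alpha_k|\sqrt{d_kd_{k+1}}$, while all remaining terms are controlled by $\sum_n d_n=|\cI|<\infty$ via $\nu(x)\sim cx$. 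The paper's version of the Dennis--Wall condition (its \eqref{4.15}) carries a slightly different auxiliary term than the one you quote, but the dominant term is identical and your estimates are correct, so the two arguments are essentially the same.
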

%%%%%%%%%%%%%%%%%%%%%%%%%%%%%%%%%%%%%%%%%%%%%%%%
  \begin{proof} By Proposition  \ref{th_delta_sa},  it suffices to show that  the minimal Jacobi operator $B_{X,\gA}'$ associated with the Jacobi matrix  \eqref{IV.2.1_01'}
   is  selfadjoint. By the Dennis-Wall test (see \cite{Akh}, Problem 2, p.25), $B_{X,\alpha}$ is self-adjoint whenever
  \begin{equation}\label{4.15}
\sum_{n=1}^\infty\frac{d_{n+1}^{3/2}}{\nu(d_{n+1})}\left(\frac{d_{n}^{3/2}|\alpha_n|}{\nu(d_n)} + d_{n+2}^{1/2}\right)=+\infty\,.
  \end{equation}
The  condition $|\cI| < \infty$ is equivalent to
$\sum_{n=1}^\infty d_n<+\infty$. The latter  implies
$\nu(d_{n})\thicksim c\,d_n$. Hence  $\frac{d_{n}^{3/2}}{\nu(d_n)} \thicksim c^{-1}\,d_n^{1/2}$.
Taking these relations into account and noting that
$$
2\sum_{n\in \N} \sqrt{d_{n+1}}\,\sqrt{d_{n+2}}\le
 \sum_{n\in \N} (d_{n+1} + d_{n+2}) < +\infty,
$$
one concludes that the series \eqref{4.14} and \eqref{4.15} diverge only simultaneously.
\end{proof}
     \begin{example}
Let $\cI := (0,1)$ and let the sequence  $X=\{x_n\}_{n=1}^\infty\subset(0,1)$ be given by $x_n = 1- 1/2^n$, so that $d_n=1/2^n$. Let also $\alpha=\{\alpha_n\}_1^\infty$ be given by
$\alpha_n = (-3)2^n+1,\ n\in \N.$
By Proposition \ref{interval},  the $GS$-operator
$D_{X,\alpha}$ on $L^2(0,1)\otimes \C^2$ is selfadjoint since the series $\sum_{n=1}^\infty{\alpha_n}/{2^n}$ diverges.

On the other hand, it is easily seen that
$$
\{d_n\}_1^\infty \in l^2(\N)\,,\quad d_{n-1}d_{n+1}=\frac{1}{2^{2n}}=d_n^2\quad \text{and}\quad
\sum_{n=1}^\infty d_{n+1}\left|\alpha_n+\frac{1}{d_n}+\frac{1}{d_{n+1}}\right|=
\sum_{n=1}^\infty\frac{1}{2^{n+1}}=\frac12.
$$
Therefore, by \cite[Proposition 5.9]{KM}, the corresponding  Schr\"odinger operator $H_{X,\alpha}$ on $L^2(0,1)$ is not self-adjoint: it is symmetric with
the deficiency  indices $n_{\pm}(H_{X,\alpha}) =1$.
     \end{example}
%%%%%%%%%%%%%%%%%%%%%%%%%%%%%%%%%%%%%%%%%%%%%%%%%%%%%%%%%%
%
\noindent
{\bf 2. The classical approach.}
\noindent   Now we show, by a direct proof, that in the case $\cI=\R$,
$X=\{x_k\}_{k\in\Z}$, $\alpha=\{\alpha_k\}_{k\in\Z}$ and
$\beta=\{\beta_k\}_{k\in\Z}$ (see Remark \ref{remarkGS}(i)) the
Gesztesy-\v{S}eba operators are always self-adjoint. This proof can readily be extended for
other realizations  as well as for Dirac operators $D_{X,\alpha}(Q)$ with unbounded
potential matrix $Q$.
%%%%%%%%%%%%%%%%%%%%%%%%%%%%%%%%%%%%%%
%%%%%%%%%%%%%%%%%%%%%%%%%%%%%%%%%%
\begin{theorem}\label{1}
Let $D_{X,\alpha}$ and $D_{X,\beta}$ be  GS-realizations of the Dirac operator in $L^2(\cI,\C^2)$.
Then:

\item $(i)$  The operator $D_{X,\alpha}^* := (D_{X,\alpha})^*$ adjoint to the symmetric operator $D_{X,\alpha}$  is given by
   \begin{equation}\label{delta_adjoint}
\begin{split}
 D_{X,\alpha}^*=& D\upharpoonright \dom(D_{X,\alpha}^*),\\
\dom(D_{X,\alpha}^*) = &\Big\{f\in W^{1,2}(\cI \backslash X)\otimes
\C^{2}: f_{1}\in  W^{1,2}(\cI),\ f_{2}\in AC_{\loc}(\cI\backslash X);\\&
f_2(a+)=0\,,\quad f_{2}(x_{n}+)-f_{2}(x_{n}-)
=-\frac{i\alpha_{n}}{c}f_{1}(x_{n}),\,\,n\in\N\Big\}.
\end{split}
  \end{equation}
Similarly, the operator $D_{X,\beta}^*$ adjoint  to $D_{X,\beta}$ is given by the expression \eqref{deltap}  with  $W^{1,2}_{\comp}(\cI \backslash X)$ replaced by  $W^{1,2}(\cI \backslash X)$.

\item $(ii)$  If $|\cI|= \infty$ (i.e. either  $\cI = \R_{\pm}$ or  $\cI = \R$)  %%%(see formulae \eqref{delta} and \eqref{deltap} and Remark \ref{remarkGS}),
then  both $D_{X,\alpha}$ and $D_{X,\beta}$ are selfadjoint, i.e.
\begin{equation}
   D_{X,\alpha}^{*} =  D_{X,\alpha}
\qquad  \textrm{and}\qquad
D_{X,\beta}^{*} = D_{X,\beta}.
  \end{equation}
   \end{theorem}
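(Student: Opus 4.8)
The plan is to prove part (i) by the standard integration-by-parts argument for first-order symmetric differential operators, and then to deduce part (ii) from the Green formula already available from the boundary triplet machinery (Corollary \ref{s.a.Dirac} together with Proposition \ref{th_delta_sa} and Proposition \ref{cor_delta_carleman}). For part (i), I would start from the definition of the adjoint: $g\in\dom(D_{X,\alpha}^*)$ iff there is $h\in L^2(\cI,\C^2)$ with $(D_{X,\alpha}f,g)=(f,h)$ for all $f\in\dom(D_{X,\alpha}^0)$. Testing first against $f$ with compact support inside a single component interval $(x_{n-1},x_n)$ shows in the usual way that $g$ is locally in $W^{1,2}$ off $X$ and that $h=Dg$ there; this gives $g\in W^{1,2}(\cI\setminus X)\otimes\C^2$ with $Dg\in L^2$. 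Then, allowing $f$ to have nonzero (but still compactly supported) traces at the points $x_n$ and at $a+$, the boundary terms produced by integration by parts must all cancel; the off-diagonal structure of $D$ (only $f_1',f_2'$ appear, paired with $g_2,g_1$) shows that the surviving boundary form is a sum over $n$ of terms of the shape $c\big(f_1(x_n)[\,\overline{g_2(x_n+)}-\overline{g_2(x_n-)}\,]+\dots\big)$ together with a term at $a+$. Imposing the GS jump condition on $f$ (namely $f_2(x_n+)-f_2(x_n-)=-\tfrac{i\alpha_n}{c}f_1(x_n)$ and $f_2(a+)=0$) and using that $f_1(x_n)$ and $f_1(a+)$ are otherwise free forces exactly the conditions $g_2(a+)=0$ and $g_2(x_n+)-g_2(x_n-)=-\tfrac{i\alpha_n}{c}g_1(x_n)$, i.e. $g$ satisfies the same GS-conditions; additionally, since $f_2$ is continuous across $x_n$ when $\alpha_n=0$-type freedom is used, one reads off that $g_1$ has no jump, hence $g_1\in W^{1,2}(\cI)$. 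This yields precisely \eqref{delta_adjoint}. The argument for $D_{X,\beta}^*$ is verbatim the same after swapping the roles of the two components.

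For part (ii), the cleanest route is to reduce to what has already been proved. When $\cI=\R_a^+$ (equivalently, after the reflection $x\mapsto -x$, when $\cI=\R_b^-$), the operator $D_{X,\alpha}$ is a proper extension of the minimal operator $D_X=\bigoplus_n D_n$, and by Proposition \ref{prop_IV.2.1_01} it is parametrized in the boundary triplet $\Pi=\{\cH,\Gamma_0,\Gamma_1\}$ of Theorem \ref{th_bt_2} by the Jacobi operator $B_{X,\gA}$. By Proposition \ref{th_delta_sa}, $D_{X,\alpha}=D_{X,\alpha}^*$ iff $B_{X,\gA}=B_{X,\gA}^*$, and Proposition \ref{cor_delta_carleman} already shows the latter holds whenever $\sum_n d_n=\infty$, i.e. whenever $\cI$ is infinite (Carleman's test applied to the matrix \eqref{IV.2.1_01'} via the divergence of \eqref{IV.2.2_02BB}). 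For $\cI=\R$ with $X=\{x_k\}_{k\in\Z}$ one uses the decomposition $D_X=\bigoplus_{k\in\Z}D_k$ and the boundary triplet described in Remark \ref{RMD}; the associated boundary operator is again a (two-sided) Jacobi matrix of the same form, and Carleman's test applies on both tails since $\sum_{k\le 0}d_k=\sum_{k\ge 1}d_k=\infty$. Hence $D_{X,\alpha}$ is self-adjoint, and by part (i) (or by the remark that $D_{X,\alpha}$ and $D_{X,\beta}$ are symmetric or self-adjoint only simultaneously, via the unitary equivalence of Proposition \ref{prop6.5}) the same holds for $D_{X,\beta}$.

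The only genuinely delicate point is bookkeeping in part (i): one must be careful that the integration by parts is legitimate for functions that are merely $W^{1,2}_{\mathrm{loc}}$ away from $X$ and whose traces at the $x_n$ need not be square-summable, so the boundary form is evaluated only against test functions of compact support (finitely many nonzero traces), which is exactly what $\dom(D_{X,\alpha}^0)$ provides; passing to the closure $D_{X,\alpha}=\overline{D_{X,\alpha}^0}$ at the end is harmless since adjoints are insensitive to closures. A secondary subtlety is the handling of the endpoint $a+$ and of the values $\alpha_n=\infty$ (where the GS-condition degenerates to $f_1(x_n)=0$, cf. \eqref{delta_infty}); in the latter case the corresponding boundary term drops out automatically and the condition on $g$ becomes $g_1(x_n)=0$ as well, consistently with \eqref{delta_adjoint} read with the convention \eqref{delta_infty}. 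I expect these to be routine once the boundary form is written out explicitly, so the main work is simply organizing the traces correctly.
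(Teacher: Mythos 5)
Your part (i) is essentially the paper's own argument: the paper also proves the inclusion $W^{1,2}_{\alpha}(\cI\setminus X)\subset\dom(D_{X,\alpha}^*)$ by integration by parts, and for the converse uses $D_X\subset D_{X,\alpha}$ to get $\dom(D_{X,\alpha}^*)\subset\dom(D_X^*)=W^{1,2}(\cI\setminus X)\otimes\C^2$ (which is your ``test against compactly supported $f$ in one component interval'' step) and then tests against $\varphi$ supported near a single $x_j$ to read off continuity of $g_1$ and the jump condition on $g_2$, exactly as you describe. The worry you flag about non-square-summable traces is indeed handled precisely by using only compactly supported test functions.

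For part (ii) you take a genuinely different route from the one the paper uses inside Theorem \ref{1}. The paper gives a direct ``classical'' proof: it integrates by parts over $[a_n,b_n]$ with $a_n\to-\infty$, $b_n\to+\infty$ chosen (using $f_j,g_j\in L^2$) so that all boundary values tend to zero, concludes that $D_{X,\alpha}^*$ is symmetric, hence $D_{X,\alpha}^*\subseteq D_{X,\alpha}^{**}=D_{X,\alpha}$. Your reduction to the Jacobi operator $B_{X,\alpha}$ via Propositions \ref{th_delta_sa} and \ref{cor_delta_carleman} (and to $B_{X,\beta}$ or Proposition \ref{prop6.5} for the $\beta$ case) is the route the paper itself takes in Proposition \ref{cor_delta_carleman}, and it is not circular; but be aware that the entire boundary-triplet chain rests on Theorem \ref{th_bt_2}, whose construction requires $d^*(X)<\infty$ (the regularization point $c^2/2$ must lie in a common spectral gap of the $D_{n,0}$, and the gap endpoints $\alpha_n$ tend to $c^2/2$ when $d_n\to\infty$). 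Theorem \ref{1}(ii) as stated carries no such restriction, and the paper's direct proof is precisely what removes it --- and, as the paper notes, it also extends verbatim to realizations $D_{X,\alpha}(Q)$ with unbounded potential matrices $Q$ and to the $\gamma$-type boundary conditions \eqref{10d}, which the Jacobi-matrix route does not immediately cover. So your argument is correct under the standing hypothesis $d^*(X)<\infty$ but does not prove the full statement; to close the gap you would either need to extend the boundary-triplet construction to $d^*(X)=\infty$ or fall back on the direct limiting argument.
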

%%%%%%%%%%%%%%%%%%%%%%%%%%%%%%%%%%%%%%%%%%%%%%%
      \begin{proof}
(i) Denote the right hand side of  \eqref{delta_adjoint} by $W^{1,2}_{\alpha}(\cI \backslash X).$
Then for any  $f\in W^{1,2}_{\alpha}(\cI \backslash X)$ integrating by parts
one arrives at the identity
%%%for any $\varphi=\begin{pmatrix}\varphi_{1}\\\varphi_{2}\end{pmatrix}\in\dom(D_{X,\alpha})$ one has
%
%
 \begin{equation}\label{4.13}
  \left(D^{*}_{X,\alpha}\,f,\varphi\right) = \left(f,D_{X,\alpha}\,\varphi\right), \qquad
  \varphi= \binom{\varphi_{1}}{\varphi_{2}}   \in \dom(D_{X,\alpha}),
  %%%%\supp(\varphi)\subset(x_{j} - \varepsilon,\,x_{j} + \varepsilon),
\end{equation}
proving the inclusion  $W^{1,2}_{\alpha}(\cI \backslash X)\subset \dom(D_{X,\alpha}^*).$

Let us prove the converse inclusion.  Since  $D_X \subset D_{X,\alpha}$ and $D_X$ is  symmetric,  one has
%%%  Noting that $\dom(D^{*}_{X,\alpha}) \subset \dom(D^*_X)$ and taking the regularity property
%%\eqref{4.11} into account we can integrate by parts in \eqref{4.13}.
%
%
   \begin{equation}\label{4.11}
\dom(D^{*}_{X,\alpha}) \subset \dom(D_X^*) = W^{1,2}({\cI}\setminus X)\otimes \mathbb C^2 =
\bigoplus_{n\in \N} W^{1,2}[x_{n-1},x_{n}]\otimes\C^{2}.
     \end{equation}
Let $f= \binom{f_1}{f_2} \in  \dom\left(D^{*}_{X,\alpha}\right)$.
Then,  by definition, \eqref{4.13} holds.
According  to  the "regularity property" \eqref{4.11}  we can integrate by parts
in \eqref{4.13} over  any interval $[x_{n-1},x_{n}],\ n\in \N$.
Substitute in \eqref{4.13}  vector functions  $\varphi$
supported on a small neighborhood $(x_{j} - \varepsilon,\,x_{j} +
\varepsilon)$ of $x_{j}$ and  integrating by parts  we get
%%The identity \eqref{4.13} takes the form
%
%
    \begin{equation*}
[f_2(x_{j}+)-f_2(x_j-) + i\alpha_jc^{-1} f_1(x_j+)]
\overline{\varphi_1(x_j)} + [f_1(x_{j}+) - f_1(x_j-)]
\overline{\varphi_2(x_j-)} = 0.
    \end{equation*}
Since $\varphi_1(x_j)$ and $\varphi_2(x_j -)$ are arbitrary, the latter equality holds if and only if $f_1(x_{j}+) = f_1(x_j-)$  and $f_2(x_{j}+)-f_2(x_j-)  = -i\alpha_jc^{-1} f_1(x_j)$. Since
$j\in \N$ is arbitrary, $f$ satisfies boundary conditions in \eqref{delta_adjoint} and
$\dom(D_{X,\alpha}^*) \subset W^{1,2}_{\alpha}(\cI \backslash X).$  Noting that the opposite   inclusion is already proved,  we arrive at \eqref{delta_adjoint}.

(ii)  For definiteness we assume that $\cI = \R.$ It suffices to show that   $D^{*}_{X,\alpha}$
 is symmetric. Let $f = \binom{f_1}{f_2},\, g = \binom{g_1}{g_2} \in
 \dom\left(D^{*}_{X,\alpha}\right)$.

Choosing $a,b\in\R\backslash X,\, a<b$, such that
$x_{p-1}<a<x_{p}<x_{p+1}<...<x_{q}<b<x_{q+1}$ and integrating by
parts we get
    \begin{equation}\label{4d}
  \begin{split}
  &\int_{a}^{b}D^{*}_{X,\alpha}f(x)\overline{g(x)}dx-\int_{a}^{b}f(x)\overline{D^{*}_{X,\alpha}g(x)}dx\\
  &=-i\,c\,\int_{a}^{b}\left[f'_{2}(x)\overline{g_{1}(x)}+f'_{1}(x)\overline{g_{2}(x)}\right]dx-i\,c\,\int_{a}^{b}\left[f_{1}(x)\overline{g'_{2}(x)}+f_{2}(x)\overline{g'_{2}(x)}\right]dx\\
  &=-i\,c\,\left.\left[f_{2}(x)\overline{g_{1}(x)}+f_{1}(x)\overline{g_{2}(x)}\right]\right|_{a}^{x_{p}-}-i\,c\,\left.\left[f_{2}(x)\overline{g_{1}(x)}+f_{1}(x)\overline{g_{2}(x)}\right]\right|_{x_{q}+}^{b}\\
  &-i\,c\,\sum_{k=p+1}^{q}\left.\left[f_{2}(x)\overline{g_{1}(x)}+f_{1}(x)\overline{g_{2}(x)}\right]\right|_{x_{k-1}+}^{x_{k}-}\\
  &=i\,c\,\left[f_{2}(a)\overline{g_{1}(a)}+f_{1}(a)\overline{g_{2}(a)}\right]-i\,c\,\left[f_{2}(b)\overline{g_{1}(b)}+f_{1}(b)\overline{g_{2}(b)}\right]
  \\
  &+i\,c\,\sum_{k=p}^{q}\left[f_{2}(x_{k}+)-f_{2}(x_{k}-)\right]\overline{g_{1}(x_{k})}+i\,c\,\sum_{k=p}^{q}f_{1}(x_{k})\left[\overline{g_{2}(x_{k}+)}-\overline{g_{2}(x_{k}-)}\right]\\
  &=i\,c\,\left[f_{2}(a)\overline{g_{1}(a)}+f_{1}(a)\overline{g_{2}(a)}\right]-i\,c\,\left[f_{2}(b)\overline{g_{1}(b)}+f_{1}(b)\overline{g_{2}(b)}\right].
  \end{split}
  \end{equation}
Since $f_{j},g_{j}\in L^{2}(\R),\,j\in\{1,2\}$, there exist (non unique) sequences $\{a_{n}\},\{b_{n}\}\subset\R$ such that $a_{n}\rightarrow-\infty$, $b_{n}\rightarrow\infty$ as $n\rightarrow\infty$ and
\begin{equation}\label{5d}
 \lim_{n\to\infty}\big\{|f_{1}(a_{n})|+|f_{2}(a_{n})|+|g_{1}(a_{n})|+|g_{2}(a_{n})|\big\}=0
\end{equation}
and
\begin{equation}\label{6d}
      \lim_{n\to\infty}\big\{|f_{1}(b_{n})|+|f_{2}(b_{n})|+|g_{1}(b_{n})|+|g_{2}(b_{n})|\big\}=0.
      \end{equation}
\noindent
Without loss of generality, we can assume
that $\{a_{n}\},\{b_{n}\}\subset\R\backslash X$ since \eqref{5d}
and \eqref{6d} remain valid with $\{a_{n}\pm\varepsilon_{n}\}$ and
$\{b_{n}\pm\varepsilon_{n}\}$ in place of $\{a_{n}\}$ and
$\{b_{n}\}$, respectively, provided that  $\varepsilon_{n},\ n\in \N,$ are small
enough.

 According to (\ref{4d})
      \begin{equation}\label{7d}
  \begin{split}
  &\int_{a_{n}}^{b_{n}}D^{*}_{X,\alpha}f(x)\overline{g(x)}dx-\int_{a_{n}}^{b_{n}}f(x)\overline{D^{*}_{X,\alpha}g(x)}dx\\
  &=i\,c\,\left[f_{2}(a_{n})\overline{g_{1}(a_{n})}+f_{1}(a_{n})\overline{g_{2}(a_{n})}\right]-i\,c\,\left[f_{2}(b_{n})
  \overline{g_{1}(b_{n})}+f_{1}(b_{n})\overline{g_{2}(b_{n})}\right].
  \end{split}
  \end{equation}
   \noindent
%%%showing that  $D^{*}_{X,\alpha}f,\,D^{*}_{X,\alpha}g\in L^{2}(\R)\otimes\C^{2}$
Passing here  to the limit  as $n\to\infty$ with account of  relations \eqref{5d},
\eqref{6d} we arrive at the identity
  \begin{equation}\label{8d}
  \left(D^{*}_{X,\alpha}\,f,g\right)=\left(f,D^{*}_{X,\alpha}\,g\right),\qquad f,g\in\dom(D^{*}_{X,\alpha}),
  \end{equation}
showing that   $D^{*}_{X,\alpha}$ is symmetric,
$D^{*}_{X,\alpha}\subseteq D^{*\,*}_{X,\alpha}=D_{X,\alpha}$.
Since $D_{X,\alpha}$ is also symmetric, one has
$D_{X,\alpha}= D^{*}_{X,\alpha}.$

\noindent The case of GS realizations $D_{X,\beta}$  is
considered in much  the same way.
   \end{proof}
%%%%%%%%%%%%%%%%%%%%%%%%%%%%%%%%%%%%%
   \begin{remark} \item $(i)$ In the case $d_{*}(X)>0$ this result is stated in \cite{GS} (see also
\cite[Appendix J ]{Alb_Ges_88}).

\item $(ii)$  The proof of Theorem \ref{1} remains valid for general Dirac operators %\red{differential expression}
with arbitrary potential matrix $Q\in L_{\loc}^2(\R)\otimes \C^{2\times 2}$,
  \begin{equation}\label{4.9}
D_{X,\alpha}(Q) := -i\,c\,\frac{d}{dx}\otimes\sigma_{1} +
\frac{c^{2}}{2}\otimes\sigma_{3} + Q(x),\qquad Q(x)=Q(x)^{*},
\end{equation}
subject to GS-boundary conditions \eqref{delta}, \eqref{deltap}.
%%%Here with non-constant potentials $Q\in L_{loc}^{\infty}(\R)$,

Moreover, the GS--boundary conditions  \eqref{delta} and
\eqref{deltap} can be replaced by certain  other  boundary conditions.
For instance, Theorem \ref{1} as well as its proof  remains valid for
operators $D_{X,\gamma}(Q)$  generated by differential expression
\eqref{4.9} subject to the boundary conditions
   \begin{equation}\label{10d}
\begin{split}
&f_{1}(x_{j}+)=\cos(\gamma_{j})\,f_{1}(x_{j}-)-i\sin(\gamma_{j})\,f_{2}(x_{j}-), \\
&f_{2}(x_{j}+)=\cos(\gamma_{j})\,f_{2}(x_{j}-)-i\sin(\gamma_{j})\,f_{1}(x_{j}-)\,,
\end{split}
   \end{equation}
with $\gamma_{j}\in \R$, $j\in \Z.$ Note  that realizations $D_{X,\gamma}$ have been studied in numerous papers under the assumption $d_{*}(X)>0$ (see for instance \cite{Gumbs}, \cite{Lapidus},  \cite{AvrGro},\cite{DavSte},  as well \cite[Appendix J ]{Alb_Ges_88} and the references therein).

%%%%For references about $D_{X,\gamma}$ see the many papers quoted in \cite[Appendix J ]{Alb_Ges_88}.

Our proof of Theorem \ref{1} generalizes the known proof of selfadjointness in $L^2(\R)\otimes \C^2$ of the Dirac operator $D(Q)$ with a continuous potential matrix $Q$ (see \cite[Chapter 8]{LevSar88}).
        \end{remark}
Next we complete Proposition \ref{interval}  providing  sufficient conditions for GS realization $D_{X,\alpha}$ on
a finite interval $\cI$ to have non-trivial deficiency  indices $n_{\pm}(D_{X,\alpha})=1$.
\begin{theorem}\label{not-s.a.}
Let $|\cI|<+\infty$ and let $D_{X,\alpha}$ be  the GS realization  of the Dirac
expression on $\cI$. Then $D_{X,\alpha}$ is symmetric with $n_{\pm}(D_{X,\alpha}) =1$ provided that
\begin{equation}\label{non-self_cond}
\sum_{n=2}^{\infty}d_{n}\prod_{k=1}^{n-1}\left(1+\frac 1c\,|\alpha_{k}|\right)^{2} < +\infty\,.  %%%\cosh(d_{k}/c)\right)^{2}
\end{equation}
\end{theorem}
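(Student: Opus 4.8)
The statement to be proved is Theorem \ref{not-s.a.}: under the summability condition \eqref{non-self_cond}, the GS-realization $D_{X,\alpha}$ on a finite interval has deficiency indices $n_\pm(D_{X,\alpha})=1$. By Proposition \ref{th_delta_sa} we have $n_\pm(D_{X,\alpha})=n_\pm(B_{X,\alpha})$ with $B_{X,\alpha}$ the minimal Jacobi operator associated with \eqref{IV.2.1_01} (equivalently \eqref{IV.2.1_01'}), so the whole problem reduces to showing that this Jacobi operator is \emph{not} self-adjoint, i.e. that its deficiency index is $1$ rather than $0$. For a minimal Jacobi operator this is exactly the \emph{limit-circle} case at infinity, which by the classical Weyl alternative is equivalent to: for some (equivalently every) $z\in\C$, every solution of the difference equation $\tau_{X,\alpha}u=zu$ lies in $\ell^2(\N)$. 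The natural tool here is a Berezanskii-type test: if one can exhibit a ``transfer-matrix'' product estimate forcing all solutions to be square-summable, non-self-adjointness follows.

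\textbf{Key steps.} First I would take $z=0$ and analyze the equation $\tau_{X,\alpha}u=0$ directly, writing the three-term recurrence coming from the rows of \eqref{IV.2.1_01'}. It is cleaner to work instead with the unregularized matrix $\widetilde B_\alpha$ of \eqref{IV.2.2_01A}, since $B_{X,\alpha}=R_X^{-1}(\widetilde B_\alpha-Q_X)R_X^{-1}$ and the bounded, boundedly invertible (entrywise) diagonal operator $R_X$ does not affect the limit-point/limit-circle classification, only the weighting — but the $\ell^2$-class membership must then be read off in the weighted space $\ell^2(\N;\{d_n\})$-type norm induced by $R_X^{-1}$. Because $|\cI|<\infty$ we have $\sum d_n<\infty$, hence $\nu(d_n)\sim c\,d_n$ (as already used in the proof of Proposition \ref{interval}); this lets me replace all the $\nu(d_n)$ factors by $c\,d_n$ up to bounded multiplicative errors, simplifying the off-diagonal entries of \eqref{IV.2.1_01'} to asymptotically $d_n^{-3/2}d_{n\pm1}^{-1/2}$ on the ``edges'' and $d_n^{-2}$ on the ``short'' bonds. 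Second, I would solve the resulting recurrence explicitly at $z=0$: the structure of $\widetilde B_\alpha$ (blocks of the form $\left(\begin{smallmatrix}0&1\\1&\alpha_n\end{smallmatrix}\right)$) makes the zero-energy solution a simple telescoping product — one solution is essentially constant in the $f_1$-component, and the growth of the general solution is governed by the products $\prod_{k=1}^{n-1}(1+|\alpha_k|/c)$. Third, I would feed this solution into the weighted $\ell^2$-norm coming from $R_X^{-1}$: the weight on the relevant components produces a factor $d_n$, so square-summability of the (possibly larger) solution amounts precisely to $\sum_n d_n\prod_{k=1}^{n-1}(1+|\alpha_k|/c)^2<\infty$, which is hypothesis \eqref{non-self_cond}. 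Since the other solution is manifestly smaller, both solutions are then in the space, giving limit-circle, i.e. $n_\pm(B_{X,\alpha})=1$, and Proposition \ref{th_delta_sa} finishes the argument. (Since a Jacobi operator is never limit-circle/limit-point ``of index $\ge2$,'' the only alternative to self-adjointness is $n_\pm=1$, so no further bookkeeping is needed.)

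\textbf{Main obstacle.} The delicate point is the bookkeeping between the three genuinely different scales present: the regularization factors $R_n=\mathrm{diag}(d_n^{1/2},d_n^{3/2}\sqrt{1+1/(c^2d_n^2)})$, the off-diagonal weights $\nu(d_n)/(d_n^2)$ and $\nu(d_n)/(d_n^{3/2}d_{n+1}^{1/2})$, and the diagonal perturbation $\alpha_n/d_n$. One must be careful that the ``two-by-two block per point'' structure of $\widetilde B_\alpha$ really does produce a scalar product $\prod(1+c^{-1}|\alpha_k|)$ for the solution amplitude — the cross terms coupling $f_1$ and $f_2$ at each node have to be tracked, and the sign of the off-diagonal entries (negative in \eqref{IV.2.1_01}, positive in \eqref{IV.2.1_01'}) is irrelevant for $\ell^2$-class but matters if one wants a clean monotone estimate, which is why passing to $B'_{X,\alpha}$ via Remark \ref{remark4.4} is advisable. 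A secondary technical nuisance is that one should verify the zero-energy analysis is not accidentally degenerate (e.g. the leading solution not vanishing identically); if $z=0$ turns out to be awkward one can run the same computation at a generic real $z$ in the gap $(-c^2/2,c^2/2)$, where $D_{n,0}$ has no spectrum, with only cosmetic changes, since the limit-circle property is $z$-independent. I expect the estimate itself, once the recurrence is in hand, to be a routine telescoping/comparison argument of the type used in the Dennis–Wall direction of Proposition \ref{interval}, just run in the opposite (convergence) direction.
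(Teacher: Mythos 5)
Your approach is correct in outline but genuinely different from the paper's. The paper does \emph{not} go through the Jacobi matrix for this theorem: it works directly with the differential expression, showing that $(T_{X,\alpha}^*+i)f=0$ (with $T_{X,\alpha}=D_{X,\alpha}-\tfrac{c^2}{2}\sigma_3$) has a nontrivial $L^2$ solution. The defect element is built from piecewise exponentials $e^{\pm(x_n-x)/c}$ whose coefficients $a_n,b_n$ satisfy a two-term recursion coming from the GS interface conditions, and an induction gives $|a_n|,|b_n|\le e^{|\cI|/c}\prod_{k<n}(1+|\alpha_k|/c)$; the $L^2$ condition then reduces to $\sum_n d_n(|a_n|^2+|b_n|^2)<\infty$, which is exactly \eqref{non-self_cond}. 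Your route instead uses Proposition \ref{th_delta_sa} to reduce to the limit-circle analysis of $B_{X,\alpha}$ and runs the Weyl alternative for the difference equation at $z=0$. Both routes hinge on the same product estimate, and yours has the virtue of being symmetric with the Dennis--Wall argument of Proposition \ref{interval}; the paper's has the virtue of not needing the boundary-triplet machinery at all for this direction (only the explicit description of $\dom(D_{X,\alpha}^*)$ from Theorem \ref{1}(i)).

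One step of your plan deserves more care than you give it, because it is precisely where the constant $c$ enters the product. Writing $u=R_Xw$ with $R_n$ as in \eqref{IV.1.1_19}, the equation $(\widetilde B_\alpha-Q_X)w=0$ reads $w_{2n+1}=w_{2n-1}+d_nw_{2n}$, $w_{2n+2}=w_{2n}+\alpha_nw_{2n+1}$, whose naive solution bound is $\prod_k(1+|\alpha_k|)$, not $\prod_k(1+|\alpha_k|/c)$. The point is that the second weight is $(R_n^2)_{22}=d_n^3+d_n/c^2\sim d_n/c^2$ (not $\sim d_n^3$, since $d_n\to0$), so the natural rescaling is $w_{2n}=c\,\widetilde w_{2n}$; in the rescaled variables the recursion becomes $w_{2n+1}=w_{2n-1}+cd_n\widetilde w_{2n}$, $\widetilde w_{2n+2}=\widetilde w_{2n}+\tfrac{\alpha_n}{c}w_{2n+1}$, the transfer bound becomes $(1+cd_n)(1+|\alpha_n|/c)$ with $\prod_n(1+cd_n)\le e^{c|\cI|}<\infty$, and the norm becomes comparable to $\sum_nd_n(|w_{2n-1}|^2+|\widetilde w_{2n}|^2)$. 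With that bookkeeping done, both solutions of the difference equation lie in $\ell^2$, the limit-circle case holds (and is $z$-independent), and $n_\pm(B_{X,\alpha})=1$ as claimed. Your statement that ``the weight on the relevant components produces a factor $d_n$'' is true only after this rescaling, and your suggestion to use a real $z$ ``in the gap of $D_{n,0}$'' conflates the spectral parameter of the Jacobi matrix with that of the Dirac operator (harmlessly, since any $z$ works). Modulo these points the plan goes through.
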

%%%%%%%%%%%%%%%%%%%%%%%%%%
  \begin{proof}
We  examine the  operator  %
\begin{equation}
T_{X,\alpha}:=D_{X,\alpha}-\frac{c^{2}}{2}\otimes\left(
\begin{matrix}1&0\\
0&-1\end{matrix}\right)=-i\,c\,\frac{d}{dx}\otimes\left(
\begin{matrix}0&1\\
1&0\end{matrix}\right)
\end{equation}
since obviously $n_{\pm}(D_{X,\alpha}) = n_{\pm}(T_{X,\alpha}).$
It suffices to show that under the  assumption \eqref{non-self_cond} the equation
$(T^{*}_{X,\alpha}+i)f=0$ has a non-trivial $L^2(\cI, \C^2)$-solution. The equation  is equivalent to
the system  $cf'_{1}=f_{2}$ and $cf'_{2}=f_{1}$ which has the following %%$W^{1,2}(\cI\setminus X, \C^2)$-
piecewise smooth solutions
 \begin{eqnarray}\label{5.27}
f_{1}=\bigoplus_{n=1}^{\infty}f_{1,n}\,,\quad
f_{1,n}(x)=a_{n}e^{-(x_{n}-x)/c}+b_{n}e^{(x_{n}-x)/c}\,,\quad x\in [x_{n-1},x_{n}]\,, \nonumber \\
f_{2}=\bigoplus_{n=1}^{\infty}f_{2,n}\,,\quad
f_{2,n}(x)=a_{n}e^{-(x_{n}-x)/c}-b_{n}e^{(x_{n}-x)/c}\,,\quad x\in [x_{n-1},x_{n}]\,.
 \end{eqnarray}
Let us find  sequences $\{a_{n}\}_{1}^{\infty}\subset\C$, $\{b_{n}\}_{1}^{\infty}\subset\C$  such that $f\in {\dom}(T^{*}_{X,\alpha}) = {\dom}(D^{*}_{X,\alpha})$.
According to the description of ${\dom}(D^{*}_{X,\alpha})$ (see Theorem  \ref{1}(i))  $\binom{f_1}{f_2}$ should satisfy boundary conditions \eqref{delta_adjoint}.
The  condition $f_{2}(x_{0}+)=0$ yields  $a_{1}e^{-d_{1}/c}-b_{1}e^{d_{1}/c}=0$. Further,  the  condition
$$
f_{1,n}(x_{n}+)=f_{1,n}(x_{n}-), \qquad n\in \N,
$$
is transformed into
 \begin{equation}\label{5.28A}
a_{n+1}e^{-d_{n+1}/c}+b_{n+1}e^{d_{n+1}/c}=a_{n}+b_{n}\,, \qquad n\in \N.
 \end{equation}
Moreover,  the jump condition
$$
f_{2,n}(x_{n}+)-f_{2,n}(x_{n}-)=-i\,\frac{\alpha_{n}}{c}\,f_{1,n}(x_{n}),\qquad n\in \N,
$$
is equivalent to
  \begin{equation}\label{5.28B}
a_{n+1}e^{-d_{n+1}/c}-b_{n+1}e^{d_{n+1}/c}-(a_{n}-b_{n})=-i\,\frac{\alpha_{n}}{c}\,(a_{n}+b_{n}), \qquad n\in \N\,.
  \end{equation}%\label{5.28A}
Clearly,  relations \eqref{5.28A} and \eqref{5.28B}
are equivalent  to  the following recursive equations
  \begin{eqnarray}\label{5.28}
a_{n+1}=\left(a_{n}-i\,\frac{\alpha_{n}}{2c}\,(a_{n}+b_{n}\right)\,e^{d_{n+1}/c}\,,\qquad n\in \N, \nonumber \\
b_{n+1}=\left(b_{n}+i\,\frac{\alpha_{n}}{2c}\,(a_{n}+b_{n}\right)\,e^{-d_{n+1}/c}\,,\qquad n\in \N,
\end{eqnarray}
for sequences $\{a_{n}\}_{1}^{\infty}$ and $\{b_{n}\}_{1}^{\infty}$
with the following initial data
%%%%%$a_{1}$ and $b_{1}$ given by
$$
a_{1}= e^{d_{1}/c} \quad \text{and} \quad  b_{1}= e^{-d_{1}/c}\,.
$$
It remains to check that under condition  \eqref{non-self_cond} the inclusion $f_{1}, f_{2}\in L^{2}(\cI)$ holds.
It follows from  \eqref{5.27} that
   \begin{align*}
\|f_{k}\|_{2}^{2}=&\sum_{n=1}^{\infty}\|f_{k,n}\|^{2}_{2}\le{2}
\sum_{n=1}^{\infty}\int_{0}^{d_{n}}(|a_{n}|^{2}e^{-2x/c}+ |b_{n}|^{2}e^{2x/c})dx\\
= &\
c\ \sum_{n=1}^{\infty}\left(|a_{n}|^{2}(1-e^{-2d_{n}/c})+|b_{n}|^{2}(e^{2d_{n}/c}-1)\right),\qquad k\in \{1,2\} \,.
\end{align*}
Since $\sum_{n=1}^{\infty}d_{n}=|\cI|<+\infty$, $d_{n}\to0$ and therefore   $(1-e^{-2d_{n}/c})\sim (e^{2d_{n}/c}-1)\sim 2d_{n}/c$ as $n\to \infty$.
This implies inequality  $\|f_{k}\|_{2}<+\infty$ whenever
  \begin{equation}\label{5.30}
\sum_{n=1}^{\infty}\left(|a_{n}|^{2}+|b_{n}|^{2}\right)d_{n}<+\infty\,.
  \end{equation}

Let us prove by induction the following estimates
    \begin{equation}\label{5.29}
|a_{n+1}|, \ |b_{n+1}| \le \exp\left(\frac{d_1+ \ldots + d_{n+1}}{c}\right)\cdot\prod^n_{k=1}\left(1+\frac{|\alpha_k|}{c}\right), \qquad n\in \N.
  \end{equation}

For $n=1$ these estimates are obvious. Assume that inequalities \eqref{5.29} are
proved for $n\le m-1$.
Then for $n=m$ we obtain from \eqref{5.28}  and  \eqref{5.29} that
    \begin{eqnarray*}
|a_{m+1}|\le \left(|a_m|\left(1+\frac{|\alpha_m|}{2c}\right) + \frac{|\alpha_m|}{2c}|b_m|\right) e^{d_{m+1}/c}  \nonumber \\
\le \prod^{m-1}_{k=1}\left(1+\frac{|\alpha_k|}{c}\right) \left[\left(1+\frac{|\alpha_m|}{2c}\right) + \frac{|\alpha_m|}{2c}\right] \exp\left(\frac{d_1+ \ldots + d_m +d_{m+1}}{c}\right) \nonumber \\
=  \exp\left(\frac{d_1+ \ldots + d_{m+1}}{c}\right)\cdot
\prod^m_{k=1}\left(1+\frac{|\alpha_k|}{c}\right).
   \end{eqnarray*}
This inequality proves  the inductive hypothesis  \eqref{5.29} for $a_{n}$.
The estimate for $b_{m+1}$ is proved similarly. Thus, both  inequalities \eqref{5.29} are established.
Combining \eqref{5.30} with  \eqref{5.29} and  the assumption \eqref{non-self_cond}  we conclude  that $f_1, f_2 \in L^2(\cI)$. This  completes the proof.
   \end{proof}
%%%%%%%%%%%%%%%%%%%%%%%%%%%%%%%%%%%%%%%%%%%
%

Next we extract from Theorem  \ref{not-s.a.} certain simple sufficient conditions for the equality $n_{\pm}(D_{X,\alpha}) =1$ to hold.  First we present  such conditions involving $\alpha$ and not depending
on $X=\{x_n\}_1^\infty$.
%%%%%%%%%%%%%%%%%%%%%%%%%%%%%%%%%%%%%%%%
%
%
  \begin{corollary}\label{C1}
 The GS realization $D_{X,\alpha}$ on a finite interval $\cI$ is symmetric with $n_{\pm}(D_{X,\alpha}) =1$ whenever $\alpha = \{\alpha_{n}\}_1^{\infty}\in l^1(\N).$
  \end{corollary}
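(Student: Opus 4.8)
The plan is to derive Corollary~\ref{C1} directly from Theorem~\ref{not-s.a.}. That is, assuming $\alpha=\{\alpha_n\}_1^\infty\in l^1(\N)$ and $|\cI|=\sum_{n}d_n<\infty$, I will verify that the sufficient condition \eqref{non-self_cond},
$$
\sum_{n=2}^{\infty}d_{n}\prod_{k=1}^{n-1}\left(1+\tfrac 1c\,|\alpha_{k}|\right)^{2}<+\infty,
$$
is satisfied. Once this is checked, Theorem~\ref{not-s.a.} immediately yields that $D_{X,\alpha}$ is symmetric with $n_\pm(D_{X,\alpha})=1$, which is exactly the assertion.

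The key observation is that the infinite product $\prod_{k=1}^{\infty}\bigl(1+\tfrac1c|\alpha_k|\bigr)$ converges to a finite limit precisely because $\sum_k|\alpha_k|<\infty$; indeed, $\log\bigl(1+\tfrac1c|\alpha_k|\bigr)\le\tfrac1c|\alpha_k|$, so $\sum_k\log\bigl(1+\tfrac1c|\alpha_k|\bigr)\le\tfrac1c\sum_k|\alpha_k|<\infty$. Hence there is a constant
$$
C:=\prod_{k=1}^{\infty}\left(1+\tfrac 1c\,|\alpha_{k}|\right)^{2}<\infty,
$$
and, since the factors are all $\ge 1$, each partial product is bounded by $C$: $\prod_{k=1}^{n-1}\bigl(1+\tfrac1c|\alpha_k|\bigr)^2\le C$ for every $n$. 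Therefore
$$
\sum_{n=2}^{\infty}d_{n}\prod_{k=1}^{n-1}\left(1+\tfrac 1c\,|\alpha_{k}|\right)^{2}\le C\sum_{n=2}^{\infty}d_{n}\le C\,|\cI|<+\infty,
$$
so \eqref{non-self_cond} holds. An application of Theorem~\ref{not-s.a.} completes the proof.

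There is no real obstacle here; the only point requiring a line of care is the elementary estimate bounding the convergence of the infinite product by the summability of $\{\alpha_n\}$, and the remark that monotonicity of the partial products (each factor being $\ge1$) lets one pull the uniform bound $C$ out of the sum. I would state these two facts explicitly and then invoke $\sum_n d_n=|\cI|<\infty$ to conclude. Since everything reduces to Theorem~\ref{not-s.a.}, the corollary is a short consequence and the write-up should occupy only a few lines.
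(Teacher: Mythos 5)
Your proposal is correct and follows essentially the same route as the paper: the paper also bounds the partial products uniformly via $\prod_{k}(1+p_k)\le\exp\bigl(\sum_k p_k\bigr)$ (your $\log(1+x)\le x$ estimate is the same inequality), multiplies by $\sum_n d_n=|\cI|<\infty$, and then invokes Theorem~\ref{not-s.a.}. No gaps.
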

%%%%%%%%%%%%%%%%%%%%%%%%%
  \begin{proof}
Clearly,  for any positive sequence $\{p_{k}\}_{1}^{\infty}$
$$
%1+\sum_{k=1}^{\infty}p_{k}\le
\prod_{k=1}^{\infty}(1+p_{k})\le \exp\left(\sum_{k=1}^{\infty}p_{k}\right)\,.
$$
It follows with account of the inclusion $\alpha \in l^1(\N)$ that
\begin{align*}
\sum_{n=2}^{\infty}d_{n}\prod_{k=1}^{n-1}\left(1 + \frac 1c\,|\alpha_{k}|\right)^{2} \le \exp \left(\frac {2}{c}\sum_{k=1}^{\infty}
|\alpha_{k}|\right) \sum_{n=2}^{\infty}d_{n}
\le|\cI|\exp \left(\frac {2}{c}\sum_{k=1}^{\infty}
|\alpha_{k}|\right)\,.
\end{align*}
%
%
%%%%where $C_1:= \sup_{k\ge 1}\cosh(d_{k}/c).$
It remains to apply  Theorem \ref{not-s.a.}.
     \end{proof}
%%%%%%%%%%%%%%%%%%%%%%%%%%%%%%%%%%%
Our next test involves  both $X$ and $\alpha$.

\begin{corollary}\label{C2}
Let $|\cI|<+\infty$. Then the GS realization $D_{X,\alpha}$ is symmetric with $n_{\pm}(D_{X,\alpha}) =1$ whenever
  \begin{equation}\label{5.31}
\limsup_{n\to\infty}\,\frac{d_{n+1}}{d_{n}}\left(1+\frac{|\alpha_{n}|}c
\right)^{2}<1\,.
   \end{equation}
In particular,  $n_{\pm}(D_{X,\alpha}) =1$  provided that one of the following  conditions is satisfied
%%%%%%%%%%%%%%%%\begin{itemize}
\item $(i)$   $\limsup_{n\to\infty}(d_{n+1}/d_{n})=0$ and the  sequence $\alpha = \{\alpha_n\}_1^\infty$ is  bounded;
%%%%%%%%%%%%%%
\item $(ii)$  $\limsup_{n\to\infty}(d_{n+1}/d_{n})=:(1/d)$ with  $d>1$ and
$\sup_{n\in \N} \alpha_n < c(\sqrt{d}-1).$
   \end{corollary}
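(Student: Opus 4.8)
The plan is to derive Corollary \ref{C2} directly from Theorem \ref{not-s.a.}. The only thing to check is that the assumption \eqref{5.31} forces the series
\[
\sum_{n=2}^{\infty}d_{n}\prod_{k=1}^{n-1}\left(1+\tfrac1c\,|\alpha_{k}|\right)^{2}
\]
to converge, so that Theorem \ref{not-s.a.} applies and yields $n_{\pm}(D_{X,\alpha})=1$. To this end set $c_{n}:=d_{n}\prod_{k=1}^{n-1}\bigl(1+\tfrac1c|\alpha_{k}|\bigr)^{2}$ for $n\ge2$ and apply the ratio test: since
\[
\frac{c_{n+1}}{c_{n}}=\frac{d_{n+1}}{d_{n}}\left(1+\frac{|\alpha_{n}|}{c}\right)^{2},
\]
the hypothesis $\limsup_{n\to\infty}\frac{d_{n+1}}{d_{n}}\bigl(1+\tfrac{|\alpha_{n}|}{c}\bigr)^{2}<1$ means precisely $\limsup_{n\to\infty}\frac{c_{n+1}}{c_{n}}<1$, whence $\sum c_{n}<\infty$. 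Then \eqref{non-self_cond} holds and Theorem \ref{not-s.a.} gives the claim. I expect this step to be essentially immediate; there is no real obstacle, merely a bookkeeping check that the ratio of consecutive terms is exactly the quantity appearing in \eqref{5.31}.

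For the two special cases, I would simply verify that each implies \eqref{5.31}. In case $(i)$, boundedness of $\{\alpha_n\}$ gives a constant $K$ with $\bigl(1+\tfrac1c|\alpha_n|\bigr)^{2}\le K$ for all $n$, while $\limsup_{n}d_{n+1}/d_{n}=0$; multiplying, $\limsup_{n}\frac{d_{n+1}}{d_{n}}\bigl(1+\tfrac{|\alpha_n|}{c}\bigr)^{2}\le K\cdot 0=0<1$, so \eqref{5.31} holds. In case $(ii)$, the assumption $\sup_{n}\alpha_n<c(\sqrt d-1)$ (so $|\alpha_n|$ is bounded; here $\alpha_n$ may be taken non-negative, or one uses $|\alpha_n|$ throughout) gives $\bigl(1+\tfrac1c|\alpha_n|\bigr)^{2}<d$, and $\limsup_n d_{n+1}/d_{n}=1/d$, so that
\[
\limsup_{n\to\infty}\frac{d_{n+1}}{d_{n}}\left(1+\frac{|\alpha_n|}{c}\right)^{2}\le\frac1d\cdot d=1.
\]
To get a strict inequality one notes that $\limsup$ of a product of a convergent-$\limsup$ sequence with a bounded sequence that is eventually $<d$ is $<1$ as soon as the bound is strict: since $\sup_n(1+\tfrac{|\alpha_n|}{c})^2=:d'<d$, the $\limsup$ is at most $d'/d<1$. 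Hence \eqref{5.31} applies in both cases, and the proof is complete by the first part.

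The structure of the write-up is therefore: (1) reduce to a ratio test on the terms $c_n$; (2) identify the ratio with the left-hand side of \eqref{5.31} and invoke the ratio test to get convergence; (3) apply Theorem \ref{not-s.a.}; (4) dispatch $(i)$ and $(ii)$ as immediate consequences of \eqref{5.31}. The only place requiring a morsel of care is making the strict inequality in $(ii)$ precise — one must use that the supremum of $(1+\tfrac{|\alpha_n|}{c})^{2}$ is strictly below $d$ rather than merely that each term is — but this is routine and I would not belabor it in the final text.
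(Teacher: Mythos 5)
Your proposal is correct and follows essentially the same route as the paper: the paper's proof of Corollary \ref{C2} is exactly the observation that \eqref{5.31} is the ratio test for the series \eqref{non-self_cond}, followed by an appeal to Theorem \ref{not-s.a.}. Your additional verification of cases $(i)$ and $(ii)$ (including the remark that one needs the strict bound $\sup_n(1+|\alpha_n|/c)^2<d$ in case $(ii)$) is sound and merely fills in details the paper leaves implicit.
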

%%%%%%%%%%%%%%%%%%%%%%%%%%%%%%%%%%%%%%%%
  \begin{proof}
By the ratio test condition \eqref{5.31} yields  the convergence of the series \eqref{non-self_cond}. It remains to apply  Theorem \ref{not-s.a.}.
  \end{proof}
\begin{remark} Note that the condition $\limsup_{n\to\infty}(d_{n+1}/d_{n})\le 1$ is always satisfied whenever the interval $\cI$ is finite. Indeed, it is implied by  the convergence of the series  $\sum_{n=1}^{\infty}d_{n}=|\cI|<+\infty$.

The following cases are more complicated and  require  more  detailed  analysis:

\item $(i)$ $\limsup_{n\to\infty}(d_{n+1}/d_{n})=0$ and the sequence $\{\alpha_{n}\}_1^\infty$ is unbounded;

\item $(ii)$  $\limsup_{n\to\infty}(d_{n+1}/d_{n})=1$  although  $\lim_{n\to\infty}\alpha_{n}=0$,

 We discuss the case (i) in the following example.
\end{remark}
%%However, as regards the latter case, the following result, which provides  a criterion
%%involving only the sequence $\alpha$, could be useful.

 \begin{example}
\item $(i)$ Let  $|\alpha_{n}|\sim \alpha_{0}/n^{s}$,  $s>0$, and $\alpha_{0}>0$.   Then, by Corollary \ref{C1}, $n_{\pm}(D_{X,\alpha}) =1$  for any $X=\{x_n\}_1^\infty$ whenever $s>1$.

Next let $X=\{x_n\}_1^\infty$ with $x_{n}=1-1/d^{n}$, $d>1$, $n\in \N$.
Then, by Corollary \ref{C2}, $n_{\pm}(D_{X,\alpha}) =1$  for  $s\in (0,1]$ and any  $\alpha_{0}\in \R_+$ as well as for $s=0$ whenever  $\alpha_{0}< c(\sqrt d-1)$.

\item $(ii)$   Let $X=\{x_n\}_1^\infty$ with $x_{n}=1-1/n!$,  $|\alpha_{n}|\sim \alpha_{0}n^{s}$,\  $n\in \N$, $s\in \R$.  Then,  by Corollary \ref{C2}, $n_{\pm}(D_{X,\alpha}) =1$
for any  $\alpha_{0}$ whenever $s<1/2$,
and for $\alpha_{0}<c$ whenever $s=1/2$.

On the other hand, if $\alpha_n\ge (n-1)!$, then, by Proposition \ref{interval}, the operator $D_{X,\alpha}$ is self-adjoint.
  \end{example}
%
%%%%%%%%%%%%%%%%%%%%%%%%%%%%%%%%%%%%%%%%%%%%%%%%%%%%%%%%%%%%%%%%%%%%%%%%%%%%%%%%%%%%%%%%%%%%%%
   \begin{remark}
Comparing Proposition \ref{interval}  with Theorem  \ref{not-s.a.}  one might say  that very roughly speaking
$D_{X,\alpha}$ is  self-adjoint on a finite interval  whenever the sequence $\{\alpha_n\}_1^\infty$ grows  faster than the sequence $\{d_n\}_1^\infty$ decays.
   \end{remark}

\subsubsection{Continuous spectrum and resolvent comparability}

   \begin{proposition}\label{Prop_rescompar}
Let $D_{X,\alpha^{(k)}}$  be the Gesztesy-\v{S}eba realization of
Dirac operator on the half-line $\R_+$ given  by \eqref{delta}
with $\alpha^{(k)} := \{\alpha^{(k)}_n\}_{n\in{\N}} (\subset \R),\
k\in \{1,2\} $. Let also $B_{X,\alpha^{(k)}}$  be the  Jacobi
operator defined on $\cH = l^2(\N)\otimes \C^2$ by the matrix  \eqref{IV.2.1_01} with
$\alpha^{(k)}$ in place of $\alpha$.
%either by \eqref{IV.2.1_05} or  \eqref{IV.2.1_01}.    %%%%%{prop_II.1.4_02}(ii).
Then $D_{X,\alpha^{(k)}}= D^*_{X,\alpha^{(k)}}$, \  $k\in \{1,2\}$, and for any
$p\in(0,\infty]$  the inclusion      %%%%Let $\alpha^{(1)}=\{\alpha_n\}^{\infty}_1$ and $\beta=\{\beta_n\}^{\infty}_1$.
    \begin{eqnarray}\label{rescompar1}
(D_{X, \alpha^{(1)}} - z)^{-1} - (D_{X, \alpha^{(2)}} - z)^{-1}
\in\mathfrak S_p(\gH), \qquad z\in\rho(D_{X,\alpha^{(1)}}) \cap
\rho(D_{X,\alpha^{(2)}})
    \end{eqnarray}
%%%%%%%%%%%%%%%%%%%%%%%%%%%%%%
is equivalent to the inclusion
        \begin{eqnarray}\label{rescompar2}
(B_{X, \alpha^{(1)}} - \zeta)^{-1} - (B_{X, \alpha^{(2)}} -
\zeta)^{-1} \in\mathfrak S_p(\cH), \qquad \zeta \in\rho(B_{X,\alpha^{(1)}})
\cap \rho(B_{X,\alpha^{(2)}}).
    \end{eqnarray}
      \end{proposition}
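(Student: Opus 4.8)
The plan is to deduce the statement from the abstract machinery of Proposition \ref{prop_IV.2.1_01} together with Proposition \ref{prop_II.1.4_02}(i). First I would recall that, by Proposition \ref{cor_delta_carleman}, the GS-realizations $D_{X,\alpha^{(k)}}$ on $\R_+$ are self-adjoint, and correspondingly the Jacobi operators $B_{X,\alpha^{(k)}}$ are self-adjoint (by the Carleman test argument used in the proof of Proposition \ref{cor_delta_carleman}, applied with the given $\alpha^{(k)}$); in particular both sides of \eqref{rescompar1} and \eqref{rescompar2} make sense, the resolvent sets being nonempty. The crucial structural input is that, by Proposition \ref{prop_IV.2.1_01}, in the one fixed boundary triplet $\Pi=\{\cH,\Gamma_0,\Gamma_1\}$ for $D_X^*$ constructed in Theorem \ref{th_bt_2}, both operators $D_{X,\alpha^{(1)}}$ and $D_{X,\alpha^{(2)}}$ are realized as $D_{B_{X,\alpha^{(k)}}}$, i.e. $D_{X,\alpha^{(k)}} = A_{\Theta_k}$ with $\Theta_k = B_{X,\alpha^{(k)}} \in \mathcal{C}(\cH)$. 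The whole point is that the \emph{same} triplet parametrizes both realizations, so that Krein-type resolvent comparison applies directly.

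Having set this up, I would apply Proposition \ref{prop_II.1.4_02}(i) with $A = D_X$, the triplet $\Pi$, and the boundary relations $\Theta_1 = B_{X,\alpha^{(1)}}$, $\Theta_2 = B_{X,\alpha^{(2)}}$. That proposition gives the equivalence
\[
(D_{X,\alpha^{(1)}} - z)^{-1} - (D_{X,\alpha^{(2)}} - z)^{-1} \in \mathfrak{S}_p(\gH)
\quad\Longleftrightarrow\quad
(B_{X,\alpha^{(1)}} - \zeta)^{-1} - (B_{X,\alpha^{(2)}} - \zeta)^{-1} \in \mathfrak{S}_p(\cH),
\]
valid for $z \in \rho(D_{X,\alpha^{(1)}}) \cap \rho(D_{X,\alpha^{(2)}})$ and $\zeta \in \rho(B_{X,\alpha^{(1)}}) \cap \rho(B_{X,\alpha^{(2)}})$, which is exactly the asserted equivalence of \eqref{rescompar1} and \eqref{rescompar2}. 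Strictly speaking Proposition \ref{prop_II.1.4_02}(i) is stated for a single $z$ and a single $\zeta$, so a small additional remark is needed: since $(D_{X,\alpha^{(1)}} - z)^{-1} - (D_{X,\alpha^{(2)}} - z)^{-1}$ depends analytically on $z$ in the common resolvent set and the Schatten classes $\mathfrak{S}_p$ are closed under the relevant operations, the property of belonging to $\mathfrak{S}_p$ is independent of the choice of $z$ in a connected component of $\rho(D_{X,\alpha^{(1)}}) \cap \rho(D_{X,\alpha^{(2)}})$ (the standard resolvent identity argument), and likewise for $\zeta$ and the Jacobi operators; this is why the statement can be phrased for all such $z$ and $\zeta$.

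The only genuinely delicate point — and the one I would flag as the main obstacle — is the verification that $D_{X,\alpha^{(k)}}$ and $B_{X,\alpha^{(k)}}$ have non-empty resolvent sets and that the hypotheses of Proposition \ref{prop_II.1.4_02}(i) are met, namely that $\Theta_k = B_{X,\alpha^{(k)}}$ is a closed (here even self-adjoint) operator in $\cH$. Closedness of $\Theta_k$ as an operator (rather than a relation) is guaranteed because, as observed just before \eqref{IV.2.1_01} in the text, $D_{X,\alpha}$ and $D_{X,0}$ are disjoint for every $\alpha$, so $\Theta(\alpha) \in \mathcal{C}(\cH)$; self-adjointness follows from Proposition \ref{th_delta_sa} together with the Carleman-test computation of Proposition \ref{cor_delta_carleman} applied on $\R_+$. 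Once this is in place the proof is a one-line invocation of Proposition \ref{prop_II.1.4_02}(i), so the body of the argument is genuinely short; the care lies entirely in checking that the parametrization of Proposition \ref{prop_IV.2.1_01} puts both operators into the framework of a single boundary triplet with operator (not merely relation) boundary parameters.
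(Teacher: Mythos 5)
Your proposal is correct and follows essentially the same route as the paper: parametrize both realizations by the Jacobi operators $B_{X,\alpha^{(k)}}$ in the single boundary triplet of Theorem \ref{th_bt_2} via Proposition \ref{prop_IV.2.1_01}, invoke Proposition \ref{cor_delta_carleman} (Carleman test) for self-adjointness on $\R_+$ so that both resolvent differences are well defined, and conclude with Proposition \ref{prop_II.1.4_02}(i). Your additional remarks on the $z$-independence within a connected component of the common resolvent set and on the closedness of the boundary operators are sound but left implicit in the paper's argument.
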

%%%%%%%%%%%%%%%%%%%%%%%%%%%%%%%%%%%%%%%%
  \begin{proof}
%%%For Jacobi matrices of the form \eqref{IV.2.1_01}
Consider the boundary triplet $\Pi = \{\cH, \Gamma_0, \Gamma_1\}$ defined in Theorem
\ref{th_bt_2}. Then, by Proposition \ref{prop_IV.2.1_01},  $H_{X,
\alpha^{(k)}} = H_{B_{X, \alpha^{(k)}}}$ where
$B_{X,\alpha^{(k)}}, \ k\in \{1,2\},$ is the corresponding Jacobi
operator. Since $\cI =\R_+$,  both operators $D_{X,\alpha^{(k)}}$ and
$B_{X,\alpha^{(k)}}$, $k\in \{1,2\},$ are  selfadjoint, by Proposition
\ref{cor_delta_carleman}. Therefore the resolvents $(D_{X,
\alpha^{(k)}} - z)^{-1}$ and $(B_{X,
\alpha^{(k)}} - \zeta)^{-1},$ \ $k\in \{1,2\},$ are  well defined for
any $z, \zeta\in \C_+$ and relations \eqref{rescompar1} and \eqref{rescompar2} have sense. One completes the proof by applying Proposition \ref{prop_II.1.4_02}(i).
 \end{proof}
%%%%%%%%%%%%%%%%%%%%%%%%%%%%%%%%%%%%%
%%%%%%%%%%%%%%%%%%%%%%%%%%%%%%%%%%
    \begin{corollary}\label{cor4.12}
Assume the conditions of Proposition \ref{Prop_rescompar}. Assume,
in addition, that either
   \begin{equation}\label{4.27}
  \left\{\frac{\alpha^{(1)}_n -
\alpha^{(2)}_n}{d_{n+1}}\right\}_{n=1}^\infty \in l^p(\N),\quad p\in(0,\infty)
\qquad \text{or}\qquad \left\{\frac{\alpha^{(1)}_n -
\alpha^{(2)}_n}{d_{n+1}}\right\}_{n=1}^\infty \in c_0(\N).
   \end{equation}
Then the inclusion \eqref{rescompar1} holds with $p\in(0,\infty)$
and $p = \infty$,  respectively.
   \end{corollary}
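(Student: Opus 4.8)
The plan is to reduce everything to the Jacobi-matrix side and invoke the resolvent comparability already established. By Proposition \ref{Prop_rescompar} the inclusion \eqref{rescompar1} with a given index $p\in(0,\infty]$ holds if and only if the difference of resolvents of the Jacobi operators $B_{X,\alpha^{(1)}}$ and $B_{X,\alpha^{(2)}}$ belongs to $\mathfrak S_p(\cH)$; so it suffices to verify \eqref{rescompar2} under the hypotheses \eqref{4.27}. Both operators are self-adjoint on $\cH=l^2(\N)\otimes\C^2$ (Proposition \ref{cor_delta_carleman}), and from the explicit form \eqref{IV.2.1_01} of the matrices one sees that $B_{X,\alpha^{(1)}}$ and $B_{X,\alpha^{(2)}}$ differ only in the diagonal entries sitting at the ``interaction'' positions, namely the $(3k,3k)$-entries change from $\alpha^{(2)}_k/d_{k+1}$ to $\alpha^{(1)}_k/d_{k+1}$, while all off-diagonal entries and all other diagonal entries coincide. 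Hence the densely defined operator $C:=B_{X,\alpha^{(1)}}-B_{X,\alpha^{(2)}}$ is the diagonal operator with entries $(\alpha^{(1)}_k-\alpha^{(2)}_k)/d_{k+1}$ (at the appropriate positions, $0$ elsewhere), and $\dom(B^0_{X,\alpha^{(1)}})=\dom(B^0_{X,\alpha^{(2)}})=l^2_0(\N)\otimes\C^2$.

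First I would treat the bounded case. If $\{(\alpha^{(1)}_n-\alpha^{(2)}_n)/d_{n+1}\}\in c_0(\N)$, then the diagonal operator $C$ is compact, in particular bounded, so $B_{X,\alpha^{(1)}}=B_{X,\alpha^{(2)}}+C$ is a bounded perturbation of $B_{X,\alpha^{(2)}}$ and in particular $\dom(B_{X,\alpha^{(1)}})=\dom(B_{X,\alpha^{(2)}})$. The second resolvent identity then gives, for $\zeta\in\rho(B_{X,\alpha^{(1)}})\cap\rho(B_{X,\alpha^{(2)}})$,
\begin{equation}\label{resid_C}
(B_{X,\alpha^{(1)}}-\zeta)^{-1}-(B_{X,\alpha^{(2)}}-\zeta)^{-1}
= -(B_{X,\alpha^{(1)}}-\zeta)^{-1}\,C\,(B_{X,\alpha^{(2)}}-\zeta)^{-1},
\end{equation}
and since $C$ is compact the right-hand side lies in $\mathfrak S_\infty(\cH)$. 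By Proposition \ref{Prop_rescompar} this yields \eqref{rescompar1} with $p=\infty$. Alternatively, one may simply invoke Proposition \ref{prop_II.1.4_02}(ii) applied to the boundary operators $\Theta_k=B_{X,\alpha^{(k)}}$ in the triplet of Theorem \ref{th_bt_2}: they have the same domain, and $\overline{\Theta_1-\Theta_2}=\overline{C}\in\mathfrak S_\infty(\cH)$, which gives \eqref{rescompar1} directly; this is in fact the cleaner route since it bypasses the domain discussion.

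For the $l^p$-case with $p\in(0,\infty)$: the assumption $\{(\alpha^{(1)}_n-\alpha^{(2)}_n)/d_{n+1}\}\in l^p(\N)$ says precisely that the diagonal operator $C$ has its sequence of (nonzero) entries in $l^p$, which for a diagonal operator is exactly the statement $\overline{C}\in\mathfrak S_p(\cH)$. Now apply Proposition \ref{prop_II.1.4_02}(ii) with $\Theta_1=B_{X,\alpha^{(1)}}$, $\Theta_2=B_{X,\alpha^{(2)}}$: since these are closed operators with common domain $l^2_0(\N)\otimes\C^2$ (more precisely their closures have the property that the closure of the difference is $\overline C$), the implication $\overline{\Theta_1-\Theta_2}\in\mathfrak S_p(\cH)\Rightarrow (A_{\Theta_1}-z)^{-1}-(A_{\Theta_2}-z)^{-1}\in\mathfrak S_p(\gH)$ gives exactly \eqref{rescompar1} with that $p$. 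Equivalently, one can establish \eqref{rescompar2} from \eqref{resid_C} (noting the resolvents are bounded and $\mathfrak S_p$ is an ideal) and then transfer via Proposition \ref{Prop_rescompar}.

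The only genuinely delicate point is the bookkeeping: one must check that $B_{X,\alpha^{(1)}}-B_{X,\alpha^{(2)}}$, as operators (not just formal matrices), is really the bounded/Schatten diagonal operator $C$ and that the common core $l^2_0(\N)\otimes\C^2$ is a core for both operators with $\overline{B^0_{X,\alpha^{(1)}}-B^0_{X,\alpha^{(2)}}}=\overline C$; this is where the structure of the matrices \eqref{IV.2.1_01} and the representation \eqref{IV.2.2_01A} are used, together with the fact that the perturbation affects only finitely many rows in each finite truncation. Once this identification is in hand, everything else is an invocation of Proposition \ref{prop_II.1.4_02} (or Proposition \ref{Prop_rescompar} together with the second resolvent identity) and is routine. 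I expect this domain/core identification to be the main obstacle, though a mild one given the explicit diagonal structure of the difference.
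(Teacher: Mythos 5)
Your proof is correct and follows essentially the same route as the paper's: identify the difference $B_{X,\alpha^{(1)}}-B_{X,\alpha^{(2)}}$ on $l^2_0(\N)\otimes\C^2$ as the diagonal operator with entries $(\alpha^{(1)}_n-\alpha^{(2)}_n)/d_{n+1}$ via the representation \eqref{IV.2.2_01A}, conclude that its closure lies in $\mathfrak S_p(\cH)$ (resp.\ $\mathfrak S_\infty(\cH)$) so that the domains coincide, and then apply Proposition \ref{prop_II.1.4_02}(ii) together with Proposition \ref{Prop_rescompar}. (Your placement of the nonzero diagonal entries at positions $(3k,3k)$ is a minor indexing slip that does not affect the argument.)
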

   \begin{proof}
Let  $B_{X,\alpha^{(k)}}, \ k\in \{1,2\},$ be the Jacobi operator
given  by \eqref{IV.2.1_01}. Clearly,
$l^2_{0}(\N)\otimes C^2 \subset\dom(B_{X,\alpha^{(1)}})\cap
\dom(B_{X,\alpha^{(2)}}).$   It follows from representation
\eqref{IV.2.2_01A} for $B_{X, \alpha}$ and formula
\eqref{IV.1.1_19}  for $R_n$ that
%%%for any $f\in l_{2,0}$
%The inclusion $(B_{X,\gA_2}-B_{X,\gA_1})\in \mathfrak{S}_p$ implies \eqref{IV.2.3.02}. By \eqref{IV_II_01}, we get
%
%
\[
B_{X,\gA^{(1)}}f - B_{X,\gA^{(2)}}f =
R^{-1}_X\bigl(\widetilde{B}_{\gA^{(1)}} -
\widetilde{B}_{\gA^{(2)}}\bigr)R^{-1}_X f=
\bigoplus_{n=1}^\infty\left(\begin{array}{cc}
                                                            0          &0\\
                                                            0&                  \frac{\alpha_n^{(1)}-\alpha_n^{(2)}}{\gd_{n+1}}
                             \end{array}\right)f, \qquad f\in
l^2_{0}(\N).
\]
%Note that $(B_{X,\gA_1}-B_{X,\gA_2})f=\left(\oplus_{n=1}^\infty \diag(\frac{\alpha_n^{(1)}-\alpha_n^{(2)}}{\gd_{n+1}},\ %0)\right)f$
%%for all finite sequences $f\in l_2(\N)$.
Due to the  assumption \eqref{4.27} the operator $B_{X,\gA^{(1)}}
- B_{X,\gA^{(2)}}$ admits the  closure and
$\overline{B_{X,\gA^{(1)}} - B_{X,\gA^{(2)}}} \in
\mathfrak{S}_p(\cH)\subset [\cH]$. Hence $\dom(B_{X,\gA^{(1)}}) =
\dom(B_{X,\gA^{(2)}})$ and,  by Proposition
\ref{prop_II.1.4_02}(ii), the inclusion \eqref{rescompar2} holds.
It remains to apply Proposition \ref{Prop_rescompar}.
  %%%$R^{-1}_X \bigl(\widetilde{B}_{\gA^{(1)}} -
  %%%\widetilde{B}_{\gA^{(2)}}\bigr) R^{-1}_X$ admits a closure which
  %%%belongs to $\mathfrak S_p(\cH),\ p\in(0,\infty]$.
   \end{proof}
%%%%%%%%%%%%%%%%%%%%%%%%%%%%%%%%%%%%%%%%%%%%%%%%%%

Next we slightly generalize Corollary \ref{cor4.12}  allowing one
of the sequences $\gA^{(k)} = \{\gA^{(k)}_n\}_{n\in \N}$ to take infinite values. Moreover, in the case
$d_*(X)>0$ we can drop dependence on $d_n$ in \eqref{4.27}.

To state the result we set $(i+\infty)^{-1}:=0$.
%
%%%%%%%%%%%%%%%%%%%%%%%%%%%%%%%%%%%%%
      \begin{corollary}\label{col_rc_1}
Let $\alpha^{(1)}=\{\alpha^{(1)}_n\}^{\infty}_1\subset{\R}$ and
$\alpha^{(2)}=\{\alpha^{(2)}_n\}^{\infty}_1\subset{\R}\cup\{\infty\}$.
Then

\item $(i)$ The inclusion
    \begin{equation}\label{IV.2.3.02AA}
\left\{\left({\alpha_n^{(1)}/d_{n+1} - \I}\right)^{-1} - \left({\alpha_n^{(2)}/d_{n+1} -
\I}\right)^{-1}\right\}_{n=1}^\infty \in l^p(\N), \qquad%\text{with}\qquad l_\infty=c_0.
p\in(0,\infty),\qquad (\in c_0(\N),\quad\text{for}\quad p=\infty),
  \end{equation}
yields the inclusion \eqref{rescompar1}.

\item $(ii)$   If in addition $0<d_*(X)\leq d^*(X)<\infty$,  then
\eqref{rescompar1}  is equivalent to the inclusion
%%%%%%%%%%%%%%%%%%%%%%%%%%%%
    \begin{equation}\label{IV.2.3.02}
\left\{({\alpha_n^{(1)}-\I})^{-1} -
({\alpha_n^{(2)}-\I})^{-1}\right\}_{n=1}^\infty \in
l^p(\N),\qquad %%%%%\text{with}\qquad l_\infty=c_0.
p\in(0,\infty),\qquad (\in c_0(\N),\quad\text{if}\quad p=\infty).
  \end{equation}
Moreover, if $\{\alpha_n^{(j)}\}_{n=1}^\infty\in l^\infty(\N),\ j\in
\{1,2\},$ then \eqref{IV.2.3.02} is equivalent to the inclusion
$\{\alpha_n^{(1)}-\alpha_n^{(2)}\}_{n=1}^\infty\in l^p(\N)$\  (resp. $c_0(\N))$.
\end{corollary}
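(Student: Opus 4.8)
The plan is to reduce Corollary \ref{col_rc_1} to the resolvent comparability criterion of Proposition \ref{prop_II.1.4_02} by expressing the GS-realizations $D_{X,\alpha^{(j)}}$ as extensions $D_{B_{X,\alpha^{(j)}}}$ parametrized (via Proposition \ref{prop_IV.2.1_01}) by the Jacobi operators $B_{X,\alpha^{(j)}}$ in the fixed boundary triplet $\Pi=\{\cH,\Gamma_0,\Gamma_1\}$ of Theorem \ref{th_bt_2}. Since $\cI=\R_+$, Proposition \ref{cor_delta_carleman} guarantees that all the operators in sight are self-adjoint, so their resolvents are well defined and relation \eqref{rescompar1} makes sense; moreover $\I\in\rho(B_{X,\alpha^{(j)}})$ for $j\in\{1,2\}$. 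The point is that by Proposition \ref{prop_II.1.4_02}(i), \eqref{rescompar1} is equivalent to $(B_{X,\alpha^{(1)}}-\I)^{-1}-(B_{X,\alpha^{(2)}}-\I)^{-1}\in\mathfrak S_p(\cH)$, so everything is reduced to a statement about the difference of resolvents of two Jacobi matrices at the point $\I$.

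For part (i), I would handle the case where $\alpha^{(2)}$ may take infinite values. Here one cannot directly subtract the matrices $B_{X,\alpha^{(1)}}$ and $B_{X,\alpha^{(2)}}$ as in Corollary \ref{cor4.12}, since an infinite $\alpha^{(2)}_n$ corresponds to a Dirichlet-type condition $f_1(x_n)=0$ rather than to a finite diagonal entry. Instead I would argue at the level of resolvents: using the representation \eqref{IV.2.2_01A}, $B_{X,\alpha^{(j)}}=R_X^{-1}(\widetilde B_{\alpha^{(j)}}-Q_X)R_X^{-1}$, and the block structure of $\widetilde B_{\alpha^{(j)}}$ (which decouples, modulo the off-diagonal $1$'s, into $2\times 2$ blocks $\left(\begin{smallmatrix}0&1\\1&\alpha_n\end{smallmatrix}\right)$), one sees that the resolvent $(B_{X,\alpha^{(j)}}-\I)^{-1}$ depends on $\alpha^{(j)}_n$ only through the scalar quantity $(\alpha^{(j)}_n/d_{n+1}-\I)^{-1}$ after conjugation by $R_X$; more precisely, a Krein-type formula for the one-parameter family of extensions obtained by varying the $n$-th diagonal entry (or imposing the Dirichlet condition) shows that $(B_{X,\alpha^{(1)}}-\I)^{-1}-(B_{X,\alpha^{(2)}}-\I)^{-1}$ is, up to bounded factors coming from $R_X$ and the $\gamma$-fields, unitarily/Schatten-equivalent to the diagonal operator $\bigoplus_n\bigl[(\alpha^{(1)}_n/d_{n+1}-\I)^{-1}-(\alpha^{(2)}_n/d_{n+1}-\I)^{-1}\bigr]$. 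Hence the hypothesis \eqref{IV.2.3.02AA} forces this difference into $\mathfrak S_p(\cH)$ (or into the compacts when $p=\infty$, using that $c_0(\N)$ corresponds to compact diagonal operators), and Proposition \ref{prop_II.1.4_02}(i) yields \eqref{rescompar1}.

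For part (ii), under $0<d_*(X)\le d^*(X)<\infty$ the weights $d_{n+1}$ are bounded above and below, so $\{(\alpha^{(1)}_n/d_{n+1}-\I)^{-1}-(\alpha^{(2)}_n/d_{n+1}-\I)^{-1}\}\in l^p(\N)$ if and only if $\{(\alpha^{(1)}_n-\I)^{-1}-(\alpha^{(2)}_n-\I)^{-1}\}\in l^p(\N)$: indeed the map $t\mapsto(t/d-\I)^{-1}$ on $\R\cup\{\infty\}$ is a Möbius transformation, and for $d$ ranging over a compact subinterval of $(0,\infty)$ one has two-sided estimates $c_1|(s-\I)^{-1}-(t-\I)^{-1}|\le|(s/d-\I)^{-1}-(t/d-\I)^{-1}|\le c_2|(s-\I)^{-1}-(t-\I)^{-1}|$ uniformly in $s,t\in\R\cup\{\infty\}$ and $d\in[d_*,d^*]$. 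Combining this with part (i) and with the reverse implication (now available because $d_*(X)>0$ makes $\Gamma_j$ bounded and the Jacobi model genuinely two-sided Schatten-equivalent to the Dirac model) gives the asserted equivalence. Finally, when $\{\alpha^{(j)}_n\}\in l^\infty(\N)$ the quantities $\alpha^{(j)}_n-\I$ stay in a compact subset of $\C_-$, on which $t\mapsto(t-\I)^{-1}$ is bi-Lipschitz, so $\{(\alpha^{(1)}_n-\I)^{-1}-(\alpha^{(2)}_n-\I)^{-1}\}\in l^p(\N)\Leftrightarrow\{\alpha^{(1)}_n-\alpha^{(2)}_n\}\in l^p(\N)$ (resp.\ $c_0(\N)$), completing the proof.

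The main obstacle I expect is the careful bookkeeping in part (i): making precise the claim that varying one diagonal entry (including the degenerate ``$\alpha_n=\infty$'' case) changes the resolvent of the Jacobi operator by a rank-controlled operator whose ``size'' is exactly $(\alpha^{(1)}_n/d_{n+1}-\I)^{-1}-(\alpha^{(2)}_n/d_{n+1}-\I)^{-1}$, and then summing these single-site perturbations into a genuine Schatten-class statement. This is where one must be careful that the bounded conjugating factors $R_X^{\pm1}$ and the $\gamma$-fields do not spoil the Schatten membership — which they do not, since they are uniformly bounded with uniformly bounded inverses on the relevant subspaces — but the estimate has to be organized so that the per-site contributions decouple; alternatively one can invoke Proposition \ref{prop_II.1.4_02}(i) twice, first comparing each $D_{X,\alpha^{(j)}}$ with a common reference Jacobi operator and then using the ideal property of $\mathfrak S_p$.
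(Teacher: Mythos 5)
Your overall architecture is the one the paper intends: the paper in fact gives no proof of this corollary, remarking only that it is ``similar to that of Corollary \ref{cor4.12}'', and that proof consists precisely of your first step --- pass to the boundary operators $B_{X,\alpha^{(j)}}$ via Proposition \ref{prop_IV.2.1_01} and invoke Proposition \ref{prop_II.1.4_02}. Your part (ii) is sound: the uniform two-sided comparison of $t\mapsto (t/d-\I)^{-1}$ over $d\in[d_*(X),d^*(X)]$ (e.g.\ via $|(s-\I)^{-1}-(t-\I)^{-1}|=|s-t|\,|s-\I|^{-1}|t-\I|^{-1}$) and the $l^\infty$ remark are correct, granted part (i).

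The genuine gap is in part (i), which is where all the content lies once $\alpha^{(2)}$ may take the value $\infty$ and the subtraction argument of Corollary \ref{cor4.12} (Proposition \ref{prop_II.1.4_02}(ii)) is unavailable. Proposition \ref{prop_II.1.4_02}(i) reduces \eqref{rescompar1} to $(B_{X,\alpha^{(1)}}-\zeta)^{-1}-(B_{X,\alpha^{(2)}}-\zeta)^{-1}\in\mathfrak S_p(\cH)$, a statement about the resolvents of the \emph{full} Jacobi operators, whereas \eqref{IV.2.3.02AA} concerns only the diagonal entries. Your bridge --- that the resolvent difference is ``up to bounded factors \dots Schatten-equivalent to'' the diagonal operator of scalar resolvent differences --- is asserted rather than proved, and the heuristic offered for it is misleading: $\widetilde B_{\alpha}$ in \eqref{IV.2.2_01A} is block diagonal, but its blocks are offset by one coordinate from those of $R_X$ and $Q_X$, so $B_{X,\alpha}$ itself is genuinely tridiagonal and its resolvent does not decouple site by site; nor does summing single-site Krein corrections telescope to the full difference without controlling all the intermediate operators. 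To make the step rigorous one must pass to the common symmetric restriction $S=B_{X,\alpha^{(1)}}\cap B_{X,\alpha^{(2)}}$ (equivalently $D_{X,\alpha^{(1)}}\cap D_{X,\alpha^{(2)}}$), exhibit an \emph{ordinary} boundary triplet for $S^*$ in which the two operators correspond to the diagonal relations $\diag\bigl(\alpha^{(j)}_n/d_{n+1}\bigr)$, and apply Proposition \ref{prop_II.1.4_02}(i) a second time; that this direct sum of rank-one triplets is an ordinary boundary triplet --- i.e.\ that the conditions of Theorem \ref{th_criterion(bt)} hold, with the site Weyl functions $m_n(\I)$ uniformly bounded and $\im m_n(\I)$ uniformly bounded away from $0$ --- is exactly the kind of verification that fails in general when $d_*(X)=0$ and is the central concern of the paper's regularization machinery, so it cannot be dismissed with ``the bounded conjugating factors do not spoil the Schatten membership''. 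A smaller point: your explanation of why the reverse implication in (ii) requires $d_*(X)>0$ is off --- Proposition \ref{prop_II.1.4_02}(i) is a two-sided equivalence irrespective of $d_*(X)$; the hypothesis $0<d_*(X)\le d^*(X)<\infty$ enters only to trade $\alpha_n/d_{n+1}$ for $\alpha_n$ in the scalar condition.
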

%%%%%%%%%%%%%%%%%%%%%%%%%%%%%%%%%%%%%%%%%%
The proof is similar to that of Corollary \ref{cor4.12} and is omitted.

To state the next result we recall the  definition of the essential spectrum.
   \begin{definition}
It is said that $\lambda_0 = \overline{\lambda}_0$ belongs to the essential spectrum of the operator $T= T^* \in\cC(\gH)$
$\bigl($in short $\lambda_0\in\sigma_{\ess}(T)\bigr)$ if there exists a bounded non-compact sequence $f_n\in\mathfrak H,\  n\in\mathbb N$, such that
     \begin{equation}
     (T-\lambda_0)f_n\to 0\qquad \text{as}\qquad  n\to\infty.
     \end{equation}
\end{definition}
%%%%%%%%%%%%%%%%%%%%%%%%%%%%%%%%%%%

In the following theorem we investigate continuous and  absolutely continuous spectra
of GS-realizations comparing them with the  Neumann  realization $D_N$ of the Dirac expression
given by
     \begin{equation}\label{4.27A}
 D_N = D\upharpoonright \dom(D_N), \qquad \dom(D_N) = \{f=\binom{f_1}{f_2}\in W^{1,2}(\mathbb R_+)\otimes\mathbb C^2:\ f_2(x_0+)=0\}.
\end{equation}
We also investigate singular spectrum of GS-realizations.
   \begin{theorem}\label{th_cont_spectr}
Let $X = \{x_n\}_{n=1}^\infty (\subset \R_+)$,  $\alpha = \{\alpha_n\}^{\infty}_1\subset{\mathbb R}$ and let $D_{X,\alpha}$ be the corresponding Gesztesy-\v{S}eba operator on the
half-line ${\R}_+$. Then the following holds:

\item $(i)$  If $\{\alpha_n/d_{n+1}\}_1^\infty \in c_0(\N)$, i.e.
$\lim_{n\to\infty}\alpha_n/d_{n+1} = 0,$  then
  \begin{equation}\label{4.26}
\sigma_{\ess}(D_{X,\alpha}) = \sigma_{\ess}(D_{N}) = {\R}\setminus (-c^2/2, c^2/2),
    \end{equation}

\item $(ii)$  Assume that $\{\alpha_n/d_{n+1}\}_1^\infty \in l^1(\N)$, i.e.
$\sum_{n\in \N}|\alpha_n/d_{n+1}| < \infty.$  Then the $ac$-part
$D^{ac}_{X,\alpha}$ of  $D_{X,\alpha}$ is unitarily
equivalent to the Neumann realization $D_N$.
In particular,
      \begin{equation}\label{4.28}
\sigma_{\ac}(D_{X,\alpha}) = \sigma_{\ac}(D_{N}) = {\R}\setminus (-c^2/2, c^2/2).
    \end{equation}
    %\label{4.26}
%
%

\item $(iii)$  Assume that
    \begin{equation}\label{4.29}
%\overline{\lim_{n\to\infty}}\frac{|\alpha_n|}{d_{n+1}} =
\limsup_{n\to\infty}\frac{|\alpha_n|}{d_{n+1}} =\infty.
    \end{equation}
Then the GS-operator ${D}_{X,\alpha}$ is purely singular,
i.e.
   \begin{equation}\label{4.32}
\sigma_{ac}({D}_{X,\alpha})=\emptyset.
   \end{equation}

\item $(iv)$  Assume in addition that $d_*(X) > 0$. Then  the above
assumptions on the sequence $\{\alpha_n/d_{n+1}\}_1^\infty$ in (i), (ii), and \eqref{4.29} can be replaced by
  $$
\{\alpha_n\}^{\infty}_1\in c_0(\N), \qquad \{\alpha_n\}^{\infty}_1\in
l^1(\N) \qquad\text{and}\qquad
\limsup_{n\to\infty}{|\alpha_n|} =\infty,
  $$
respectively.
      \end{theorem}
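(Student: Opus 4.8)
The plan is to reduce all four assertions to the Jacobi parametrization $D_{X,\gA}=D_{B_{X,\gA}}$ of Proposition \ref{prop_IV.2.1_01} together with the resolvent comparability statements of Corollaries \ref{cor4.12} and \ref{col_rc_1}. The starting point is the observation that the Neumann realization $D_N$ of \eqref{4.27A} is exactly the GS-realization $D_{X,\gA}$ with $\gA\equiv 0$; equivalently, in the triplet $\Pi_D$ of Theorem \ref{th_bt_2} it is parametrized by the Jacobi matrix $B_{X,0}$ obtained from \eqref{IV.2.1_01} by setting all $\alpha_n=0$. Moreover $D_N$ is a rank-one resolvent perturbation of the operator $D_{0+,0}$ of Lemma \ref{2.3+} (take $b=0$), since both are self-adjoint extensions of the symmetric operator $D_{0+}$ with deficiency index one; hence $\sigma_{\ess}(D_N)=\sigma_{\ess}(D_{0+,0})$, and by the Kato--Rosenblum theorem applied to a rank-one perturbation $\sigma_{\ac}(D_N)=\sigma_{\ac}(D_{0+,0})$. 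By Lemma \ref{2.3+}(ii) both equal $\R\setminus(-c^2/2,c^2/2)$ and $D_N$ is purely absolutely continuous.

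For (i) and (ii) I would apply Corollary \ref{cor4.12} with $\gA^{(1)}=\gA$ and $\gA^{(2)}=0$. Under $\{\alpha_n/d_{n+1}\}_1^\infty\in c_0(\N)$ the condition \eqref{4.27} holds with $p=\infty$, so $(D_{X,\gA}-z)^{-1}-(D_N-z)^{-1}\in\mathfrak S_\infty(\gH)$ is compact, and Weyl's theorem on the invariance of the essential spectrum under compact resolvent perturbations gives \eqref{4.26}. Under $\{\alpha_n/d_{n+1}\}_1^\infty\in l^1(\N)$ the same corollary with $p=1$ yields $(D_{X,\gA}-z)^{-1}-(D_N-z)^{-1}\in\mathfrak S_1(\gH)$; the Kato--Rosenblum theorem then shows that the absolutely continuous parts of $D_{X,\gA}$ and $D_N$ are unitarily equivalent, and since $D_N$ is purely absolutely continuous this is the assertion of (ii), in particular \eqref{4.28}.

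For (iii) the resolvent difference $(D_{X,\gA}-z)^{-1}-(D_N-z)^{-1}$ need not be compact, so I would pass to the Jacobi operator. Using the Krein-type formula \eqref{II.1.4_01} for the triplet $\Pi_D$ --- whose reference operator $D_X^*\!\upharpoonright\ker\Gamma_0=\bigoplus_n D_{n,0}$ has, by Lemma \ref{2.3}(ii), purely discrete (hence purely singular) spectrum, and whose Weyl function $M(\cdot)=\bigoplus_n M_n(\cdot)$ admits, outside the gap $(-c^2/2,c^2/2)$, boundary values $M(x\pm i0)$ of bounded type (clear from \eqref{IV.1.1_09.2} and the regularity of $k(\cdot),k_1(\cdot)$ there) --- one obtains, as in the Schr\"odinger case of \cite{KM}, that the absolutely continuous part of $D_{X,\gA}=D_{B_{X,\gA}}$ is unitarily equivalent to that of $B_{X,\gA}$ off $(-c^2/2,c^2/2)$; in particular $\sigma_{\ac}(D_{X,\gA})=\emptyset$ whenever $\sigma_{\ac}(B_{X,\gA})=\emptyset$. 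It then remains to note that \eqref{4.29} makes the subsequence $\{\alpha_n/d_{n+1}\}_n$ of diagonal entries of $B_{X,\gA}$ (see \eqref{IV.2.2_01A}, \eqref{IV.1.1_19}) unbounded, and to invoke the classical fact that a minimal Jacobi operator whose diagonal is unbounded along a subsequence has empty absolutely continuous spectrum. For (iv), $d_*(X)>0$ gives $d_{n+1}\ge d_*(X)>0$, so $\{\alpha_n\}\in c_0(\N)$ (resp. $l^1(\N)$) implies $\{\alpha_n/d_{n+1}\}\in c_0(\N)$ (resp. $l^1(\N)$) and (i)--(ii) apply; one may also argue directly via Corollary \ref{col_rc_1}(ii). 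Since $d^*(X)<+\infty$, one has $d_{n+1}\le d^*(X)$, whence $\limsup_n|\alpha_n|=\infty$ forces \eqref{4.29} and (iii) applies.

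\textbf{Main obstacle.} Steps (i), (ii), (iv) are routine once the comparability results are in hand. The real work is the Jacobi reduction in (iii): transferring "purely singular spectrum" from $B_{X,\gA}$ to $D_{X,\gA}$ requires a careful analysis of the boundary behaviour of the matrix Weyl function $M(\cdot)$ on $\R\setminus(-c^2/2,c^2/2)$ --- in particular near the points where $\cos(d_nk(x))$ vanishes --- together with the ac-spectrum results for self-adjoint extensions of a symmetric operator with singular reference operator $A_0=\ker\Gamma_0$, in the spirit of \cite{KM}; and one needs the sharp no-ac statement for Jacobi matrices with an unbounded subsequence of diagonal entries, whose off-diagonal entries here are themselves unbounded when $d_*(X)=0$.
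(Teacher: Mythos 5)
Your treatment of (i), (ii) and (iv) is essentially the paper's: take $\alpha^{(2)}=\mathbf 0$ in Corollary \ref{cor4.12}, identify $D_{X,\mathbf 0}$ with $D_N$, and apply Weyl's theorem (for $\mathfrak S_\infty$) and Kato--Rosenblum (for $\mathfrak S_1$), with Lemma \ref{2.3+}(ii) supplying the spectrum of the reference operator; your rank-one argument relating $D_N$ to $D_{b+,0}$ is if anything more careful than the paper's. Part (iv) is also as in the paper.

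Part (iii) is where your route diverges and where there is a genuine gap. You propose to transfer ``purely singular'' from the Jacobi operator $B_{X,\alpha}$ to $D_{X,\alpha}$ and then invoke ``the classical fact that a minimal Jacobi operator whose diagonal is unbounded along a subsequence has empty absolutely continuous spectrum.'' That fact (Simon--Spencer) is a theorem for discrete Schr\"odinger operators with \emph{bounded} off-diagonal entries; it is not available in the generality you need. When $d_*(X)=0$ the off-diagonal entries of $B_{X,\alpha}$ adjacent to the diagonal entry $\alpha_n/d_{n+1}$ are of order $c/d_{n+1}$ and $c/\sqrt{d_nd_{n+1}}$, hence themselves unbounded, and hypothesis \eqref{4.29} controls only $|\alpha_n|/d_{n+1}$, not the ratio of the diagonal to these neighbours; a Simon--Spencer decoupling at the Jacobi level would require something like $\sum_k 1/(d_{n_k+1}|\alpha_{n_k}|)<\infty$, which \eqref{4.29} does not give. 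In addition, the transfer of the absence of ac spectrum from $\Theta=B_{X,\alpha}$ to $A_\Theta=D_{X,\alpha}$ is not automatic from the Krein formula: it needs the boundary-value analysis of the block-diagonal Weyl function as in \cite[Theorem 1.1]{MalNei11}, which you flag but do not carry out. The paper avoids both difficulties by working at the level of the differential operator: it chooses a subsequence with $|\alpha_{n_k}|/d_{n_k+1}\to\infty$, thins it so that $\sum_k d_{n_k+1}/|\alpha_{n_k}|<\infty$, replaces $\alpha_{n_k}$ by $\infty$ (i.e.\ imposes $f_1(x_{n_k})=0$), and uses Corollary \ref{col_rc_1}(i) to see that this changes the resolvent by a trace-class operator. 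The decoupled operator is an orthogonal sum of Dirac operators on finite intervals, each with discrete spectrum, so it is pure point; Kato--Rosenblum then gives $\sigma_{\ac}(D_{X,\alpha})=\emptyset$. You would need either to adopt this decoupling argument or to supply both the Weyl-function transfer and a correct quantitative barrier criterion for the specific Jacobi matrix \eqref{IV.2.1_01}.
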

%%%%%%%%%%%%%%%%%%%%%%%%%%%%%%%%%%%
  \begin{proof}
(i) We choose $\alpha^{(2)}:= \mathbf  0 =\{0\}^{\infty}_1$ to be a
zero sequence and set $\alpha^{(1)} := \alpha =
\{\alpha_n\}^{\infty}_1$. It is easily seen that ${D}_{X,\alpha^{(2)}} = {D}_{X, \mathbf  0}$ coincides with the Neumann realization $D_N$ given by \eqref{4.27A}.
Moreover, noting  that $D_N = D_{b+,0}$ with $b=0$ (see formula \eqref{3.20}),
Lemma \ref{2.3+}(ii) yields  $\sigma_{\ess}(D_N) = {\R}\setminus
(-c^2/2, c^2/2).$
On the other hand, due to the assumption on $\{\alpha_n/d_{n+1}\}_1^\infty$,
the above sequences $\alpha^{(1)}$ and $\alpha^{(2)}$ satisfy the second condition
in \eqref{4.27},  hence, by Corollary  \ref{cor4.12},  %%%\ref{Prop_rescompar},
$(D_{X,\alpha}-i)^{-1}  - (D_N - i)^{-1} \in\mathfrak S_\infty(\gH)$.
By the Weyl theorem (see \cite[Corollary XIII.4.1]{RedSim78}), the later inclusion implies
$\sigma_{\ess}(D_{X,\alpha}) = \sigma_{\ess}(D_N) = {\R}\setminus (-c^2/2, c^2/2).$

(ii) Now, by Corollary  \ref{cor4.12}, the condition $\{\alpha_n/d_{n+1}\}_1^\infty \in l^1(\N)$ implies  $(D_{X,\alpha} - i)^{-1}  - (D_N - i)^{-1} = (D_{X, \alpha^{(1)}} - i)^{-1} - (D_{X, \alpha^{(2)}} - i)^{-1} \in\mathfrak S_1(\gH)$.  By the Kato-Rosenblum theorem (see \cite[Chapter 10.4]{Kato66}, \cite[Theorem XI.9]{RedSim79}), the
$\ac$-part $D^{\ac}_{X,\alpha}$ of $D_{X,\alpha}$ is unitarily
equivalent to $D^{\ac}_N = D_N$. It remains to apply Lemma
\ref{2.3+}(ii) and note that $D_N = D_{b+,0}$ with $b=0$.

(iii)  According to  \eqref{4.29} there exists a subsequence
$\{\gA_{n_k}\}$ such that
      \begin{equation}\label{eq:56.04A}
\lim_{n\to\infty}\frac{|\gA_{n_k}|}{\gd_{n_k}}=\infty.
    \end{equation}
Set
    \begin{equation}\label{4.33}
\widetilde{\gA}_n := \begin{cases}
                      \gA_n,\quad n\notin \{n_k\},\\
                      \infty,\quad n\in\{n_k\}.
                      \end{cases}
   \end{equation}
and $\widetilde{\gA} := \{\widetilde{\gA}_n\}_1^{\infty}.$
Without loss of generality we assume that the subsequence $\{\alpha_{n_k}\}_{k=1}^\infty$ satisfies
     \begin{equation}\label{4.33A}
\sum^{\infty}_{k=1}d_{n_k+1}|\alpha_{n_k}|^{-1}<\infty,
  \end{equation}
i.e. $\{d_{n_k+1}\alpha^{-1}_{n_k}\}_{k\in\N}\in l^1(\mathbb N)$. Otherwise we replace $\{\alpha_{n_k}\}_{k\in\N}$ by its appropriate  subsequence.
It follows that
    \begin{equation}\label{4.34}
\sum^{\infty}_{n=1}|(\alpha_n d^{-1}_{n+1} - i)^{-1} - (\widetilde{\alpha}_n d^{-1}_{n+1} - i)^{-1}| = \sum^{\infty}_{n=1}|(\alpha_{n_k}d^{-1}_{n_k+1} - i)^{-1}|<\infty.
    \end{equation}
By  Corollary \ref{col_rc_1}(i), this relation  yields
%%\eqref{4.33} and \eqref{4.34} that %%{eq:56.04A}  that
%
%
  \begin{equation}\label{eq:56.04B}
(D_{X,\gA} - i)^{-1}- (D_{X,\widetilde\gA} - i)^{-1}\in
\mathfrak{S}_1(\mathfrak H).
    \end{equation}
According to Remark \ref{remarkGS} (see formula \eqref{delta_infty}) it follows from \eqref{4.33})
that the operator $D_{X,\widetilde\gA}$ admits the following orthogonal  decomposition
    \begin{equation}\label{eq:56.05}
D_{X,\widetilde\gA}  = \bigoplus_{k=1}^\infty D_{n_k},\qquad
L^2(\R_+,  \C^2) = \bigoplus_{k = 1}^\infty L^2\left([x_{n_{k-1}},x_{n_k}],  \C^2\right),
  \end{equation}
where
   \begin{eqnarray*}%%\label{deltaIntrval}
%%\begin{split}
 D_{n_k} = D\upharpoonright  \dom(D_{n_k}),\qquad
\dom(D_{n_k}) = \Big\{f\in \bigoplus_{j= n_{k-1}}^{n_k-1} W^{1,2}\left([x_j,x_{j+1}], \C^{2}\right): f_{1}(x_{j}+) = f_{1}(x_{j}-), \\  %%%(\R \backslash X)
%%%%f_{1} \in AC_{\loc}(\R), f_{2}\in AC_{\loc}(\R\backslash X);\\
f_{2}(x_{j}+)-f_{2}(x_{j}-) = - ic^{-1}
{\alpha_{j}}f_{1}(x_{j}),\,\, x_j\in  X\cap (x_{n_{k-1}},x_{n_k}),
\quad f_{1}({x_{n_{k-1}} ) = f_{1}({x_{n_k}}}) = 0  \Big\}.
%%  \end{split}
       \end{eqnarray*}
Clearly, $D_{n_k}$ is a selfadjoint extension of the minimal
operator  $D_{n_k}' := D_{n_k, \min}$ given by
   \begin{eqnarray*}
D_{n_k}' =  D\upharpoonright\dom(D_{n_k}'),     \\
\dom(D_{n_k}') = \Big\{f\in
W^{1,2}_0([x_{n_{k-1}},x_{n_k}], \C^{2}):\  f(x_j) = 0,
\quad x_j\in X\cap [x_{n_{k-1}},x_{n_k}] \Big\}.
  \end{eqnarray*}
Clearly,  the operator $D'_{n_k}$ admits a self-adjoint extension $\bigoplus^{n_k}_{j=n_{k-1}+1}D_{n_j,0}$ with discrete spectrum (see Lemma 3.1(ii)) where $D_{n_j,0}$ is given by \eqref{3.8}.   Since $D'_{n_k}$ is a symmetric operator with finite deficiency indices, each its self-adjoint extension  has also discrete spectrum.  In particular, $D_{n_k},\ k\in \N,$ has discrete spectrum.

Therefore due to the representation \eqref{eq:56.05}   the spectrum of the operator $D_{X,\widetilde\gA}$ is purely point, in particular  $\sigma_{\ac}(D_{X,\widetilde\gA}) = \emptyset$.
On the other hand, it follows from  \eqref{eq:56.04B} and  the Kato-Rosenblum theorem
that the $\ac$-parts $D^{\ac}_{X,\alpha}$ and $D_{X,\widetilde\gA}^{\ac}$  of the operators
$D_{X,\alpha}$ and  $D_{X,\widetilde\gA}$ are unitarily equivalent.
In particular, $\sigma_{\ac}(D_{X,\alpha}) = \sigma_{\ac}(D_{X,\widetilde\gA}) = \emptyset.$
%%%conclude that $\sigma_{\ac}(D^{(l)}_{R,\gA})=\emptyset$ if $\gA$ satisfies \eqref{eq:56.04}.

(iv) This statement is immediate from the previous ones.
     \end{proof}
%%%%%%%%%%%%%%%%%%%%%%%%%%%%%%%%%%%%%%%%
%
%

Next we extend Theorem \ref{th_cont_spectr} to the case of $GS$-realizations of the Dirac
expression   $D(Q)$ with a bounded  potential matrix $Q\in L^{\infty}({\cI})\otimes{\C}^{2\times 2}.$ %%$Q\not=0$.
Namely, consider  differential expression
  \begin{equation}\label{4.9BB}
D(Q) := D^c(Q) := -i\,c\,\frac{d}{dx}\otimes\sigma_{1} +
\frac{c^{2}}{2}\otimes\sigma_{3} + Q(x),\qquad Q(x)=Q(x)^{*},
\end{equation}
and denote by $D_{X}(Q) := D_{X}^c(Q)$ the minimal operator associated on $\cI\setminus X$ with the expression $D^c(Q)$.
As in \eqref{3.27} one has
 \begin{equation}\label{4.39A}
D_X(Q) = D(Q)\upharpoonright \dom(D_X(Q)), \quad \dom(D_X(Q)) = W^{1,2}_0\left({\cI}\setminus X, \mathbb C^2\right)  =
\bigoplus_{n=1}^\infty W^{1,2}_0\left([x_{n-1},x_{n}], \C^{2}\right).
  \end{equation}

Further, let $D_{X,\alpha}(Q) := D_{X,\alpha} + Q$ be the GS realization of $D(Q)$.  If  $\alpha := \mathbf  0 =\{0\}^{\infty}_1$ is a zero sequence
we set  $D_N(Q) := {D}_{X, \mathbf  0}(Q)$ and note that $D_N(Q)$, the Neumann realization  of $D(Q)$, is given by the expression \eqref{4.9BB}
on  the domain  \eqref{4.27A}, i.e.
$$
 D_N(Q) = D(Q) \upharpoonright \dom(D_N(Q)), \quad \dom(D_N(Q)) = \dom(D_N) = \{f\in W^{1,2}(\mathbb R_+)\otimes\mathbb C^2:\ f_2(x_0+)=0\}.
$$

   \begin{proposition}\label{prop4.16}
Assume that $Q\in L^{\infty}({\R}_+)\otimes{\C}^{2\times 2},$\
$Q(x)=Q^*(x)$ for a.e. $x\in{\R}_+,$ and
$\alpha=\{\alpha_n\}^{\infty}_1\subset{\R}$. Then the following
holds

\item $(i)$ If $\{\alpha_n/d_{n+1}\}_1^\infty  \in c_0(\N)$, then
    \begin{equation}\label{4.44A}
\sigma_{\ess}\bigl(D_{X,\alpha}(Q)\bigr) =
\sigma_{\ess}\bigl(D_N(Q)\bigr).
  \end{equation}
Moreover, if in addition,  $Q(x)\to 0$ as $x\to\infty$, then
       \begin{equation}\label{4.44}
\sigma_{\ess}\bigl(D_{X,\alpha}(Q)\bigr) =
\sigma_{\ess}\bigl(D_N(Q)\bigr)  = \R\setminus(-c^2/2, c^2/2).
       \end{equation}

\item $(ii)$  If $\{\alpha_n/d_{n+1}\}_1^\infty \in  l^1(\N)$, then
    \begin{equation}\label{4.45A}
\sigma_{\ac}\bigl(D_{X,\alpha}(Q)\bigr) =
\sigma_{\ac}\bigl(D_N(Q)\bigr).
     \end{equation}
Moreover, if  additionally,  $Q\in L^1(\mathbb R_+)\otimes\mathbb C^{2\times 2}$, then
    \begin{equation}\label{4.45}
\sigma_{\ac}\bigl(D_{X,\alpha}(Q)\bigr) =
\sigma_{\ac}\bigl(D_N(Q)\bigr) = \R\setminus(-c^2/2, c^2/2).
     \end{equation}

\item $(iii)$ If condition \eqref{4.29}  is satisfied, then the spectrum of $D_{X,\alpha}(Q)$ is purely singular, i.e.
    \begin{equation}
\sigma_{\ac}\bigl(D_{X,\alpha}(Q)\bigr) = \emptyset.
    \end{equation}

\item $(iv)$   Assume in addition that $d_*(X) > 0$. Then  the above
assumptions can be replaced by
  $$
\{\alpha_n\}^{\infty}_1\in c_0(\N), \qquad \{\alpha_n\}^{\infty}_1\in
l^1(\N) \qquad\text{and}\qquad
%\overline{\lim_{n\to\infty}}{|\alpha_n|} =
 \limsup_{n\to\infty}
{|\alpha_n|} =\infty,
  $$
respectively.
    \end{proposition}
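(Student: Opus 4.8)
The plan is to deduce Proposition \ref{prop4.16} from Theorem \ref{th_cont_spectr} and its proof via one elementary observation: membership of a resolvent difference in a Neumann--Schatten ideal is unaffected by adding the same bounded self-adjoint potential to both operators. Since $\cI=\R_+$, the operators $D_{X,\alpha}$ and $D_N=D_{X,\mathbf 0}$ are self-adjoint by Proposition \ref{cor_delta_carleman}, and since $Q=Q^*\in L^\infty(\R_+)\otimes\C^{2\times 2}$ is bounded, $D_{X,\alpha}(Q)=D_{X,\alpha}+Q$ and $D_N(Q)=D_N+Q$ are self-adjoint by the Kato--Rellich theorem; in particular all resolvents below are defined for $z\in\C\setminus\R$.

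First I would establish the perturbation lemma. Writing $A:=D_{X,\alpha}$, $B:=D_N$, $D_0(z):=(A-z)^{-1}-(B-z)^{-1}$ and $D(z):=(A+Q-z)^{-1}-(B+Q-z)^{-1}$, the second resolvent identity $(A+Q-z)^{-1}=(A-z)^{-1}-(A-z)^{-1}Q(A+Q-z)^{-1}$ and its analogue for $B$ give, after rearranging,
\begin{equation*}
\bigl(I+(A-z)^{-1}Q\bigr)\,D(z)=D_0(z)\,\bigl(I-Q(B+Q-z)^{-1}\bigr).
\end{equation*}
For $|\im z|$ large the left factor is boundedly invertible (as $\|(A-z)^{-1}\|\le|\im z|^{-1}$), so $D(z)\in\mathfrak S_p(\gH)$ whenever $D_0(z)\in\mathfrak S_p(\gH)$; the resolvent identity then propagates $\mathfrak S_p$-membership of $D(\cdot)$ to all of $\C\setminus\R$ once it holds at one point.

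With this at hand I would follow the proof of Theorem \ref{th_cont_spectr} step by step. For (i): if $\{\alpha_n/d_{n+1}\}_1^\infty\in c_0(\N)$ then Corollary \ref{cor4.12} with $\alpha^{(1)}=\alpha$, $\alpha^{(2)}=\mathbf 0$ gives $(D_{X,\alpha}-i)^{-1}-(D_N-i)^{-1}\in\mathfrak S_\infty(\gH)$; the lemma transfers this to the $Q$-perturbed pair, and Weyl's theorem yields \eqref{4.44A}. For (ii): under $\{\alpha_n/d_{n+1}\}_1^\infty\in l^1(\N)$, Corollary \ref{cor4.12} produces an $\mathfrak S_1$-difference, the lemma carries it over, and the Kato--Rosenblum theorem gives unitary equivalence of the $ac$-parts, hence \eqref{4.45A}. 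For (iii): under \eqref{4.29} I would reproduce verbatim the subsequence construction of the proof of Theorem \ref{th_cont_spectr}(iii) — pick $\{n_k\}$ with $|\alpha_{n_k}|/d_{n_k}\to\infty$ and $\{d_{n_k+1}\alpha_{n_k}^{-1}\}\in l^1(\N)$, and set $\widetilde\alpha_{n_k}:=\infty$, $\widetilde\alpha_n:=\alpha_n$ otherwise; by \eqref{delta_infty} the operator $D_{X,\widetilde\alpha}(Q)$ splits into an orthogonal sum over the intervals $[x_{n_{k-1}},x_{n_k}]$ of bounded self-adjoint perturbations of symmetric operators with finite deficiency indices whose self-adjoint extensions have discrete spectrum (Lemma \ref{2.3}(ii)), so a bounded perturbation leaves the essential spectrum empty and $\sigma_{ac}(D_{X,\widetilde\alpha}(Q))=\emptyset$; Corollary \ref{col_rc_1}(i) together with the lemma gives $(D_{X,\alpha}(Q)-i)^{-1}-(D_{X,\widetilde\alpha}(Q)-i)^{-1}\in\mathfrak S_1(\gH)$, whence $\sigma_{ac}(D_{X,\alpha}(Q))=\sigma_{ac}(D_{X,\widetilde\alpha}(Q))=\emptyset$ by Kato--Rosenblum. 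Statement (iv) follows from (i)--(iii) and the $d_*(X)>0$ forms of Corollaries \ref{cor4.12} and \ref{col_rc_1}, exactly as in Theorem \ref{th_cont_spectr}(iv).

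It remains to prove the two ``moreover'' clauses, which is the only genuinely new part. If $Q(x)\to 0$ as $x\to\infty$ I would show $Q$ is $D_N$-compact: splitting $Q=Q\chi_{[0,R]}+Q\chi_{(R,\infty)}$, the operator $Q\chi_{[0,R]}(D_N-z)^{-1}$ is compact because $(D_N-z)^{-1}$ maps into $W^{1,2}(\R_+)\otimes\C^2$ and restriction to $[0,R]$ then embeds compactly into $L^2([0,R])\otimes\C^2$, while $\|Q\chi_{(R,\infty)}(D_N-z)^{-1}\|\to 0$ as $R\to\infty$; hence $Q(D_N-z)^{-1}$ is compact and Weyl's theorem gives $\sigma_{\ess}(D_N(Q))=\sigma_{\ess}(D_N)=\R\setminus(-c^2/2,c^2/2)$, which with \eqref{4.44A} yields \eqref{4.44}. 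If in addition $Q\in L^1(\R_+)\otimes\C^{2\times 2}$, then the (generalized) wave operators for the pair $(D_N(Q),D_N)$ exist and are complete by classical one-dimensional Dirac scattering theory, so $\sigma_{ac}(D_N(Q))=\sigma_{ac}(D_N)=\R\setminus(-c^2/2,c^2/2)$, which with \eqref{4.45A} gives \eqref{4.45}. I expect this last point to be the main obstacle: for $Q\in L^1$ the resolvent difference $(D_N(Q)-z)^{-1}-(D_N-z)^{-1}$ need not be trace class, so the crude $\mathfrak S_1$-argument used throughout the rest of the proof does not apply and one must instead invoke (or reprove) the completeness of the wave operators on the half-line.
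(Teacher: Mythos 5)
Your proposal is correct and follows the same overall architecture as the paper's proof (compare $D_{X,\alpha}(Q)$ with $D_N(Q)=D_{X,\mathbf 0}(Q)$, apply Weyl's theorem for $p=\infty$ and Kato--Rosenblum for $p=1$, and reuse the subsequence construction of Theorem \ref{th_cont_spectr}(iii) for the singularity statement). The one genuine difference is the mechanism by which the $\mathfrak S_p$-membership of resolvent differences is transported from the pair $(D_{X,\alpha^{(1)}},D_{X,\alpha^{(2)}})$ to the pair $(D_{X,\alpha^{(1)}}(Q),D_{X,\alpha^{(2)}}(Q))$. The paper observes that, since $Q$ is bounded, $\dom(D_{X,\alpha}(Q))=\dom(D_{X,\alpha})$, so both realizations have the \emph{same} boundary operator $B_{X,\alpha}$ in the triplet of Theorem \ref{th_bt_2}, and then invokes Proposition \ref{prop_II.1.4_02}(i) twice to get the equivalence of \eqref{rescompar1} and \eqref{4.36}. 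You instead prove a self-contained perturbation lemma via the second resolvent identity, $\bigl(I+(A-z)^{-1}Q\bigr)D(z)=D_0(z)\bigl(I-Q(B+Q-z)^{-1}\bigr)$, invert the left factor for $|\im z|$ large, and propagate with the first resolvent identity; this yields the implication you need for all $p\in(0,\infty]$ by the two-sided ideal property of $\mathfrak S_p$, and has the advantage of not requiring $A$ and $B$ to be extensions of a common symmetric operator. Your treatment of the two ``moreover'' clauses also supplies arguments where the paper only cites: the cutoff proof that $Q$ is $D_N$-compact when $Q(x)\to0$ is the standard one, and for $Q\in L^1$ you correctly identify that the trace-class device fails and that one must appeal to the classical half-line Dirac scattering result (the paper cites \cite[Theorem 9.1.1]{LevSar88} for exactly this). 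The only cosmetic imprecision is in (iii): each block of the decoupled operator $D_{X,\widetilde\alpha}(Q)$ has empty essential spectrum, but the orthogonal sum need not; what matters, and what your argument does deliver, is that the sum has pure point spectrum, hence $\sigma_{ac}=\emptyset$.
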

%%%%%%%%%%%%%%%%%%%%%%%%%%%%%%%%%%%%
    \begin{proof}
(i)-(ii). Let $\Pi =\{\cH,\Gamma_0,\Gamma_1\}$ be a boundary triplet for the operator  $D_{X}^*$
defined in Theorem \ref{th_bt_2}.
Since $Q = Q^* \in L^{\infty}({\R}_+)\otimes {\C}^{2\times 2},$
$\Pi$ is also  the boundary triplet  for the operator  $D_{X}(Q)^*$.
Moreover, due to the inclusion  $Q\in L^{\infty}({\R}_+)\otimes {\C}^{2\times 2},$  one has
$\dom(D_{X, \alpha}(Q)) = \dom(D_{X, \alpha})$. Therefore the boundary operator
for the extension $D_{X, \alpha}(Q)$ of   $D_{X}(Q)$ coincides with  the boundary operator for the extension
$D_{X, \alpha}$ of   $D_{X}.$ Thus,  by Proposition \ref{prop_IV.2.1_01},
%%%the operators  $D_{X, \alpha}(Q)$  and $D_{X, \alpha}$  are defined by the same boundary condition,
%
%
   \begin{equation}\label{4.30}
\dom(D_{X, \alpha}(Q)) = \dom(D_{X, \alpha}) = \{f\in W^{1,2}\left(\R_+
\setminus X,  \C^2\right): \Gamma_1f = B_{X,\gA}\Gamma_0f\},
  \end{equation}
where  $B_{X,\gA}$ is  the Jacobi operator  given  by
\eqref{IV.2.1_01}.   Therefore due to Proposition  \ref{prop_II.1.4_02}(i)        inclusion
\eqref{rescompar1} is equivalent to the inclusion
   \begin{equation}\label{4.36}
(D_{X, \alpha^{(1)}}(Q) - i)^{-1} - (D_{X, \alpha^{(2)}}(Q) -
i)^{-1} \in\mathfrak S_p(\gH), \qquad p\in (0,\infty].
     \end{equation}
We compare the realizations  $D_{X, \alpha}(Q)$  and
$D_{X,\mathbf {0}}(Q) = D_N(Q)$ using the inclusion \eqref{4.36} with $\alpha^{(1)} =\alpha$
and $\alpha^{(2)}= \mathbf {0}=\{0\}_1^\infty$.
Namely, the inclusion \eqref{4.36} with $p =\infty$ yields  \eqref{4.44A}  by applying the
Weyl theorem  \cite[Corollary XIII.4.1]{RedSim78}).
Similarly, the inclusion \eqref{4.36} with $p = 1$ yields equality \eqref{4.45A} by applying
the  Kato-Rosenblum theorem  (\cite{Kato66}, \cite[Theorem XI.9]{RedSim79}).
%%one completes the proof by applying Theorem \ref{th_cont_spectr}.

It is well known that $\sigma_{\ess}\bigl(D_N(Q)\bigr)  = \sigma_{\ess}\bigl(D_N\bigr) = \R\setminus(-c^2/2, c^2/2)$ provided that  $Q(x)\to 0$ as $x\to\infty$.
Combining this relation with \eqref{4.44A}  we arrive at \eqref{4.44}.

Further, according to  \cite[Theorem 9.1.1]{LevSar88},  $\sigma_{\ac}\bigl(D_N(Q)\bigr) = \sigma_{\ess}\bigl(D_N(Q)\bigr)  = \R\setminus(-c^2/2, c^2/2)$ whenever  $Q\in L^1(\mathbb R_+)\otimes\mathbb C^{2\times 2}$.  Combining this fact with  \eqref{4.45A} we arrive at \eqref{4.45}.

(iii)  As in the proof of Theorem \ref{th_cont_spectr} (iii) we define a sequence $\widetilde{\gA} := \{\widetilde{\gA}_n\}_1^{\infty}$  by formula \eqref{4.33} and  find  a subsequence $\{\widetilde{\gA}_{n_k}\}_{k=1}^{\infty}$   satisfying  \eqref{4.33A}.
Alongside \eqref{4.30} we have the following representation of the domain $\dom(D_{X, \widetilde{\gA}}(Q))$,
   \begin{equation}\label{4.30A_Domain}
\dom(D_{X, \widetilde{\gA}}(Q)) = \dom(D_{X, \widetilde{\gA}}) = \{f\in W^{1,2}\left(\R_+
\setminus X, \C^2\right): \{\Gamma_1 f, \Gamma_0f\} \in \Theta_{X,\widetilde{\gA}}\},
     \end{equation}
where the boundary relation $\Theta_{X,\widetilde{\gA}}$
corresponding to $D_{X, \widetilde{\gA}}(Q)$ does not depend on $Q$.
As it is shown in the proof of Theorem \ref{th_cont_spectr}(iii) conditions \eqref{4.29} yields the inclusion \eqref{eq:56.04B}. In turn,  combining  relations \eqref{4.30} and  \eqref{4.30A_Domain} with Proposition  \ref{prop_II.1.4_02}(i) we get that   the inclusion \eqref{eq:56.04B} yields
(in fact,  is equivalent to) the inclusion
  \begin{equation}\label{eq:56.04BBCC}
(D_{X,\gA}(Q) - i)^{-1} - (D_{X,\widetilde\gA}(Q) - i)^{-1}\in
\mathfrak{S}_1(\mathfrak H).
    \end{equation}
The rest of the proof coincides with the proof of  Theorem \ref{th_cont_spectr}(iii).
  \end{proof}
%%%%%%%%%%%%%%%%%%%%%%%%%%%%%%%%%%%%%%%%%
     \begin{remark}
In the case of $Q\not \equiv 0$ an explicit description of $\sigma_{\ess}\bigl(D_N(Q))$ and  $\sigma_{ac}\bigl(D_{N}(Q)\bigr)$ is known also for some  non-decaying potentials.
For instance,  if $Q$ is periodic, $Q(x + \tau)= Q(x),\  x\in \R,$ then the essential spectrum of the operator $D(Q)$ in $L^2(\mathbb R,\mathbb C^2)$ is absolutely continuous and  has zone-band structure.
This fact allows one  to complete the statement (ii) for periodic $Q$.
%%%$\sigma_{\ac}\bigl(D_{X,\alpha}(Q)\bigr)=\sigma_{\ess}\bigl(D_{X,\alpha}(Q)\bigr) \not ={\R}$.
     \end{remark}
%
%
%%%%%%%%%%%%%%%%%%%%%%%%%%%%%%%%%%%%%%%%%%%%%%%%%%%%%
       \begin{remark}
Note that analogs of the main results of this section are known
for Schr\"{o}dinger operators  $H_{X,\alpha}$ with $\delta$-interactions.
For instance, in the case $d_*(X)>0$, the resolvent comparability criterion  for Schr\"{o}dinger operators $H_{X,\alpha}$ (i.e. analogs of Corollaries \ref{col_rc_1} and \ref{cor4.12}) was  obtained in \cite{Koc_89} (see also \cite{Mih_94a}).
Moreover, the statements similar to Theorem \ref{th_cont_spectr}(iv) and Proposition \ref{prop4.16}(iv)  have also  been obtained for operators $H_{X,\alpha}$ in \cite{Mih_94a, Mih_96} and \cite{Koc_89}. These authors have also applied boundary triplets technique  to the operator $H_{X}^*$ with $d_*(X)>0$.
Other results on absolutely continuous and singular spectrum of $H_{X,\alpha}$ in the case $d_*(X)>0$  can be found in \cite{Chr_Sto_94}.

In the case $d_*(X) = 0$  Schr\"{o}dinger operators $H_{X,\alpha}$ were treated in detail in \cite{KM} where one can find  analogs of Proposition \ref{Prop_rescompar}, Corollary \ref{cor4.12} and Theorem \ref{th_cont_spectr}(i)-(ii).

Note also that the  proof of Theorem \ref{th_cont_spectr}(iii) is similar to that presented in \cite{Chr_Sto_94, Mih_96}. However  the idea of the proof goes back to the paper \cite{SimSpe89}  where it is applied to 1-D Schr\"{o}dinger operators with  $L^1_{\loc}(\R_+)$-potentials.
In connection with Theorem \ref{th_cont_spectr}(iii) we mention also a recent interesting  paper \cite{Lot11}.
In particular, it is shown in \cite{Lot11} that the Schr\"{o}dinger operator with point interactions on a sparse set has purely point continuous  spectrum.
         \end{remark}
%%%%%%%%%%%%%%%%%%%%%%%%%%%%%%%%%%%%%%%
     \begin{remark}
Another proof of Theorem \ref{th_cont_spectr}(iii)  can also be extracted from \cite[Theorem 1.1]{MalNei11}.  It is based on an explicit block-diagonal form ${M}(\cdot) = \bigoplus_{n=1}^\infty {M}_{n}(\cdot)$ of the Weyl function ${M}(\cdot)$.
      \end{remark}
%
%
%%%%%%%%%%%%%%%%%%%%%%%%%%%%%%%%%%%%%
%%%%%%%%%%%%%%%%%%%%%%%%%%%%%%%%%%%%%%%%%

\subsubsection{Discrete spectrum}

\noindent Here we  investigate the discreteness property of proper extensions  %%%$\widetilde{D_X}$
of the minimal Dirac operator $D_X(Q)$ defined by \eqref{4.39A}
and  associated in $\mathfrak H = L^2({\R}_+)\otimes{\C}^{2}$ with  the differential expression  \eqref{4.9BB}
on $\R_+\setminus X$.                             %%%$\widetilde{D_X(Q)}\in\Ext_{D_X(Q)}$.
In particular, we show that in the case $d_{*}(X)>0$
there are no proper extensions with discrete spectrum.

First we investigate the discreteness property for the
minimal Dirac operator  $D_X := D_X(0)$ with zero potential $Q=0$.
   \begin{theorem}\label{deltadiscr}
Let $X = \{x_n\}_1^\infty (\subset \R_+)$ and
let $\Pi = \{\mathcal{H}, \Gamma_0, \Gamma_1\}$  be the  boundary triplet for $D_X^*$ defined in Theorem \ref{th_bt_2}
and let $\Theta\in\widetilde \cC(\cH)$ with $\rho(\Theta)\not = \emptyset.$
Then:
%
%
%%%\begin{itemize}
\item $(i)$ $D_X$ has compact inverse $(D_X)^{-1}(\in [\ran(D_X),\mathfrak H])$
if and only if $\lim_{n\to+\infty}d_{n}=0$.
\item $(ii)$ A proper  extension
$\widetilde{D_X} = D_{X,\Theta}$ of  $D_X$
has discrete spectrum if and only if
$\lim_{n\to+\infty}d_{n}=0$ and
$\Theta\in \widetilde\cC(\cH)$ has discrete spectrum.
%%%%\end{itemize}
  \end{theorem}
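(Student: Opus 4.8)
\textbf{Proof plan for Theorem \ref{deltadiscr}.}

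The strategy is to reduce everything to the corresponding statement for the Jacobi operator $B_{X,\gA}$ (or, more precisely, for the boundary relation $\Theta$) by exploiting the boundary triplet $\Pi=\{\cH,\Gamma_0,\Gamma_1\}$ of Theorem \ref{th_bt_2} together with the spectral characterization in Proposition \ref{prop_II.1.4_spectrum}. First I would treat (i). Recall $D_X=\bigoplus_{n\in\N}D_n$ and that $D_{n,0}=D_n^*\upharpoonright\ker\widetilde\Gamma_0^{(n)}$ has purely discrete spectrum given explicitly by \eqref{spetcom}. Since $D_X\subset D_{X,0}=\bigoplus_n D_{n,0}$, the operator $D_X$ is bounded from below away from $0$ in modulus precisely when $0\in\rho(D_{X,0})$, which holds here because each gap $(-\alpha_n,\alpha_n)\supseteq(-c^2/2,c^2/2)$ contains $0$. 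Then $(D_X)^{-1}$ exists on $\ran(D_X)$ and one computes, using the orthogonal decomposition, that $(D_{X,0})^{-1}=\bigoplus_n (D_{n,0})^{-1}$ is compact if and only if the sequence of norms $\|(D_{n,0})^{-1}\|$ tends to $0$, i.e. if and only if the smallest modulus eigenvalue in \eqref{spetcom} tends to $\infty$ as $n\to\infty$; by \eqref{spetcom} this smallest value is $\sqrt{c^2\pi^2/(4d_n^2)+(c^2/2)^2}$, which $\to\infty$ iff $d_n\to0$. The same compactness argument transfers to $(D_X)^{-1}$ since $D_X$ and $D_{X,0}$ differ by a finite-codimension restriction on each block (indeed $n_\pm(D_n)=2$), so $(D_X)^{-1}$ and $(D_{X,0})^{-1}$ coincide up to operators of locally finite rank; compactness of one is equivalent to compactness of the other.

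Next I would prove (ii). The direction requiring $\lim d_n=0$ is the easy half: if $D_{X,\Theta}$ has discrete spectrum then it has compact resolvent, and since by Proposition \ref{prop_II.1.4_spectrum} (Krein's formula \eqref{II.1.4_01}) the resolvent difference $(D_{X,\Theta}-z)^{-1}-(D_{X,0}-z)^{-1}=\gamma(z)(\Theta-M(z))^{-1}\gamma^*(\bar z)$ is bounded (note $D_X$ has a real regular point $a=c^2/2$, so $M(\cdot)$ and $\gamma(\cdot)$ are bounded at suitable points), compactness of $(D_{X,\Theta}-z)^{-1}$ forces $(D_{X,0}-z)^{-1}$ to be compact only modulo a bounded (not necessarily compact) term — so here I would instead argue directly: if $d_n\not\to0$ then along a subsequence $d_{n_k}\ge\delta>0$, and the blocks $D_{n_k,0}$ have eigenvalues accumulating only at $\pm\infty$ but with bounded lowest eigenvalue, so $D_X$ has a regular point that is \emph{not} isolated in the essential spectrum of $D_X$; more carefully, $\sigma_{\ess}(D_X)\ne\emptyset$ when $d_*(X)>0$ for any block-subsequence with $d_{n_k}$ bounded away from $0$, because the spectra of the $D_{n_k,0}$ (hence of nearby self-adjoint blocks of $D_X$) are uniformly trapped in a fixed compact set of energies, producing an accumulation point of $\sigma(D_X)$. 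Since $D_{X,\Theta}\in\Ext_{D_X}$ and $D_X$ has finite deficiency indices on each finite collection of blocks, $\sigma_{\ess}(D_{X,\Theta})\supseteq\sigma_{\ess}(D_X)\ne\emptyset$, contradicting discreteness. (This is essentially the observation already flagged in the introduction that $\lim d_n=0$ is necessary.)

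For the converse in (ii), assume $\lim_{n\to\infty}d_n=0$ and $\Theta$ has discrete spectrum; I would show $D_{X,\Theta}$ has discrete spectrum. By (i), $(D_{X,0})^{-1}$ is compact. Fix $z\in\rho(D_{X,0})\cap\rho(D_{X,\Theta})$ (nonempty since $\rho(\Theta)\ne\emptyset$ and by Proposition \ref{prop_II.1.4_spectrum}). In Krein's formula \eqref{II.1.4_01} the first term $(D_{X,0}-z)^{-1}$ is compact; it remains to show the correction term $\gamma(z)(\Theta-M(z))^{-1}\gamma^*(\bar z)$ is compact. Here I would use two inputs: first, discreteness of $\Theta$ means $(\Theta-\zeta)^{-1}$ is compact for $\zeta\in\rho(\Theta)$; second, the Weyl function has the block-diagonal form $M(z)=\bigoplus_{n=1}^\infty M_n(z)$ with $M_n(c^2/2)=0$ and $\|M'_n(c^2/2)\|$ uniformly bounded (from \eqref{IV.1.1_08B}), and moreover, crucially, since $d_n\to0$ the convergence $M_{n,c}(z+c^2/2)\to M_{n,H}(z)$ type estimates degenerate — what I actually need is that $M_n(z)-M_n(c^2/2)\to0$ in norm as $n\to\infty$ for $z$ near $c^2/2$, which follows from the explicit form of $M_n$ in \eqref{IV.1.1_09A}, \eqref{IV.1.1_09.2} together with $d_n\to0$ (uniform boundedness of $M'_n$ plus smallness of $M_n(z)-M_n(c^2/2)$ for $n$ large). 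Thus $M(z)-M(c^2/2)=M(z)$ is the sum of a finite-rank-up-to-$\varepsilon$ term and a small term, i.e. $M(z)$ is compact. Then, by a standard resolvent identity, $(\Theta-M(z))^{-1}-(\Theta-c^2/2)^{-1}=(\Theta-M(z))^{-1}\bigl(M(z)-c^2/2\bigr)(\Theta-c^2/2)^{-1}$ — I would instead write $(\Theta-M(z))^{-1}=\bigl(I-(\Theta-\zeta_0)^{-1}(M(z)-\zeta_0)\bigr)^{-1}(\Theta-\zeta_0)^{-1}$ for a suitable $\zeta_0$, and since $(\Theta-\zeta_0)^{-1}$ is compact, $(\Theta-M(z))^{-1}$ is compact; combined with boundedness of $\gamma(z),\gamma^*(\bar z)$, the correction term is compact, hence $(D_{X,\Theta}-z)^{-1}$ is compact and $D_{X,\Theta}$ has discrete spectrum. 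I expect the main obstacle to be making rigorous the claim that $M(z)$ (equivalently $M(z)-M(c^2/2)$) is compact when $d_n\to0$: one must combine the uniform bound on $\|M'_n(c^2/2)\|$ with an estimate showing $\|M_n(z)-M_n(c^2/2)\|\to0$ as $n\to\infty$, uniformly for $z$ in a fixed compact subset of the common gap, and I would extract this from the explicit formulas \eqref{IV.1.1_09.2}, \eqref{IV.1.1_19}, \eqref{IV.1.1_09A} by a direct (but careful) expansion in powers of $d_n$.
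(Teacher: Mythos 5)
The decisive flaw is in your converse direction of (ii): the lemma you yourself single out as ``the main obstacle'' --- compactness of $M(z)=\bigoplus_n M_n(z)$ when $d_n\to 0$ --- is false, so that route cannot be repaired. The blocks $M_n(z)$ do not tend to $0$ in norm: by \eqref{IV.1.1_08B} one has $M_n'(c^2/2)\to I_2$ as $d_n\to 0$ (the off-diagonal entries $\tfrac12(1+c^{-2}d_n^{-2})^{-1/2}$ vanish, but the diagonal entries tend to $1$), and the uniform bound on $M_n'$ only gives $\|M_n(z)\|\le C|z-c^2/2|$ \emph{uniformly in} $n$, not decay \emph{in} $n$; you have conflated the two. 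More bluntly, \eqref{3.92} gives $\im M(c^2/2+i)\ge\frac{\varepsilon_1}{16}I_{\cH}$, which no compact operator on an infinite-dimensional $\cH$ can satisfy. Fortunately compactness of $M(z)$ is not needed: the identity $(\Theta-M(z))^{-1}=(\Theta-\zeta)^{-1}+(\Theta-M(z))^{-1}(M(z)-\zeta)(\Theta-\zeta)^{-1}$ shows that $(\Theta-\zeta)^{-1}\in\mathfrak S_\infty(\cH)$ together with mere boundedness of $M(z)$ already forces $(\Theta-M(z))^{-1}\in\mathfrak S_\infty(\cH)$; this is precisely Proposition \ref{prop_II.1.4_02}(i) with $p=\infty$, and the paper's entire proof of (ii) consists of that proposition combined with discreteness of $A_0=D_{X,0}=\bigoplus_n D_{n,0}$, which follows from \eqref{spetcom} and $d_n\to 0$.

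There are two further gaps. First, you never prove the implication ``$D_{X,\Theta}$ has discrete spectrum $\Rightarrow$ $\Theta$ has discrete spectrum'', which is half of the ``only if'' in (ii); with Proposition \ref{prop_II.1.4_02}(i) it is immediate once both $D_{X,\Theta}$ and $D_{X,0}$ are known to have compact resolvents. Second, in (i) the assertion that $(D_X)^{-1}$ and $(D_{X,0})^{-1}$ ``coincide up to operators of locally finite rank'' and are therefore compact only simultaneously is not a valid argument: an infinite orthogonal sum of rank-one perturbations of $0$ is a non-compact perturbation of a compact operator. (The conclusion can be salvaged by singular-value interlacing on each block together with the explicit eigenvalues \eqref{spetcom}, but that step must be written out.) The same looseness infects your essential-spectrum detour in (ii): the spectra $\sigma(D_{n_k,0})$ are unbounded --- only their lowest eigenvalues are trapped in a compact set --- and the corresponding eigenfunctions do not lie in $\dom(D_X)$, so they do not directly produce a singular sequence for the minimal operator. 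The paper avoids all of this by exhibiting uniformly bounded, pairwise orthogonal bump functions $\varphi_k\in\dom(D_X)$ supported in the long intervals, which shows directly that $(D_X-z)^{-1}=(D_{X,\Theta}-z)^{-1}\upharpoonright\ran(D_X-z)$ cannot be compact when $d_n\not\to 0$.
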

%%%%%%%%%%%%%%%%%%%%%%%%%%%%%%%%%%%%%%%%%%%%%%%%%%%%%%%%%%
        \begin{proof} (i) \emph{Sufficiency}.
Let $\lim_{n\to+\infty}d_{n}=0$.
%%%Let $\Pi =\{\cH,\Gamma_0,\Gamma_1\}$ be a boundary triplet for the operator  $D_{X}^*$ defined in Theorem \ref{th_bt_2}.
%%%Since $Q(\cdot) = Q^*(\cdot) \in L^{\infty}({\R}_+)\otimes {\C}^{2\times 2},$
%%%a triplet $\Pi =\{\cH,\Gamma_0,\Gamma_1\}$  is also  the boundary triplet  for the operator  $D_{X}(Q)^* = D_{X}^* +Q$.
According to the construction,  the boundary triplet $\Pi =\{\cH,\Gamma_0,\Gamma_1\}$   for $D_X^{*}$
is the direct sum, $\Pi=\bigoplus_{n=1}^\infty\Pi_{n}$ (see  Theorem \ref{th_bt_2},
formulae \eqref{IV.1.1_12}, \eqref{IV.1.1_12.1})  and
  \begin{equation}\label{32Ad}
A_0 := D_X^{*}\upharpoonright \ker\Gamma_{0}
=\bigoplus_{n=1}^\infty A_{n,0},\qquad
A_{n,0} = D_{n,0}:=D^{*}_{n}\upharpoonright\ker(\Gamma_{0}^{(n)}).
   \end{equation}
%
%
%%%%First we assume that $Q=0$. We put $A_0 := A_0(0),$   $A_{n,0}:= A_{n,0}(0)$ and $D_{n,0} := D_{n,0}(0).$
Combining \eqref{32Ad} with Lemma \ref{2.3}(ii) we get
   \begin{equation}\label{33Ad}
\sigma(A_{0})=\bigcup_{n=1}^\infty\sigma(D_{n,0})=
\bigcup_{n,j=1}^\infty\{\lambda_{n,j}^{\pm}\}
  \end{equation}
where for any fixed $n$
   \begin{equation}\label{33d}
\lambda^{\pm}_{n,j} =
\pm\sqrt{\frac{
c^2\pi^2}{d^2_n}\,\left(j+\frac12\right)^{2}+\left(\frac{c^{2}}{2}\right)^{2}}
\sim\pm
\frac{\pi\,c\,}{d_{n}}\left(j+\frac12\right),    \qquad\textrm{as}\qquad j\to\infty.
   \end{equation}

\noindent It follows that any non-zero (finite or infinite) accumulation point of the sequence
$\{d_{n}\}_{n=1}^\infty$ generates  countably many accumulation points
for\  the sequence $\{\lambda^{\pm}_{n,j}\}_{n,j\in \N}$. Thus, the spectrum
$\sigma(A_{0})$ is discrete, i.e. $A_{0}^{-1}\in \mathfrak S_{\infty}(\mathfrak H)$ if and only if
$\lim_{n\to+\infty}d_{n}=0$. In particular, the later condition  yields compactness of  $(D_X)^{-1} = A_{0}^{-1} \upharpoonright \ran (D_X) $.

\emph{Necessity.} Assume that $d_{n}$ does not converge  to
zero, so that we can find a subsequence $\{d_{n_{k}}\}_{k=1}^{\infty}$
and $\varepsilon>0$ such that $d_{n_{k}}\geqslant\varepsilon>0$, $k\in\N$.
Choose a function $\varphi = \binom{\phi}{\phi}\in W^{1,2}_{0}(\R_+, \mathbb C^2)$ such that
\[\phi(x) = \left\{\begin{array}{cc}1, &\qquad \varepsilon/4\leqslant x\leqslant 3\varepsilon/4, \\0, &\qquad x\notin [0,\varepsilon],
\end{array}\right.\]
and put
 \[\varphi_k(x):= \varphi(x-x_{n_{k}}),\qquad k\in\N.\]
 Clearly, $\varphi_{k}\in\dom(D_X)$, $k\in\N$. Moreover, there exist constants $C_1$ and $C_2$
 such that
  \begin{equation}\label{34d}
  \|\varphi_{k}\|_{L^{2}(\R_+)}= C_1 \qquad
\textrm{and}\qquad\|D_X\varphi_{k}\|_{L^{2}(\R_+, \C^2)} \le C_2, \qquad k\in \N.
  \end{equation}
 \noindent
 Since the functions $\varphi_{k}$ have disjoint supports, the sequence $\{\varphi_{k}\}_{1}^{\infty}$ is not compact in $L^{2}(\R_+)\otimes \mathbb C^2$. Therefore it follows from  \eqref{34d} that the operator $(D_X)^{-1}$ is not compact.

(ii) Let the spectrum  $\sigma(D_{X,\Theta})$ of  $D_{X,\Theta}$ be discrete,
i.e. $\rho(D_{X,\Theta})\neq\emptyset$ and
$(D_{X,\Theta} - z)^{-1}\in{\mathfrak S}_\infty(\mathfrak{H})$
for $z\in\rho(D_{X,\Theta})$.
Then $z\in\widehat{\rho}(D_X)$ and the operator
  $$
\left(D_{X} - z\right)^{-1} = (D_{X,\Theta} - z)^{-1}\upharpoonright \ran\left(D_{X} - z\right)
$$
is also compact. By (i)  $\lim_{n\to+\infty}d_{n}=0$. Therefore it follows from (\ref{33d}) and
(\ref{33Ad}) (and was already mentioned) that the spectrum
$\sigma(A_{0})$ is discrete. Since both operators $D_{X,\Theta}$  and $A_{0}$
have compact resolvents,  it follows from  Proposition \ref{prop_II.1.4_02}(i), %%%(cfr. formula (\ref{1.10a})),
that  $(\Theta - \zeta)^{-1}\in{\mathfrak S}_\infty(\cH)$ for   $\zeta\in \rho(\Theta)$,
i.e.  the spectrum $\sigma(\Theta)$ of $\Theta$ is discrete too.

Conversely, let $\lim_{n\to+\infty}d_{n}=0$
and let the spectrum $\sigma(\Theta)$ be discrete.
Then, by (i), the condition $\lim_{n\to+\infty}d_{n}=0$ yields discreteness of the spectrum  of $A_0$.
Finally,  by Proposition \ref{prop_II.1.4_02}(i), $(D_{X,\Theta} - z)^{-1}\in
{\mathfrak S}_\infty(\mathfrak{H})$ since both resolvents $(A_0 -z)^{-1}$  and  $(\Theta - \zeta)^{-1}$ are compact.
%%%assume the conditions of proposition ..... . Let $Q\in L^{\infty}(\mathbb R_+;\mathbb C^2)$. Then ..........
%%   \begin{eqnarray}\label{*}
%%%D_{X,\Theta}+Q-z)^{-1} = (D_{X,\Theta} - z)^{-1}  +  (D_{X,\Theta} + Q -z)^{-1} Q(D_{X,\Theta} - z)^{-1}
%%     \end{eqnarray}
%%%Since $Q$ defines a bounded operator on  $\mathfrak H=L^2(\mathbb R_+;\mathbb C^2)$ and $(D_{X,\Theta} - z)^{-1} \in\mathfrak S_{\infty}(\mathfrak H)$,
%%%%we obtain from \eqref{*} that $\bigl(D_{X,\Theta}(Q)-z\bigr)^{-1}\in\mathfrak S_{\infty}(\mathfrak H).$
      \end{proof}
%%%%%%%%%%%%%%%%%%%%%%%%%%%%%%%%%%%%%%%%%%%%%%
     \begin{corollary}\label{cor_discreteness}
Assume the conditions of Theorem \ref{deltadiscr}.  Let also  $Q(\cdot) = Q^*(\cdot) \in L^2_{\loc}({\R}_+)\otimes{\C}^{2\times 2},$
and let  $D_X(Q)$ be a  minimal Dirac operator  on $\R_+\setminus X$ given by  \eqref{4.39A}.
Assume in addition  that the multiplication operator $f\to Qf$ in $L^2(\R_+, \C^2)$ is strongly subordinated to the Dirac operator $D_X^*$, i.e. $\dom(D_X^*)\subset \dom(Q)$ and there exist constants
$a\in (0,1), \   b>0$, such that
     \begin{equation}\label{5.58}
\|Qf\|_{L^2(\R_+,\C^2)} \le  a\|D_X^*f\|_{L^2(\R_+,\C^2)} + b\|f\|_{L^2(\R_+,\C^2)},\quad 0< a <1, \quad f\in \dom(D_X^*).
     \end{equation}
Then:
\item $(i)$  $D_X(Q)$ has compact inverse $(D_X(Q))^{-1}(\in [\ran(D_X(Q)),\mathfrak H])$
if and only if  $\lim_{n\to+\infty}d_{n}=0$.
\item $(ii)$ A proper  extension
$\widetilde{D_X}(Q) = D_{X,\Theta}(Q) := D_{X,\Theta} +Q$ of  $D_X(Q)$
has discrete spectrum if and only if
$\lim_{n\to+\infty}d_{n}=0$ and
$\Theta (\in \widetilde\cC(\cH))$ has discrete spectrum.
In particular, both statements are satisfied whenever  $Q(\cdot) = Q^*(\cdot) \in L^\infty({\R}_+)\otimes{\C}^{2\times 2}.$
\end{corollary}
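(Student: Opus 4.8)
The plan is to reduce Corollary~\ref{cor_discreteness} to Theorem~\ref{deltadiscr} by a perturbation argument, exploiting that the subordination condition \eqref{5.58} makes $Q$ a relatively bounded perturbation of the maximal operator $D_X^*$ with relative bound $a<1$, hence of every extension appearing in the statement. First I would record that \eqref{5.58} means $D_{X,\Theta}(Q)=D_{X,\Theta}+Q$ is a well-defined closed operator for every $\Theta$ with $\rho(\Theta)\ne\emptyset$: indeed $\dom(D_X^*)\subset\dom(Q)$, and since $\dom(D_{X,\Theta})\subset\dom(D_X^*)$ the operator $Q$ is relatively bounded with respect to $D_{X,\Theta}$ with relative bound $\le a<1$, so by the Kato--Rellich type stability of closedness $D_{X,\Theta}+Q$ is closed on $\dom(D_{X,\Theta})$ and its resolvent set is non-empty (one can take $z$ with large imaginary part using the standard estimate $\|Q(D_{X,\Theta}-z)^{-1}\|<1$). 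In particular $\dom(D_{X,\Theta}(Q))=\dom(D_{X,\Theta})$, so the two operators differ only by the bounded-on-graph term $Q$.

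Next I would prove (i). Pick $z\in\rho(D_{X,0}(Q))$ (which is non-empty by the previous paragraph applied to $\Theta$ giving $A_0=D_{X,0}$). The key identity is the second resolvent formula
\begin{equation*}
(D_{X,0}(Q)-z)^{-1}=(D_{X,0}-z)^{-1}-(D_{X,0}-z)^{-1}Q(D_{X,0}(Q)-z)^{-1},
\end{equation*}
which is valid because $Q(D_{X,0}(Q)-z)^{-1}$ and $(D_{X,0}-z)^{-1}Q$ are bounded (again by \eqref{5.58} and $\dom(D_{X,0})\subset\dom(D_X^*)\subset\dom(Q)$, with the latter being bounded on the graph norm). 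Since $(D_{X,0}-z)^{-1}$ is compact exactly when $\lim_n d_n=0$ by Theorem~\ref{deltadiscr}(i), and compactness of $(D_{X,0}-z)^{-1}$ forces compactness of the product $(D_{X,0}-z)^{-1}Q(D_{X,0}(Q)-z)^{-1}$, the formula shows $(D_{X,0}(Q)-z)^{-1}$ is compact iff $(D_{X,0}-z)^{-1}$ is, i.e.\ iff $\lim_n d_n=0$. The claim about $(D_X(Q))^{-1}$ follows by restricting to $\ran(D_X(Q))$ exactly as in the proof of Theorem~\ref{deltadiscr}(i). For (ii) I would argue symmetrically: by the same second resolvent formula comparing $D_{X,\Theta}(Q)$ with $D_{X,0}(Q)$ — but here it is cleaner to apply the boundary-triplet machinery directly, since $\Pi$ from Theorem~\ref{th_bt_2} is also a boundary triplet for $D_X(Q)^*$ (as $Q$ is relatively bounded with bound $<1$, so $D_X(Q)$ has the same deficiency indices and $\dom(D_X(Q)^*)=\dom(D_X^*)$, the boundary maps $\Gamma_0,\Gamma_1$ being unchanged) and the boundary relation parametrizing $D_{X,\Theta}(Q)$ is again $\Theta$. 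Then Proposition~\ref{prop_II.1.4_02}(i) applies verbatim: $(D_{X,\Theta}(Q)-z)^{-1}-(D_{X,0}(Q)-z)^{-1}\in\mathfrak S_\infty(\mathfrak H)$ iff $(\Theta-\zeta)^{-1}\in\mathfrak S_\infty(\cH)$, and combining with (i) one concludes that $D_{X,\Theta}(Q)$ has discrete spectrum iff $\lim_n d_n=0$ and $\sigma(\Theta)$ is discrete.

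The last sentence of the statement, that everything holds for $Q=Q^*\in L^\infty(\R_+)\otimes\C^{2\times2}$, is then immediate: a bounded $Q$ obviously satisfies \eqref{5.58} with $a=0$ and $b=\|Q\|_\infty$, and moreover in that case $\dom(D_{X,\Theta}(Q))=\dom(D_{X,\Theta})$ trivially. I expect the main obstacle to be the careful verification that $\Pi$ remains a boundary triplet for $D_X(Q)^*$ and that the boundary relation of $D_{X,\Theta}(Q)$ is literally $\Theta$ again — this needs the fact that a symmetric relatively bounded perturbation with relative bound $<1$ preserves the adjoint's domain and the Green identity picks up no extra boundary term because $Q$ is a (local, matrix-valued) multiplication operator, so $(Qf,g)-(f,Qg)=0$ pointwise; once this is in place, all the compactness bookkeeping is routine via the second resolvent identity and Proposition~\ref{prop_II.1.4_02}(i). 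A secondary technical point is ensuring $\rho(D_{X,\Theta}(Q))\ne\emptyset$ so that ``discrete spectrum'' is meaningful, which follows from the relative boundedness estimate together with $\rho(\Theta)\ne\emptyset$ via the Krein formula \eqref{II.1.4_01} applied to $D_X(Q)$.
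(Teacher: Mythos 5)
Your proposal is correct, and for part (i) it is essentially the paper's own proof: both arguments extract from \eqref{5.58} the boundedness of $Q(A_0-i)^{-1}$ and of $Q(A_0(Q)-i)^{-1}$ (the latter after transferring the subordination to $D_X^*+Q$ with relative bound $a(1-a)^{-1}$), and then read off from the second resolvent identity \eqref{5.59} that the two resolvents are compact only simultaneously, reducing everything to Theorem \ref{deltadiscr}(i). Where you genuinely diverge is in part (ii). The paper simply reuses the same resolvent identity with $D_{X,\Theta}$ and $D_{X,\Theta}(Q)$ in place of $A_0$ and $A_0(Q)$, so that $D_{X,\Theta}(Q)$ has compact resolvent iff $D_{X,\Theta}$ does, and then quotes Theorem \ref{deltadiscr}(ii) verbatim --- no boundary-triplet theory for the perturbed operator is required. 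You instead re-install the triplet $\Pi$ of Theorem \ref{th_bt_2} as a boundary triplet for $D_X(Q)^*$, identify the boundary relation of $D_{X,\Theta}(Q)$ as $\Theta$ again, and compare with $D_{X,0}(Q)$ via Proposition \ref{prop_II.1.4_02}(i); this is precisely the mechanism the paper deploys for bounded $Q$ in Proposition \ref{prop4.16}, and it does work here, but it obliges you to carry out the verification you yourself flag, namely that $\dom(D_X(Q)^*)=\dom(D_X^*)$ and that the Green identity is unchanged (true, since $Q$ is a symmetric multiplication operator which is $D_X^*$-bounded with bound $a<1$, so the deficiency indices and the adjoint's domain are stable). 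The paper's route is shorter and avoids that verification; yours is more systematic and would extend without change to $\mathfrak S_p$-type comparability statements. One minor bookkeeping point: to get the \emph{equivalence} in (i) you need both factorizations of the resolvent difference, $(A_0(Q)-i)^{-1}Q(A_0-i)^{-1}$ and $(A_0-i)^{-1}Q(A_0(Q)-i)^{-1}$, since compactness must be transferred in both directions; you wrote only one, though both are available to you.
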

%%%%%%%%%%%%%%%%%%%%
    \begin{proof}
(i)  Since $Q$ is strongly subordinated to $D_X^*$ it is also strongly  subordinated to
its restriction  $A_0 := D_X^{*}\upharpoonright \ker\Gamma_{0} = A_0^*,$  (see \eqref{32Ad}).
The latter yields boundedness of the operator $Q(A_0 - i)^{-1}$. Moreover, by the Kato-Rellich theorem (\cite[Theorem 5.4.3]{Kato66}),  $A_0 + Q$ is self-adjoint.

Further, since $Q$ is strongly subordinated to $D_X^*$ it is also subordinated to $D_X^*(Q)= D_X^* + Q$
  (see \cite[Chapter 4.1]{Kato66}), hence  $Q$ is also  subordinated to  $A_0(Q) = A_0+Q,$ the restriction of
$D_X^*(Q)$, with the $(A_0+Q)$-bound not exceeding $a(1-a)^{-1}$, i.e.
 $$
\|Qf\|_{L^2(\R_+,\C^2)} \le  (1-a)^{-1}\left(a \|(A_0 + Q)f\|_{L^2(\R_+,\C^2)} + b\|f\|_{L^2(\R_+,\C^2)}\right),\qquad  f\in \dom(A_0 + Q).
$$
Since  $A_0 + Q$ is self-adjoint,  the latter is amount to saying that  the operator
 $Q(A_0 + Q -i)^{-1}$ is  bounded. Therefore it follows from the identity
   \begin{eqnarray}\label{5.59}
(A_0 - i)^{-1} - (A_0(Q) - i)^{-1} =   \left(A_0(Q) - i\right)^{-1} Q\,(A_0 - i)^{-1} =  (A_0 - i)^{-1}Q\,  \left(A_0(Q) - i\right)^{-1}
     \end{eqnarray}
that the operators $(A_0-i)^{-1}$ and $\bigl(A_0(Q)-i)^{-1}$ are compact only simultaneously. It remains to apply Theorem \ref{deltadiscr}(i).

(ii) This statement is immediate from Theorem \ref{deltadiscr}(ii) and formula \eqref{5.59} with
$A_0$ and $A_0(Q)$ replaced by $D_{X,\Theta}$ and $D_{X,\Theta}(Q)$, respectively.
%
%%Since $(A_{0}-i)^{-1}\in \mathfrak S_{\infty}(\mathfrak H)$ and $Q$ is a bounded operator on
%%$\mathfrak H = L^2(\R_+, \C^2)$, we get $(A_0(Q) - i)^{-1} \in \mathfrak S_{\infty}(\mathfrak H)$.
%%Hence the operator  $(D_X(Q))^{-1} = (A_{0}(Q))^{-1} \upharpoonright \ran (D_X(Q))$ is compact.
  \end{proof}
%%%%%%%%%%%%%%%%%%%%%%%%%%%%%%%%%%%%%%

Next we stand the "individual" version of Corollary \ref{cor_discreteness}.
%%%%%%%%%%%%%%%%%%%%%%%%%%%%%%%%%%%%%%%%%%%%%%%%%%%
     \begin{corollary}\label{cor_discretenessNew}
Assume the conditions of Theorem \ref{deltadiscr} and let   $Q(\cdot) = Q^*(\cdot) \in L^2_{\loc}({\R}_+)\otimes{\C}^{2\times 2}.$
Assume in addition  that the multiplication operator $f\to Qf$ in $L^2(\R_+, \C^2)$
is strongly subordinated to a realization $D_{X, \Theta} = D_{X, \Theta}^*$,
i.e. $\dom(D_{X, \Theta})\subset \dom(Q)$ and the following estimate holds
     \begin{equation}\label{5.58NEW}
\|Qf\|_{L^2(\R_+,\C^2)} \le  a\|D_{X, \Theta}f\|_{L^2(\R_+,\C^2)} + b\|f\|_{L^2(\R_+,\C^2)},\quad 0< a <1, \quad f\in \dom(D_{X, \Theta}).
     \end{equation}
Then  a proper  extension
$D_{X,\Theta}(Q) := D_{X,\Theta} + Q$ of the minimal Dirac operator $D_X(Q)$ on $\R_+\setminus X$ (see  \eqref{4.39A}) has discrete spectrum if and only if
$\lim_{n\to+\infty}d_{n}=0$ and $\Theta (\in \widetilde\cC(\cH))$ has discrete spectrum.
    \end{corollary}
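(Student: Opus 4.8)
The plan is to reduce Corollary \ref{cor_discretenessNew} to the already-proven Theorem \ref{deltadiscr}(ii) (the $Q=0$ case) via a resolvent comparison argument, exactly in the spirit of the proof of Corollary \ref{cor_discreteness}, but now anchoring the subordination at the self-adjoint realization $D_{X,\Theta}$ rather than at the auxiliary operator $A_0$. First I would note that $D_{X,\Theta} = D_{X,\Theta}^*$ is assumed, so $i\in\rho(D_{X,\Theta})$ and the resolvent $(D_{X,\Theta}-i)^{-1}$ is a well-defined bounded operator. Estimate \eqref{5.58NEW} says precisely that $Q$ is $D_{X,\Theta}$-bounded with relative bound $a<1$, hence by the Kato--Rellich theorem $D_{X,\Theta}(Q) = D_{X,\Theta}+Q$ is self-adjoint on $\dom(D_{X,\Theta})$, and moreover (as in the proof of Corollary \ref{cor_discreteness}, invoking \cite[Chapter 4.1]{Kato66}) $Q$ is also $D_{X,\Theta}(Q)$-bounded with bound not exceeding $a(1-a)^{-1}$, so that both $Q(D_{X,\Theta}-i)^{-1}$ and $Q(D_{X,\Theta}(Q)-i)^{-1}$ are bounded operators.

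Next I would write down the second resolvent identity in the form
\begin{equation*}
(D_{X,\Theta}-i)^{-1} - (D_{X,\Theta}(Q)-i)^{-1} = (D_{X,\Theta}(Q)-i)^{-1}\,Q\,(D_{X,\Theta}-i)^{-1} = (D_{X,\Theta}-i)^{-1}\,Q\,(D_{X,\Theta}(Q)-i)^{-1},
\end{equation*}
which is valid since all three factors on each right-hand side are bounded. From this identity, if either of $(D_{X,\Theta}-i)^{-1}$ or $(D_{X,\Theta}(Q)-i)^{-1}$ is compact, then so is the other (a compact operator composed on either side with a bounded operator is compact, and the difference of the two resolvents is then compact, forcing the second one to be compact as well). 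Thus $D_{X,\Theta}(Q)$ has discrete spectrum if and only if $D_{X,\Theta}$ has discrete spectrum.

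Finally I would apply Theorem \ref{deltadiscr}(ii) to the operator $D_{X,\Theta}$: it asserts that $D_{X,\Theta}$ has discrete spectrum if and only if $\lim_{n\to+\infty}d_n = 0$ and $\Theta\in\widetilde\cC(\cH)$ has discrete spectrum. Combining this equivalence with the resolvent-comparison equivalence of the previous paragraph yields exactly the claimed statement: $D_{X,\Theta}(Q)$ has discrete spectrum if and only if $\lim_{n\to+\infty}d_n=0$ and $\Theta$ has discrete spectrum. The only mild subtlety — and the step I would be most careful about — is the passage from "$Q$ is $D_{X,\Theta}$-bounded" to "$Q$ is $D_{X,\Theta}(Q)$-bounded", which is where self-adjointness of $D_{X,\Theta}(Q)$ (hence $i\in\rho(D_{X,\Theta}(Q))$) is genuinely used; everything else is the routine resolvent manipulation already carried out in Corollary \ref{cor_discreteness}. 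Note in particular that the boundary-triplet machinery and Proposition \ref{prop_II.1.4_02} are not needed here, since we compare $D_{X,\Theta}(Q)$ directly with $D_{X,\Theta}$ rather than with $A_0$; the abstract input is entirely absorbed into Theorem \ref{deltadiscr}(ii).
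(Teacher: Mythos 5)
Your proof is correct and follows exactly the route the paper intends: the paper omits the proof of this corollary as "similar to that of Corollary \ref{cor_discreteness}," and indeed the proof of part (ii) of that corollary is precisely the second resolvent identity \eqref{5.59} with $A_0$, $A_0(Q)$ replaced by $D_{X,\Theta}$, $D_{X,\Theta}(Q)$, combined with Theorem \ref{deltadiscr}(ii) — which is what you do. Your explicit attention to the passage from $D_{X,\Theta}$-boundedness to $D_{X,\Theta}(Q)$-boundedness of $Q$ (needed to make both groupings of the resolvent identity yield compactness) is the right point to be careful about and is handled correctly.
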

%%%%%%%%%%%%%%%%%%%%%%%%%%%%%%%%%%%%%%%%%%%%%%
The proof is similar to that of Corollary \ref{cor_discreteness} and is omitted.
%
%%%%%%%%%%%%%%%%%%%%%%%%%%%%%%%%%%%%%
    \begin{remark}\label{rem5.28}
\item $(i)$  Sufficiency in Theorem \ref{deltadiscr}(i) can  easily be proved directly.
Indeed, since $\dom(D_X)=W^{1,2}_{0}(\R\backslash X)\otimes\C^{2}$,
it suffices to show that the identical embedding $W_{0}^{1,2}(\R\backslash
X, \C^{2}) \hookrightarrow  L^{2}(\R, \C^{2})$ is compact provided that $\lim_{n\to\infty}\gd_n=0$. Let $f$ belong to the unit ball of
$W_{0}^{1,2}(\R\backslash X)$.  One has
    \begin{equation}\label{4.40A}
\left|f(x)\right|^{2}=\left|f(x)-f(x_{n})\right|^{2}=\left|\int_{\Omega_{n}}f'(t)dt\right|^{2}\leqslant
d_n\int_{\Omega_{n}}\left|f'(t)\right|^{2}dt,\quad
x\in\Omega_{n}=[x_{n-1},x_{n}].
   \end{equation}
\noindent Choosing  any $\varepsilon>0$ we find $N\in\N$ such that
$d_{n}\leqslant\varepsilon$. Therefore it follows from \eqref{4.40A} that
\[
\sum_{|n|\geqslant
N}\int_{\Omega_{n}}\left|f(t)\right|^{2}dt\leqslant
\sum_{|n|\geqslant
N}d_{n}^{2}\|f'\|^{2}_{L^{2}(\Omega_{n})}\leqslant\varepsilon^{2}\|f'\|^{2}_{L^{2}(\R_+)}\leqslant\varepsilon^{2}.
\]
\noindent Thus the ``tails'' of functions $f$ running through  the unit ball
of $W^{1,2}_{0}(\R\backslash X)$ are uniformly small in  $L^{2}(\R)$. It remains
to note that the embedding $W^{1,2}[a,b]\otimes\C^{2}\hookrightarrow  L^{2}[a,b]\otimes\C^{2}$ is
compact for any finite interval $[a,b]$.

\item $(ii)$  As it is clear from the proof, Corollary \ref{cor_discreteness} remains valid under weaker assumptions.
Namely, condition  \eqref{5.58}  can be replaced by the following one:
$Q$ is subordinated (in the sense of  \cite[Chapter 4.1]{Kato66}) to both operators $D_X^*$ and $D_X^* +Q.$

Note also that an alternative proof of Corollary \ref{cor_discreteness}  can be obtained as follows. Equipping $\dom(D_{X,\Theta} +Q)$ with the graph norm one obtains the Hilbert space $\mathfrak H_+(\Theta, Q)$. It follows from  estimate  \eqref{5.58}  that the Hilbert spaces  $\mathfrak H_+(\Theta, Q)$ and $\mathfrak H_+(\Theta):= \mathfrak H_+(\Theta, 0)$ coincide algebraically and topologically (see  \cite[Chapter 4.1]{Kato66}). Thus, both embeddings $\mathfrak H_+(\Theta, Q)\hookrightarrow\mathfrak H$ and $\mathfrak H_+(\Theta) \hookrightarrow\mathfrak H$ are compact only simultaneously. But the compactness of the embedding $\mathfrak H_+(\Theta, Q)\hookrightarrow\mathfrak H$ is equivalent to the discreteness of the spectrum of  $D_{X,\Theta}$.
   \end{remark}
%%%%%%%%%%%%%%%%%%%%%%%%%%%%%%%%%%%%%%%%%%
%%%%%%%%%%%%%%%%%%%%%%%%%%%%%%%%%%%%

Theorem \ref{deltadiscr} establishes a connection  between the
discreteness property of extensions  $D_{X,\Theta}(Q)$ of $D_X(Q)$
and the same property of the corresponding  boundary relations $\Theta$
with respect to the  boundary triplet for $D_X^*$ defined in  Theorem \ref{th_bt_2}.
 Now we are in position to investigate discreteness property for $GS$-realizations $D_{X,\gA}^c(Q)$ in terms of the corresponding  distances $d_n$ and intensities $\alpha_n$.
To this end we exploit a connection between the $GS$-realizations and Jacobi matrices
on the one hand and the  known results on discreteness property of Jacobi matrices on the other hand.
   \begin{proposition}\label{prop_IV.2.4_01}
Let  $X = \{x_n\}_1^\infty (\subset \R_+)$,  $\alpha=\{\alpha_j\}^{\infty}_1\subset{\mathbb R}$
and let $Q(\cdot) = Q^*(\cdot)\in L^2_{\loc}({\R}_+)\otimes{\C}^{2\times 2}$ be strongly subordinated to the GS-realization $D_{X,\gA}^c = D_{X,\gA}^c(0)$ on $\R_+$.
Assume also that $\lim_{n\to\infty}\gd_n=0$  and
%%%%the Jacobi operator $B_{X,\gA}$ defined by \eqref{IV.2.1_01}-\eqref{IV.2.1_01'}
%%%%%%%%%%%%%%%%%%%%%%%%%%%%%%%%
%
    \begin{equation}\label{prop_chihara_1}
\lim_{n\to\infty}\frac{|\gA_n|}{\gd_n}=\infty\qquad
\text{and}\qquad \lim_{n\to\infty}\frac{c}{\alpha_n} > -
\frac{1}{4}.
  \end{equation}
Then the $GS$-operator $D_{X,\gA}^c(Q)$ on the half-line $\R_+$ has discrete spectrum.
     \end{proposition}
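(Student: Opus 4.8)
The plan is to reduce the discreteness of the GS-operator $D^c_{X,\gA}(Q)$ to a discreteness statement about the Jacobi operator $B_{X,\gA}$ and then to invoke a known spectral criterion for Jacobi matrices. First I would dispose of the potential $Q$: by Corollary \ref{cor_discreteness}(ii), since $Q$ is strongly subordinated to $D^c_{X,\gA}=D_{X,\gA}^c(0)$ and $D_{X,\gA}^c$ is self-adjoint (Proposition \ref{cor_delta_carleman}, as $\cI=\R_+$), the operator $D^c_{X,\gA}(Q)=D_{X,\gA}^c+Q$ has discrete spectrum if and only if $\lim_{n\to\infty}d_n=0$ and the boundary relation $\Theta$ parameterizing $D^c_{X,\gA}$ has discrete spectrum. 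By Proposition \ref{prop_IV.2.1_01} that boundary relation is precisely the minimal Jacobi operator $B_{X,\gA}$ of \eqref{IV.2.1_01}. Hence, given the standing assumption $\lim_{n\to\infty}d_n=0$, it remains only to show that $B_{X,\gA}$ has discrete spectrum under conditions \eqref{prop_chihara_1}.

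To analyze $B_{X,\gA}$ I would use the unitarily equivalent form $B'_{X,\gA}$ of \eqref{IV.2.1_01'} with positive off-diagonal entries (Remark \ref{remark4.4}), since the classical discreteness tests for Jacobi matrices — in particular Chihara's criterion \cite{Chi62}, cited in the introduction — are stated in that normalization. Writing out the sequences: the diagonal entries are, up to the leading $0$, $b_{2k-1}=-\nu(d_k)/d_k^2$ and $b_{2k}=\alpha_k/d_{k+1}$, while the off-diagonal entries alternate between $a_{2k-1}=\nu(d_k)/d_k^2$ and $a_{2k}=\nu(d_k)/(d_k^{3/2}d_{k+1}^{1/2})$. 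Since $\lim_n d_n=0$ we have $\nu(d_n)=(1+1/(c^2d_n^2))^{-1/2}\sim c\,d_n$, so $\nu(d_n)/d_n^2\sim c/d_n\to\infty$ and $\nu(d_n)/(d_n^{3/2}d_{n+1}^{1/2})\sim c/(d_n^{1/2}d_{n+1}^{1/2})\to\infty$; the first hypothesis in \eqref{prop_chihara_1}, $|\alpha_n|/d_n\to\infty$, then forces $|\alpha_k|/d_{k+1}$ to dominate (or be comparable to) the neighbouring off-diagonal terms on the even sites. The point is to check that the relevant ratio controlling Chihara's test — roughly $b_n/(a_{n-1}+a_n)$ or the pair of quantities $a_n/b_n$, $a_n/b_{n+1}$ — has the correct limiting behaviour, and that the second condition $\lim_n c/\alpha_n>-1/4$ is exactly what guarantees the borderline inequality in Chihara's theorem at the sites carrying $\alpha_k$ (the $-1/4$ being the classical threshold for the modified Hahn–Kadec–type condition for Jacobi matrices with unbounded entries). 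I would therefore spell out the asymptotics of all three families $a_{2k-1},a_{2k},b_{2k-1},b_{2k}$, verify that $a_{n-1}+a_n=o(b_n)$ on the even subsequence while remaining comparable on the odd one, and then quote \cite[the discreteness criterion]{Chi62} verbatim.

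The main obstacle I anticipate is purely bookkeeping of a delicate kind: the matrix $B'_{X,\gA}$ is a \emph{period-two–type} Jacobi matrix in which only every second diagonal entry involves $\alpha$, so a naive application of a single-scale criterion can fail, and one must either pass to a block (even–odd) reformulation or verify that the two interlaced subsequences of conditions are simultaneously satisfied. Concretely, the $2\times2$ transfer-matrix / Wronskian recursion must be iterated over pairs of steps, and the condition $\lim_n c/\alpha_n>-1/4$ must be seen to survive this blocking. A safe route is: (a) observe $\nu(d_n)\sim cd_n$ so that after an innocuous bounded similarity the matrix is asymptotically equivalent to the $c$-free Schrödinger Jacobi matrix $B_{X,\gA}(\rH)$ of \eqref{IV.2.1_01AIntro} with entries $-d_n^{-1}$, $d_n^{-1}$, $d_n^{-1/2}d_{n+1}^{-1/2}$, etc., up to the replacement $\alpha_k\gd_{k+1}^{-1}$ on the even diagonal; (b) apply the same Chihara-type argument that underlies the analogous discreteness results for $\rH_{X,\gA}$ in \cite{KM}; (c) track the single place where the $c$-dependence is not asymptotically negligible, namely the even diagonal term, which is exactly where the threshold $-1/4$ enters. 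Once the asymptotic reduction in (a) is justified — this is the one genuinely non-routine estimate, relying on $\lim_n d_n=0$ — the rest follows by citing \cite{Chi62} and Corollary \ref{cor_discreteness}.
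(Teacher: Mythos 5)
Your proposal is correct and follows essentially the same route as the paper: reduce via the boundary triplet to the Jacobi matrix $B_{X,\gA}$ (self-adjoint by the Carleman test since $\cI=\R_+$), pass to the unitarily equivalent form $B'_{X,\gA}$ with positive off-diagonal entries, apply Chihara's discreteness criterion using $\nu(d_n)\sim c\,d_n$ so that $\lim_{n}1/(\alpha_n\sqrt{d_n^2+1/c^2})=\lim_n c/\alpha_n$, and finally remove $Q$ by subordination. The one slip is that the $Q$-step should invoke Corollary \ref{cor_discretenessNew} (the "individual" version, for $Q$ subordinated to the realization $D_{X,\gA}^c$ itself) rather than Corollary \ref{cor_discreteness}(ii), which assumes subordination to $D_X^*$; this is exactly what the paper does.
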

%%%%%%%%%%%%%%%%%%%%%%%%%%%%%%%%%%%%%%
   \begin{proof}
First we consider the case of the Dirac operator $D_{X,\gA}^c$ with zero potential $Q=0$.
Since $\lim_{n\to\infty}\gd_n=0$, one has
   \begin{equation}\label{eq1}
\lim_{n\to\infty}\frac{1}{\alpha_n\sqrt{d_n^2+1/c^2}} =
\lim_{n\to\infty}\frac{c}{\alpha_n}.
   \end{equation}
By the Carleman test (see Proposition \ref{cor_delta_carleman}), the Jacobi matrix
$B_{X,\alpha}'$ given by \eqref{IV.2.1_01'}  is self-adjoint.
Therefore,  by  \cite[Theorem 8]{Chi62}, the operator $B'_{X,\gA}$  has discrete spectrum  provided that  $\lim_{n\to\infty}\gd_n=0$ and conditions \eqref{prop_chihara_1}
are satisfied.

Next we consider the boundary triplet $\Pi=\{\cH,\Gamma_0,\Gamma_1\}$ for the operator $D_X^*$ constructed in Theorem \ref{th_bt_2}.  By Proposition \ref{prop_IV.2.1_01}, the boundary operator corresponding to the $GS$-realization $D_{X,\alpha}= D^c_{X,\alpha}$ is given by the Jacobi operator $B_{X,\alpha}$ of the form \eqref{IV.2.1_01},  \eqref{IV.2.2_01A}.
 Since the operators $B_{X,\alpha}$ and $B'_{X,\alpha}$ are unitarily equivalent (see Remark \ref{remark4.4}), the spectrum of $B_{X,\alpha}$ is discrete too.
To prove the discreteness property of the operator $D_{X,\alpha}$ it remains to  apply  Theorem \ref{deltadiscr}.

Let now $Q\not = 0.$ Since   $Q(\cdot)$ is strongly subordinated to $D_{X,\gA}^c$,
general case is reduced to the previous one by applying Corollary \ref{cor_discretenessNew}.
       \end{proof}
%%%%%%%%%%%%%%%%%%%%%%%%%%%%%%%%%%%%%%%%%

To apply Proposition \ref{prop_IV.2.4_01} to GS operators $D_{X,\gA}^c(Q)$  with
certain unbounded potentials we establish the following analog of the classical Hardy inequality.
%
%
%%%%%%%%%%%%%%%%%%%%%%%%%%%%%%%%%%%%%%
\begin{lemma}\label{lemma5.30}
Assume that  $d^*(X)<\infty.$ Then for any $f\in W^{1,2}(\mathbb R_+\setminus X)$ and satisfying  $f(0)=0$ the following
inequality  holds
%%$W^{1,2}(\Bbb R_+\setminus X)\otimes\Bbb C^2 %%the inclusion $W^{1,2}(\Bbb R_+\setminus X)\otimes\Bbb C^2\subset\dom(Q)$.
%
%
    \begin{equation}\label{5.74}
\int_0^{\infty}\frac{|f(x)|^2}{x^2}dx\le\frac{1}{4}\int_0^{x_1}|f'(x)|^2 dx + \frac{2}{x^2_1}\left( 3d^*(X)^2\int^{\infty}_{x_1}|f'(x)|^2 dx + 2\int^{\infty}_{x_1}|f(x)|^2 dx\right).
    \end{equation}
 \end{lemma}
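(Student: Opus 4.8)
The plan is to prove \eqref{5.74} by splitting the half-line into the first interval $[0,x_1]$ and the tail $[x_1,\infty)$, applying the classical Hardy inequality on $[0,x_1]$ (which is legitimate since $f$ is genuinely absolutely continuous there and vanishes at $0$) and a crude trace-type estimate on the tail. More precisely, first I would write
\begin{equation*}
\int_0^\infty\frac{|f(x)|^2}{x^2}\,dx=\int_0^{x_1}\frac{|f(x)|^2}{x^2}\,dx+\int_{x_1}^\infty\frac{|f(x)|^2}{x^2}\,dx.
\end{equation*}
On $[0,x_1]$ the function $f$ lies in $W^{1,2}[0,x_1]$ with $f(0)=0$, so the one-dimensional Hardy inequality
\begin{equation*}
\int_0^{x_1}\frac{|f(x)|^2}{x^2}\,dx\le\frac14\int_0^{x_1}|f'(x)|^2\,dx
\end{equation*}
applies verbatim; this handles the first term with the sharp constant $1/4$ appearing in \eqref{5.74}.

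For the tail, I would use that $1/x^2\le 1/x_1^2$ for $x\ge x_1$, reducing matters to estimating $\int_{x_1}^\infty|f(x)|^2\,dx$ — but that is simply $\le\|f\|_{L^2(\mathbb R_+)}^2$, which is not directly among the quantities on the right-hand side of \eqref{5.74} in the form written. Looking at the target inequality, the right-hand side contains $\int_{x_1}^\infty|f'|^2$ and $\int_{x_1}^\infty|f|^2$, so the natural step is
\begin{equation*}
\int_{x_1}^\infty\frac{|f(x)|^2}{x^2}\,dx\le\frac{1}{x_1^2}\int_{x_1}^\infty|f(x)|^2\,dx\le\frac{2}{x_1^2}\left(3\,d^*(X)^2\int_{x_1}^\infty|f'(x)|^2\,dx+2\int_{x_1}^\infty|f(x)|^2\,dx\right),
\end{equation*}
where the last, seemingly wasteful, inequality is written only so that the bound matches the stated form; in fact one just needs $\int_{x_1}^\infty|f|^2\le 2\bigl(3d^*(X)^2\int_{x_1}^\infty|f'|^2+2\int_{x_1}^\infty|f|^2\bigr)$, which is trivially true since $4\ge 1$. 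Adding the two contributions yields exactly \eqref{5.74}.

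The one genuine subtlety — and the step I expect to require the most care — is the justification that the \emph{global} Hardy inequality cannot be applied directly on all of $\mathbb R_+$: the function $f\in W^{1,2}(\mathbb R_+\setminus X)$ need not be absolutely continuous across the interaction points $x_n$, so $f$ is not in $W^{1,2}(\mathbb R_+)$ in general, and $\int_0^\infty|f'|^2$ (the piecewise derivative) does not control $\int_0^\infty|f(x)|^2/x^2$ through a single clean constant. This is precisely why the decomposition at $x_1$, together with the hypothesis $d^*(X)<\infty$ (which by Proposition \ref{prop3.6}(ii), i.e.\ by estimates \eqref{3.30AA}--\eqref{3.30AB}, guarantees that the traces and $L^2$-norms on the tail pieces are comparable) is used: it isolates the single "honest" $W^{1,2}$-interval where Hardy's inequality is exact, and absorbs the rest into the coarse bound on the tail. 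The only remaining routine points are recalling the standard proof of the Hardy inequality on a finite interval with vanishing left endpoint (integration by parts against $1/x$, Cauchy–Schwarz) and checking that the constants in \eqref{5.74} are indeed consistent with the two contributions, both of which are elementary.
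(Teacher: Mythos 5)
Your decomposition at $x_1$ and the use of Hardy's inequality on $[0,x_1]$ coincide with the paper's proof, but your treatment of the tail is genuinely different and simpler. The paper estimates $\int_{x_k}^{x_{k+1}}|f|^2x^{-2}dx$ interval by interval, writing $f(x)$ in terms of its trace at $x_k$ plus $\int_{x_k}^x f'$, sums, and then invokes the trace estimate \eqref{3.45} from Proposition \ref{prop3.6}(ii) (which is where the hypothesis $d^*(X)<\infty$ and the constant $3d^*(X)^2$ come from). Your observation that $1/x^2\le 1/x_1^2$ on $[x_1,\infty)$ bypasses all of this: the tail is controlled by $x_1^{-2}\|f\|^2_{L^2(x_1,\infty)}$ alone, which is dominated by the stated right-hand side for trivial reasons. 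This is a legitimate strengthening — the $d^*(X)$-dependent term is simply not needed for the tail — and your explanation of why the global Hardy inequality is unavailable (no absolute continuity across the points of $X$) correctly identifies the reason for splitting at $x_1$.

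There is, however, one genuine error, which you inherit from the statement of the lemma and then compound by calling $1/4$ ``the sharp constant'': the one-dimensional Hardy inequality for $f\in W^{1,2}[0,x_1]$ with $f(0)=0$ reads $\int_0^{x_1}|f(x)|^2x^{-2}dx\le 4\int_0^{x_1}|f'(x)|^2dx$, with sharp constant $4$, not $1/4$. The function $f(x)=x$ on $[0,x_1)$, extended by $0$ on $[x_1,\infty)$ (admissible in $W^{1,2}(\R_+\setminus X)$ since continuity across $x_1$ is not required), gives $\int_0^{x_1}|f|^2x^{-2}dx=x_1$ while $\tfrac14\int_0^{x_1}|f'|^2dx=x_1/4$, so inequality \eqref{5.74} as written fails. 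Your proof (and the paper's) becomes correct once the first term on the right-hand side is replaced by $4\int_0^{x_1}|f'(x)|^2dx$; this change is harmless for the only use of the lemma, in Example \ref{example5.30}, where only the qualitative form $C_1\|f'\|^2+C_2\|f\|^2$ of the bound matters.
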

%%%%%%%%%%%%%%%%%%%%%%%%%%%%%%%%%
 \begin{proof}
Indeed, by the classical Hardy inequality,
    \begin{equation}\label{5.75}
\int^{x_1}_0\frac{|f(x)|^2}{x^2}dx\le\frac{1}{4}\int^{x_1}_0|f'(x)|^2 dx, \qquad f\in W^{1,2}[0, x_1],\quad  f(0)=0.
    \end{equation}
%
%
%%$\int^{x_1}_0|\varphi_2(x)f_2(x)|^2\le $
Further, since $f\in W^{1,2}[x_k, x_{k+1}]$ for any $k\in\mathbb N$, one easily gets
    \begin{eqnarray*}
\int^{x_{k+1}}_{x_k}\frac{|f(x)|^2}{x^2}dx \le 2  d_k \left(\frac{|f(x_k -)|^2}{x_k x_{k+1}} + \int^{x_{k+1}}_{x_k}\frac{dx}{x^2}\int^x_{x_k}|f'(t)|^2dt\right)
\le \frac{2d_k}{x_k x_{k+1}}\left(|f(x_k-)|^2 + d_k\|f'\|^2_{L^2(\Delta_k)}\right)\\  \le \frac{2}{x_k x_{k+1}}\left(d_k|f(x_k-)|^2 + d^*(X)^2\|f'\|^2_{L^2(\Delta_k)}\right), \qquad k\ge 1.
    \end{eqnarray*}
Taking a sum of these inequalities and applying Proposition \ref{prop3.6}(ii) 
(see  formula \eqref{3.45}) we obtain
    \begin{equation}\label{5.76a}
\int^{\infty}_{x_1}\frac{|f(x)|^2}{x^2}dx  \le \frac{2}{x^2_1}\left( 2d^*(X)^2\int^{\infty}_{x_1}|f'(x)|^2 dx + 2\int^{\infty}_{x_1}|f(x)|^2 dx + d^*(X)^2\int^{\infty}_{x_1}|f'(x)|^2 dx\right).
\end{equation}
Combining \eqref{5.75} with \eqref{5.76a} we  arrive at \eqref{5.74}.
 \end{proof}
%%%%%%%%%%%%%%%%%%%%%%%%%%%%%%%%%%%%%%%%
%%%%%%%%%%%%%%%%%%%%%%%%%%%%%%%%%%%%%%%%%%%%%%%%%%%%%%%%%%%%%%%%%%
  \begin{example}\label{example5.30}
Let us present an example of GS operator  $D_{X,\gA}^c(Q) = D_{X,\gA}^c + Q$ with an unbounded  potential matrix  $Q(\cdot) = Q^*(\cdot)\notin L^\infty({\R}_+)\otimes{\C}^{2\times 2}$
satisfying conditions of Proposition \ref{prop_IV.2.4_01}.
Let $Q(\cdot)=\diag\bigl(q_1(\cdot),q_2(\cdot)\bigr)$ with  $q_1(\cdot)\in L^{\infty}(\mathbb R_+)$, and an unbounded  measurable function $q_2(\cdot)$ satisfying
    \begin{equation}\label{5.76}
|q_2(x)|\le\frac{C_0}{x^{\gamma}},\qquad x\in \R_+,\qquad \gamma\in (0,1],\qquad  C_0>0.
    \end{equation}
Let us show  that for sufficiently small $C_0$ the multiplication operator with the matrix $Q(\cdot)$ is strongly subordinated to the operator $D^c_{X,\alpha}$, i.e. that $\dom(D_{X,\gA}^c(Q)) \subset\dom(Q)$  and inequality  \eqref{5.58NEW} holds for  $f=
\begin{binom}
{f_1}{f_2}
\end{binom} \in\dom(D_{X,\alpha})$.
Since  $q_1(\cdot)\in L^{\infty}(\mathbb R_+)$, it suffices to estimate $\|q_2f_2\|_{L^2(\R_+)}$.
Noting that  $f_2(0)=0$ for  $f= \begin{binom} {f_1}{f_2} \end{binom} \in\dom(D_{X,\alpha})$,  and
combining inequality   \eqref{5.76} with Lemma \ref{lemma5.30}, we get 
$$
\|q_2f_2\|_{L^2(\R_+)}\le \widetilde C_0\|f'_2\|^2_{L^2(\R_+)} +  4C_0 x_1^{-2}\|f_2\|^2_{L^2(\R_+)},
$$
where $\widetilde C_0 =  C_0\max\{4^{-1}, 6x_1^{-2}d^*(X)^2\}.$
Since $q_1(\cdot)\in L^{\infty}(\mathbb R_+)$, this estimate implies \eqref{5.58}  whenever $C_0$ is sufficiently small.

Thus, the operator  $D^c_{X,\alpha}(Q) = D^c_{X,\alpha} +Q$ is self-adjoint and has discrete  spectrum provided that  $C_0$ is  small enough and conditions \eqref{prop_chihara_1} are satisfied. Note  that strong subordination of the operator $Q$ holds although $q_2\notin L^2(0, \varepsilon)$ for $\gamma\ge 1/2$.
   \end{example}
%%%%%%%%%%%%%%%%%%%%%%%%%%%%%%%%%%%%%%%%%%%%
   \begin{remark}
For any fixed $c>0$  conditions \eqref{prop_chihara_1} are weaker than the
corresponding conditions  for the discreteness of Schr\"odinger operator $\rH_{X,\alpha}$  from \cite[Propositions 5.18]{KM} that read as follows
%%%(by using the equivalent relations given in the proofs of Propositions
%%\ref{prop_IV.2.4_01},   %%%%and \ref{prop_IV.2.4_02}
%
%
    \begin{equation}\label{prop_chihara_1H}
\lim_{n\to\infty}\frac{|\gA_n|}{\gd_n}=\infty\qquad
\text{and}\qquad \lim_{n\to\infty}\frac{1}{d_n\alpha_n} > -
\frac{1}{4}.
  \end{equation}
They can be obtained by taking the formal limit as $c\to +\infty$ in the left-hand side of  \eqref{eq1} with account of \eqref{prop_chihara_1}.

Note also that if $\alpha$ is negative, conditions \eqref{prop_chihara_1} do not guaranty  discreteness for the whole family $D_{X,\alpha}^c, \ c>0$,  of GS-realizations.
\end{remark}
%%%%%%%%%%%%%%%%%%%%%%%%%%%%%%%%%%%%%%%%%%%
%%%%%%%%%%%%%%%%%%%%%%%%%%%%%%%%%%%%%%%%%%%%%%%%%%%%%
  \begin{example}
Let $\cI =\R_+$ and let the sequence $X=\{x_n\}_{n=1}^\infty$ be given by $x_n=\log (n+1)$, so that
$d_n=\log(1+\frac1n) \thicksim \frac1n$. By Proposition \ref{cor_delta_carleman}, the $GS$-operator $D_{X,\alpha}^c$ is self-adjoint for any sequence $\alpha=\{\alpha_n\}_1^\infty\subset\R\cup\infty$. By Proposition \ref{prop_IV.2.4_01},
the $GS$-operator  $D_{X,\gA}^c$ has discrete spectrum whenever
\begin{equation}\label{example1}
\lim_{n\to\infty}n\,|\gA_n|=\infty\qquad
\text{and}\qquad 4c\lim_{n\to\infty}{\alpha_n^{-1}} > -1.
  \end{equation}

 It is interesting to compare  the $GS$-operator  $D_{X,\gA}^c$ with the corresponding Schr\"odinger operator  $\rH_{X,\gA}$ with $\delta$-interactions (see formula \eqref{Hdelta} below).
 Since $\{d_n\}_1^\infty\in l^2(\N)$,  the selfadjointness  of   $\rH_{X,\gA}$ depends on $\alpha= \{\alpha_n\}_1^\infty$
 (see \cite[Example 5.12]{KM}). Moreover, the pair of discreteness conditions  \eqref{prop_chihara_1H} for $\rH_{X,\gA}$  turn into
$\lim_{n\to\infty}n\,|\gA_n|=\infty\  \text{and}\  4 \lim_{n\to\infty}{n}{\alpha_n^{-1}} > -1\,.$

Note that if $\alpha$ is positive, i.e. $\alpha= \{\alpha_n\}_1^\infty\subset \R_+$, both pairs of conditions in \eqref{prop_chihara_1} and  \eqref{prop_chihara_1H} are reduced to the first common condition $\lim_{n\to\infty}\frac{\gA_n}{\gd_n}=\infty.$
At the same time, if $\alpha$ is negative conditions \eqref{prop_chihara_1} and  \eqref{prop_chihara_1H} are quite different.
For instance, $\rH_{X,\gA}$   is discrete  (and self-adjoint)
whenever  $\alpha_n = -(4 + \varepsilon)(n + \frac12) + O(\frac1n)$, $\varepsilon>0$ (see \cite[Example 5.12 (ii) and Proposition 5.18]{KM} $\rH_{X,\gA}$).
On the other hand, $D_{X,\gA}^c$ is discrete  provided that $\alpha_n = -(4c + \varepsilon)+O(\frac1n)$.
%%(use Example 5.12 (iii) in \cite{KM}),
   \end{example}
%
%

%\subsection{Gesztesy-\v{S}eba realizations  with $\delta'$-interactions}

\subsection{GS-realizations $D_{X,\beta}$: parametrization by Jacobi matrices and spectral properties}

Let  $X =\{x_n\}_1^\infty (\subset \R_+)$ be as above  and let $D_X$ be the minimal Dirac operator
given by \eqref{3.26}, \eqref{3.27}.
In this section we discuss some spectral properties  of GS-realizations $D_{X,\beta}$.
We  compute the corresponding boundary operator $B_{X,\beta}$ parameterizing  $D_{X,\beta}$ in the boundary triplet  $\Pi=\{\cH,\Gamma_0,\Gamma_1\}$ for $D_X^*$  constructed in Theorem \ref{th_bt_2},
and show that the spectral properties of $D_{X,\beta}$ are similar to that  of the GS-operators  $D_{X,\alpha}$.
In what follows we confine ourselves to the case of operators $D_{X,\beta}$ only, although the most part of the results  remains valid for operators $D_{X,\beta}(Q)$ depending on a potential matrix $Q\not =0.$

Consider the boundary triplet $\Pi=\{\cH,\Gamma_0,\Gamma_1\}$ for $D_X^* = \bigoplus_{n=1}^\infty D_{n}^{*}$  constructed in Theorem \ref{th_bt_2}
(see \eqref{IV.1.1_12} for the definitions of  $\Gamma_0$ and $\Gamma_1$).
Since $\beta_n\neq 0, \ n\in\N$, the operator $D_{X,\gB}$ is disjoint with the self-adjoint extension  $D^*_{X}\upharpoonright \ker(\Gamma_0)$ of the minimal Dirac operator $D_X$.
Here $\Gamma_0$ and $\Gamma_1$ are determined by \eqref{IV.1.1_12}.  Therefore, the boundary  relation $\Theta$
parameterizing  $D_{X,\gB}$ in  the triplet $\Pi$
is a closed operator, $\Theta\in \cC(\cH)$.

Consider the following %Let $B_{X,\gB}$ be a closed minimal symmetric operator associated with the
Jacobi matrix
\begin{equation}\label{IV.3.1_04}
B_{X,\gB}:=\left(\begin{array}{ccccccc}
0&-\frac{\nu(\gd_{1})}{\gd_{1}^{2}}&0&0&0
&0
&\dots\\
-\frac{\nu(\gd_{1})}{\gd_{1}^{2}}& - \frac{\nu^{2}(\gd_{1})}{d_{1}^{3}}\left(\beta_{1} + \gd_{1} \right) &\frac{\nu(\gd_{1})}{\gd_{1}^{3/2}\,{\gd_2}^{1/2}}&0&0
&0
&\dots\\
0 & \frac{\nu(\gd_{1})}{\gd_{1}^{3/2}{\gd_2}^{1/2}} & 0 &-\frac{\nu(\gd_{2})}{\gd_{2}^{2}}&0
&0
&\dots\\
0&0&-\frac{\nu(\gd_{2})}{\gd_{2}^{2}}& - \frac{\nu^{2}(\gd_{2})}{d_{2}^{3}}\left(\beta_{2}+ \gd_{2} \right) & \frac{\nu(\gd_{2})}{\gd_{2}^{3/2}{\gd_{3}^{1/2}}}
&0
&\dots\\
0&0&0&\frac{\nu(\gd_{2})}{\gd_{2}^{3/2}{\gd_{3}^{1/2}}}
&0
&-\frac{\nu(\gd_{3})}{\gd_{3}^{2}}
&\dots\\
0&0&0&0&-\frac{\nu(\gd_{3})}{\gd_{3}^{2}}
&-\frac{\nu^{2}(\gd_{3})}{d_{3}^{3}}\left(\beta_{3} + \gd_{3}\right)&\dots\\
\dots&\dots&\dots&\dots&\dots&\dots&
\end{array}\right),
 \end{equation}
where \  $\nu(x):={1}/{\sqrt{1+\frac{1}{c^2x^2}}}.$
    Note that
\begin{equation}\label{IV.3.1_02B}
B_{X,\gB} = R_X^{-1}(\widetilde{B}_{\gB}-Q_X)R_X^{-1},\qquad \ \
\widetilde{B}_{\gB}=\left(\begin{array}{cccccc}
0 & 0 & 0& 0&  0& \cdots\\
0 & -\beta_1 & 1 & 0& 0&  \dots\\
0 & 1  & 0 & 0& 0&  \dots\\
0 & 0 & 0& -\beta_2 & 1&  \dots\\
0 & 0  & 0 & 1& 0&  \dots\\
\dots & \dots & \dots&  \dots & \dots
\end{array}\right),
\end{equation}
where  $R_X = \bigoplus_{n_1}^\infty R_n$  and  $Q_X = \bigoplus_{n=1}^\infty Q_n$ are determined by \eqref{IV.1.1_19}.

We also denote by $B_{X,\gB}$ the minimal (closed) Jacobi operator\  associated in  $l^2(\N,\C^2)$ with the Jacobi  matrix \eqref{IV.3.1_04}. Clearly,  $B_{X,\gB}$ is symmetric and according to general
properties of Jacobi operators $\mathrm{n}_+(B_{X,\beta})=\mathrm{n}_-(B_{X,\beta})\leq 1$.

%Arguing as in the proof of Proposition \ref{prop_IV.2.1_01}, we arrive at the following representation. %Then we obtain the following result.
%
%
  \begin{proposition}\label{prop_IV.3.1_01}
Let $D_X = \bigoplus_{n=1}^\infty D_{n}$  be the minimal Dirac operator in  $L^{2}(\cI,  \C^{2})$.
Let also $\Pi=\{\cH,\Gamma_0,\Gamma_1\}$ be the boundary triplet for $D_X^* = \bigoplus_{n=1}^\infty D_{n}^{*}$ constructed in Theorem \ref{th_bt_2} and let $B_{X,\gB}$ be the
minimal Jacobi operator associated in  $l^2(\N,\C^2)$  with the
matrix \eqref{IV.3.1_04}. Then  the boundary operator corresponding to the GS-realization
$D_{X,\gB}$ in the triplet  $\Pi$, is the Jacobi operator $B_{X,\gB}$, i.e.
  \begin{equation}\label{IV.3.1_03}
D_{X,\gB} = D_{B_{X,\gB}}:=D_{X}^*\upharpoonright\dom(D_{B_{X,\gB}}),
\qquad \dom(D_{B_{X,\gB}}) :=\{f\in\dom(D_{X}^*):\ \Gamma_1f = B_{X,\gB}\Gamma_0f\}.
  \end{equation}
  \end{proposition}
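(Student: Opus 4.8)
The plan is to mimic the proof of Proposition \ref{prop_IV.2.1_01} for the $\delta$-case. Recall that in Theorem \ref{th_bt_2} the boundary triplet $\Pi=\{\cH,\Gamma_0,\Gamma_1\}$ for $D_X^*$ was obtained by regularizing the ``elementary'' triplet $\widetilde\Pi=\bigoplus_n\widetilde\Pi^{(n)}$ via $\Gamma_0^{(n)}=R_n\widetilde\Gamma_0^{(n)}$, $\Gamma_1^{(n)}=R_n^{-1}(\widetilde\Gamma_1^{(n)}-Q_n\widetilde\Gamma_0^{(n)})$ with $R_n,Q_n$ given by \eqref{IV.1.1_19}. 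Equivalently, $\Gamma_0=R_X\widetilde\Gamma_0$ and $\Gamma_1=R_X^{-1}(\widetilde\Gamma_1-Q_X\widetilde\Gamma_0)$. So the first step is to express the GS-condition \eqref{deltap} as an abstract boundary condition in the \emph{elementary} triplet $\widetilde\Pi$, i.e.\ to identify the closed operator $\widetilde B_\beta$ in $\cH$ such that, for $f\in W^{1,2}_{\comp}(\cI\setminus X)\otimes\C^2$, the relations $f_2(a+)=0$ together with $f_1(x_n+)-f_1(x_n-)=i\beta_n c f_2(x_n)$ (and the implicit continuity $f_2(x_n+)=f_2(x_n-)$ encoded in $f_2\in AC_{\loc}(\cI)$) are equivalent to $\widetilde\Gamma_1 f=\widetilde B_\beta\widetilde\Gamma_0 f$.

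The second step is the bookkeeping that turns this into the Jacobi form \eqref{IV.3.1_04}. Using the explicit formulas \eqref{triple2} for $\widetilde\Gamma_j^{(n)}$, the components of $\widetilde\Gamma_0 f$ are $f_1(x_{n-1}+)$ and $ic\,f_2(x_n-)$, and those of $\widetilde\Gamma_1 f$ are $ic\,f_2(x_{n-1}+)$ and $f_1(x_n-)$. Writing out what the GS-conditions say about consecutive blocks, one checks that the constraint couples the $(2n-1)$st coordinate $f_1(x_{n-1}+)$ with the neighbouring coordinates $ic f_2(x_{n-1}-)=ic f_2(x_{n-1}+)$ and $ic f_2(x_n-)$, and $f_2(x_n+)=f_2(x_n-)$, yielding precisely the tridiagonal pattern of $\widetilde B_\beta$ in \eqref{IV.3.1_02B} with the single nonzero diagonal entry $-\beta_n$ in position $2n$. (This is the main book-keeping step, but it is purely algebraic and parallels \eqref{IV.2.2_01A} in the $\delta$-case; the sign $-\beta_n$ rather than $\alpha_n$ is exactly the typographical point flagged in the footnote about \cite[Appendix J]{Alb_Ges_88}.) Then one substitutes $\widetilde\Gamma_j=R_X\widetilde\Gamma_0$, $\widetilde\Gamma_1=R_X\Gamma_1+Q_X\widetilde\Gamma_0$, i.e.\ $\widetilde\Gamma_0=R_X^{-1}\Gamma_0$ and $\widetilde\Gamma_1=R_X\Gamma_1+Q_X R_X^{-1}\Gamma_0$, into $\widetilde\Gamma_1 f=\widetilde B_\beta\widetilde\Gamma_0 f$ to get $R_X\Gamma_1 f=(\widetilde B_\beta-Q_X)R_X^{-1}\Gamma_0 f$, that is $\Gamma_1 f=R_X^{-1}(\widetilde B_\beta-Q_X)R_X^{-1}\Gamma_0 f=B_{X,\gB}\Gamma_0 f$, with $B_{X,\gB}$ exactly \eqref{IV.3.1_02B}, \eqref{IV.3.1_04} after computing $R_n^{-1}(\widetilde B_\beta-Q_X)R_n^{-1}$ entrywise using $R_n=\diag(\gd_n^{1/2},\gd_n^{3/2}\nu(\gd_n)^{-1})$ (here $\nu(x)=(1+1/(c^2x^2))^{-1/2}$, so $\gd_n^{3/2}\sqrt{1+1/(c^2\gd_n^2)}=\gd_n^{3/2}\nu(\gd_n)^{-1}$).

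The third step is to pass to closures: the identity $\Gamma_1 f=B_{X,\gB}^0\Gamma_0 f$ has been established on the core $W^{1,2}_{\comp}(\cI\setminus X)\otimes\C^2$ of $D_{X,\gB}^0$, with $B_{X,\gB}^0$ acting on the finite sequences $l^2_0(\N)\otimes\C^2$; taking closures on both sides and invoking Corollary \ref{s.a.Dirac}(i) (the bijective correspondence $\Ext_{D_X}\ni\widetilde D_X\mapsto\Theta$) identifies $D_{X,\gB}=\overline{D_{X,\gB}^0}$ with $D_{X}^*\upharpoonright\{f:\Gamma_1 f=B_{X,\gB}\Gamma_0 f\}$ and the boundary operator $\Theta$ with $B_{X,\gB}=\overline{B_{X,\gB}^0}$, giving \eqref{IV.3.1_03}. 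One small point to verify is that $D_{X,\gB}$ and $D_{X,0}=D_X^*\upharpoonright\ker\Gamma_0$ are genuinely disjoint, so that $\Theta$ is an operator and not a multivalued relation; this follows since $\beta_n\neq0$ for all $n$ (as already noted in the text preceding the statement). I expect the only real obstacle to be the sign- and index-careful derivation of $\widetilde B_\beta$ in step two — keeping straight which coordinate of $\widetilde\Gamma_0 f$, $\widetilde\Gamma_1 f$ corresponds to which trace value and at which endpoint — but this is bounded computation of exactly the same flavour as the already-proved $\delta$-case.
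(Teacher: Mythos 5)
Your proposal is correct and follows exactly the route the paper takes (the paper in fact omits a written proof of this proposition, relying on the argument of Proposition \ref{prop_IV.2.1_01}): one encodes the GS-$\beta$ conditions in the elementary triplet $\widetilde\Pi$ as $\widetilde\Gamma_1f=\widetilde B_\beta\widetilde\Gamma_0f$ with $\widetilde B_\beta$ as in \eqref{IV.3.1_02B}, conjugates by $R_X$ and shifts by $Q_X$ via \eqref{IV.1.1_09AAA} to obtain $\Gamma_1f=B_{X,\gB}\Gamma_0f$ on compactly supported functions, and then takes closures and applies Corollary \ref{s.a.Dirac}(i). Your bookkeeping of the tridiagonal pattern and of the disjointness of $D_{X,\gB}$ from $D_{X,0}$ is accurate.
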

     \begin{corollary}\label{th_delta_prime}
The GS-operator $D_{X,\beta}$ has equal deficiency indices and
$\mathrm{n}_+(D_{X,\beta})=\mathrm{n}_-(D_{X,\beta})\leq 1$.
Moreover,
$\mathrm{n}_\pm(D_{X,\beta})=\mathrm{n}_\pm(B_{X,\beta})$. In particular,
$D_{X,\beta}$ is selfadjoint if and only if $B_{X,\beta}$ is.
%%%%where $B_{X,\beta}$ is the minimal operator associated with the Jacobi matrix \eqref{IV.3.1_04}.
%In particular, $D_{X,\beta}$ is
%self-adjoint if and only if $B_{X,\beta}$ is.
      \end{corollary}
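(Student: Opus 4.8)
The plan is to deduce Corollary \ref{th_delta_prime} directly from Proposition \ref{prop_IV.3.1_01} together with the general parametrization machinery already in place, exactly as Proposition \ref{th_delta_sa} was deduced from Proposition \ref{prop_IV.2.1_01} in the $\delta$-case. The key observation is that Proposition \ref{prop_IV.3.1_01} identifies the boundary operator $\Theta$ corresponding to the GS-realization $D_{X,\beta}$, in the boundary triplet $\Pi=\{\cH,\Gamma_0,\Gamma_1\}$ for $D_X^*$ constructed in Theorem \ref{th_bt_2}, with the minimal Jacobi operator $B_{X,\beta}$ of \eqref{IV.3.1_04}, i.e. $D_{X,\beta}=D_{B_{X,\beta}}$.

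First I would invoke Corollary \ref{s.a.Dirac}(ii), which applies to \emph{any} closed boundary relation $\Theta\in\widetilde\cC(\cH)$: it asserts that $D_{X,\Theta}$ is symmetric (respectively self-adjoint) if and only if $\Theta$ is, and that $\mathrm{n}_\pm(D_{X,\Theta})=\mathrm{n}_\pm(\Theta)$. Applying this with $\Theta=B_{X,\beta}$ and using Proposition \ref{prop_IV.3.1_01} immediately gives $\mathrm{n}_\pm(D_{X,\beta})=\mathrm{n}_\pm(B_{X,\beta})$, hence the equality of deficiency indices of $D_{X,\beta}$ (which also follows since $D_{X,\beta}$ is a symmetric operator with a boundary triplet), and the self-adjointness equivalence. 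Second, the bound $\mathrm{n}_+(B_{X,\beta})=\mathrm{n}_-(B_{X,\beta})\le 1$ is a standard general fact about minimal symmetric Jacobi operators (see \cite{Akh}, \cite[Chapter VII]{Ber68}), already recorded for $B_{X,\alpha}$ in Section 5.1 and equally valid here since \eqref{IV.3.1_04} is a genuine tri-diagonal matrix with real entries; transferring this bound through the index identity yields $\mathrm{n}_\pm(D_{X,\beta})\le 1$.

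The proof is therefore essentially a one-line combination of Proposition \ref{prop_IV.3.1_01} with Corollary \ref{s.a.Dirac}(ii), mirroring verbatim the argument of Proposition \ref{th_delta_sa}. There is no real obstacle at the level of the corollary itself; all the substance is hidden in Proposition \ref{prop_IV.3.1_01}, whose proof (the analogue of Proposition \ref{prop_IV.2.1_01}) requires checking that the equality $\widetilde\Gamma_1 f=\widetilde B_\beta\widetilde\Gamma_0 f$ in the raw triplet $\widetilde\Pi^{(n)}$ of \eqref{triple2} translates, under the regularization \eqref{IV.1.1_09AAA}, \eqref{IV.1.1_19}, into $\Gamma_1 f=B_{X,\beta}\Gamma_0 f$ with $B_{X,\beta}$ given by \eqref{IV.3.1_02B}; this is the computation that actually must be carried out, and it is the genuinely technical ingredient, but it is exactly parallel to the $\delta$-case and hence routine once Proposition \ref{prop_IV.2.1_01} is in hand. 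I would close by remarking, as in Remark \ref{remark4.4}, that the sign of the off-diagonal entries in \eqref{IV.3.1_04} is immaterial: $B_{X,\beta}$ is unitarily equivalent to the Jacobi operator with the corresponding positive off-diagonal entries, so all unitarily invariant spectral properties may be read off from either form.
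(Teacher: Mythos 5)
Your proof is correct and follows exactly the paper's own argument: the authors likewise deduce the corollary by combining Proposition \ref{prop_IV.3.1_01} with Corollary \ref{s.a.Dirac}(ii) and the standard fact that a minimal Jacobi operator has equal deficiency indices at most one. Your added remarks on where the real work lies (in Proposition \ref{prop_IV.3.1_01}) and on the sign of the off-diagonal entries are consistent with the paper but not needed for the corollary itself.
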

%%%%%%%%%%%%%%%%%%%%%%%%%%%%%%%%%%%%%%%%%%%%
\begin{proof}
The proof is implied  by combining Proposition  \ref{prop_IV.3.1_01}
with Corollary  \ref{s.a.Dirac}(ii)  and the known properties of Jacobi matrices \cite{Akh}, \cite{Ber68}.
 \end{proof}
    \begin{proposition}\label{carlemandeltaprime}
Let $D_X$  be the minimal Dirac operator in  $L^{2}(\cI, \C^{2})$
and let   $D_{X,\gB}$ be the GS-realization of $D_X.$ %%Assume that either \par\noindent

\item $(i)$ If  $|\cI|=+\infty$, then $D_{X,\gB}$ is self-adjoint.  %%%  \par\noindent or \par\noindent

\item $(ii)$ If $|\cI| < \infty$  and
\begin{equation}\label{b}
\sum_{n=1}^\infty |\beta_n|\sqrt{d_nd_{n+1}}=+\infty\,,
\end{equation}  %% \par\noindent
then $D_{X,\gB}$ is self-adjoint.
%
%If $\sum_{n=1}^\infty (-\alpha_n)\,\sqrt{d_nd_{n+1}}=+\infty$ then $D_{X,\alpha}$ is self-adjoint.
%
    \end{proposition}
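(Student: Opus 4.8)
The plan is to reduce the statement to a self-adjointness question for the Jacobi operator $B_{X,\gB}$ of \eqref{IV.3.1_04} and then invoke classical self-adjointness tests for Jacobi matrices. By Corollary \ref{th_delta_prime}, $D_{X,\gB}$ is self-adjoint if and only if $B_{X,\gB}$ is, so it suffices to treat the minimal Jacobi operator associated with \eqref{IV.3.1_04}. Since the Carleman and Dennis--Wall criteria are usually stated for Jacobi matrices with \emph{positive} off-diagonal entries, I would first remark, exactly as in Remark \ref{remark4.4}, that conjugation of $B_{X,\gB}$ by a diagonal unitary with entries $\pm1$ yields a unitarily equivalent Jacobi operator $B'_{X,\gB}$ with off-diagonal entries $\nu(\gd_n)/\gd_n^{2}$ and $\nu(\gd_n)/(\gd_n^{3/2}\gd_{n+1}^{1/2})$, $n\in\N$, and the same diagonal, self-adjointness being unaffected. (If $\beta_n=+\infty$ for some $n$, the $n$-th GS-condition degenerates to $f_2(x_n)=0$, cf. Remark \ref{remarkGS}(ii); a routine reduction then lets one assume $\beta_n\in\R$ for all $n$, and in that case \eqref{b} holds trivially.)

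For part (i) I would apply the Carleman test to $B'_{X,\gB}$. Using $\gd_n^{2}/\nu(\gd_n)=\gd_n\sqrt{\gd_n^{2}+c^{-2}}$, the sum of the reciprocals of the off-diagonal entries equals
\begin{equation*}
\sum_{n=1}^\infty\Bigl(\frac{\gd_n^{2}}{\nu(\gd_n)}+\frac{\gd_n^{3/2}\gd_{n+1}^{1/2}}{\nu(\gd_n)}\Bigr)
=\sum_{n=1}^\infty \gd_n^{1/2}\sqrt{\gd_n^{2}+\tfrac1{c^{2}}}\,\bigl(\gd_n^{1/2}+\gd_{n+1}^{1/2}\bigr)
\ \ge\ \frac1c\sum_{n=1}^\infty\gd_n .
\end{equation*}
When $|\cI|=+\infty$ one has $\sum_{n\ge1}\gd_n=b-a=+\infty$, so the right-hand side diverges, the Carleman criterion applies, and $B'_{X,\gB}$ — hence $D_{X,\gB}$ — is self-adjoint. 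This is the direct analogue of the argument proving Proposition \ref{cor_delta_carleman}.

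For part (ii), $|\cI|<\infty$ forces $\sum_n\gd_n<\infty$, so $\gd_n\to0$ and $\nu(\gd_n)=c\gd_n/\sqrt{c^{2}\gd_n^{2}+1}\sim c\gd_n$. Here I would run the Dennis--Wall test (\cite[Problem 2, p. 25]{Akh}) on $B'_{X,\gB}$, following verbatim the scheme used in the proof of Proposition \ref{interval}. The test produces a series of terms built from reciprocals of consecutive off-diagonal entries and from the moduli of the diagonal entries $\nu^{2}(\gd_n)(\beta_n+\gd_n)/\gd_n^{3}$; substituting the asymptotics (so that $\gd_n^{3/2}/\nu(\gd_n)\sim c^{-1}\gd_n^{1/2}$, $\nu^{2}(\gd_n)/\gd_n^{2}\to c^{2}$, and the cross term $|p_{2n}|/(|q_{2n-1}|\,|q_{2n}|)\sim |\beta_n|\sqrt{\gd_n\gd_{n+1}}$) turns it into a series comparable to
\begin{equation*}
\sum_{n=1}^\infty |\beta_n|\sqrt{\gd_n\gd_{n+1}}\ +\ \Bigl(\text{a remainder bounded by }\ C\sum_{n}(\gd_n+\gd_{n+1})\Bigr).
\end{equation*}
Since $\sum_n\gd_n<\infty$ and $2\sqrt{\gd_{n+1}\gd_{n+2}}\le\gd_{n+1}+\gd_{n+2}$, the remainder converges, so the Dennis--Wall series diverges precisely when \eqref{b} holds, forcing $B'_{X,\gB}$ and with it $D_{X,\gB}$ to be self-adjoint.

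The hard part will be the bookkeeping in the last step: the matrix \eqref{IV.3.1_04} is only $2$-periodic in its block pattern — the ``$\beta$-data'' sits on the even-indexed diagonal entries while the odd-indexed ones vanish — so one has to be careful about which products of consecutive off-diagonal reciprocals enter the Dennis--Wall sum and about cleanly separating the part governed by $\sum|\beta_n|\sqrt{\gd_n\gd_{n+1}}$ from the convergent tail. This is, however, exactly the computation already carried out for the $\alpha$-realizations in Proposition \ref{interval}, with $\beta_n$ and the modified diagonal in place of $\alpha_n$.
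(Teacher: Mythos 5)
Your proposal is correct and follows essentially the same route as the paper's proof: reduction to the Jacobi operator $B_{X,\beta}$ via Corollary \ref{th_delta_prime}, passage to the unitarily equivalent matrix with positive off-diagonal entries, Carleman's test for (i) (where the off-diagonal entries coincide with those of $B'_{X,\alpha}$, so the computation of Proposition \ref{cor_delta_carleman} applies verbatim), and the Dennis--Wall test for (ii) combined with the observation that $\sum_n d_n\sqrt{d_nd_{n+1}}<\infty$ on a finite interval, which lets one replace $|\beta_n+d_n|$ by $|\beta_n|$.
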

%%%%%%%%%%%%%%%%%%%%%%%%%%%%%%%
%%%%%%%%%%%%%%%%
\begin{proof}
(i)  By Corollary \ref{th_delta_prime},
$\mathrm{n}_\pm(D_{X,\beta})=\mathrm{n}_\pm(B_{X,\beta}).$
Alongside  the  Jacobi  matrix  $B_{X,\gB}$ of the form  \eqref{IV.3.1_04} we
consider  Jacobi matrix $B_{X,\gB}'$ obtained from   $B_{X,\gB}$ by replacing its off-diagonal entries  by their modulus  (cf. with  construction of the matrix $B_{X,\gA}'$ of the form \eqref{IV.2.1_01'}). The matrices $B_{X,\gB}$  and $B_{X,\gB}'$  are unitarily equivalent.
Self-adjointness of the operator  $B_{X,\gB}'$  follows from the Carleman test. In fact, the proof coincides with  that  of Proposition \ref{cor_delta_carleman} since the off-diagonal entries  of the Jacobi matrices  $B_{X,\beta}'$ and $B_{X,\alpha}'$ coincide.  \par

(ii) The proof is similar to that of Proposition   \ref{interval}.
Applying the  Dennis-Wall test (see \cite[Chapter 1,  Problem 2]{Akh}) to the Jacobi matrix $B_{X,\beta}'$
yields  self-adjointness of $B_{X,\alpha}'$ provided that
\begin{equation}\label{d+b}
\sum_{n=1}^\infty |d_{n}+\beta_n|\sqrt{d_nd_{n+1}}=+\infty\,.
\end{equation}
Since $|\cI|<+\infty$, one has $\sum_{n=1}^\infty d_{n}\sqrt{d_nd_{n+1}}< 2|\cI|^2 < +\infty$. Thus, the series in \eqref{d+b} and the series  in \eqref{b}
diverge only simultaneously.
      \end{proof}

All previous results on spectral properties of the GS-operator $D_{X,\alpha}$ have their counterparts for the realization $D_{X,\beta}$. They can be proved directly in a much the same way but we prefer another way described as follows.

Alongside the GS-realization $D_{X,\alpha}$ we introduce another GS-realization $\widehat D_{X,\alpha}$ being the closure of the operator
  \begin{equation*}
\begin{split}
 \widehat D_{X,\alpha}^0 = & D\upharpoonright \dom( \widehat D_{X,\alpha}^0),\\ \dom( \widehat D_{X,\alpha}^0)
= &\Big\{f\in W^{1,2}_{\comp}(\cI \backslash X)\otimes
\C^{2}: f_{1}\in AC_{\loc}(\cI),\ f_{2}\in AC_{\loc}(\cI\backslash X);\\&
f_1(a+)=0\,,\quad f_{2}(x_{n}+)-f_{2}(x_{n}-)
=-\frac{i\alpha_{n}}{c}f_{1}(x_{n}),\,\,n\in\N\Big\},
\end{split}
  \end{equation*}
i.e. $\widehat D_{X,\alpha} = \overline{\widehat D_{X,\alpha}^0}.$ The following  statement  is immediate from the  previous definitions.
    \begin{proposition}\label{prop6.4}
Let $\alpha = c^2\beta$.  Then the  realizations $D_{X,\beta}$ and $- \widehat D_{X,\alpha}$ are unitarily equivalent. More precisely, the
   following identity holds
\begin{equation*}
U^{-1}D_{X,\beta}U = -  \widehat D_{X,\alpha}\,, \qquad  \dom( \widehat D_{X,\alpha}) = U^{-1}\dom(D_{X,\beta}),
\qquad\alpha = c^{2}\beta \,,
\end{equation*}
where  $U$  is the unitary operator,
$$
U:\  L^2(\cI)\otimes \C^2\to L^2(\cI)\otimes \C^2\,,
\qquad \ \  U:=1\otimes\sigma_{2}\,,
$$
and  $\sigma_{2}$ is one of the Pauli matrices given  by \eqref{pauli}.
     \end{proposition}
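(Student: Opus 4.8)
The plan is to verify directly that conjugation by $U = 1\otimes\sigma_2$ turns the operator $D_{X,\beta}$ into $-\widehat D_{X,\alpha}$ with $\alpha = c^2\beta$, and the main work consists of two independent pieces: showing that $U^{-1}DU = -\widehat D$ on the level of the differential expression plus the free boundary condition at $a$, and showing that $U$ carries the jump conditions of $\dom(D_{X,\beta})$ onto those of $\dom(\widehat D_{X,\alpha})$. Since $U$ is unitary and both $D_{X,\beta}$, $\widehat D_{X,\alpha}$ are defined as closures of their restrictions to compactly supported functions (and conjugation by a bounded operator commutes with taking closures), it suffices to establish the claimed identity on the core, i.e. for $f\in W^{1,2}_{\comp}(\cI\setminus X)\otimes\C^2$ subject to the respective boundary conditions.

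First I would compute the action on the symbol. Writing $g = U^{-1}f = \sigma_2^{-1}f$, with $\sigma_2 = \left(\begin{smallmatrix}0 & -i\\ i & 0\end{smallmatrix}\right)$, note $\sigma_2^{-1} = \sigma_2$ and $\sigma_2\sigma_1\sigma_2 = -\sigma_1$, $\sigma_2\sigma_3\sigma_2 = -\sigma_3$. Hence
\begin{equation*}
U^{-1}DU = \sigma_2\left(-i\,c\,\frac{d}{dx}\otimes\sigma_1 + \frac{c^2}{2}\otimes\sigma_3\right)\sigma_2 = i\,c\,\frac{d}{dx}\otimes\sigma_1 - \frac{c^2}{2}\otimes\sigma_3 = -D.
\end{equation*}
So on the level of differential expressions $U^{-1}DU = -D$, which already gives the operator part of the asserted identity. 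It remains to chase the domains.

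Next I would track the boundary data. If $g = \sigma_2 f$, i.e. $g_1 = -i f_2$ and $g_2 = i f_1$ (equivalently $f_1 = -i g_2$, $f_2 = i g_1$), then the condition $f_2(a+) = 0$ in $\dom(D_{X,\beta})$ becomes $g_1(a+) = 0$, which is exactly the condition $f_1(a+) = 0$ appearing in $\dom(\widehat D_{X,\alpha})$ (now read for $g$). For the jump at $x_n$, the $D_{X,\beta}$-condition $f_1(x_n+) - f_1(x_n-) = i\beta_n c\, f_2(x_n)$ translates, via $f_1 = -i g_2$, $f_2 = i g_1$, into $-i(g_2(x_n+) - g_2(x_n-)) = i\beta_n c\cdot i g_1(x_n)$, i.e. $g_2(x_n+) - g_2(x_n-) = -i\beta_n c\, g_1(x_n) = -\frac{i\alpha_n}{c}g_1(x_n)$ with $\alpha_n := c^2\beta_n$; this is precisely the $\widehat D_{X,\alpha}$-jump condition. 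One also checks that $\sigma_2$ preserves the regularity clauses $f_1\in AC_{\loc}(\cI)$, $f_2\in AC_{\loc}(\cI\setminus X)$ versus $g_1\in AC_{\loc}(\cI)$, $g_2\in AC_{\loc}(\cI\setminus X)$ — indeed $g_1 = -if_2$ would only be $AC_{\loc}(\cI\setminus X)$, which matches the $\widehat D_{X,\alpha}^0$ clause $f_1\in AC_{\loc}(\cI)$? Here lies the one subtlety to be careful about: I expect the main (though still routine) obstacle to be bookkeeping the swap of which component is globally absolutely continuous, making sure the clauses in the definition of $\widehat D_{X,\alpha}^0$ are stated so that $\sigma_2$ maps $\dom(D_{X,\beta}^0)$ bijectively onto $\dom(\widehat D_{X,\alpha}^0)$; granting that, $U$ maps the one core onto the other, $U^{-1}D_{X,\beta}^0 U = -\widehat D_{X,\alpha}^0$, and passing to closures (using that $U$ is unitary, hence bounded with bounded inverse, so it intertwines closures) yields $U^{-1}D_{X,\beta}U = -\widehat D_{X,\alpha}$ with the stated domain relation. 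This completes the proof.
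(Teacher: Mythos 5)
Your proof is correct, and it is exactly the verification the paper leaves implicit (the paper states the proposition as ``immediate from the previous definitions'' and gives no proof): the symbol computation $\sigma_2\sigma_1\sigma_2=-\sigma_1$, $\sigma_2\sigma_3\sigma_2=-\sigma_3$ gives $U^{-1}DU=-D$, the boundary-data chase with $g_1=-if_2$, $g_2=if_1$ converts $f_2(a+)=0$ and $f_1(x_n+)-f_1(x_n-)=i\beta_nc\,f_2(x_n)$ into $g_1(a+)=0$ and $g_2(x_n+)-g_2(x_n-)=-\tfrac{i\alpha_n}{c}g_1(x_n)$ with $\alpha_n=c^2\beta_n$, and unitarity lets you pass from the cores to the closures.

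The one place you hedged resolves cleanly in your favour, so let me close it: in $\dom(D^0_{X,\beta})$ it is the \emph{second} component $f_2$ that lies in $AC_{\loc}(\cI)$ (globally), while $f_1\in AC_{\loc}(\cI\setminus X)$; since $g_1=-if_2$ and $g_2=if_1$, the pair $g$ has $g_1\in AC_{\loc}(\cI)$ and $g_2\in AC_{\loc}(\cI\setminus X)$, which is precisely the regularity demanded of the first and second components in $\dom(\widehat D^0_{X,\alpha})$. So $\sigma_2$ does map the one core bijectively onto the other, with no mismatch in the absolute-continuity clauses, and your sentence suggesting $g_1$ ``would only be $AC_{\loc}(\cI\setminus X)$'' should be discarded.
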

%%%%%%%%%%%%%%%%%%%%%%%%%%%%%%%%%%%%%%%%%%%%
%%%%%%%%%%%%%%%%%%%%%%%%%%%%%%%%%%%%%%%%%%%%%%%%%%%%%%%
   \begin{proposition}\label{prop6.5}
Let $\alpha = \beta c^2$. Then the  GS-realizations $D_{X,\beta}$ and $D_{X,\alpha}$  are selfadjoint only simultaneously. Moreover, the spectrum $\sigma(D_{X,\beta})$ of $D_{X,\beta}$ is either discrete or purely singular if and only if so is the spectrum $\sigma(D_{X,\alpha})$ of $D_{X,\alpha}$.

Besides, the $ac$-parts of the operators $D_{X,\alpha}$ and $D_{X,\beta}$ are unitarily equivalent.
  \end{proposition}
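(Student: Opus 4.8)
The plan is to derive every assertion from the relation $U^{-1}D_{X,\beta}U = -\widehat D_{X,\alpha}$ of Proposition \ref{prop6.4} (with $\alpha=\beta c^2$), together with a rank-one comparison between $\widehat D_{X,\alpha}$ and $D_{X,\alpha}$. Since $U=1\otimes\sigma_2$ is unitary and $\sigma(-T)=-\sigma(T)$ with all spectral types preserved under this reflection, it suffices to compare $\widehat D_{X,\alpha}$ with $D_{X,\alpha}$. The only difference between the two operators is the single boundary condition at $x_0=a$: $f_2(a+)=0$ for $D_{X,\alpha}$ versus $f_1(a+)=0$ for $\widehat D_{X,\alpha}$; all the interior GS-conditions at the points $x_n$, $n\in\N$, are identical. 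Thus both $D_{X,\alpha}$ and $\widehat D_{X,\alpha}$ are proper extensions of one and the same symmetric operator $\widehat D_X$ (the minimal Dirac operator on $\cI\setminus X$ with the additional condition $f_1(a+)=f_2(a+)=0$, equivalently $D_X$ with one extra boundary point at $a$), and this symmetric operator has deficiency indices $n_\pm=1$ coming from the half-line end at $a$. Hence $D_{X,\alpha}$ and $\widehat D_{X,\alpha}$ differ by a rank-one perturbation in the boundary-triplet sense: using the boundary triplet for $\widehat D_X^*$ obtained by adjoining the triplet $\Pi^{(a-)}$ (or rather its half-line analogue for the left endpoint, cf. Lemma \ref{2.3-} and Remark \ref{RMD}) to $\Pi=\bigoplus_{n=1}^\infty\Pi^{(n)}$, both operators correspond to boundary relations that agree on the $l^2(\N,\C^2)$-block and differ only in the one-dimensional block at $a$. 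By Proposition \ref{prop_II.1.4_02}(iii) this yields
\begin{equation}\label{6.5rankone}
(\widehat D_{X,\alpha}-z)^{-1}-(D_{X,\alpha}-z)^{-1}\in\mathfrak S_1(\gH),\qquad \text{rank}=1,
\end{equation}
for $z$ in the common resolvent set (when both are self-adjoint; otherwise one argues with the regular-type points).

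Next I would run through the four claims. \emph{Self-adjointness:} $D_{X,\alpha}$ and $\widehat D_{X,\alpha}$ are both one-dimensional (or zero-dimensional) extensions of the same symmetric $\widehat D_X$; since $n_\pm(\widehat D_X)=n_\pm(D_X)+1$ and the extra index is "used up" by the choice of boundary condition at $a$ in exactly the same way for both, one has $n_\pm(\widehat D_{X,\alpha})=n_\pm(D_{X,\alpha})$. (Concretely: both equal $n_\pm(B_{X,\alpha})$, since the Jacobi-matrix parametrization of the interior conditions is the same in both cases.) Together with Proposition \ref{prop6.4} this gives $n_\pm(D_{X,\beta})=n_\pm(D_{X,\alpha})$, so they are self-adjoint only simultaneously; in particular on an infinite interval both are self-adjoint by Propositions \ref{cor_delta_carleman} and \ref{carlemandeltaprime}(i). \emph{Discreteness and pure singularity:} these are invariant under reflection $T\mapsto -T$, under unitary equivalence, and — by \eqref{6.5rankone} together with the Weyl theorem and the Kato–Rosenblum theorem respectively — under rank-one (indeed trace-class) resolvent perturbations. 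Hence $\sigma(D_{X,\beta})$ is discrete (resp. purely singular, i.e. $\sigma_{ac}=\emptyset$) iff $\sigma(-\widehat D_{X,\alpha})$ is, iff $\sigma(\widehat D_{X,\alpha})$ is, iff (by \eqref{6.5rankone}) $\sigma(D_{X,\alpha})$ is. \emph{ac-parts:} by the Kato–Rosenblum theorem and \eqref{6.5rankone}, $\widehat D_{X,\alpha}^{ac}$ is unitarily equivalent to $D_{X,\alpha}^{ac}$; reflection sends the ac-part of $\widehat D_{X,\alpha}$ onto that of $-\widehat D_{X,\alpha}$ (unitarily, via complex conjugation composed with $E\mapsto -E$), and $U$ intertwines $-\widehat D_{X,\alpha}$ with $D_{X,\beta}$; composing these unitaries gives $D_{X,\beta}^{ac}\cong D_{X,\alpha}^{ac}$.

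The main obstacle I anticipate is the careful bookkeeping for \eqref{6.5rankone} when the operators are \emph{not} self-adjoint (the finite-interval case with $n_\pm=1$): there one cannot speak of resolvents and must instead verify the rank-one closeness of the corresponding boundary relations directly and invoke the resolvent comparison only on $\hat\rho$, or, more simply, observe that discreteness / purely-singular / ac statements are only claimed under hypotheses (half-line, or the various conditions of Theorem \ref{th_cont_spectr} and Proposition \ref{prop_IV.2.4_01}) which already force self-adjointness, so \eqref{6.5rankone} may be used with honest resolvents throughout. The remaining routine point is to check that the half-line endpoint at $a$ really does contribute at most a one-dimensional discrepancy between the two boundary relations — this is immediate from the explicit form of the domains \eqref{delta} and the analogous domain of $\widehat D_{X,\alpha}$, which differ in a single scalar linear condition.
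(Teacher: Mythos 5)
Your proposal is correct and follows essentially the same route as the paper: reduce via the unitary equivalence $U^{-1}D_{X,\beta}U=-\widehat D_{X,\alpha}$ of Proposition \ref{prop6.4}, observe that $D_{X,\alpha}$ and $\widehat D_{X,\alpha}$ are both one-dimensional extensions of their common symmetric restriction $S$ (differing only in the scalar condition at $a$), so that $n_\pm(S)=1$ and the resolvent difference is rank one, and then invoke the Weyl and Kato--Rosenblum theorems. The paper dispenses with the extra boundary-triplet bookkeeping you describe and gets the rank-one statement directly from $n_\pm(S)=1$, but the substance is identical.
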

%%%%%%%%%%%%%%%%%%%%%%%%%%%%%%%%%%%%%%%%%%%%%%%%
\begin{proof}
Assume that  $D_{X,\alpha}$  is selfadjoint. Then its restriction
   \begin{equation*}
S:= D_{X,\alpha}\upharpoonright \dom(S), \quad \dom(S) := \dom( \widehat D_{X,\alpha})\cap
 \dom(D_{X,\alpha}) =
 \Big\{f\in \dom(D_{X,\alpha})\,:\,f_{1}(a+)=0\Big\},
%%%%%=\dom(D_{X,\tilde\alpha})\cap\ker \tau\,,
 \end{equation*}
is  a closed symmetric operator with the deficiency indices
$n_{\pm}(S) = 1$.  Therefore, by the second von Neumann formula,
 $\dim(\dom( \widehat D_{X,\alpha})/\dom(S)) = 1$ and $ \widehat D_{X,\alpha}$ being a symmetric operator is  selfadjoint too.
Moreover, since $n_{\pm}(S) = 1$,  the resolvent difference
\begin{equation}\label{difference}
( \widehat D_{X,\alpha} - z)^{-1} - ( D_{X,\alpha}-z)^{-1}
\end{equation}
is  rank-one operator. Therefore the operators $ \widehat D_{X,\alpha}$ and $D_{X,\alpha}$
have either discrete spectrum or purely singular spectrum  only simultaneously.
Moreover, by the Kato-Rosenblum theorem their $ac$-parts are unitarily equivalent.
To complete the proof it remains to apply Proposition \ref{prop6.4}.
     \end{proof}
%%%%%%%%%%%%%%%%%%%%%%%%%%%%%%%%%%%%%%%%%%%%%%%%%%%
        \begin{remark}
Let  $\tau: \dom(D_{X,\alpha}) \to \C$ be the trace mapping given by
$\tau(f):=f_1(a+)$. Clearly,  it is  continuous and surjective.
Since  $\dom( \widehat D_{X,\alpha})\cap \dom(D_{X,\alpha})=\dom(D_{X,\alpha})\cap ker (\tau)$  is dense in $L^{2}(\cI)$,
the operator  $ \widehat D_{X,\alpha}$ can be treated as a singular perturbation of $D_{X,\alpha}$ in the sense of \cite{pos}.
This leads to an alternative  proof of Proposition \ref{prop6.5}.
       \end{remark}
%%%%%%%%%%%%%%%%%%%%%%%%%%%%%%%%%%%%%%%%%%%%%%%%%%%%%%%%%%%%%%%%
%%%%%%%%%%%%%%%%%%%%%%%%%%%%%%%%%%%%%%%

Combining Propositions  \ref{cor_delta_carleman} and \eqref{interval}  with  Proposition \ref{prop6.4}
yields an alternative proof of  Proposition  \ref{carlemandeltaprime}.
Moreover,  using  Proposition \ref{prop6.4} one can
obtain the counterparts of the results of Sections 5.4 and 5.5
on spectral properties of $D_{X,\alpha}$.

We demonstrate this possibility by stating  the following result on discreteness of  $D_{X,\beta}$.

%%For instance, combining  Proposition \ref{prop_IV.2.4_01} with Proposition \ref{prop6.4}
%%we arrive at the following result on discreteness.
%
%
   \begin{proposition}
Let  $X = \{x_n\}_1^\infty (\subset \R_+)$,  $\beta=\{\beta_j\}^{\infty}_1\subset{\mathbb R}$.
Assume that $\lim_{n\to\infty}\gd_n=0$  and
%%%%%%%%%%%%%%%%%%%%%%%%%%%%%%%%
%
    \begin{equation}
\lim_{n\to\infty}\frac{|\beta_n|}{\gd_n}=\infty\qquad
\text{and}\qquad \lim_{n\to\infty}\frac{1}{c\beta_n} > -
\frac{1}{4}.
  \end{equation}
Then the (self-adjoint) $GS$-operator $D_{X,\beta}$ on the half-line $\R_+$ has discrete spectrum.
     \end{proposition}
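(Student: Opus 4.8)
The plan is to reduce the statement to the corresponding discreteness result for the GS-operator $D_{X,\alpha}$ established in Proposition \ref{prop_IV.2.4_01} by invoking the unitary equivalence furnished in Proposition \ref{prop6.4}. Set $\alpha := c^2\beta$, i.e. $\alpha_n = c^2\beta_n$ for all $n\in\N$. Then the hypotheses on $\beta$ translate directly into the hypotheses on $\alpha$ needed to apply Proposition \ref{prop_IV.2.4_01}. Indeed, $\lim_{n\to\infty}d_n = 0$ is unchanged, the condition $\lim_{n\to\infty}|\beta_n|/d_n = \infty$ gives $\lim_{n\to\infty}|\alpha_n|/d_n = c^2\lim_{n\to\infty}|\beta_n|/d_n = \infty$, and the condition $\lim_{n\to\infty}(c\beta_n)^{-1} > -1/4$ gives
\[
\lim_{n\to\infty}\frac{c}{\alpha_n} = \lim_{n\to\infty}\frac{c}{c^2\beta_n} = \lim_{n\to\infty}\frac{1}{c\beta_n} > -\frac{1}{4},
\]
which is precisely the second condition in \eqref{prop_chihara_1}.

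Next I would apply Proposition \ref{prop_IV.2.4_01} (with $Q = 0$) to conclude that the GS-operator $D_{X,\alpha}^c$ on $\R_+$ has discrete spectrum. Then I would invoke Proposition \ref{prop6.4}: since $\alpha = c^2\beta$, the realizations $D_{X,\beta}$ and $-\widehat D_{X,\alpha}$ are unitarily equivalent via $U = 1\otimes\sigma_2$. Since discreteness of the spectrum is invariant both under unitary equivalence and under multiplication by $-1$, it suffices to show that $\widehat D_{X,\alpha}$ has discrete spectrum. For this I would use the argument from the proof of Proposition \ref{prop6.5}: the common restriction $S$ of $D_{X,\alpha}$ and $\widehat D_{X,\alpha}$ to the domain $\{f\in\dom(D_{X,\alpha}): f_1(a+)=0\}$ is a closed symmetric operator with $n_\pm(S) = 1$ (on $\R_+$ the operator $D_{X,\alpha}$ is self-adjoint by Proposition \ref{cor_delta_carleman}, hence so is $\widehat D_{X,\alpha}$, being a symmetric one-dimensional extension of $S$), so the resolvent difference $(\widehat D_{X,\alpha}-z)^{-1} - (D_{X,\alpha}-z)^{-1}$ is a rank-one operator. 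Therefore $\widehat D_{X,\alpha}$ has discrete spectrum if and only if $D_{X,\alpha}^c$ does, which we have just established.

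The self-adjointness claim (the parenthetical ``(self-adjoint)'') follows from Proposition \ref{carlemandeltaprime}(i), since $\cI = \R_+$ is an infinite interval; alternatively it is automatic from the fact that $D_{X,\beta}$ is unitarily equivalent to the self-adjoint operator $-\widehat D_{X,\alpha}$. Combining all of the above yields the assertion. There is essentially no serious obstacle here: the entire content has been extracted into the earlier propositions, and the only thing to check carefully is the bookkeeping of the substitution $\alpha = c^2\beta$ in the two limit conditions and the observation that the relevant spectral properties are preserved under the unitary transformation $U$ and the sign change. One should perhaps also remark that the alternative route via the direct parametrization of $D_{X,\beta}$ by the Jacobi matrix $B_{X,\beta}$ of \eqref{IV.3.1_04} (using Proposition \ref{prop_IV.3.1_01}, Corollary \ref{th_delta_prime} and Theorem \ref{deltadiscr}) together with the Chihara-type discreteness criterion \cite{Chi62} would give the same conclusion directly, but the reduction to $D_{X,\alpha}$ is the shortest path.
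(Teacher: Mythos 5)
Your proof is correct and follows essentially the same route as the paper, which disposes of this proposition in one line by combining Proposition \ref{prop_IV.2.4_01} with Proposition \ref{prop6.4} (implicitly via the rank-one resolvent-difference argument of Proposition \ref{prop6.5}, which you spell out explicitly). The substitution $\alpha = c^2\beta$ and the verification that $c/\alpha_n = 1/(c\beta_n)$ are exactly the intended bookkeeping.
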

%%%%%%%%%%%%%%%%%%%%%%%%%%%%%%%%%%%%%%%%
   \begin{proof}
The statement is immediate by combining  Proposition \ref{prop_IV.2.4_01} with Proposition \ref{prop6.4}.
   \end{proof}
%%%%%%%%%%%%%%%%%%%%%%%%%%%%%%%%%%%%%%%%%%%%%%%%%%%%%%%%%%%%%%%%%%%%%%%%%%%%%%%%

\subsection{Non-relativistic limit of Gesztesy-\v{S}eba operators}
Let, as in Section 3.2,  $X = \{x_n\}_{n=1}^{\infty} (\subset \R_+)$ be a discrete set and $\alpha = \{\alpha_n\}_1^\infty,$ $\beta = \{\beta_n\}_1^\infty \subset \R.$

In this section we consider the non-relativistic limits of Gesztesy-\v{S}eba operators $D_{X,\alpha}^c := D_{X,\alpha}$ and $D_{X,\beta}^c := D_{X,\beta}$
using their parameterizations with respect to the boundary triplet $\Pi = \{\cH, \Gamma_0, \Gamma_1\}$  constructed in Theorem \ref{th_bt_2}.
By Propositions  \ref{prop_IV.2.1_01}  and  \ref{prop_IV.3.1_01},
  \begin{equation}\label{IV.3.1_03NRLimit}
D_{X,\gA}^c = D_{B_{X,\gA}^c}:=D_{X}^*\upharpoonright \ker(\Gamma_1 - B_{X,\gA}^c\Gamma_0)\quad \text{and}\quad
D_{X,\gB}^c = D_{B_{X,\gB}^c}:= D_{X}^*\upharpoonright \ker(\Gamma_1 - B_{X,\gB}^c\Gamma_0),
%%%\dom(D_{B_{X,\gB}}), \qquad \dom D_{B_{X,\gB}}:=\{f\in\dom(D_{X}^*):\ \Gamma_1f = B_{X,\gB}\Gamma_0f\}.
  \end{equation}
where  $B_{X,\gA}^c$ and $B_{X,\gB}^c$ are the Jacobi matrices given by \eqref{IV.2.1_01} and \eqref{IV.3.1_04}, respectively.
Here we indicate explicitly the dependence of all operators on the parameter $c$ by writing
$D^c_{X,\alpha}$, $D^c_{X,\beta}$, $B^c_{X,\alpha}$ and $B^c_{X,\beta}$ in place of $D_{X,\alpha}$, $D_{X,\beta}$, $B_{X,\alpha}$ and $B_{X,\beta}$, respectively.

Following \cite{Alb_Ges_88} we recall the definitions of the operators
which describe Schr\"{o}dinger operators  with $\delta$ and $\delta'$ interactions, respectively (cf. also \cite[Sections 5, 6]{KM}).
Let
   \begin{equation}\label{Hdelta}
\begin{split}
\rH_{X,\gA}^0 = & -\frac{d^2}{dx^2}\upharpoonright \dom(\rH_{X,\gA}^0),\\
\dom(\rH_{X,\gA}^0)=&\Big\{f\in W^{2,2}_{\comp}(\cI \backslash X): f\in AC_{\loc}(\cI),\ f'\in AC_{\loc}(\cI\backslash X);\\&
f'(a+)=0\,,\quad f'(x_{n}+)-f'(x_{n}-)
=\alpha_{n}f(x_{n}),\,\,n\in\N\Big\},
\end{split}
  \end{equation}
  \begin{equation}
\begin{split}\label{Hdeltap}
\rH_{X,\beta}^0 = & -\frac{d^2}{dx^2}\upharpoonright \dom(\rH_{X,\gB}^0), \\
\dom(\rH_{X,\beta}^0)=& \Big\{f\in W^{2,2}_{\comp}(\cI \backslash X)
\,:\, f\in AC_{\loc}(\cI\backslash X),\ f'\in AC_{\loc}(\cI);\\&
f'(a+)=0\,,\quad
f(x_{n}+)-f(x_{n}-)=\beta_{n}f'(x_{n}),\,\,n\in\N\Big\}.
\end{split}
\end{equation}
Then the operators  $\rH_{X,\gA}$ and $\rH_{X,\beta}$ are defined to be the closures of $\rH_{X,\gA}^0$ and  $\rH_{X,\beta}^0$, respectively.

If $\cI =\R_+$, the operator $H_{X,\beta}$ is  selfadjoint in $L^2(\R_+)$ for any $\beta$ (\cite[Theorem 4.7]{BSW95}, \cite[Theorem 6.3]{KM},  although  $\rH_{X,\alpha}$  is only symmetric with equal deficiency indices $n_{+}(\rH_{X,\alpha}) = n_{-}(\rH_{X,\alpha}) \le 1$ (see \cite{BSW95}).
Moreover,  $\rH_{X,\alpha}$ may have non-trivial deficiency indices $n_{\pm}(\rH_{X,\alpha})=1$ (see \cite{Chr_Sto_94}, \cite[Section 5.2]{KM}). However, the operator $\rH_{X,\alpha}$ is selfadjoint,  $\rH_{X,\alpha} = \rH^*_{X,\alpha}$, provided that it is semibounded below
(\cite[Theorem 1]{AKM_10}, see also the recent publication \cite{HryMyk12} for another proof).

It is shown in \cite{KM} that certain  spectral properties of  $\rH_{X,\alpha}$ closely correlate  with the corresponding
properties of the following Jacobi matrix
\begin{equation}\label{IV.2.1_01A}
B_{X,\gA}(\rH) := \left(%
\begin{array}{cccccc}
  0& -\gd_1^{-2} & 0 & 0& 0  &  \dots\\
  -\gd_1^{-2} & -\gd_1^{-2}& \gd_1^{-3/2}\gd_2^{-1/2}& 0 & 0&  \dots\\
  0 & \gd_1^{-3/2}\gd_2^{-1/2} & \alpha_1\gd_2^{-1} & -\gd_2^{-2} & 0&   \dots\\
  0 & 0 & -\gd_2^{-2} & -\gd_2^{-2} & \gd_2^{-3/2}\gd_3^{-1/2}&  \dots\\
  0 & 0 & 0 & \gd_2^{-3/2}\gd_3^{-1/2} & \alpha_2\gd_3^{-1}&   \dots\\
 % 0 & 0 & 0 & 0 & -\gd_3^{-2}&  -\gd_3^{-2}& \dots\\
 % 0 & 0 & 0 & 0 & 0&  1& -7&\dots\\
 % 0 & 0 & 0 & 0 & 0&  0& 1&\dots\\
\dots& \dots&\dots&\dots&\dots&\dots\\
 \end{array}%
\right).
\end{equation}
As usual we  identify the Jacobi matrix $B_{X,\gA}(\rH)$ with (closed) minimal symmetric operator
 associated with it and denote it by the same letter  (cf. \eqref{IV.2.1_02}).
Recall that $B_{X,\gA}(\rH)$ has equal deficiency indices and $\mathrm{n}_{\pm}(B_{X,\gA}(\rH)) \leq 1$.

The Jacobi matrix  $B_{X,\alpha}(\rH)$ coincides with the matrix  $B^\infty_{X,\alpha}$ given by  \eqref{IV.2.1_01}   with  $\nu(x)\equiv 1$, i.e. with $c=\infty$. Note however that in the case $\sum_{k\in \N}d_k =\infty$ the matrix  $B^c_{X,\alpha},\ c<\infty,$ is  always  selfadjoint though
the matrix $B_{X\alpha}(\rH)$ might be only symmetric (see  \cite[Section 5.2]{KM}).

Similarly, according to \cite{KM}  certain  spectral properties of  $\rH_{X,\beta}$ closely correlate  with the corresponding properties of the  Jacobi matrix $B_{X,\beta}(\rH)$ given by

\begin{equation}\label{IV.3.1_04H}
B_{X,\gB}(H) := \left(\begin{array}{cccccc}
0 & -\gd_1^{-2} & 0& 0& 0 & \dots\\
-\gd_1^{-2} & -(\beta_1+\gd_1)\gd_1^{-3} & \gd_1^{-3/2}\gd_2^{-1/2}& 0& 0 & \dots\\
0 & \gd_1^{-3/2}\gd_2^{-1/2} & 0& -\gd_2^{-2}& 0 & \dots\\
0 & 0 & -\gd_2^{-2}& -(\beta_2+\gd_2)\gd_2^{-3}& \gd_2^{-3/2}\gd_3^{-1/2} & \dots\\
0 & 0 & 0& \gd_2^{-3/2}\gd_3^{-1/2} & 0 & \dots\\
\dots & \dots & \dots& \dots& \dots & \dots
\end{array}\right).
\end{equation}

Denote also by $B_{X,\beta}(\rH)$ the Jacobi matrix defined by \eqref{IV.3.1_04} with $\nu(x)\equiv 1$, i.e. with $c=\infty$.

The Jacobi matrices $B_{X,\alpha}(\rH)$ and $B_{X,\beta}(\rH)$ first appeared for the parametrization of Schr\"{o}dinger operators  $\rH_{X,\gA}$ and  $\rH_{X,\gB}$ with $\delta$- and $\delta'$-interactions, respectively  (cf. \cite[Proposition 5.1]{KM} and \cite[Proposition 6.1]{KM}).
Let us recall these statements:
    \begin{proposition}\label{prop_IV.2.1_01H}
Let $\Pi=\{\cH,\Gamma_0,\Gamma_1\}$ be the boundary triplet
for $\rH_{\min}^*$ constructed in Theorem \ref{th_bt_2.H}. Then the boundary operators corresponding to the realizations $\rH_{X,\gA}$ and $\rH_{X,\gB}$ coincide with
the minimal Jacobi operators  $B_{X,\gA}:= B_{X,\gA}(\rH)$ and  $B_{X,\gB} := B_{X,\gB}(\rH)$, respectively, i.e.
%%%%(\ref{IV.2.1_01})--(\ref{IV.2.1_02}). Then the corresponding boundary operator is $B_{X,\gA}(H)$, i.e.,
%
%
     \begin{eqnarray*}
\rH_{X,\gA} = \rH_{B_{X,\gA}}=\rH_{\min}^* \upharpoonright \dom(\rH_{B_{X,\gA}}),\qquad \dom(\rH_{B_{X,\gA}})=
\{f\in W^{2,2}(\cI\setminus X):\Gamma_1f = B_{X,\gA}\Gamma_0f\},    \\
   \rH_{X,\gB}=\rH_{B_{X,\gB}}=\rH_{\min}^* \upharpoonright \dom(\rH_{B_{X,\gB}}),\qquad \dom(\rH_{B_{X,\gB}})=
   \{f\in W^{2,2}(\cI\setminus X):\Gamma_1f = B_{X,\gB}\Gamma_0f\}.
    \end{eqnarray*}
%%%%%%%%%%%%%%%%%%%%%%%%%%%%%%%%%%%%%%%%
Moreover,  $n_{\pm}(\rH_{X, \alpha}) = n_{\pm}(B_{X,\alpha}(\rH)) \le 1$ and $n_{\pm}(\rH_{X, \beta}) = n_{\pm}(B_{X,\beta}(\rH)) \le 1.$
     \end{proposition}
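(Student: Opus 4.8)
The plan is to imitate verbatim the proof of Proposition \ref{prop_IV.2.1_01}, replacing the Dirac building blocks by their Schr\"odinger counterparts from Lemma \ref{2.3.2.H} and Theorem \ref{th_bt_2.H}. First I would record the regularization relations underlying \eqref{IV.1.1_06.H.2}, \eqref{IV.1.1_06_1.H.2}: the regularized mappings are obtained from the raw triplet \eqref{triple2.H} by
\[
\Gamma_{0,H}^{(n)}=R_{n,H}\,\widetilde{\Gamma}_{0,H}^{(n)},\qquad
\Gamma_{1,H}^{(n)}=R_{n,H}^{-1}\bigl(\widetilde{\Gamma}_{1,H}^{(n)}-Q_{n,H}\,\widetilde{\Gamma}_{0,H}^{(n)}\bigr),
\]
where $R_{n,H}=\diag(\gd_n^{1/2},\gd_n^{3/2})$ is the matrix from \eqref{IV.1.1_19H} and $Q_{n,H}=\widetilde{M}_{n,H}(0)=\left(\begin{smallmatrix}0&1\\1&\gd_n\end{smallmatrix}\right)$ is the value at the regular point $0$ (the bottom of the gap of $H_{n,0}$) of the raw Weyl function \eqref{IV.1.1_09.2.H}; a short $z\to 0$ limit in \eqref{IV.1.1_09.2.H} yields this $Q_{n,H}$. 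A direct substitution then checks that these identities reproduce exactly \eqref{IV.1.1_06.H.2} and \eqref{IV.1.1_06_1.H.2}.

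Next I would take $f\in W^{2,2}_{\comp}(\cI\setminus X)$ and translate the $\delta$-conditions defining $\rH_{X,\gA}^0$ (continuity $f(x_n+)=f(x_n-)$, the jump $f'(x_n+)-f'(x_n-)=\alpha_n f(x_n)$, and the Neumann condition $f'(a+)=0$) into relations between the coordinates of $\widetilde{\Gamma}_{0,H}f$ and $\widetilde{\Gamma}_{1,H}f$. Writing $p_n=f(x_{n-1}+)$, $q_n=f'(x_n-)$, $r_n=f'(x_{n-1}+)$, $s_n=f(x_n-)$, the conditions read $r_1=0$, $p_{n+1}=s_n$, and $r_{n+1}-q_n=\alpha_n p_{n+1}$, which are precisely $\widetilde{\Gamma}_{1,H}f=\widetilde{B}_\gA(\rH)\,\widetilde{\Gamma}_{0,H}f$ with a raw Jacobi matrix $\widetilde{B}_\gA(\rH)$ of the same block form as \eqref{IV.2.2_01A}. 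Indeed $\widetilde{B}_\gA(\rH)$ coincides with the Dirac raw matrix after the substitution $f\leftrightarrow f_1$, $f'\leftrightarrow i\,c\,f_2$, which also identifies the two sets of GS-/$\delta$-boundary conditions.

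Then, using the regularization relations above together with the algebraic identity $B_{X,\gA}(\rH)=R_X^{-1}(\widetilde{B}_\gA(\rH)-Q_X)R_X^{-1}$, where $R_X=\bigoplus_n R_{n,H}$ and $Q_X=\bigoplus_n Q_{n,H}$ (a computation producing the matrix \eqref{IV.2.1_01A}, i.e. the $c=\infty$, $\nu\equiv 1$ specialization), the relation $\widetilde{\Gamma}_{1,H}f=\widetilde{B}_\gA(\rH)\widetilde{\Gamma}_{0,H}f$ becomes equivalent to $\Gamma_{1,H}f=B_{X,\gA}(\rH)\Gamma_{0,H}f$ on compactly supported $f$. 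Since then $\Gamma_{0,H}f,\Gamma_{1,H}f\in l^2_0(\N,\C^2)$, this says that $\Gamma=\binom{\Gamma_{0,H}}{\Gamma_{1,H}}$ maps $\dom(\rH_{X,\gA}^0)$ onto the graph of the minimal Jacobi operator on $l^2_0(\N,\C^2)$. Taking closures, and invoking $\rH_{X,\gA}=\overline{\rH_{X,\gA}^0}$, $B_{X,\gA}(\rH)=\overline{B^0_{X,\gA}(\rH)}$, and Corollary \ref{s.a.H}(i), one obtains the domain representation; Corollary \ref{s.a.H}(ii) together with $\mathrm{n}_\pm(B_{X,\gA}(\rH))\le 1$ yields $\mathrm{n}_\pm(\rH_{X,\gA})=\mathrm{n}_\pm(B_{X,\gA}(\rH))\le 1$. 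The $\beta$-case is identical after encoding the $\delta'$-conditions ($f'$ continuity $r_{n+1}=q_n$ and jump $p_{n+1}-s_n=\beta_n q_n$) into a raw matrix $\widetilde{B}_\gB(\rH)$ of the form \eqref{IV.3.1_02B}, with $B_{X,\gB}(\rH)=R_X^{-1}(\widetilde{B}_\gB(\rH)-Q_X)R_X^{-1}$ producing \eqref{IV.3.1_04H}.

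I expect the main obstacle to be the closure-compatibility step: one must ensure that the closed boundary relation attached to $\overline{\rH_{X,\gA}^0}$ is the \emph{full} closed Jacobi operator $B_{X,\gA}(\rH)$ and not a proper restriction of it. This is exactly the point where boundedness of $\Gamma$ with respect to the graph norm (valid since $\Pi_H$ is an ordinary boundary triplet) and the fact that $l^2_0(\N,\C^2)$ is a core for the minimal Jacobi operator are essential; everything else is routine matrix bookkeeping already carried out in the parallel Dirac proof.
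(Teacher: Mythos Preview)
Your proposal is correct and follows precisely the route the paper itself takes for the Dirac analogue (Proposition \ref{prop_IV.2.1_01}): encode the interface conditions in the raw triplet as $\widetilde{\Gamma}_{1,H}f=\widetilde{B}\,\widetilde{\Gamma}_{0,H}f$, conjugate by the regularization data $R_{n,H}$, $Q_{n,H}$ to obtain $\Gamma_{1,H}f=B_{X,\gA}(\rH)\Gamma_{0,H}f$ on compactly supported $f$, and then close. The paper does not supply an independent proof here but cites \cite{KM}, where the argument is exactly this one; your identification $Q_{n,H}=\widetilde{M}_{n,H}(0)$ and the observation that $R_{n,H}$ is the $c\to\infty$ limit of the Dirac $R_n$ (so that $B_{X,\gA}(\rH)$ is the $\nu\equiv 1$ specialization of \eqref{IV.2.1_01}) are precisely the points that make the Schr\"odinger computation a verbatim transcription of the Dirac one.
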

Then the following results on the non relativistic limit hold:
  \begin{theorem}\label{nonrelGS}\
Let $X= \{x_n\}_1^{\infty} (\subset \R_+)$ be a discrete set and $\alpha = \{\alpha_n\}_1^\infty,$
$\beta = \{\beta_n\}_1^\infty \subset \R.$
 \par\noindent

\item $(i)$ Assume that $\cI =\R_+$ and $\rH_{X, \alpha}$ is selfadjoint. Then
  \begin{equation}\label{1.23GS}
s-\lim_{c\to
  +\infty}\left(D_{X,\alpha}^c - (z + {c^{2}}/{2})\right)^{-1}=
(\rH_{X, \alpha} -z)^{-1}\bigotimes \left(\begin{array}{cc} 1 & 0 \\0 &
0\end{array}\right)\,.
  \end{equation}
In particular,  \eqref{1.23GS} holds provided that $\rH_{X, \alpha}$ is semibounded below.
%%%$D^c_{X,\alpha}$ converges in strong resolvent sense to $H_{X,\alpha}$ as $c\to+\infty$.\par\noindent

\item $(ii)$ Assume that   $\cI =\R_+$. Then  the operators  $D_{X,\beta}^c,\ c<\infty,$   and  $\rH_{X, \beta}$  are selfadjoint and the following
relation holds
  \begin{equation}\label{1.23GSbeta2}
s-\lim_{c\to
  +\infty}\left(D_{X,\beta}^c - (z + {c^{2}}/{2})\right)^{-1}=
(\rH_{X, \beta} - z)^{-1}\bigotimes \left(\begin{array}{cc} 1 & 0 \\0 &
0\end{array}\right)\,.
  \end{equation}

\item $(iii)$  Assume that  $\cI = (0, b)$ with $b<\infty$. Assume also that
  \begin{equation}\label{1.23GSbetaSelfad}
\sum_{n=1}^\infty |\beta_n| \sqrt{d_nd_{n+1}}=+\infty \qquad\text{and}\qquad
\sum_{n=1}^\infty \bigl(\gd_{n+1} \bigl|\sum_{i=1}^n (\beta_i+\gd_i)\bigr|^2\bigr) = \infty.
     \end{equation}\label{1.23GSbeta}
Then relation  \eqref{1.23GSbeta2} holds.
   \end{theorem}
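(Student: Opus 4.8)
The plan is to deduce all three statements from the abstract non-relativistic limit of Theorem \ref{2.5}, after identifying the boundary relations that parametrize the operators involved with Jacobi operators. Throughout I will use the boundary triplets $\Pi_D=\{\cH,\Gamma_0,\Gamma_1\}$ for $D_X^{*}$ of Theorem \ref{th_bt_2} and $\Pi_H$ for $\rH_X^{*}$ of Theorem \ref{th_bt_2.H}; both require $d^{*}(X)<\infty$. On a finite interval $\cI=(0,b)$ this holds automatically, so that case needs no preparation. When $\cI=\R_+$ and $d^{*}(X)=\infty$ (parts (i) and (ii)) I first reduce to $d^{*}(X)<\infty$: subdivide each gap $(x_{n-1},x_n)$ of $X$ by finitely many extra points so that the enlarged discrete set $X'\supset X$ satisfies $d^{*}(X')\le 1$, and assign coupling $0$ at the new points while keeping $\alpha_n$ (resp.\ $\beta_n$) at the old ones. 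A $\delta$- (resp.\ $\delta'$-) interaction of zero strength is no interaction at all, so $D^{c}_{X',\alpha'}=D^{c}_{X,\alpha}$ and $\rH_{X',\alpha'}=\rH_{X,\alpha}$ as operators (likewise for $\beta$); hence it suffices to prove the theorem for $X'$, i.e.\ one may assume $d^{*}(X)<\infty$.

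For part (i): by Proposition \ref{cor_delta_carleman} the operator $D^{c}_{X,\alpha}$ is selfadjoint for every $c>0$, and $\rH_{X,\alpha}$ is selfadjoint by hypothesis. By Propositions \ref{prop_IV.2.1_01} and \ref{prop_IV.2.1_01H} one has $D^{c}_{X,\alpha}=D_{B^{c}_{X,\alpha}}$ and $\rH_{X,\alpha}=\rH_{B_{X,\alpha}(\rH)}$, so in the correspondence of Theorem \ref{2.5} the boundary relations are $\Theta_c=B^{c}_{X,\alpha}$ and $\Theta=B_{X,\alpha}(\rH)$; they are selfadjoint Jacobi operators by Propositions \ref{th_delta_sa} and \ref{prop_IV.2.1_01H}. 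It remains to verify the hypotheses of Theorem \ref{2.5}. First, $l^{2}_0(\N,\C^{2})$ is a core for the minimal Jacobi operator $B_{X,\alpha}(\rH)$ by its very definition \eqref{IV.2.1_02}, and it is contained in the domains of every $B^{c}_{X,\alpha}$ and of $B_{X,\alpha}(\rH)$ for the same reason. Second, the convergence \eqref{1.23.0}, namely $B^{c}_{X,\alpha}h\to B_{X,\alpha}(\rH)h$ for $h\in l^{2}_0(\N,\C^{2})$, follows from the factorization \eqref{IV.2.2_01A}, $B^{c}_{X,\alpha}=R_X^{-1}(\widetilde{B}_{\alpha}-Q_X)R_X^{-1}$: the matrices $\widetilde{B}_{\alpha}$ and $Q_X$ do not depend on $c$, each block $R_n=\diag\bigl(d_n^{1/2},\,d_n^{3/2}/\nu(d_n)\bigr)$ converges to $\diag(d_n^{1/2},d_n^{3/2})$ as $c\to+\infty$ because $\nu(d_n)\to1$, and a finite vector $h$ activates only finitely many blocks. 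Theorem \ref{2.5} then yields \eqref{1.23GS}. The final assertion of (i) is immediate, since $\rH_{X,\alpha}$ is selfadjoint whenever it is lower semibounded by \cite{AKM_10}.

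Parts (ii) and (iii) run along the same lines, now with $\Theta_c=B^{c}_{X,\beta}$ and $\Theta=B_{X,\beta}(\rH)$, using Propositions \ref{prop_IV.3.1_01} and \ref{prop_IV.2.1_01H} and the factorization \eqref{IV.3.1_02B}; the convergence $B^{c}_{X,\beta}h\to B_{X,\beta}(\rH)h$ on $l^{2}_0(\N,\C^{2})$ again reduces to $\nu(d_n)\to1$ (now also $\nu(d_n)^{2}\to1$ in the diagonal entries), and $l^{2}_0(\N,\C^{2})$ is a core for the minimal Jacobi operator $B_{X,\beta}(\rH)$. For (ii), $\cI=\R_+$: $D^{c}_{X,\beta}$ is selfadjoint by Proposition \ref{carlemandeltaprime}(i) and $\rH_{X,\beta}$ is selfadjoint by \cite{BSW95} (see also \cite{KM}), so Theorem \ref{2.5} applies and gives \eqref{1.23GSbeta2}. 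For (iii), $\cI=(0,b)$ with $b<\infty$: the first condition in \eqref{1.23GSbetaSelfad} forces $D^{c}_{X,\beta}$ to be selfadjoint for every $c$ via Proposition \ref{carlemandeltaprime}(ii), while the second condition is exactly the selfadjointness criterion for the Jacobi matrix $B_{X,\beta}(\rH)$ (cf.\ \cite{KM}), hence, through Proposition \ref{prop_IV.2.1_01H}, for $\rH_{X,\beta}$; once both operators are selfadjoint, Theorem \ref{2.5} delivers \eqref{1.23GSbeta2}. I expect the only non-routine points to be the bookkeeping in the reduction step (checking that zero coupling really yields the identical operator) and the identification of the second hypothesis in \eqref{1.23GSbetaSelfad} with selfadjointness of $B_{X,\beta}(\rH)$; the analytic core of the limit --- convergence of Weyl functions, $\gamma$-fields and resolvents --- has already been carried out once and for all in the proof of Theorem \ref{2.5}.
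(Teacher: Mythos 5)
Your proof follows the paper's own argument essentially step for step: both identify the boundary operators of $D^c_{X,\alpha}$, $D^c_{X,\beta}$, $\rH_{X,\alpha}$, $\rH_{X,\beta}$ with the minimal Jacobi operators via Propositions \ref{prop_IV.2.1_01}, \ref{prop_IV.3.1_01} and \ref{prop_IV.2.1_01H}, check their selfadjointness by the same citations, verify the strong convergence \eqref{1.23.0} on $l^2_0(\N,\C^2)$ from the factorizations \eqref{IV.2.2_01A}, \eqref{IV.3.1_02B} and $\nu(d_n)\to 1$, and then invoke Theorem \ref{2.5}. Your preliminary reduction to $d^*(X)<\infty$ by inserting zero-strength interaction points is a correct and worthwhile addition: Theorem \ref{2.5} and the boundary triplet of Theorem \ref{th_bt_2} require $d^*(X)<\infty$, a hypothesis the theorem's statement omits and the paper's proof leaves implicit.
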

%%%%%%%%%%%%%%%%%%%%%%%%%%%%%%%%%%%%%%%%%%%%%%%%%%%%%%%%%%%%%%%
 \begin{proof} (i) Firstly, by  Theorem   \ref{1}, the operator $D^c_{X,\alpha},\ c<\infty,$ is   selfadjoint for any $\alpha\subset \R$ and any $c>0$ since $\cI=\R_+$.
Therefore, by  Proposition  \ref{prop_IV.2.1_01},  $B^c_{X,\alpha},\ c<\infty,$ is selfadjoint too. Further,  let $B_{X,\alpha}(\rH)$ be the minimal Jacobi operator associated with the matrix \eqref{IV.2.1_01A}. By Proposition \ref{prop_IV.2.1_01H}, $\rH_{X,\alpha} = \rH_{X,B_{X,\alpha}}$. Moreover,  by Proposition  \ref{prop_IV.2.1_01H},  $B_{X,\alpha}(\rH)= B_{X,\alpha}(\rH)^*$ since  $\rH_{X,\alpha} = \rH_{X,\alpha}^*$.
 Combining  \eqref{IV.2.1_01}, \eqref{IV.2.1_01NU} and  \eqref{IV.2.1_01A} with  the obvious
relation $\lim_{c\to\infty} \nu(cx) =1$ we get
  \begin{equation}\label{4.71}
\lim_{c\to+\infty} B^c_{X,\alpha}h = B_{X,\alpha}(\rH)h  \qquad  \text{for\ \ all}   \qquad
h\in l^2_{0}(\N,\C^2).
    \end{equation}
Note also that,  by the definition of a minimal Jacobi operator, $l^2_{0}(\N,\C^2)$ is a core for both (selfadjoint) Jacobi operators   $B_{X,\alpha}(\rH)$ and  $B^c_{X,\alpha}$, $c\in (0, \infty)$. Applying  Theorem \ref{2.5}  we arrive at the relation \eqref{1.23GS}.
To complete  the proof it remains to note that $\rH_{X,\alpha}$ is selfadjoint,  $\rH_{X,\alpha} = \rH^*_{X,\alpha}$, provided that it is semibounded below (see  \cite[Theorem 1]{AKM_10}).

(ii)  Let $\cI =\R_+.$   By \cite{BSW95}  (see also \cite[Theorem 6.3(i)]{KM}), $\rH_{X,\beta} = \rH_{X,\beta}^*$. Combining this relation with  Proposition  \ref{prop_IV.2.1_01},  yields $B_{X,\beta}(\rH)= B_{X,\beta}(\rH)^*.$

On the other hand, by Proposition \ref{carlemandeltaprime}(i),   $D_{X,\beta}^c$ is selfadjoint too, $D_{X,\beta}^c = (D_{X,\beta}^c)^*,\ c<\infty$.
Further,  by Proposition  \ref{prop_IV.3.1_01},
$D_{X,\gB}^c = D_{B_{X,\gB}}^c = D_{X}^* \upharpoonright \ker (\Gamma_1 - B_{X,\gB}^c\Gamma_0).$
Therefore, by Proposition   \ref{s.a.Dirac}(ii),  $B_{X,\beta}^c$ is selfadjoint too,  $B_{X,\beta}^c= (B_{X,\beta}^c)^*.$
Note that alongside \eqref{4.71}  we obtain from   \eqref{IV.3.1_04}  and  \eqref{IV.3.1_04H} a  similar relation
$\lim_{c\to+\infty} B^c_{X,\beta}h = B_{X,\beta}(\rH)h,$\   $h\in l^2_{0}(\N,\C^2).$

Again, by the definition of the  minimal Jacobi operators $B^c_{X,\beta}$ and  $B_{X,\beta}(\rH)$,  $l^2_{0}(\N,\C^2)$ is a core for both of them.
To arrive at \eqref{1.23GSbeta2} it remains to apply Theorem \ref{2.5}.

(iii) Let $|\cI| < \infty.$  By Proposition \ref{carlemandeltaprime}(ii),
selfadjointness of the  $GS$-operators $D_{X,\beta}^c$, \  $c\in (0,\infty),$
is implied by the first of conditions  \eqref{1.23GSbetaSelfad}. Combining  this fact with Propositions \ref{prop_IV.3.1_01} and \ref{s.a.Dirac}(ii),  we get $B_{X,\beta}^c = (B_{X,\beta}^c)^*,\  c<\infty.$
Further, by \cite[Theorem 6.3(ii)]{KM}, the second of conditions  \eqref{1.23GSbetaSelfad}  yields $\rH_{X,\beta} = \rH_{X,\beta}^*$.  In turn, by Proposition  \ref{prop_IV.2.1_01H},  $B_{X,\beta}(\rH)= B_{X,\beta}(\rH)^*.$
  The  proof is completed in much the same way as  the proof of statement (i).
   \end{proof}
 \begin{remark}
\item $(i)$  Note that condition $H_{X,\alpha}=H^*_{X,\alpha}$ hence the conclusion  \eqref{1.23GS}  of Theorem \ref{nonrelGS}  is satisfied  for any sequence $\alpha=\{\alpha_n\}^{\infty}_1\subset \R$  provided that  $\sum_j d^2_j=\infty$ \cite[Proposition 5.7]{KM} (see recent publications \cite{IsmKost10},  \cite{MirzSaf11}  for other proofs and generalizations). In particular, the non-relativistic limit  \eqref{1.23GS}  is valid whenever  $d_*(X)>0$. Under the  latter assumption both statements \eqref{1.23GS} and  \eqref{1.23GSbeta2} were  originally obtained by Gesztesy and {\v S}eba \cite{GS} (see also \cite[Appendix J]{Alb_Ges_88}). In this case the uniform  convergence in \eqref{1.23GS}, \eqref{1.23GSbeta2} holds.

\item $(ii)$ Clearly, conditions \eqref{1.23GSbetaSelfad}  can be replaced by the assumptions of selfadjointness of operators  $D_{X,\beta}^c$, $c>0,$  and $\rH_{X,\beta}$.
 \end{remark}
%%%%%%%%%%%%%%%%%%%%%%%%%%%%%%%%%%%%%%%%%%%
\vskip10pt\noindent
\section*{Acknowlegments}
%\phantomsection
\addcontentsline{toc}{section}{Acknowlegments}%
\markboth{Acknowlegments}{Acknowlegments}%

We  are indebted to A. Kostenko for useful discussion. We are also indebted  to the anonymous referee for useful remarks.

A part of this work was done when  one of the authors (M. Malamud) was visiting
the Department of Physics and Mathematics at Insubria University in November-December 2010. His  stay was supported by a grant from Cariplo Foundation - Centro Volta.
%\section*{References}
%\phantomsection
\addcontentsline{toc}{section}{References}%
%\markboth{References}{References}%

\end{document}